\documentclass[sigconf]{acmart}

\usepackage{graphicx}
\usepackage{balance}
\usepackage[normalem]{ulem}
\usepackage{booktabs}
\usepackage{listings}
\usepackage{fancyvrb}
\usepackage{caption}
\usepackage[font=small]{subcaption}
\usepackage{braket}
\usepackage{bbold}
\usepackage{bm}
\usepackage{clrscode3e}
\usepackage{colortbl}
\usepackage{forest}
\usepackage{algorithm}
\usepackage[noend]{algpseudocode}
\usepackage{multirow}
\usepackage{rotating}
\usepackage{xspace}
\usepackage{url}
\usepackage{textcomp}
\usepackage{float}
\usepackage{hyperref}
\usepackage{color}
\usepackage{tcolorbox}
\usepackage{cleveref}
\usepackage{amsthm}
\usepackage{makecell}
\usepackage{etoolbox}
\usepackage{soul}
\usepackage{enumitem}
\usepackage{seqsplit}
\usepackage[framemethod=tikz]{mdframed}

\usetikzlibrary{arrows,automata,trees,shapes,positioning,shapes.multipart,calc}

\newcommand{\cut}[1]{{}}

\cut{
\DeclareRobustCommand{\BG}[1]{{\todo[inline,color=red!40,size=\footnotesize,fancyline]{\textbf{Boris says: }{#1}}}}

\DeclareRobustCommand{\SR}[1]{{\todo[inline,color=red!40,size=\footnotesize,fancyline]{\textbf{Sudeepa says: }{#1}}}}

\DeclareRobustCommand{\CL}[1]{{\todo[inline,color=red!40,size=\footnotesize,fancyline]{\textbf{Chenjie says: }{#1}}}}

\DeclareRobustCommand{\ZM}[1]{{\todo[inline,color=red!40,size=\footnotesize,fancyline]{\textbf{Zhengjie says: }{#1}}}}

}

\newcommand{\BG}[1]{{\color{red} \bf [BG: #1]}\\}
\newcommand{\SR}[1]{{\color{blue} \bf [SR: #1]}\\}
\newcommand{\CL}[1]{{\color{magenta} \bf [CL: #1]}\\}
\newcommand{\ZM}[1]{{\color{darkgreen} \bf [ZM: #1]}\\}



\definecolor{black}{rgb}{0,0,0}
\definecolor{grey}{rgb}{0.8,0.8,0.8}
\definecolor{red}{rgb}{1,0,0}
\definecolor{green}{rgb}{0,1,0}
\definecolor{brown}{rgb}{0.8,0.1,0.1}
\definecolor{darkgreen}{rgb}{0,0.5,0}
\definecolor{darkpurple}{rgb}{0.5,0,0.5}
\definecolor{darkdarkpurple}{rgb}{0.3,0,0.3}
\definecolor{blue}{rgb}{0,0,1}
\definecolor{shadegreen}{rgb}{0.95,1,0.95}
\definecolor{shadeblue}{rgb}{0.95,0.95,1}
\definecolor{shadered}{rgb}{1,0.85,0.85}
\definecolor{shadegrey}{rgb}{0.85,0.85,0.85}
\definecolor{oddRowGrey}{rgb}{0.80,0.80,0.80}
\definecolor{evenRowGrey}{rgb}{0.85,0.85,0.85}

\newcommand{\red}[1]{{\color{red} #1}}
\newcommand{\blue}[1]{{\color{blue} #1}}

\newcommand{\termSchemaGraph}{schema graph\xspace}
\newcommand{\termSchemaGraphs}{schema graphs\xspace}

\newcommand{\capSchemaGraph}{Schema Graph\xspace}

\newcommand{\termJoinGraph}{join graph\xspace}

\newcommand{\capJoinGraph}{Join Graph\xspace}
\newcommand{\abbrJGresult}{JGR\xspace}

\newcommand{\abbrPT}{PT\xspace}

\newcommand{\abbrAPT}{APT\xspace}
\newcommand{\abbrAPTs}{APTs\xspace{}}
\newcommand{\abbrJG}{JG\xspace}
\newcommand{\abbrF}{F-score\xspace}
\newcommand{\abbrFs}{F-scores\xspace}

\newcommand{\abbrAggJoinQ}{AGQ\xspace}

\newcommand{\val}{{\tt val}}

\newcommand{\rels}{{\tt \bf rels}}
\newcommand{\relsin}[1]{{\tt {\bf rels}_{#1}}(D)}
\newcommand{\relsinQ}{\relsin{Q}}
\newcommand{\attrs}{{\tt \bf attrs}}

\newcommand{\nbadata}{NBA\xspace}


\newcommand{\mypar}[1]{\smallskip\noindent\textbf{{#1}.}}

\DeclareMathAlphabet{\mathbbold}{U}{bbold}{m}{n}

\newtheorem{Definition}{Definition}

\newtheorem{example}{Example}

\newcommand{\mathtext}[1]{\ensuremath{\,\text{#1}\,}}
\newcommand{\card}[1]{|{#1}|}
\newcommand{\btrue}{\ensuremath{\mathbf{true}}}
\newcommand{\bfalse}{\ensuremath{\mathbf{false}}}


\newcommand{\ptime}{PTIME\xspace}



\newcommand{\playerSB}{{\tt S. Battier}}

\newcommand{\playerLJ}{{\tt L. James}}
\newcommand{\playerSC}{{\tt S. Curry}}
\newcommand{\playerKT}{{\tt K. Thompson}}
\newcommand{\playerDG}{{\tt D. Green}}

\newcommand{\abbrGSW}{GSW}

\newcommand{\abbrMed}{Medicare}
\newcommand{\abbrPriv}{Private}



\newcommand{\sexpl}{E}
\newcommand{\dexpl}{ED}
\newcommand{\explset}{\mathcal{E}}
\newcommand{\topk}{k} 

\newcommand{\FCov}{{\tt FCov}}
\newcommand{\ICov}{{\tt Cov}} 
\newcommand{\FTP}{{\tt TP_{F}}}
\newcommand{\ITP}{{\tt TP}} 
\newcommand{\FFP}{{\tt FP_{F}}}
\newcommand{\IFP}{{\tt FP}} 
\newcommand{\FFN}{{\tt FN_{F}}}
\newcommand{\IFN}{{\tt FN}} 
\newcommand{\FRec}{{\tt Rec_{F}}}
\newcommand{\FPrec}{{\tt prec_{F}}}
\newcommand{\Ffscore}{{\tt Fscore_{F}}}

\newcommand{\Rec}{{\tt Rec}}
\newcommand{\Prec}{{\tt Prec}}
\newcommand{\fscore}{{\tt Fscore}}

\newcommand{\uquestion}{UQ}


\newcommand{\pat}{\Phi}
\newcommand{\patrnset}{{\mathcal P}}
\newcommand{\pmatch}{\vDash} 
\newcommand{\matches}{\textsc{match}}
\newcommand{\placeh}{\ast}
\newcommand{\pleq}[1]{({#1},\mathbf{\leq})}
\newcommand{\pgeq}[1]{({#1},\mathbf{\geq})}
\newcommand{\peq}[1]{({#1},\mathbf{=})}
\newcommand{\patop}[2]{({#1},{#2})}
\newcommand{\patset}{\mathcal{P}}




\newcommand{\schemaOf}[1]{\ensuremath{\mathbf{#1}}}
\newcommand{\dom}{\textsc{Dom}}
\newcommand{\domOf}[2]{\dom_{#1.#2}}

\newcommand{\tup}{t}
\newcommand{\att}{A}
\newcommand{\rel}{R}
\newcommand{\rsch}[1]{{\bf #1}}

\newcommand{\db}{D}
\newcommand{\dbsch}{{\bf D}}
\newcommand{\arity}[1]{\mid{#1}\mid}
\newcommand{\ddom}{\mathbb{D}}
\newcommand{\query}{Q}


\newcommand{\projection}{\ensuremath{\Pi}}

\newcommand{\sedges}{E_{S}}
\newcommand{\snodes}{V_{S}}
\newcommand{\sglabel}{l_{Sedge}}
\newcommand{\asgraph}{G}
\newcommand{\sgraph}{(\snodes{}, \sedges{}, \sglabel{})}

\newcommand{\cond}{\textsc{Cond}}
\newcommand{\acond}{\theta}

\newcommand{\jedge}{e}
\newcommand{\sedge}{u}
\newcommand{\node}{n}
\newcommand{\eend}[1]{end(#1)}
\newcommand{\estart}[1]{start(#1)}

\newcommand{\pt}[1]{\mathcal{PT}(#1)}

\newcommand{\ptlabel}{PT}
\newcommand{\jgraphnodes}{V_{J}}
\newcommand{\jgraphedges}{E_{J}}

\newcommand{\tlabelv}{l_{Jnode}}
\newcommand{\tlabele}{l_{Jedge}}
\newcommand{\jjgraph}{(\jgraphnodes,\jgraphedges,\tlabelv,\tlabele)}
\newcommand{\jgraph}{\Omega}
\newcommand{\jgraphs}[1]{{\bm \Omega_{#1}}}

\newcommand{\jtr}{\mathcal{APT}}

\newcommand{\renameA}{\chi}

\newcommand{\oursystem}{\textsc{CaJaDE}}
\newcommand{\oursys}{{\textsc{CaJaDE}}\xspace}


\newcommand{\algMineAPT}{MineAPT\xspace}

\newcommand{\JGGen}{EnumerateJoinGraphs\xspace}

\newcommand{\maptSample}{\mathcal{S}}
\newcommand{\maptAPT}{APT}
\newcommand{\maptAnum}{A_{num}}
\newcommand{\maptAcat}{A_{cat}}
\newcommand{\maptCluster}{clusters}
\newcommand{\maptTODO}{todo}
\newcommand{\maptDONE}{done}
\newcommand{\maptPat}{\pat}
\newcommand{\maptPatnew}{\pat_{new}}
\newcommand{\maptKcat}{k_{cat}}
\newcommand{\maptAfilter}{A_{filtered}}
\newcommand{\maptArepresent}{A_{repr}}
\newcommand{\maptFilterAttrs}{filterAttrs}

\newcommand{\aParam}{\lambda}
\newcommand{\paramDefault}[1]{{#1}}
\newcommand{\paramDBsize}{\ensuremath{\aParam_{db-size}}\xspace}
\newcommand{\paramRecallThresh}{\ensuremath{\aParam_{recall}}\xspace}
\newcommand{\paramPatsamplerate}{\ensuremath{\aParam_{pat-samp}}\xspace}
\newcommand{\paramQualitysamplerate}{\ensuremath{\aParam_{quality-sample-rate}}\xspace}
\newcommand{\paramFscoresamplerate}{\ensuremath{\aParam_{F1-samp}}\xspace}
\newcommand{\paramNumRefineFrags}{\ensuremath{\aParam_{\#frag}}\xspace}
\newcommand{\paramAfilterrate}{\ensuremath{\aParam_{\#sel-attr}}\xspace}
\newcommand{\paramAattrnum}{\ensuremath{\aParam_{attrNum}}\xspace}
\newcommand{\paramJoinGraphSize}{\ensuremath{\aParam_{\#edges}}\xspace}
\newcommand{\paramMaxQcost}{\ensuremath{\aParam_{qCost}}\xspace}
\newcommand{\patLCA}{\patset_{cat}\xspace}

\newcommand{\pattopk}{\patset_{top-k}\xspace}

\newcommand{\bdFeature}{Feature Selection\xspace}
\newcommand{\bdLCA}{Gen. Pat. Cand.\xspace}
\newcommand{\bdF}{\abbrF Calc.\xspace}
\newcommand{\bdAPT}{Materialize \abbrAPTs\xspace}
\newcommand{\bdFsamp}{Sampling for F1\xspace}
\newcommand{\bdrefine}{Refine Patterns\xspace}
\newcommand{\bdJGenum}{\abbrJG Enum.\xspace}

\newcommand{\frst}{\ensuremath{1^{st}}\xspace}
\newcommand{\scnd}{\ensuremath{2^{nd}}\xspace}
\newcommand{\thrd}{\ensuremath{3^{rd}}\xspace}


\definecolor{darkgreen}{rgb}{0,0.5,0}
\definecolor{darkpurple}{rgb}{0.5,0,0.5}

\definecolor{revgreen}{rgb}{0,0.5,0}

\newrobustcmd{\reva}[1]{{#1}}
\newrobustcmd{\revb}[1]{{#1}}
\newrobustcmd{\revc}[1]{{#1}}
\newrobustcmd{\revm}[1]{{#1}}
\newrobustcmd{\revcj}[1]{{#1}}
\newrobustcmd{\todel}[1]{}
\newrobustcmd{\revs}[1]{{#1}}
\definecolor{purple}{rgb}{0.65,0.05,0.3}
\definecolor{red}{rgb}{1,0,0}
\definecolor{RED}{rgb}{1,0,0}

\definecolor{GrayRew}{gray}{0.85}

\newrobustcmd{\ans}[1]{\textbf{Ans:} {#1}}

\setcopyright{acmcopyright}
\copyrightyear{2018}
\acmYear{2021}
\acmDOI{10.1145/1122445.1122456}

\acmConference[SIGMOD '21]{SIGMOD 21: }{June 03--05, 2021}{Woodstock, NY}
\acmBooktitle{SIGMOD '21: ACM Symposium on Neural Gaze Detection,
  June 03--05, 2018, Woodstock, NY}
\acmPrice{15.00}
\acmISBN{978-1-4503-XXXX-X/18/06}




\newbool{Techreport}
\newrobustcmd{\ifnottechreport}[1]{\ifbool{Techreport}{}{#1}}
\newrobustcmd{\iftechreport}[1]{\ifbool{Techreport}{#1}{}}

\booltrue{Techreport}

\begin{document}

\title{Using Context for Creating Rich Explanations for Query Answers}
\title{Augmenting Provenance with Contextual Information for Rich Explanations}
\title{Putting Things into Context: Rich Explanations for Query Answers using Join Graphs}


\author{Chenjie Li}
\affiliation{\institution{IIT}}
\email{cli112@hawk.iit.edu}

\author{Zhengjie Miao}
\affiliation{\institution{Duke University}}
\email{zjmiao@cs.duke.edu}

\author{Qitian Zeng}
\affiliation{\institution{IIT}}
\email{qzeng3@hawk.iit.edu}

\author{Boris Glavic}
\affiliation{\institution{Illinois Institute of Technology}}
\email{bglavic@iit.edu}

\author{Sudeepa Roy}
\affiliation{\institution{Duke University}}
\email{sudeepa@cs.duke.edu}


\begin{abstract}
  In many data analysis applications, there is a need to 
  explain why a surprising or interesting result
  was produced by a query.
  Previous approaches to
  explaining results
  have directly or indirectly used data provenance (input tuples contributing to the result(s) of interest), which
  is limited by the fact that relevant
  information for explaining an answer may not be fully contained in the provenance.
  We propose a new approach for explaining 
  query results by augmenting
  provenance with information from other related tables in the database. 
\revm{
We develop a suite of optimization techniques, and demonstrate experimentally using real datasets and through a user study that our approach produces meaningful results by efficiently navigating the large search space of possible explanations.}

  \cut{
  and produces new informative explanations compared to
  previous approaches.
  }

\end{abstract}

\cut{
  In many application domains there is a need to explain why and how a result
  was produced by a query by combining relevant input data.  Past work for
  explaining 
  results is typically based on the concept
  of interventions or more generally provenance. 
  While
  this line of work has been successful, it is limited by the fact that relevant
  information for explaining an answer may not be contained in
  the provenance -- the set of input tuples contributing to the result(s) of interest.
  We investigate how to automatically augment
  provenance with related information from other tables in the database with the goal to produce more
  comprehensive summaries for interesting outcomes.
  With such summaries, we explain the difference between two
  query results of interest by summarizing the augmented provenance that accurately represents one result but not the other.
\revm{Using an implementation of approach that utilizes a suite of optimization techniques, we demonstrate experimentally and through a user study that our approach produces meaningful results by efficiently navigating the large search space of possible explanations.}

}


\cut{
\begin{CCSXML}
<ccs2012>
 <concept>
  <concept_id>10010520.10010553.10010562</concept_id>
  <concept_desc>Computer systems organization~Embedded systems</concept_desc>
  <concept_significance>500</concept_significance>
 </concept>
 <concept>
  <concept_id>10010520.10010575.10010755</concept_id>
  <concept_desc>Computer systems organization~Redundancy</concept_desc>
  <concept_significance>300</concept_significance>
 </concept>
 <concept>
  <concept_id>10010520.10010553.10010554</concept_id>
  <concept_desc>Computer systems organization~Robotics</concept_desc>
  <concept_significance>100</concept_significance>
 </concept>
 <concept>
  <concept_id>10003033.10003083.10003095</concept_id>
  <concept_desc>Networks~Network reliability</concept_desc>
  <concept_significance>100</concept_significance>
 </concept>
</ccs2012>
\end{CCSXML}

\ccsdesc[500]{Computer systems organization~Embedded systems}
\ccsdesc[300]{Computer systems organization~Redundancy}
\ccsdesc{Computer systems organization~Robotics}
\ccsdesc[100]{Networks~Network reliability}
}

\cut{
\keywords{Provenance Summarization, Using Relational Graphs, Explaining Queries, Context-rich}
}
\maketitle
\definecolor{lstpurple}{rgb}{0.5,0,0.5}
\definecolor{lstred}{rgb}{1,0,0}
\definecolor{lstreddark}{rgb}{0.7,0,0}
\definecolor{lstredl}{rgb}{0.64,0.08,0.08}
\definecolor{lstmildblue}{rgb}{0.66,0.72,0.78}
\definecolor{lstblue}{rgb}{0,0,1}
\definecolor{lstmildgreen}{rgb}{0.42,0.53,0.39}
\definecolor{lstgreen}{rgb}{0,0.5,0}
\definecolor{lstorangedark}{rgb}{0.6,0.3,0}
\definecolor{lstorange}{rgb}{0.75,0.52,0.005}
\definecolor{lstorangelight}{rgb}{0.89,0.81,0.67}
\definecolor{lstbeige}{rgb}{0.90,0.86,0.45}

\DeclareFontShape{OT1}{cmtt}{bx}{n}{<5><6><7><8><9><10><10.95><12><14.4><17.28><20.74><24.88>cmttb10}{}

\lstdefinestyle{psql}
{
tabsize=2,
basicstyle=\small\upshape\ttfamily,
language=SQL,
morekeywords={PROVENANCE,BASERELATION,INFLUENCE,COPY,ON,TRANSPROV,TRANSSQL,TRANSXML,CONTRIBUTION,COMPLETE,TRANSITIVE,NONTRANSITIVE,EXPLAIN,SQLTEXT,GRAPH,IS,ANNOT,THIS,XSLT,MAPPROV,cxpath,OF,TRANSACTION,SERIALIZABLE,COMMITTED,INSERT,INTO,WITH,SCN,UPDATED,GROUPING},
extendedchars=false,
keywordstyle=\bfseries,
mathescape=true,
escapechar=@,
sensitive=true
}

\lstdefinestyle{psqlcolor}
{
tabsize=2,
basicstyle=\footnotesize\upshape\ttfamily,
language=SQL,
morekeywords={PROVENANCE,BASERELATION,INFLUENCE,COPY,ON,TRANSPROV,TRANSSQL,TRANSXML,CONTRIBUTION,COMPLETE,TRANSITIVE,NONTRANSITIVE,EXPLAIN,SQLTEXT,GRAPH,IS,ANNOT,THIS,XSLT,MAPPROV,cxpath,OF,TRANSACTION,SERIALIZABLE,COMMITTED,INSERT,INTO,WITH,SCN,UPDATED,GROUPING},
extendedchars=false,
keywordstyle=\bfseries\color{lstpurple},
deletekeywords={count,min,max,avg,sum},
keywords=[2]{count,min,max,avg,sum},
keywordstyle=[2]\color{lstblue},
stringstyle=\color{lstreddark},
commentstyle=\color{lstgreen},
mathescape=true,
escapechar=@,
sensitive=true
}

\lstdefinestyle{datalog}
{
basicstyle=\footnotesize\upshape\ttfamily,
language=prolog
}

\lstdefinestyle{pseudocode}
{
  tabsize=3,
  basicstyle=\small,
  language=c,
  morekeywords={if,else,foreach,case,return,in,or},
  extendedchars=true,
  mathescape=true,
  literate={:=}{{$\gets$}}1 {<=}{{$\leq$}}1 {!=}{{$\neq$}}1 {append}{{$\listconcat$}}1 {calP}{{$\cal P$}}{2},
  keywordstyle=\color{lstpurple},
  escapechar=&,
  numbers=left,
  numberstyle=\color{lstgreen}\small\bfseries,
  stepnumber=1,
  numbersep=5pt,
}

\lstdefinestyle{xmlstyle}
{
  tabsize=3,
  basicstyle=\small,
  language=xml,
  extendedchars=true,
  mathescape=true,
  escapechar=£,
  tagstyle=\color{keywordpurple},
  usekeywordsintag=true,
  morekeywords={alias,name,id},
  keywordstyle=\color{lstred}
}


\lstset{style=psqlcolor}

\section{Introduction}\label{sec:introduction}
Today's world is dominated by data. Recent advances in complex analytics enable businesses, governments, and scientists to extract value from their data. However, results of such operations are often hard to interpret and debugging such applications is challenging, motivating the need to develop approaches that can automatically interpret and explain results to data analysts in a meaningful way. Data provenance~\cite{GKT07-semirings, DBLP:journals/ftdb/CheneyCT09}, which has been studied for several decades, is an immediate form of explanations that describes how an answer is derived from input data. However,
provenance is often insufficient for unearthing interesting insights from the data that led to a surprising result, especially for aggregate query answers. 
In the last few years, several \emph{``explanation''} methods have  emerged in the data management literature \cite{DBLP:conf/icde/BarmanKSGYA07, WuM13, RS14, DBLP:journals/pvldb/GebalyAGKS14, DBLP:conf/pods/CateCST15, ROS15, DBLP:conf/sigmod/MiaoZGR19, DBLP:journals/pvldb/WangM19} that return insightful answers in response to questions from a user.
However, real world data often exhibits complex correlations and inter-relationships that connect the provenance of a query with data that has not been accessed by the query. Current approaches do not take these crucial inter-relationships into account. Thus, the explanations they produce may lack important contextual information that can aid the user in developing a deeper understanding of the results.
We illustrate how to use context to explain a user's question using data extracted from the official website of the NBA~\cite{nba2019}.

\begin{example}\label{exp:GSW}
Consider a simplified \textbf{\nbadata database} with the following relations (the keys are \underline{underlined}, the full schema has 11 relations). Some tuples from each relation are shown in \Cref{fig:running-base}. Each team participating in a game can use multiple \emph{lineups} consisting of five players. {\tt Home} refers to the home team in a game.
\begin{itemize}[leftmargin=*]
\item \texttt{Game(\underline{year, month, day, home}, away, home\_pts, away\_pts, winner, season)}: participating teams and the winning team.
\item \texttt{PlayerGameScoring(\underline{player, year, month, day, home}, pts)}: 
the points each player scored in each game he played in.
\item \texttt{LineupPerGameStats(\underline{lineupid, year, month, day, home},}\\ \texttt{mp)}: 
the minutes played by each  lineup.
\item \texttt{LineupPlayer(\underline{lineupid, player})}: players for a lineup.
\end{itemize}
Suppose we are interested in exploring the winning records of the team \emph{GSW (Golden State Warriors)} 
in every season. The following query $\query_1$ returns this information:
\begin{lstlisting}
SELECT winner as team, season, count(*) as win
FROM Game g  WHERE winner = 'GSW' GROUP BY winner, season
\end{lstlisting}

\cut{
\CL{I think we should include all games in provenance here instead of only the games won in order to find more convincing patterns, so i changed the query to query the win ratio, which is the number of wins/ total number of games, should simplify this query \ldots}
}

\Cref{table:GSWrecord} shows the number of games team $\abbrGSW$ won for each season. $\abbrGSW$ made history in the NBA in the 2015-16 season to be the team that won the most games in a single season.
Observe that team GSW improved its performance significantly from season 2012-13 ($t_1$) to season  2015-16 ($t_2$).
 Such a drastic increase in a relatively short period of time naturally raises the question of what changed between these 2 seasons (denoted as the \emph{user question $\uquestion_1$ in \Cref{fig:uq1}}). Note that only the {\tt Game} table (shown in \Cref{table:game}) was accessed by $Q_1$.
This table provides the user with information about each game such as the name of the opponent team or whether $\abbrGSW$ was the home team  or not. However, such information is not enough for understanding why $\abbrGSW$ won or lost more games than in the other seasons, since in each season a team would play the same number of games and home games, and roughly the same number of times against each opponent. 

\begin{figure}[t]
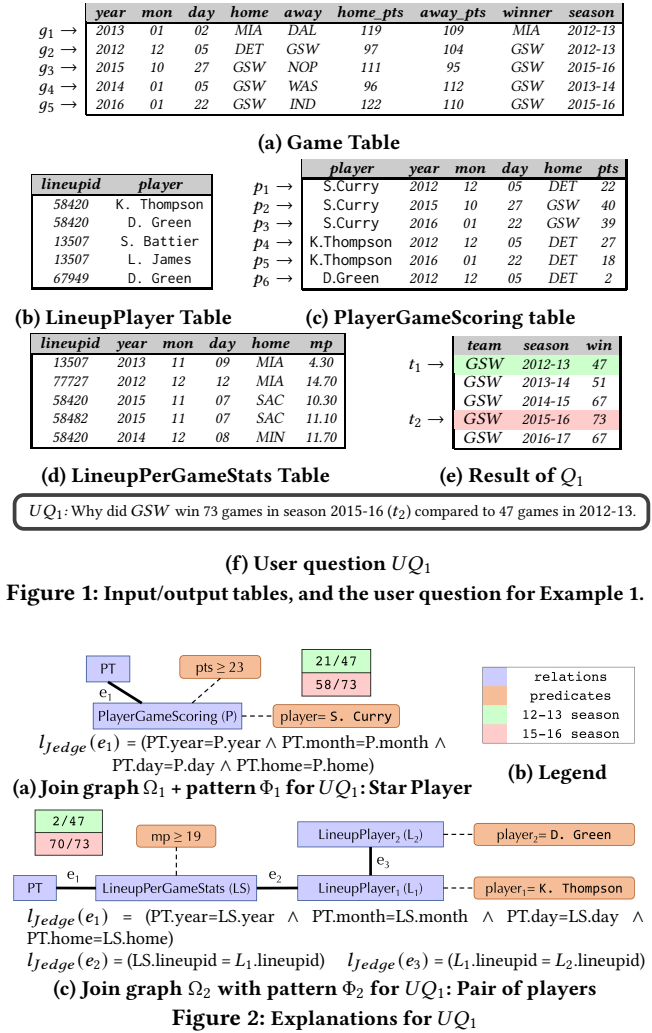
\scriptsize\setlength{\tabcolsep}{3pt}
 \vspace{-2mm}
   \subfloat[\small Game Table]{
    \begin{minipage}[b]{\linewidth}\centering
     {\scriptsize
      \begin{tabular}{l|ccccccccc|} \cline{2-10}
      & \cellcolor{grey}\textbf{year} & \cellcolor{grey}\textbf{mon} & \cellcolor{grey}\textbf{day} & \cellcolor{grey}\textbf{home} & \cellcolor{grey}\textbf{away} & \cellcolor{grey}\textbf{home\_pts} &  \cellcolor{grey}\textbf{away\_pts} & \cellcolor{grey}\textbf{winner} & \cellcolor{grey}\textbf{season} \\\cline{2-10}
      $g_1 \to$ & 2013&01&02 & MIA & DAL &119 &109 & MIA & 2012-13 \\
      $g_2 \to$ & 2012&12&05 & DET &\abbrGSW  & 97 & 104 & \abbrGSW & 2012-13 \\
      $g_3 \to$ & 2015&10&27 & \abbrGSW & NOP & 111 & 95 & \abbrGSW & 2015-16 \\
      $g_4 \to$ & 2014&01&05 & \abbrGSW & WAS &  96 & 112 & \abbrGSW & 2013-14 \\
      $g_5 \to$ & 2016&01&22 & \abbrGSW & IND & 122 & 110 & \abbrGSW & 2015-16 \\
      \cline{2-10}
      \end{tabular}
      }
      \label{table:game}
    \end{minipage}
  }
  \\
  \subfloat[\small LineupPlayer Table]{
    \begin{minipage}[b]{0.35\linewidth}\centering
     {\scriptsize
      \begin{tabular}{|cc|} \hline
      \rowcolor{grey} \textbf{lineupid} & \textbf{player} \\ \hline
      58420 & \playerKT{}\\
      58420 & \playerDG{}\\
      13507 & \playerSB{} \\
      13507 & \playerLJ{}\\
      67949 & \playerDG{} \\
      \hline
      \end{tabular}
      }
      \label{table:lp}
    \end{minipage}
 }
  \subfloat[\small PlayerGameScoring table]{
    \begin{minipage}[b]{0.62\linewidth}
    {\scriptsize
      \begin{tabular}{l|cccccc|} \cline{2-7}
       & \cellcolor{grey}\textbf{player} & \cellcolor{grey}\textbf{year} & \cellcolor{grey}\textbf{mon} & \cellcolor{grey}\textbf{day} & \cellcolor{grey}\textbf{home} &\cellcolor{grey}\textbf{pts} \\ \cline{2-7}
      $p_1 \to$ & $\playerSC{}$ & 2012 & 12 & 05 & DET & 22 \\
      $p_2 \to$ & $\playerSC{}$ & 2015 & 10 & 27 & GSW & 40 \\
      $p_3 \to$ & $\playerSC{}$ & 2016 & 01 & 22 & GSW & 39 \\
      $p_4 \to$ & $\playerKT{}$ & 2012 & 12 & 05 & DET & 27 \\
      $p_5 \to$ & $\playerKT{}$ & 2016 & 01 & 22 & DET & 18 \\
      $p_6 \to$ & $\playerDG{}$ & 2012 & 12 & 05 & DET & 2 \\
   \cline{2-7}
      \end{tabular}
      }
      \label{table:pgs}
    \end{minipage}
    }\\
  \subfloat[\small LineupPerGameStats Table]{
    \begin{minipage}[b]{0.55\linewidth}
    \begin{center}
     {\scriptsize
      \begin{tabular}{|cccccc|} \hline
      \cellcolor{grey}\textbf{lineupid} & \cellcolor{grey}\textbf{year} & \cellcolor{grey}\textbf{mon} & \cellcolor{grey}\textbf{day} &
      \cellcolor{grey}\textbf{home} & \cellcolor{grey}\textbf{mp}    \\ \hline
      13507 & 2013 & 11 & 09 & MIA & 4.30   \\
      77727 & 2012 & 12 & 12 & MIA & 14.70  \\
      58420 & 2015 & 11 & 07 & SAC & 10.30  \\
      58482 & 2015 & 11 & 07 & SAC & 11.10  \\
      58420 & 2014 & 12 & 08 & MIN & 11.70  \\
      \hline
      \end{tabular}
      }
      \end{center}
      \label{table:lpgs}
    \end{minipage}
  }\hfill
   \subfloat[\small Result of $\query_1$]{
   \begin{minipage}[b]{0.40\linewidth}\centering
     {\scriptsize
      \begin{tabular}{l|ccc|} \cline{2-4}
& \cellcolor{grey}\textbf{team} & \cellcolor{grey}\textbf{season} & \cellcolor{grey}\textbf{win} \\ \cline{2-4}
$t_1 \to$ & \cellcolor{green!20} $\cellcolor{green!20} \abbrGSW$ & \cellcolor{green!20}2012-13 & \cellcolor{green!20}47 \\
& $\abbrGSW$ & 2013-14  & 51 \\
& $\abbrGSW$ & 2014-15 & 67 \\
$t_2 \to$ & \cellcolor{red!20}$\abbrGSW$ & \cellcolor{red!20}2015-16 & \cellcolor{red!20}73 \\
& $\abbrGSW$ & 2016-17 & 67 \\\cline{2-4}
\end{tabular}
}
\label{table:GSWrecord}
    \end{minipage}
}
\\
   \subfloat[\small User question $\uquestion_1$]{
\begin{minipage}[b]{\linewidth}
    \centering
\begin{tcolorbox}[colback=white,left=2pt,right=2pt,top=0pt,bottom=0pt]
$\uquestion_1$: \emph{Why did $\abbrGSW$ win 73 games in season $2015 \mbox{-} 16$ ($t_2$) compared to 47 games in $2012 \mbox{-} 13$.} 
\end{tcolorbox}
    \label{fig:uq1}
\end{minipage}
}
  \vspace{-4mm}
  \caption{\label{fig:running-base}\small
  Input/output tables, and the user question for Example~\ref{exp:GSW}.}
    \vspace{-5mm}
 \end{figure}

\end{example}

\cut{
\CL{Removed the missing/new player explanation, for now}

\SR{let's discuss -- it might be better to give the boxed explanations first -- the representation is still confusing -- the red/green boxes are saying the same thing}
}

In this paper,
we present an approach that answer questions like $\uquestion_1$ (\Cref{fig:uq1}). Our approach produces  insightful explanations that
are based on contextual information mined from tables that are related to the tables accessed by a user's query.
To give a flavor of the explanations produced by our approach, we present two of the top explanations for $\uquestion_1$ in
\Cref{Exp: curry_points,exp:green_thompson}
(the formal definitions and scoring function are presented in the next section). Each explanation consists of three elements:
(1) A \emph{join graph} consisting of a node labeled PT representing the table(s) accessed by the user's query (we refer to this as the  provenance table, or \abbrPT for short), and nodes representing other tables that were joined with the provenance table to provide the context. Edges in a join graph represent joins \revc{between two tables. Each edge is labeled with the join condition that was used to connect the tables.}
(2)   A \emph{pattern}, which is a conjunction of predicates on attributes from the provenance or any table from the context.
   (3) \reva{The \emph{support} of the pattern in terms of the number of tuples from the provenance of each of the two result tuples from the user question that are covered by the pattern (bold and underlined in the  explanations shown below, formally defined in \Cref{sec:quality-measure}).}
\begin{figure}[t!]
\begin{minipage}[t]{0.72\linewidth}
\centering
\begin{minipage}[b]{\linewidth}
\centering
\includegraphics[width=0.7\textwidth]{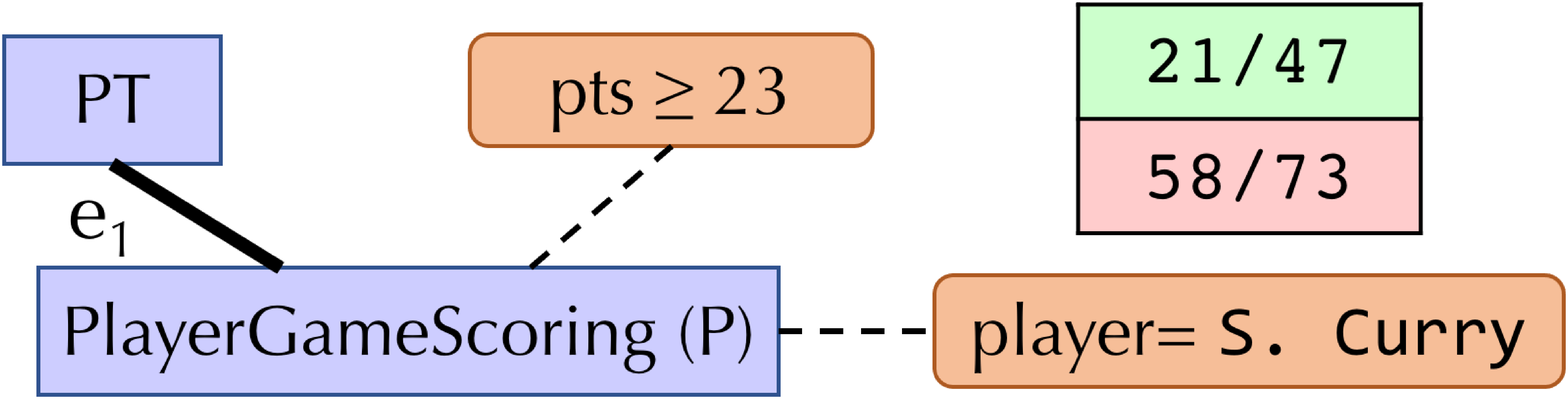}
\end{minipage}
\begin{minipage}[b]{\linewidth}
\centering
\footnotesize
{$\tlabele(\jedge_1)$ = (PT.year=P.year
  $\wedge$ PT.month=P.month $\wedge$ PT.day=P.day $\wedge$ PT.home=P.home)}
\end{minipage}\\[-2mm]
\subcaption{Join graph $\jgraph_1$ + pattern $\pat_1$ for $\uquestion_1$: Star Player}
\label{Exp: curry_points}
\end{minipage}
\begin{minipage}{.25\linewidth}
\centering
\includegraphics[width=\textwidth]{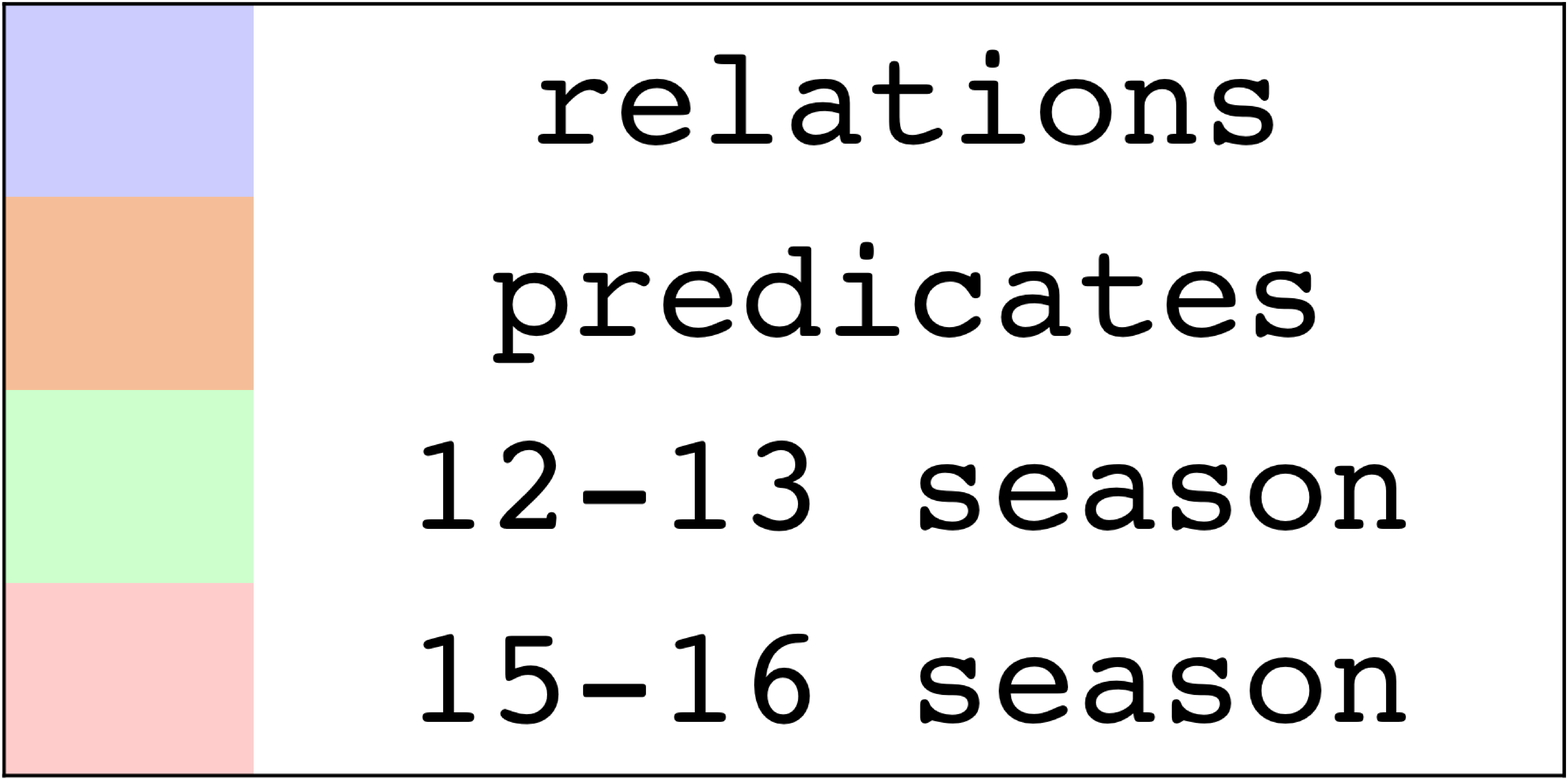}
\subcaption{Legend}
\label{table:season}
\end{minipage}

\begin{minipage}[t]{\linewidth}
\begin{minipage}{\linewidth}
\centering
\includegraphics[width=\textwidth]{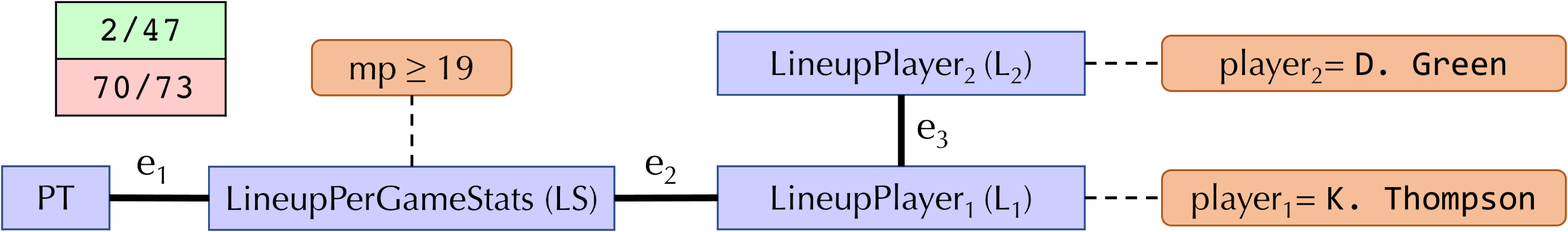}
\end{minipage}
\begin{minipage}{\linewidth}
\centering
\footnotesize
\begin{itemize}[leftmargin=*]
    \item[] $\tlabele(\jedge_1)$ = (PT.year=LS.year $\wedge$  PT.month=LS.month $\wedge$ PT.day=LS.day $\wedge$ PT.home=LS.home)
    \item[] $\tlabele(\jedge_2)$ = (LS.lineupid = $L_1$.lineupid) \quad $\tlabele(\jedge_3)$ = ($L_1$.lineupid = $L_2$.lineupid)
\end{itemize}
\end{minipage}\\[-2mm]
\subcaption{Join graph $\jgraph_2$ with pattern $\pat_2$ for $\uquestion_1$: Pair of players }
\label{exp:green_thompson}
\end{minipage}\\[-4mm]
\caption{\small Explanations for $\uquestion_1$}
\vspace{-5mm}
\label{exp: exps for uq1}
\end{figure}

Intuitively, the explanation from \Cref{Exp: curry_points} can be interpreted as: 
\begin{tcolorbox}[colback=white,left=2pt,right=2pt,top=0pt,bottom=0pt]
$\abbrGSW$ won more games in season $2015 \mbox{-} 16$ because Player \playerSC{} scored $ \geq 23$ points in \underline{$\mathbf{58}$ {\bf out of 73 games}} in $2015 \mbox{-} 16$ compared to \underline{$\mathbf{21}$ {\bf out of 47 games}} in $2012 \mbox{-} 13$. 
\end{tcolorbox}

\cut{
\SR{Add 2018-19 season}
\CL{I used 2012-13 season instead}
\ZM{Curry's average points in 15-16 season is 30.1, so I'm afraid that people may not like this explanation that his points between [24, 27] leads to more wins}
\CL{Seems only certain kinds of patterns are useful in summarization for comparison, e.g. in the above example explanation Exp \ref{Exp: curry_points}, with different pts, same player name make better sense than with different players in the pattern, but hard to generalize this for all the contexts. Proposal: put user in the loop?}
}

Given this explanation, the user can infer that 
\playerSC{} was one of the key contributors for the improvement of GSW's winning record since his 
points significantly improved
in the $2015 \mbox{-16}$ season.
Similarly, the explanation  in \Cref{exp:green_thompson} can be interpreted as:
\begin{tcolorbox}[colback=white,left=2pt,right=2pt,top=0pt,bottom=0pt]
$\abbrGSW$ won more games in season $2015 \mbox{-} 16$ because Player \playerDG{} and Player \playerKT{}'s on-court minutes together were  $\geq 19$ minutes in \underline{$\mathbf{70}$ {\bf out of 73 games}} in the $2015 \mbox{-} 16$ season compared to only \underline{$\mathbf{2}$ {\bf out of 47 games}} in $2012 \mbox{-}13$ season.
\end{tcolorbox}
\cut{
\SR{The ``winning games'' part is unclear in 2c}
\CL{Agree this might be hard to capture without prior information given, removed them}

\SR{Add 2018-19 season}
\CL{I used 2021-13 season instead}
}

This 
implies that Green and Thompson's increase of playing time together might have helped improve $\abbrGSW$'s 
record. \revs{We will discuss other example queries, user questions, and explanations returned by our approach using the NBA and the MIMIC hospital records dataset~\cite{johnson2016mimic} in \Cref{sec:case-study}.}


\cut{
\revm{
\begin{example}\label{exp:MIMIC}
Consider the (simplified) \textbf{MIMIC-III Critical Care database} \cite{johnson2016mimic} with the following relations (the keys are underlined).
\begin{itemize}
\item {\tt Admissions(\underline{adid}, dischargeloc, adtype, insurance, isdead, HospitalStayLength)} contains information 
about hospital admissions.
\item {\tt Diagnoses(\underline{pid, adid, did}, diagnosis, category)} records the diagnoses for each patient (identified by {\tt pid}) during each admission, one patient could have multiple diagnoses during one admission, which are identified by \emph{did}.
\item {\tt PatientsAdmissionInfo(\underline{pid,  admid}, age, religion, ethnicity)} records the information from a patient upon admission to the hospital. Note that one patient could have multiple entries in this table because one patient could have multiple admissions during their lifetime.
\item {\tt ICUStays(\underline{pid, admid, iid}, staylength, icutype)} records the information of the ICU stays of patients, similarly, \textit{iid} can be used to identify different ICU stays within one admission.
\end{itemize}
\end{example}
Suppose a data analyst writes the following query to find out the relationship between the insurance type 
and the death rate:
}
\begin{lstlisting}[escapeinside=!!]
!$\query_2=\,$! SELECT insurance, SUM(isdead)/count(*) AS death_rate,
            count(*) AS admit_cnt
     FROM Admissions GROUP BY insurance
\end{lstlisting}

\revm{
Given the result of this query in Figure ~\ref{Result of Q_2}, the analyst may be interested in $UQ_2$: \emph{given similar numbers of admissions, why is the death rate of patients with insurance type=$\abbrMed$ more than $2$ times larger ($t_1$) than that of patients with insurance=$\abbrPriv$}. \Cref{topk:mimic} lists the top-2 explanations returned by \oursys{}, which show that among the deaths, $\abbrMed$ has a larger fraction of admissions in emergency and male older patients ($age > 64$) compared to $\abbrPriv$. We will revisit this example in our experiments.}

\begin{figure}[t]\scriptsize\setlength{\tabcolsep}{3pt}
 \vspace{-2mm}
\subfloat[\small Result of $\query_2$]{
\begin{minipage}{0.45\linewidth}
	{\scriptsize
	\begin{tabular}{l|ccc|} \cline{2-4}
	& \cellcolor{grey}\textbf{insurance} & \cellcolor{grey}\textbf{death rate} & \cellcolor{grey}\textbf{admit\_cnt} \\ \cline{2-4}
	$t_1 \to$ & \cellcolor{green!20} $\cellcolor{green!20} \abbrMed$ & \cellcolor{green!20} 0.14 & \cellcolor{green!20} 28215 \\
	& Self Pay & 0.16 & 611 \\
	& Government & 0.05 & 1783 \\
	$t_2 \to$ & \cellcolor{red!20}$ \abbrPriv $ & \cellcolor{red!20} 0.06 & \cellcolor{red!20} 22582 \\
	& Medicaid & 0.07 & 5785 \\\cline{2-4}
	\end{tabular}
	}
	\label{Result of Q_2}
\end{minipage}
}\hfill
   \subfloat[\small User question $\uquestion_2$]{
\begin{minipage}[b]{0.45\linewidth}
    \centering
\begin{tcolorbox}[colback=white,left=1pt,right=1pt,top=0pt,bottom=0pt]
$\uquestion_2$: \emph{Given close number of admissions why do patients using $\abbrMed$ insurance have higher death rate ($t_1$, 14\%) compared to patienst with
$\abbrPriv$ insurance ($t_2$, 6\%)?}
\end{tcolorbox}
    \label{fig:uq2}
\end{minipage}
}
\\

\subfloat[\small Top-2 Explanations from \oursys{}]{
\begin{minipage}[b]{\linewidth}\centering
	{\footnotesize
      \begin{tabular}{|ccc|}
      \cellcolor{grey}\textbf{rank} & \cellcolor{grey}\textbf{pattern desc} & \cellcolor{grey}\textbf{\abbrF} \\
      $1$ & \makecell{\bf (prov.adtype=\textit{emergency}):\\ \abbrMed=23634/28215 {\bf (83.7\%)} VS \abbrPriv=    12636/22582 {\bf (56\%)}} & 0.73 \\
      \hline
      $2$ & \makecell{\bf (patient.gender=\textit{male},~admissioninfo.age<64): \\ \abbrMed=2576/28215 {\bf (9.1\%)} VS \abbrPriv=11533/22582 {\bf (51.1\%)}} & 0.63 \\
      \hline
      \end{tabular}
      }
       \label{topk:mimic}
\end{minipage}
}
 \vspace{-2mm}
 \caption{\label{fig:running-mimic}\small
  $\query_2$ results and user question for Example ~\ref{topk:mimic}.}
    \vspace{-4mm}
\end{figure}

}



\noindent
\textbf{Our Contributions.}
In this paper, we develop \oursys{} (Context-Aware Join-Augmented Deep Explanations), the first system that automatically augments provenance data with related contextual information from other tables to produce more informative summaries of the difference between the values of two tuples in the answer of an aggregate query, or, the high/low value of a single outlier tuple. We make the following contributions in this paper.

\mypar{(1) Join-augmented provenance summaries as explanations}
We propose the notion of join-augmented provenance and use summaries of augmented provenance as explanations.
The \emph{join-augmented provenance} is generated based on a \emph{join graph} that encodes how the provenance should be joined with tables that provide context.
We use patterns, i.e., conjunctions of equality and inequality predicates, to summarize the difference between the \emph{join-augmented provenance} of two tuples $t_1, t_2$ from a query's output selected by the user's question.
We adapt the notion of \emph{\abbrF} to evaluate the quality of patterns . A high \abbrF is likely to combine two desired properties for the summary  distinguishing $t_1$ and $t_2$ by giving preference to patterns with (i) high \emph{recall} (the pattern covers many tuples in the provenance of $t_1$) and (ii) high \emph{precision} (the pattern does not cover many tuples in the provenance of $t_2$). {\bf (\Cref{sec:definition})}

\mypar{(2) Mining patterns over augmented provenance}
We present algorithms for mining patterns for a given join graph and discuss a number of optimizations.
Even if we fix a single join graph to compute the augmented provenance, the large number of possible patterns poses challenges to efficiently mining patterns with high \abbrF values. Our optimizations include clustering and filtering attributes using machine learning methods, using a monotonicity property for the recall of patterns to prune \textit{refinements} of patterns (patterns are refined by adding additional predicates), and finding useful patterns on categorical attributes before considering numeric attributes to reduce the search space. {\bf (\Cref{sec:algo_pattern})}.
%
%

\mypar{(3) Mining join graphs giving useful patterns}
We also address the challenge of mining patterns over different join graphs that are based on a \emph{schema graph} which encodes which joins are permissible in a schema.
We prune the search space by estimating the cost of pattern mining as well as detecting from the available join patterns if the join graph is unlikely to generate high quality patterns. {\bf (\Cref{sec:join-graph-enumeration})}

\mypar{\revs{(4) Qualitative and quantitative evaluation}} \revs{We quantitatively evaluated the explanations produced by our approach using a case study using two real world datasets: NBA and MIMIC. We further conducted a user study to evaluate how useful the explanations generated by our approach are and how they compare with explanations generated based on the original provenance alone {\bf (\Cref{sec:case-study})}.}
We conducted performance experiments using the NBA and MIMIC database to evaluate scalability varying parameters of our algorithms, and  demonstrate the effectiveness of our optimizations.  {\bf (\Cref{sec:experiments})}

\cut{
A naive solution using existing provenance-based frameworks would try to join all possible tables to create a huge universal table, and then summarize tuples in the universal table consisting of tuples in the provenance of the original query result. This naive approach is apparently inefficient because of the huge space of possible joins and the exponential time complexity in the number of attributes for the provenance-based explanation techniques.

Therefore, we developed \oursys{}, a first system that automatically augment provenance data with related information with the goal to produce more comprehensive summaries. Given a schema graph that encodes the semantic relationships of tables in a database schema, from the base tables in the user query, \oursys{} automatically finds joins with connected tables in the schema graph enriching the provenance of a query result. By summarizing the results of such joins, \oursys{} can generate rich, high-level explanations to interesting query outcomes.
}


  \vspace{-2mm}
\section{Join-Augmented Provenance}
\label{sec:definition}

\cut{
A database schema $\dbsch$
contains a set of relations $\rsch{\rel_1}, \ldots, \rsch{\rel_k}$.
A relation schema $\rsch{\rel_i} = (A_1, \ldots, A_{m_i})$ consists of a name ($\rel$) and a  list of attribute names.
The \emph{arity} $\arity{\rsch{\rel_i}} =  m_i$ of a relation schema $\rsch{\rel_i}$ is its number of attributes.
A relation $\rel$ of relation schema $\rsch{\rel}$ is a subset of $\ddom^{\arity{\rsch{\rel}}}$ where $\ddom$ is a universal domain of attribute values.}

A database $\db$ 
comprises a set of relations $\rels(D) = \{\rel_1, \ldots, \rel_k\}$. We will use $\db$ and $\rel_1, \ldots, \rel_k$ both for the schema and the instance where it is clear from the context. For a relation $R$, $\attrs(R)$ denotes the set of attributes in $R$; similarly, for a set of relations $\mathbf S$, $\attrs(\mathbf S) = \cup_{R \in {\mathbf S}}~ \attrs(R)$ denotes the set of attributes in relations in $\mathbf S$. Without loss of generality, we assume the attribute names are distinct and use $R.A$ for disambiguation if an attribute $A$ appears in multiple relations.
\revm{In this work we focus on simple single-block SQL queries with a single aggregate function (select-from-where-group by), or, equivalently extended relational algebra queries with the same restriction.\footnote{\revc{Extensions are discussed in Section~\ref{sec:conclusions}}}.} 
\cut{
\reva{focus on relational queries expressed in SQL. We do not restrict that class of queries per se, but require that we have access to a provenance system that can compute the provenance to any query used as input to our system using the provenance representation discussed in~\Cref{sec:provenance-table}. That being said, our approach summarizes augmented provenance and, thus, works best for queries that involve aggregation or other types of data reduction operations for which a single result depends on many inputs.}
}
%
Given a query  $\query$, 
 $\query(\db)$ 
 denotes the result of evaluating the query over a database $\db$. We use 
 $\relsinQ \subseteq \rels(D)$
 to denote the relations  accessed by $Q$.
 %



\subsection{Provenance Table}
\label{sec:provenance-table}

A large body of work has studied provenance semantics for various classes of queries (e.g., \cite{GKT07-semirings, DBLP:journals/ftdb/CheneyCT09}). Here we resort to a simple \emph{why-provenance} \cite{DBLP:journals/ftdb/CheneyCT09} model sufficient for our purpose. We define the  provenance of an output tuple $t \in Q(D)$ of \reva{a query} $Q$
\reva{as a subset of the cross product ($\times$) of all relations in $\relsinQ$. For instance, Perm~\cite{glavic2009perm} can produce this type of provenance for queries in relational algebra plus aggregation and nested subqueries. In our implementation we use the GProM system~\cite{AF18}.}

\begin{Definition}[Provenance Table]\label{def:prov-table}
  Given a $\query$ with $\relsinQ = \{R_{j_1}, \cdots, R_{j_p}\}$, we define the provenance table
$\pt{\query,\db}$ for $\db$ and $\query$ \reva{to be a subset of $R_{j_1} \times \cdots \times$} $\revc{R_{j_p}}$.
\reva{We assume the existence of a provenance model that determines which tuples from the cross product belong to $\pt{\query,\db}$.}
  For a tuple $\tup \in \query(\db)$, we define the provenance table $\pt{\query,\db,\tup}$ 
  to be the subset \reva{of the provenance that contributes to $\tup$ (decided by the provenance model).}
\end{Definition}

\begin{example}
In Example ~\ref{exp:GSW}, $\pt{\query_1,\db}$ contains all the tuples from \Cref{table:game} which has $\abbrGSW$ as the winner, i.e., $g_2, g_3, g_4$, and $g_5$.
For $t_1, t_2 \in \query_1(D)$ as shown in Table~\ref{table:GSWrecord},
$\pt{\query_1,\db,t_1}$ includes all the tuples where \abbrGSW{} won in the $2012 \mbox{-} 13$ season, i.e., $g_2$, and  $\pt{\query_1,\db,t_2}$ contains
$g_3$ and $g_5$.
\end{example}


\subsection{\capSchemaGraph s and \capJoinGraph s}\label{sec:schema-graph}

\begin{figure}
\begin{minipage}{0.8\linewidth}
\centering
\tikzset{font=\scriptsize}
\begin{tikzpicture}[
node distance=1cm,
longnode/.style={rectangle,draw,thin,fill=blue!20, minimum height=3mm, text width=2.3cm, align=center, on grid},
shortnode/.style={rectangle,draw,thin,fill=blue!20, minimum height=3mm, text width=1cm, align=center, on grid},
every loop/.style={min distance=2mm,looseness=7},
longdist/.style={distance=2cm},
shortdist/.style={distance=0.5cm}
]
  \tikzset{}
    \node[longnode] (PGS)   {PlayerGameScoring (P)};
    \node[longnode] (LS)   [right=3cm of PGS] {LineupPerGameStats (LS)};
    \node[shortnode]  (G)  [below of=PGS] {Game (G)};
    \node[longnode]  (LP) [below of=LS] {LineupPlayer (L)};

  \path
        (PGS)  edge  node[left]  {$\sedge_1$} (G)
        (LS)  edge  node {$\sedge_2$} (G)
              edge  node[right] {$\sedge_3$} (LP)
         (LP) edge[loop right]  node[left] {$\sedge_4$} (LP)
              ;
\end{tikzpicture}
\end{minipage}

\begin{minipage}{\linewidth}
\footnotesize
\begin{flushleft}
$\sglabel(\sedge_1) = \{(P.year=G.year \wedge P.month=G.month \wedge P.day=G.day$ \\
\quad $\wedge P.home=G.home)$, $(P.year=G.year \wedge P.month=G.month$ \\
\quad$\wedge P.day=G.day \wedge P.home=G.home \wedge P.home=G.winner)\}$\\
$\sglabel(\sedge_2) = \{(G.year=LS.year \wedge G.month=LS.month \wedge G.day=LS.day \wedge G.home=LS.home)\}$\\
$\sglabel(\sedge_3) = \{(LS.lineupid = L.lineupid)\}$\\
$\sglabel(\sedge_4) = \{(L.lienupid = L.lineupid)\}$
\end{flushleft}
\vspace{-5mm}
\end{minipage}
\caption{Schema Graph for Example~\ref{exp:GSW}.}
\label{fig:schgraph}
  \vspace{-4mm}
\end{figure}

\textbf{Schema graphs.~} As mentioned in the introduction, we create explanations by summarizing provenance augmented with additional information produced by joining the provenance with related tables. 
We assume that a \emph{\termSchemaGraph} is given as input that models which joins are allowed.
\cut{
Given a database schema $\dbsch$, 
a \textit{\termSchemaGraph} encodes possible ways for joining the relations of the schema.
}
The vertices of \termSchemaGraph  correspond to the relations in the database. Each edge in this graph encodes a possible join between the connected relations, and is labeled with a set of possible join conditions between the two connected relations.
We use $\cond$
to denote the set of all predicates involving Boolean
conjunctions ($\wedge$) 
and equality (=) of two attributes or an attribute with a constant
that can be used for joining relations in $D$ (i.e., only \emph{equi-joins} are allowed, although all common attributes do not have to be used as in natural joins).

\begin{Definition}[\capSchemaGraph]\label{def:join-graph}
  Given a database schema $D$,  \termSchemaGraph\ $\asgraph = \sgraph$  for 
  $D$ is an undirected 
  edge-labeled graph with nodes $\snodes = \rels(D)$,
  edges $\sedges$,
  and a labeling function $\sglabel: \sedges \rightarrow 2^\cond$  that associates a set of conditions with every edge from $\sedges$. We require that for each edge $\sedge \in \sedges$, each condition in $\sglabel(\sedge)$ only references attributes from relations adjacent to $\sedge$.
\end{Definition}
\revc{Note that $\sglabel$ is an input for our method. To create \termSchemaGraphs, our system can extract join conditions from the foreign key constraints of a database and also allows the user to provide additional join conditions. Furthermore, $\sglabel$ could be determined using join discovery techniques such as~\cite{FA18,ZD19,SF12a}.}
\Cref{fig:schgraph} shows the simplified schema graph for the NBA dataset discussed in \Cref{exp:GSW}.
Unused relations in examples above are omitted.
In the schema graph, relations are represented by nodes and are connected through edges ($\sedge_1,\sedge_2, \ldots, \sedge_4$)
with conditions as labels.
For example, $\sglabel(\sedge_1)$ in \Cref{fig:schgraph}
implies that we are allowed to join \texttt{PlayerGameScoring(P)}  with \texttt{Game(G)} in two different ways: (1) through an equi-join on \emph{year, month, day}, and \emph{home} (i.e., the home-team of a game), which form the key of a game and therefore gives players' stats in all the games they played, and (2) with an additional condition on  \emph{home = winner}, which gives players' stats in the games when the home team won. Note that there is an edge $\sedge_4$ which suggests  node \texttt{LineupPlayer(L)} can be-joined with itself on condition \textit{L.lineupid = L.lineupid} (renaming of $L$ is needed in the actual join) to find players in the same lineup.
\cut{
For any edge $\sedge=(n_1,n_2)$, we define $\estart{\sedge}=n_1$, $\eend{\sedge}=n_2$.
To disambiguate attribute names within a join condition $\sglabel(\sedge)$ of an edge $\sedge$, we prefix attributes based on the following convention: attributes of the starting relation $\estart{\sedge}$ are prefixed with \textit{left}  while attributes of end relation $\eend{\sedge}$  are prefixed with \textit{right}.
\SR{should discuss directed/undirected.}
}


\mypar{Join graphs}
A \textit{\termJoinGraph} $\jgraph$ 
encodes \emph{one possible way} of augmenting 
$\pt{\query,\db}$ with related tables in the schema.
%
It contains a distinguished node $\ptlabel$ representing the relations in $\relsinQ$.
The other nodes of 
$\jgraph$ are labeled with relations,
edges in 
$\jgraph$ are labeled with join conditions
allowed by the \termSchemaGraph\ $\asgraph$, and there can be multiple parallel edges between two nodes ($\jgraph$ is a multi-graph).
\begin{Definition}[\termJoinGraph]\label{def:join-tree}
Given a database $D$, \termSchemaGraph\ $\asgraph=\sgraph$ and query $\query$, a \termJoinGraph\ $\jgraph$ for $\asgraph$ is a node- and edge-labeled undirected multigraph $\jjgraph$ with nodes  $\jgraphnodes$, edges $\jgraphedges$, a node labeling function $\tlabelv: \jgraphnodes \rightarrow \rels(D) \cup \{PT\}$, and edge labeling function $\tlabele: \jgraphedges \rightarrow \cond$. For any \termJoinGraph\ we require that it contains exactly one node labeled with $\ptlabel$ and there are no edges with $PT$ as both end-points.
For every edge $\jedge =(\node_1,\node_2) \in \jgraphedges$ we require that there exists a corresponding edge $\sedge = (\rel_1,\rel_2) \in \sedges$ such that all of the following conditions hold:
\begin{itemize}
\item
$\tlabele(\jedge) \in \sglabel(\sedge)$
(modulo renaming relations using their aliases for disambiguation as discussed below) 
\item If\ $\tlabelv(\node_1) = \ptlabel$, then $\rel_1 \in \relsinQ$, else, $\tlabelv(\node_1) = \rel_1$
\item If\  $\tlabelv(\node_2) = \ptlabel$, then $\rel_2 \in \relsinQ$, else, $\tlabelv(\node_2) = \rel_2$
\end{itemize}
\end{Definition}
The first condition above says that the join condition between two relations in the  \termJoinGraph\ $\jgraph$ should be one of the allowed conditions in the \termSchemaGraph\ $\asgraph$.
The second and third conditions say that edges adjacent to node $PT$ should correspond to an edge adjacent to a relation 
accessed by query $Q$.
Note that multiple nodes in $\jgraphnodes$ may be labelled with the same relation and also relations from $\relsinQ$ may appear node labels.

\mypar{Disambiguation of relations and attributes 
in a join graph using aliases and multigraph}
In join graphs corresponding to a schema graph, we may need to address
some ambiguity in attribute names and relation names. (1) Unlike the schema graph $\asgraph$, the join graph $\jgraph$ may contain the same relation $R_i$ multiple times with node label $\neq \ptlabel$. We give each such occurrence of $R_i$ a fresh label $R_{i1}, R_{i2}, \cdots$ in $\jgraph$. Each original attribute $R_i.A$ in the conditions in labels $\acond$ are now renamed as $R_{i1}.A, R_{i2}.A,$ and so on in the edges incident on $R_{i1}, R_{i2}, \cdots$ respectively in $\jgraph$. (2)  In addition to the join graph $\jgraph$, even in the original query $Q$ and therefore in the provenance table $\pt{Q, D}$, the same relation $R_i \in \relsinQ$ can appear multiple times using different aliases, say,  $R_{i1}, R_{i2}$. Suppose in the schema graph $\asgraph$ there is an edge between $R_i, R_j$. Then in a join graph $\jgraph$, there can be two parallel edges between node $\ptlabel$ and $R_j$, one corresponding to the join condition between $R_{i1}$ and $R_j$, and the second one corresponding to the join condition between $R_{i2}$ and $R_j$. The labels of these edges will use the corresponding aliases ($R_{i1.A}$ on one edge and $R_{i2.A}$ on the other) for disambiguation. Note that in a join graph, there can be a combination of (1) and (2).

\begin{example}
Consider the join graph $\jgraph_2$ from \Cref{exp:green_thompson}. Since $\relsin{Q_1}$ = \{\texttt{Game}\}, $\ptlabel$ represents the one relation accessed by $Q$.
Nodes from this join graph are connected through edges ($\jedge_1, \jedge_2, \jedge_3$), where each edge has a corresponding condition in the schema graph shown in \Cref{fig:schgraph}. For example, join condition on $\jedge_1$ from \revc{$\jgraph_2$} is the first condition
in the label of \revc{$\sedge_2$} from the schema graph, 
i.e., $\tlabele(\jedge_1) \in \sglabel(\revc{\sedge_2})$. Similarly, $\tlabele(\jedge_2) \in \sglabel(\revc{\sedge_3})$. 
As discussed above, \texttt{LineupPlayer} appears more than once in the join graph renamed as \texttt{LineupPlayer$_1$} ($L_1$) and \texttt{LineupPlayer$_2$} ($L_2$). 
\end{example}

\subsection{Augmented Provenance Table}\label{sec:apt}
We now describe the process of generating the relation produced for a given \termJoinGraph\ $\jgraph$ --- the result of joining the relations in the graph $\jgraph$ based on the encoded join conditions (after renaming relations and attributes as described in the previous section).
\cut{
\blue{Since a relation may appear more than once as a node label in a \termJoinGraph\ and the attribute names of relations may not be disjoint, we define for a join graph $\jgraph$ a renaming function $\renameA$ that takes as input a $\node$ from $\jgraphnodes$ and an attribute from the relation corresponding to $\node$ and return a fresh attribute name $\att$ from a set of attribute names $\{ \att_1, \ldots, \}$ that we assume to be disjoint with the set of all attribute names that appear in the database schema. We use $\renameA^{-1}$ to denote the inverse of $\renameA$. Furthermore, we use $\renameA_{\node}$ to denote that restriction of $\renameA$ to $\node$, i.e., the function $\renameA_{\node}(\att) = \renameA(\node, \att)$. We assume that for a provenance table with $n$ attributes, $\renameA$ renames the attributes of the provenance table to $A_1, \ldots, A_n$. Furthermore, we assume that $\renameA$ is deterministic which allows us to extend a \termJoinGraph\ with new nodes and edges without affecting the renamed attributes for any existing nodes in the graph.}
}

\begin{Definition}[Augmented Provenance Table]\label{def:apt}
 Consider a database $\db$, a query $\query$, and a \termJoinGraph $\jgraph = \jjgraph$. Let $S_{j_1}, \cdots, S_{j_p} = \jgraphnodes - \{\ptlabel\}$,
 i.e., all the relations that appear in $\jgraph$ with labels $\neq \ptlabel$.
  Furthermore, let $t \in Q(D)$ and tuple $t' \in \pt{Q, D, t}$, we define:
  The \emph{augmented provenance table} (\emph{\abbrAPT}) for $D$, $Q$, and $\jgraph$ (and $\tup$, $\tup'$) is defined as
  \begin{align*}
    \jtr(Q, \db, \jgraph) &=
    \sigma_{\theta_\jgraph}
    ( \pt{Q, D} \times S_{j_1}
    \times \cdots \times S_{j_p} )\\
    \jtr(Q, \db, \jgraph, \tup) &=
                                  \sigma_{\theta_\jgraph}
                                  ( \pt{Q, D, \tup} \times S_{j_1}
                                  \times \cdots \times S_{j_p} )\\
 \jtr(Q, \db, \jgraph, \tup, t')  &=
                                  \sigma_{\theta_\jgraph}
                                  ( \{\tup'\} \times S_{j_1}
                                  \times \cdots \times S_{j_p} )
  \end{align*}

Here $\theta_\jgraph = \bigwedge_{(S_a, S_b) \in \jgraphedges} \tlabele((S_a, S_b))$ is the conjunction of join conditions in the \termJoinGraph\ $\jgraph$.
 The join conditions only use equality comparisons between two attributes. 
 \revm{We assume} that duplicate (renamed) columns are removed from  $\jtr(Q, \db, \jgraph)$.

\end{Definition}

\begin{example}
Consider $\jgraph_1$ in \Cref{Exp: curry_points} that combines provenance table PT with {\tt PlayerGameScoring} through an equi-join on \emph{year, month,  day}, and \emph{home}. 
\Cref{exp:apt-table} shows the join result $\jtr(Q_1, \db, \jgraph_1)$ using the tuples from \Cref{table:game,table:pgs}. 

\begin{figure}[t]\scriptsize\setlength{\tabcolsep}{2pt}
    \begin{minipage}[b]{\linewidth}\centering
     {\scriptsize
      \begin{tabular}{|ccccccccccc|}  \cellcolor{grey}\textbf{year} & \cellcolor{grey}\textbf{mon} & \cellcolor{grey}\textbf{day} & \cellcolor{grey}\textbf{home} & \cellcolor{grey}\textbf{away} & \cellcolor{grey}\textbf{home\_pts} &  \cellcolor{grey}\textbf{away\_pts} & \cellcolor{grey}\textbf{winner} & \cellcolor{grey}\textbf{season} & \cellcolor{grey}\textbf{player} & \cellcolor{grey}\textbf{pts} \\\cline{1-11}
       2012&12&05 & DET &\abbrGSW  & 97 & 104 & \abbrGSW & 2012-13 & \playerSC{} & 22\\
       2012&12&05 & DET &\abbrGSW  & 97 & 104 & \abbrGSW & 2012-13 & \playerKT{} & 27\\
       2012&12&05 & DET &\abbrGSW  & 97 & 104 & \abbrGSW & 2012-13 & \playerDG{} & 2\\
       2015&10&27 & \abbrGSW & NOP & 111 & 95 & \abbrGSW & 2015-16 & \playerSC{} & 40\\
       2016&01&22 & \abbrGSW & IND & 122 & 110 & \abbrGSW & 2015-16 & \playerSC{} & 39\\
       2016&01&22 & \abbrGSW & IND & 122 & 110 & \abbrGSW & 2015-16 & \playerKT{} & 18\\
      \cline{1-11}
      \end{tabular}
      }
        \vspace{-2mm}
    \caption{$\jtr(Q_1, \db, \jgraph_1)$ result using example tuples}
    \label{exp:apt-table}
\vspace{-5mm}
    \end{minipage}
\end{figure}

\end{example}

  \vspace{-2mm}
\subsection{Explanations with Augmented Provenance}
\label{sec:explanations}
\oursystem's approach for generating explanations is based on summarizing augmented provenance tables. In particular, given a database and a query, the user identifies interesting or surprising tuples in the query answer (e.g., the aggregate value is high/low, or the value of a tuple is higher/lower than another).
To explain such interesting results, \oursystem{} returns \emph{patterns} (i.e., predicates) that each summarize the difference between the augmented provenance for two query result tuples (or the provenance of one result tuple).

\mypar{User questions} Given a database $D$ and a 
query $Q$,   \oursystem\ supports \textbf{two-point questions} or comparisons, which we will discuss by default:
        {\em Given
    $t_1, t_2 \in Q(D)$, summarize
    input tuples in $D$ that differentiate $t_1$ from $t_2$.}
    \cut{
    Given 
    $t_1, t_2 \in Q(D)$, 
    what in the input can best
    explain the difference of
    $t_1.\val$ and  $t_2.val$. 
    }
However, 
\oursystem\ also works for 
\textbf{single-point questions:} {\em Given a single
tuple $t \in
    Q(D)$, summarize input tuples in $D$ that differentiate $t$ from the rest of the tuples. }
    \cut{
    what in the input can best
    explain 
    $t.\val$.
    }
    Here the intuitive idea is to treat $t$ as $t_1$, and all tuples $t' \neq t \in Q(D)$ as $t_2$.

\mypar{Explaining aggregates vs. summarizing 
provenance vs. non-provenance}
\revs{Instead of directly explaining why an aggregate value $t.\val$ is high/low or $t_1.\val$ higher/lower than another value $t_2.\val$ \cite{WuM13, RS14, DBLP:conf/sigmod/MiaoZGR19}, the goal of \oursystem{} is to use ``patterns'' (discussed below) to summarize the input tuples that contributed the most to an output tuple  as well as distinguish it from the other outputs. Therefore, unlike the approaches in~\cite{WuM13, RS14, DBLP:conf/sigmod/MiaoZGR19}, in \oursystem, the aggregate values $t_1.\val, t_2.\val, t.\val$  do not play a role in the explanations}\footnote{Taking the amount of contribution (responsibility/sensitivity) of input tuples into account as in~\cite{WuM13, RS14, DBLP:conf/sigmod/MiaoZGR19} is an interesting direction for future work.}.

\mypar{Summarization patterns and explanations}
In the \oursystem\ framework, explanations are provided as \emph{patterns} or \emph{conjunctive predicates} to compactly represent sets of tuples from the augmented provenance tables based on different join graphs.
This type of patterns has been used widely for  explanations~\cite{WuM13, RS14, DBLP:journals/pvldb/GebalyAGKS14, DBLP:conf/sigmod/MiaoZGR19,lee2019pug}.

\begin{Definition}[Summarization Pattern and Matching Tuples]
Let $\rel$ be a relation with attributes $(\att_1, \ldots, \att_m)$, and let $\ddom_i$ denote the \emph{active domain} of attribute $\att_i$ in $R$.
A \emph{summarization pattern} (or simply a \emph{pattern}) $\pat$ is an $m-$ary tuple such that
for every $\att_i \in \rel$,
    (i) if $\att_i$ is a numerical or ordinal attribute:
    $\pat.\att_i \in  \bigcup_{X \in \ddom_i}\{\pleq{X}, \pgeq{X}, \peq{X}\} \cup \{\placeh\}$,
    (ii) if $\att_i$ is a categorical attribute:
      $\pat.\att_i \in \bigcup_{X \in \ddom_i}\{\peq{X}\} \cup \{\placeh\}$.

Here $*$ denotes that the attribute is not being used in the pattern and $X \in \ddom_i$ denotes a threshold for numeric attributes. If $\pat.A_i \neq \placeh$, then $\pat.A_i[0]$ denotes the threshold $X$ and $\pat.A_i[1]$ denotes the comparison operator $\leq, \geq$, or $=$.
\par
A tuple $\tup \in \rel$ \emph{matches} a pattern
$\pat$, written as $\tup \pmatch \pat$, if $\tup$ and $\pat$ agree on all conditions, i.e., $\forall i \in \{1, \ldots, m\}$, one of the following must hold:
    (i) $\pat.\att_i = \placeh$, or
    (ii) 
    ($\tup.\att_i \geq \pat.\att_i[0]) \land (\pat.\att_i[1] = `\geq\text{'})$, or
    (iii) ($\tup.\att_i \leq \pat.\att_i[0]) \land (\pat.\att_i[1] = `\leq\text{'})$, or
    (iv) ($\tup.\att_i = \pat.\att_i[0]) \land (\pat.\att_i[1] = `=\text{'})$.
We use {\upshape$\matches(\pat,\rel)$} to denote $\{ \tup \in \rel \mid  \tup \pmatch \pat \}$.
\end{Definition}

When presenting textual descriptions of summarization patterns, we omit attributes which are set to $\placeh$, and instead include the attribute name as $(A_i: \pat.\att_i[0], \pat.\att_i[1])$ to avoid ambiguity. Also, since the group-by attributes exactly capture the answer tuples $t_1, t_2$, and do not provide any additional information, patterns are not allowed to include attributes used in grouping in the query $Q$.




As discussed in the introduction, the explanations 
given by \oursystem\ consist of a \termJoinGraph\ $\jgraph$, a pattern $\pat$ over  $\jtr(\jgraph,\db)$, and 
statistics on \emph{support} of $\pat$
to show how it differentiates one tuple from the others by {\em augmenting} the provenance using $\jgraph$, and thereby including additional contextual information from other tables in $D$.

\cut{
. Intuitively, by `{\em augmenting}' the provenance table using $\jgraph$, such an explanation includes additional contextual information from other tables.
}
\begin{Definition}[Explanations from Augmented Provenance]\label{def:explanation}
Given a database $\db$,
\termSchemaGraph\ $\asgraph$,  query $\query$,
and a two-point question with \revc{$\tup_1, \tup_2$}$ \in \query(\db)$,  an explanation
is a tuple $\sexpl$ = $(\jgraph$, $\pat$, $(v_1, a_1)$, $(v_2, a_2))$, where $\jgraph$ is a join graph for $\asgraph$; $\pat$ is a pattern over the augmented provenance table $\jtr(Q, \db, \jgraph)$;
and $(v_1, a_1)$ and $(v_2, a_2)$ denote the relative support of
$\pat$ for $t_1, t_2$.\footnote{We will discuss the relative support in the next section.}
\end{Definition}

 For simplicity, we will often drop $(v_1,a_1)$ and $(v_2,a_2)$  as the supports can be easily computed with this information.

\begin{example}\label{eg:pattern}
Consider the explanation from Figure ~\ref{Exp: curry_points}. The pattern $\pat_1$ is found from $\jgraph_1$ by \oursystem\ as the following  tuple: \{(\emph{player}: S.Curry, =), (\emph{pts}: 23, $\geq$) \}. Here \textit{player} is a categorical attribute and \textit{pts} is a numeric attribute (the other attributes are $*$), both coming from the
\texttt{PlayerGameScoring} table.
Any tuple from the $\jtr(Q_1, \db, \jgraph_1)$ which fulfills \emph{player = 'S.Curry'} and \emph{$pts \leq 23$} will be included in {\upshape$\matches(\pat_1,\jtr(Q_1, \db, \jgraph_1))$}. 
Rephrasing the text box, for
$UQ_1$ in Figure~\ref{fig:uq1} one 
explanation is:
$(\pat_1, \jgraph_1, (58, 73), (21, 47))$.
\end{example}

It can be noted that the explanations for two-point questions are asymmetric, as one of the tuples is chosen as the \emph{primary tuple} whose relative support is given by $(v_1, a_1)$, and the second one is chosen as the \emph{secondary tuple}, whose relative support is given by $(v_2, a_2)$. Switching these two tuples may result in a different set of top explanations using quality measures that we discuss next. 

\cut{
Similarly, we will use join graphs and patterns to explain the difference between two query results provided by the user.  For that we generate the \abbrJGresult as before, but instead of one provide two patterns --- one for each of the two results whose difference we should explain. For a pattern $\pat$ over a schema $(\att_1, \ldots, \att_n)$, the schema of the pattern are the attributes $\att_i$ for which $\pat$ is constant. For instance, $\pat = (\placeh, \placeh, 3, 4)$ over schema $(A,B,C,D)$ has schema $(C,D)$.


\begin{Definition}[Difference Explanations]\label{def:diff-expl}
  Let $\db$ be a database, $\sgraph$ a \termSchemaGraph for $\db$, $\query$ a \abbrAggJoinQ, and let $\tup_1, \tup_2 \in \query(\db)$. An explanation for the difference of $\tup_!$ and $\tup_@$ is triple $\dexpl = (\jgraph, \pat_1, \pat_2)$ where $\jgraph$ is a \abbrJoinGraph for $\sgraph$, $\pat_1$ and $\pat_2$ are patterns for $\jtr(\jgraph, \db)$ and $\pat_1$ and $\pat_2$ that (i)  have the same schema and (ii) differ in at least one constant.
\end{Definition}
}


\subsection{Quality Measure of Explanations}\label{sec:quality-measure}
First, we discuss the quality measures for explanations when the join graph is given, and then discuss how to find top explanations across all join graphs mined by our algorithms.

\subsubsection{Quality Measures Given a Join Graph}\label{sec:quality-measure-jg}
For 
a two-point user question focusing on the difference between $t_1, t_2 \in Q(D)$, we would like the pattern in a good explanation to
match as much provenance of $t_1$ as possible, and not match much in the provenance of $t_2$. For this purpose, we adapt the standard notion of \emph{\abbrF}. 
Recall that {\upshape$\matches(\pat,\rel)$} denotes $\{ \tup \in \rel \mid  \tup \pmatch \pat \}$.


\cut{
\begin{itemize}
    \item Define TP (fractional and integral)
    \item Define FP (fractional and integral)
    \item Define FN (fractional and integral)
     \item Define Recall as TP/TP+FN (fractional and integral)
      \item Define Precision as TP/TP+FP (fractional and integral)
       \item Define \abbrF as harmonic mean of Recall and Precision (fractional and integral)
\end{itemize}
}

\begin{Definition}[Quality Metrics of a Pattern]\label{def:metrics}  
Consider a database $\db$, a query 
$\query$, a join graph $\jgraph$, two output tuples in the user question
$\tup_1, \tup_2 \in Q(D)$,
and an explanation pattern $\sexpl = (\jgraph, \pat)$.

(a) A tuple $\tup' \in \pt{\query, \db, \tup_1}$ (similarly for $t_2$) is said to be {\bf covered} by $\sexpl$ if there exists $\tup''\in \jtr(\query, \db, \jgraph, \tup_1, \tup')$ (ref. Definition~\ref{def:apt}) such that $\tup'' \pmatch \pat$.
The \emph{coverage} of $\sexpl$ on $\tup_1, \tup'$ in $\jtr(Q, \jgraph, \db)$ is: 
{\footnotesize
$$\ICov(\sexpl, \jgraph, \tup_1, \tup') = \mathbb{1}[\matches \big(\pat,  \jtr(Q, \jgraph, \db, \tup_1, \tup') \big) \neq \emptyset]$$
}
  where $\mathbb{1}[]$ is the indicator function.

\par
(b) The {\bf coverage} (or, \emph{\bf true positive}) of $\sexpl$ for $t_1$ is defined as the sum of its coverage on all tuples in the provenance table:
{\footnotesize
\begin{align*}
\ITP(\sexpl, \jgraph, \tup_1) &= \sum_{\tup' \in \pt{\query, \db, \tup_1}} \ICov(\sexpl, \jgraph, \tup_1, \tup')
\end{align*}
}

(c) The {\bf false positive} of $\sexpl$ for $t_1$ \revc{in comparison to $t_2$} is the sum of its coverage on all tuples in $\pt{Q, D}$ that are in the provenance of $\tup_2$ \revc{($t_1$ does not appear on the right-hand side here)}:
{\footnotesize
\begin{align*}
\IFP(\sexpl, \jgraph, \tup_1, \tup_2) &=
\sum_{\tup' \in \pt{\query, \db, \tup_2}} \ICov(\sexpl, \jgraph,  \tup_2, \tup')
\end{align*}
}

(d) The {\bf false negative} of $\sexpl$ for $t_1$ is defined as the sum of the uncovered tuples in the provenance of $\tup_1$:
{\footnotesize
\begin{align*}
\IFN(\sexpl, \jgraph,  \tup_1) &= \sum_{\tup' \in \pt{\query, \db, \tup_1}} 1-\ICov(\sexpl, \jgraph, \tup_1, \tup')
\end{align*}
}

(e) 
Using (b)-(d), we define {\bf precision, recall}, and {\bf \abbrF} for $t_1$  in comparison to $t_2$ as usual:\\[-4mm]
{\footnotesize
\begin{align*}
\Prec(\sexpl, \jgraph,  \tup_1, \tup_2) = \frac{\ITP(\sexpl, \jgraph,  \tup_1)}{\ITP(\sexpl, \jgraph,  \tup_1) + \IFP(\sexpl, \jgraph,  \tup_1, \tup_2)}\\
\Rec(\sexpl, \jgraph,  \tup_1) = \frac{\ITP(\sexpl, \jgraph,  \tup_1)}{\ITP(\sexpl, \jgraph,  \tup_1) + \IFN(\sexpl, \jgraph,  \tup_1)}\\
\fscore(\sexpl, \jgraph,  \tup_1, \tup_2) =\frac{2}{\frac{1}{\Prec(\sexpl, \jgraph,  \tup_1, \tup_2)} +\frac{1}{\Rec(\sexpl, \jgraph,  \tup_1)}}
\end{align*}
}
\end{Definition}

A high recall implies that the pattern $\pat$ describes the tuples contributing to $t_1$ well. A high precision implies that $\pat$ covers few tuples in the provenance of $t_2$. A high \abbrF indicates both. This definition can be easily adapted to single-point questions involving a single output tuple $t \in Q(D)$ by summing over $t' \in \pt{\query, \db} \setminus \pt{\query, \db, \tup}$ instead of summing over $t' \in \pt{\query, \db, \tup_2}$ in the false positives definition above. The other definitions remain the same.

\mypar{Support of explanation patterns} As described in the running examples and in \Cref{def:explanation}, an explanation $\sexpl = (\jgraph, \pat, (v_1, a_1), (v_2, a_2))$ includes the relative support of the pattern $\pat$ for $t_1, t_2 \in Q(D)$ to illustrate how this pattern differentiates the output tuples $t_1, t_2$. Here $v_1 = \ITP(\pat, \jgraph, t_1)$ and $a_1 = \ITP(\pat, \jgraph, t_1) + \IFN(\pat, \jgraph, t_1) = |\pt{\query, \db, \tup_1}|$ as defined in Definition~\ref{def:metrics}, denoting the set of tuples in the provenance of $t_1$ covered by the pattern $\pat$, and the set of all tuples in \revc{the} provenance of $t_1$ respectively. Similarly, we define $v_2, a_2$ for the output tuple $t_2$ to illustrate the difference with $t_1$.

\mypar{Finding Top-$k$ Patterns with Highest \abbrFs} Given a join graph $\jgraph$, our goal is to find the top-$k$ patterns $\pat$ in terms of their individual \abbrFs according to  \Cref{def:metrics}.
However, in practice there are additional considerations that we should take into account, e.g., the maximum number of attributes appearing in a pattern.

\mypar{Complexity} Finding top-$k$ patterns given a join graph $\jgraph$ has polynomial {\em data complexity} \cite{Vardi82} (fixed size schema and query). The provenance table and \abbrAPTs{} can be computed in \ptime in the size of the data.
Given a pattern, its matches can be determined in \ptime and therefore, 
all metrics in \Cref{def:metrics}
can be computed in \ptime. If there are $p$ attributes in the augmented provenance table, the number of possible patterns is bounded by $O(n^p)$ (the number of distinct attribute values is bounded by $n$ = total number of tuples 
in the database, and each attribute can appear as don't care $*$ and with at most three comparison operators), so even a naive approach of computing the top-$k$ patterns with the highest \abbrF values is  polynomial in data size. However, this naive approach is not scalable in practice and therefore we adopt a number of  heuristic optimizations to solve this problem as described in \Cref{sec:algo_pattern}.

 \cut{
\begin{enumerate}
    \itemsep0em
    \item {\bf Find top-$k$ patterns with  individual \abbrFs:} Here the goal is to simply find the top-$k$ patterns $\pat$ in terms of their individual \abbrFs.
    \item {\bf Generating a given combined \abbrF with a small set of patterns:} Given a set of patterns $\patrnset$, we adapt Definition~\ref{def:metrics} (a) as follows:  a tuple $\tup' \in \pt{\query, \db, \tup_1}$  is said to be {covered} by $\patrnset$ \emph{if there exists a pattern $\pat \in \patset$ and a $\tup''\in \jtr(\query, \db, \jgraph, \tup_1, \tup')$} such that $\tup'' \patrnset \pat$. Therefore, a tuple $t'$
    in the provenance is covered, as long as any pattern from $\patrnset$ covers any tuple $t''$ from the augmented provenance of $t_1, t'$. The other definitions (b)-(e) remain the same but use this new definition of coverage from (a). The combined \abbrF of a set of patterns is denoted by $\fscore(\patrnset, \jgraph, t_1, t_2)$.  In this objective, the goal is to find a set of patterns $\patrnset$ with the  $\fscore(\patrnset, \jgraph, t_1, t_2) \geq \lambda$ for a given threshold $\lambda$ such that $|\patrnset|$ is minimized. \red{May have to remove}
\end{enumerate}
The difference between the two objectives is that, the second one intends to find explanations that are not highly overlapping or similar so that the user can see different types of explanation patterns in a small set of pattern set. On the other hand, as we show in the next section, this objective is computationally hard. Therefore, the first objective in addition to other optimizations in our algorithms, serves as a heuristic for the second objective. \blue{CHECK}
Indeed, there can be additional considerations in practice, e.g., the maximum number of attributes appearing in a pattern, that are considered by our algorithms.
}
\mypar{Explanations over All Join Graphs} When mining multiple join graphs $\jgraph$, there are several options for finding top patterns across all join graphs, e.g., penalizing patterns from complex join graphs. However, for simplicity, and for an interactive user experience, we find top-$k$ patterns for each individual join graph and present a global ranking of all patterns. Thus, the user can explore explanations generated from more than one join graph (see \Cref{sec:join-graph-enumeration}).

\cut{\par
In Section~\ref{sec:algo_pattern}, we discuss our algorithms to mine patterns when a join graph is fixed, and in Section~\ref{sec:join-graph-enumeration} we discuss how we enumerate all join graphs and mine patterns over them.}

\cut{
\par
{\bf Quality metrics for single-point and double-point user questions.~} Given a join graph $\jgraph$, for a single-point user question mentioning an output tuple $t \in Q(D)$,  we simply compute $\fscore(\sexpl, \jgraph, \tup)$ to compute the score of a pattern $\pat$. For a two-point question with output tuples $t_1, t_2$, we choose one, say $t_1$, as the primary tuple.  True positive and false negatives of $t_1$ are computed as described in Definition~\ref{def:metrics}. For false positives OF $t_1$, instead of summing over $t' \in \pt{\query, \db} \setminus \pt{\query, \db, \tup}$, we sum over $t' \in \pt{\query, \db, \tup_2}$, to differentiate the provenance of $t_1$ and $t_2$. \blue{CHECK + ADD SUPPORT}.
}

\cut{
Compared to the previous single answer version, the quality of explanations for the difference between two answer tuples $\tup_1$ and $\tup_2$ is also measured using $\fscore$, with a major difference lies in the calculation of the False Positive. Firstly, we need to mark one of the two result tuple as ``True" tuple \CL{could change this later...} i.e., let's choose $\tup_1$ as ``True" tuple. The False Positive for $\tup_1$ is counted over tuples in the provenance of $\tup_2$ instead of rest tuples in the entire provenance, which intuitively means that in order to get a high \abbrF, we want to capture patterns should cover as much provenance of one result tuple(``True" tuple) ($\tup_1$) as possible but not the provenance of the other result tuple ($t_2$).

\begin{Definition}[Quality Measure for Difference]\label{def:quality-measure-diff}
Let $\dexpl = (\jgraph, \pat)$ be an explanations for $\query$,  $\db$, and two tuples $\tup_1, \tup_2$,  where the precision is defined as the percentage of tuples in $\pt{\query, \db, \tup_1}$ matching $\pat$ in the augmented provenance table over those matched tuples in $\pt{\query, \db, \tup_1} \cup \pt{\query, \db, \tup_2}$.
\begin{align*}
\FPrec(\dexpl) &= \frac{\FTP(\dexpl)}{\FTP(\dexpl) + \FFP(\dexpl)}  \\
&= \frac{\sum_{\tup' \in \pt{\query, \db, \tup_1}} \frac{\mid \matches \big(\pat,  \jtr(Q, \db, \jgraph, \tup') \big)\mid}{\mid { \jtr(Q, \db, \jgraph,  \tup')} \mid}}{\sum_{\tup' \in U} \frac{\mid \matches \big(\pat,  \jtr(Q, \db, \jgraph, \tup') \big)\mid}{\mid { \jtr(Q, \db, \jgraph,  \tup')} \mid}}
\end{align*}
where $U=\pt{\query, \db, \tup_1} \cup \pt{\query, \db, \tup_2}$.

The recall and \fscore\ are defined in the same way as Definition~\ref{def:fscore}.
\begin{align*}
\FRec(\dexpl) &= \frac{\FTP(\dexpl)}{\FTP(\dexpl) + \FFN(\dexpl)}  \\
&= \frac{\sum_{\tup' \in \pt{\query, \db, \tup_1}} \frac{\mid \matches \big(\pat,  \jtr(Q, \db, \jgraph, \tup') \big)\mid}{\mid { \jtr(Q, \db, \jgraph,  \tup')} \mid}}{\mid \pt{\query, \db, \tup_1} \mid} \\
\end{align*}
$$\Ffscore(\dexpl) =\frac{2}{\frac{1}{\FPrec(\dexpl)} +\frac{1}{\FRec(\dexpl)}} $$

\end{Definition}
}

\cut{
\begin{Definition}[Coverage and Recall]\label{def:coverage-recall}
Consider a database $D$, a 
query $\query$, a $\tup \in \query(\db)$, a join graph $\jgraph$, and an explanation $\sexpl = (\jgraph, \pat)$.
\begin{itemize}
    \item The \emph{fractional coverage} of $\sexpl$ is defined as $$\FCov(\sexpl) = \sum_{\tup' \in \pt{\query, \db, \tup}} \frac{\mid \matches \big(\pat,  \jtr(Q, \db, \jgraph, \tup') \big)\mid}{\mid { \jtr(Q, \db, \jgraph,  \tup')} \mid}$$
    \item The \emph{recall} of $\sexpl$ is defined as
$$\Rec(\sexpl) = \frac{\FCov(\sexpl)}{|\pt{\query, \db, \tup}|} $$
This recall definition captures the coverage of tuples in the provenance table and the property that a tuple in the provenance table $\pt{Q, D}$ can generate multiple tuples in the augmented provenance table $\jtr(Q, \db, \jgraph,  \tup')$.
A high recall implies that the pattern $\pat$ is a good description of the tuples contributing to the output tuple $t$.   
\end{itemize}
\end{Definition}

\begin{Definition}[Precision]\label{def:precision}
\item The \emph{precision} of $\sexpl$, $\Prec(\sexpl)$, is defined as the fraction of tuples in $\jtr(\query, \db, \jgraph, \tup)$ matching $\pat$:
$$\Prec(\sexpl) = \frac{\FCov(\sexpl)}{\sum_{\tup \in \pt{\query, \db}}\mathbb{1}[ \tup \in \projection_{\att_{\ptlabel}}\matches \big(\pat, \jtr(Q, \jgraph, \db, \tup) \big) ]} $$

   The \emph{precision} of $\sexpl$, $\Prec(\sexpl)$, is defined as the fraction of tuples in $\pt{\query, \db, \tup}$ \SR{should be APT?}
    \ZM{should be PT. E.g., to summarize GSW wins, the precision of pattern (Curry pts > 30) should be \\
    $\frac{|\text{GSW win} \land  \text{Curry pts }> 30|}{|\text{GSW games with Curry pts} > 30|}$, showing how accurate the pattern is in predicting GSW win, or differentiating GSW win games and other games.}
    matching $\pat$ after being augmented using the join graph: 
    $\Prec(\dexpl) = \frac{TP}{TP+FP}$ , where the numerator TP (true positive) is the total number of tuples in $\pt(Q, \db, \tup)$ matching $\pat$ after augmented $$TP = \sum_{\tup \in \pt{\query, \db, \tup_1}} \mathbb{1}[\tup \in \projection_{\att_{\ptlabel}} \matches \big(\pat,  \jtr(Q, \jgraph, \db, \tup) \big)]$$
    and the denominator TP + FP (true positive + false positive) is the total number of tuples in the same context of $\tup_1$ matching $\pat$ after augmented. When the user query \query contains selection predicates in its {\tt WHERE}-condition $\theta$, the context of $\tup_1$ is defined as the all tuples in the provenance of other outputs of \query.
    $$TP+FP = \sum_{\tup \in \pt{\query, \db}}\mathbb{1}[ \tup \in \projection_{\att_{\ptlabel}}\matches \big(\pat, \jtr(Q, \jgraph, \db, \tup) \big) ]$$
    While when the user query \query\ does not have any selection predicates, the result of \query\ would enumerate all value combinations of grouping attributes, and thus we only consider $\tup_1$ itself as its context, leading to an 1 precision. A high precision implies that the pattern $\pat$ is a good discriminator for tuples contributing to the output tuple $t$ and other output tuples when the context is given.
    \end{Definition}

    \SR{don't follow this -- we talked about the impact of `WHERE' -- no where means precision 1, otherwise we need to divide by something.}
    \ZM{Correct me if wrong: if the query selects on `winner = GSW', grouping on team and season, then the context should be GSW wins in every season, and we are summarizing the difference between season 2015-16 and other seasons. }

\begin{Definition}

    As usual, the \abbrF of $\sexpl$, denoted $\fscore(\sexpl)$, is the harmoning mean of $\Rec(\sexpl)$ and $\Prec(\sexpl)$.

\end{Definition}
}
\cut{
\begin{Definition}[Quality Metrics for Difference]\label{def:quality-metrics-diff}
Let $\dexpl = (\jgraph, \pat)$ be an explanations for $\query$,  $\db$, and two tuples $\tup_1, \tup_2$, the quality of $\dexpl$ is also measured using $F_1$ score, where the precision is defined as the percentage of tuples in $\pt{\query, \db, \tup_1}$ matching $\pat$ in the join graph result over those matched tuples in $\pt{\query, \db, \tup_1} \cup \pt{\query, \db, \tup_2}$.


$Pre(\dexpl) = \frac{TP}{TP+FP}$ , where $$TP = \sum_{\tup \in \pt{\query, \db, \tup_1}} \mathbb{1}[\tup \in \projection_{\att_{\ptlabel}} \matches \big(\pat,  \jtr(Q, \jgraph, \db, \tup) \big)]$$
$$FP =  \sum_{\tup \in \pt{\query, \db, \tup_2}}\mathbb{1}[ \tup \in \projection_{\att_{\ptlabel}}\matches \big(\pat, \jtr(Q, \jgraph, \db, \tup) \big) ]$$

where $\mathbb{1}[]$ is the indicator function, $\att_{\ptlabel}$ is the schema of the provenance table $\pt{\query, \db, \tup_1}$; the recall is defined as
$$
Rec(\dexpl) = \sum_{\tup \in \pt{\query, \db, \tup_1}} \frac{\mid \matches \big(\pat,  \jtr(\jgraph, \db, \tup) \big)\mid}{\mid { \jtr(\jgraph, \db, \tup)} \mid}
$$


\end{Definition}
}


\cut{
\subsubsection{Quality Measures Over All Join Graphs}
\SR{I HAVE NOT EDITED THIS YET}

\CL{This section is not implemented yet, the current framework only generates the Fscores, without considering the effects of the number of edges in the score}
Furthermore, we prefer explanations that are simpler, i.e., that have smaller join graphs. We incorporate these requirements into a quality measure for explanations, since using only \fscore\ may fail to reveal such simple but informative patterns. Similar to the widely-used weight function that considers the number of non-$\placeh$ values in the pattern in in prior data summarization works, we consider the number of edges in the schema join graph as the weight.

\begin{Definition}[Score Function]\label{def:score-function-all}
The score of an explanation $\sexpl = (\jgraph, \pat)$ is defined as
$$Score(\sexpl) = \fscore(\sexpl) \times \frac{B-|\jgraph.E|+1}{B-A+1}$$,
where $\jgraph.E$ denotes set of edges of the schema join graph, $A = \min_{\jgraph'\in \mathcal{E}} |\jgraph'.E|$, $B = \max_{\jgraph'\in \mathcal{E}} |\jgraph'.E|$, $\mathcal{E}$ is the set of of all candidate explanations.
\end{Definition}

\BG{Add formal problem definition: Given database D, schema graph, two-point or one-point question, generate top-k explanations for all join graphs (or subset thereof)}
}





\vspace{-2mm}
\section{Mining Patterns given a Join Graph}
\label{sec:algo_pattern}

\iftechreport{
\begin{algorithm}[t]\caption{{\bf \algMineAPT to find top-$k$ patterns given a join graph $\jgraph$}\\
{\bf Other inputs:} Database $D$, Query $Q$, Answer tuples in user question $t_1, t_2$,  limit on the number of categorical attributes in the pattern $\maptKcat$, and input parameters $\lambda$s described in text. }\label{alg:basic-pattern-mine}

  {\small
    \begin{codebox}
      \Procname{\proc{\algMineAPT} ($\db, \jgraph, \query, \tup_1, \tup_2, k, \maptKcat, \revc{\paramAfilterrate}$)}
      \li $\explset \gets \emptyset$ \Comment{min-heap: to store top-$\topk$ explanations by their score}
      \li $\maptAPT \gets \jtr({\jgraph, \db})$
      \li $\maptSample \gets \Call{createSample}{\maptAPT}$ \label{alg-line:mapt-sample}
      \li $(\maptAnum, \maptAcat, \maptCluster) \gets \Call{\maptFilterAttrs}{\maptAPT, \revc{\paramAfilterrate}}$
      \li $\patLCA \gets \Call{LCA}{\maptSample, A_{cat}}$ \label{alg-line:mapt-lca} \Comment{from \cite{DBLP:journals/pvldb/GebalyAGKS14}}
      \li $\maptTODO \gets \Call{pickTopK}{\patLCA,\maptKcat,\maptAPT}$ \label{alg-line:cat-top-k}
      \li $\maptDONE \gets \emptyset$
      \li \While $\maptTODO \neq \emptyset$ \label{alg-line:mapt-refine}
         \Do
         \li $\maptPat \gets \Call{pop}{todo}$
         \li $\maptDONE \gets \maptDONE \cup \{ \maptPat \}$
         \li \For $\tup_{cur} \in \{ \tup_1, \tup_2 \}$ \Do
             \li \If $\Call{recall}{\maptPat,\tup_{cur}} > \paramRecallThresh$ \label{alg-line:mapt-filter-recall}\Then
                 \li \If $\Call{F1}{\maptPat,\tup_{cur}} > \Call{F1}{\textsc{peek}(\explset)}$ \Then
                     \li $\explset \gets \Call{deleteMin}{\explset}$
                     \li $\explset \gets \Call{insert}{\explset, (\maptPat, \tup_{cur})}$
                     \End
                 \li \For $A \in \maptAnum$ \Do
                     \li \If $\maptPat.A = \placeh$ \Then
                         \li \For $op \in \{=, \leq\}, c \in \Call{getFragments}{\domOf{APT}{A}}$ \Do
                              \li $\maptPatnew \gets \maptPat$
                              \li $\maptPatnew.A \gets \patop{op}{c}$
                              \li \If $\maptPatnew \not\in done$ \Then
                                  \li $\maptTODO \gets \maptTODO \cup \{ \maptPatnew \}$
                              \End
                         \End
                      \End
                  \End
             \End
          \End
      \End
    \li \Return $\explset$
    \end{codebox}
    \begin{codebox}
      \Procname{$\proc{\maptFilterAttrs}(\maptAPT, \paramAfilterrate)$}
      \li $A \gets \schemaOf{\maptAPT}$
      \li $\maptAfilter \gets \Call{filterBasedOnRelevance}{\maptAPT, \paramAfilterrate}$ \label{alg-line:mapt-filter-attrs-relevance}
      \li $\maptCluster \gets \Call{clusterAttributes}{\maptAfilter, \maptAPT}$ \label{alg-line:mapt-filter-attrs-cluster}
      \li $\maptArepresent \gets \Call{pickClusterRepresentatives}{\maptCluster, \maptAPT}$ \label{alg-line:mapt-filter-attrs-pick-representatives}
      \li $\maptAnum \gets \{ A \mid A \in \maptArepresent \land A \mathtext{is numeric} \}$
      \li $\maptAcat \gets \{ A \mid A \in \maptArepresent \land A \mathtext{is categorical} \}$
      \li \Return $(\maptAnum, \maptAcat, \maptCluster)$
    \end{codebox}
    }
\end{algorithm}
}

In this section, we give an overview of
our algorithm 
for mining patterns from an augmented provenance table
(\abbrAPT) $\jtr(Q, \db, \jgraph)$ generated based on a given join graph $\jgraph$. Recall that we are dealing with patterns that may contain
equality comparisons (for categorical attributes) and/or inequality comparisons (for numeric attributes). 
We mine patterns in multiple
phases. (i) In a preprocessing step 
Then we cluster attributes that
are highly correlated to reduce redundancy in the generated patterns. The
output of this preprocessing step are the generated clusters and one
representative selected for each cluster. Then,
we use random forests to determine the
relevance of each attribute on predicting tuples to belong to the provenance of
one of the two data points from the user's question. The purpose of this step is to remove attributes that
are unlikely to yield patterns of high quality.   (ii) In the next phase of pattern mining
we only consider categorical attributes and mine pattern candidates using a
variation of the \emph{LCA (Least Common Ancestor)} method from~\cite{DBLP:journals/pvldb/GebalyAGKS14} that can only handle categorical attributes. From the set of patterns returned by the LCA method, we then select the  $\maptKcat$ patterns with the highest scores for the next step. 
(iii) These patterns are then \emph{refined}
by adding conditions on
numerical attributes that can improve precision at the potential cost of
reducing recall. (iv) The refined patterns are filtered to remove patterns with
recall below a threshold $\paramRecallThresh$
and are ranked by their score according
to~\Cref{def:metrics}. Finally, the top-k patterns according to this
ranking are returned. 

Before describing the individual steps of our pattern mining algorithm, we first
need to introduce additional notation used in this section. Given a pattern
$\pat$, we call a pattern $\pat'$ a {\bf refinement} of $\pat$ and $\pat'$ can
be derived from $\pat$ by replacing one or more placeholders ($\placeh$) with
comparisons. For instance, pattern $\pat_2 = (\peq{X}, \pleq{Y})$ is a
refinement of $\pat_1 = (\peq{X}, \placeh)$.
The following observation holds:
  \begin{proposition}
  Using Definition~\ref{def:metrics}, given a tuple $t \in Q(D)$ and a join graph $\jgraph$, $\Rec(\sexpl_2, \jgraph,  \tup)
\leq \Rec(\sexpl_1, \jgraph,  \tup)$, where $\sexpl_1 = (\jgraph, \pat_1)$, $\sexpl_2 = (\jgraph, \pat_2)$, and $\pat_2$ is a refinement of $\pat_1$.
  \end{proposition}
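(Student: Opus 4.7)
The plan is to reduce the claim to a simple monotonicity argument about the sets of matching tuples, and then push it through the definitions in \Cref{def:metrics}.

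First I would establish the following key fact about refinements: for any relation $R$, $\matches(\pat_2, R) \subseteq \matches(\pat_1, R)$. This follows directly from the definition of refinement and the definition of $\pmatch$: since $\pat_2$ is obtained from $\pat_1$ by replacing some $\placeh$ entries with actual comparisons, every condition imposed by $\pat_1$ on a tuple is also imposed by $\pat_2$. Hence any tuple satisfying all conditions of $\pat_2$ must also satisfy all conditions of $\pat_1$.

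Next I would instantiate this for $R = \jtr(Q, \db, \jgraph, \tup, \tup')$ for each $\tup' \in \pt{Q, \db, \tup}$. Non-emptiness of $\matches(\pat_2, R)$ implies non-emptiness of $\matches(\pat_1, R)$, so
\[
\ICov(\sexpl_2, \jgraph, \tup, \tup') \;\leq\; \ICov(\sexpl_1, \jgraph, \tup, \tup').
\]
Summing this inequality over all $\tup' \in \pt{Q, \db, \tup}$ gives $\ITP(\sexpl_2, \jgraph, \tup) \leq \ITP(\sexpl_1, \jgraph, \tup)$ by the definition of true positives.

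Finally, I would observe the crucial invariant that
\[
\ITP(\sexpl, \jgraph, \tup) + \IFN(\sexpl, \jgraph, \tup) \;=\; \sum_{\tup' \in \pt{Q,\db,\tup}} 1 \;=\; |\pt{Q, \db, \tup}|,
\]
which depends only on $\tup$ and not on the pattern. Substituting into the recall formula yields $\Rec(\sexpl, \jgraph, \tup) = \ITP(\sexpl, \jgraph, \tup) / |\pt{Q, \db, \tup}|$, a quantity that is monotone nondecreasing in $\ITP$. Combined with the previous step, this gives $\Rec(\sexpl_2, \jgraph, \tup) \leq \Rec(\sexpl_1, \jgraph, \tup)$, as desired.

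There is no real obstacle here; the argument is a routine unfolding of the definitions. The only subtlety worth being careful about is to note that $\IFN$ is defined via the same indicator sum as $\ITP$, so their sum telescopes to $|\pt{Q,\db,\tup}|$ independent of $\pat$; without this observation one might worry that decreasing $\ITP$ could also shift the denominator in a way that breaks monotonicity, but it does not.
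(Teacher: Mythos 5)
Your proof is correct and follows essentially the same route as the paper's: both hinge on the observation that a tuple covered by the refinement $\pat_2$ is also covered by $\pat_1$, and then unfold the definitions of $\ITP$, $\IFN$, and $\Rec$. The only cosmetic difference is that you make explicit that $\ITP + \IFN = |\pt{Q,\db,\tup}|$ is independent of the pattern, whereas the paper instead notes that $\ITP$ can only decrease and $\IFN$ can only increase under refinement; both observations yield the monotonicity of recall equally directly.
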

  \begin{proof}
  Following Definition~\ref{def:metrics} (a), if a $\tup' \in \pt{\query, \db, \tup}$ is covered by $\sexpl_2$, then it is also covered by $\sexpl_1$. Hence, by (b) and (d),  $\ITP(\sexpl_1, \jgraph, \tup) \geq \ITP(\sexpl_2, \jgraph, \tup)$ and $\IFN(\sexpl_1, \jgraph, \tup) \leq \IFN(\sexpl_2, \jgraph, \tup)$. Therefore, using (e), $\Rec(\sexpl_2, \jgraph,  \tup)
\leq \Rec(\sexpl_1, \jgraph,  \tup)$.
  \end{proof}
We exploit the above fact to
exclude patterns and their refinements early on in the process if their recall
is below a threshold.

\subsection{Clustering and Filtering Attributes}
\label{sec:filt-clust-attr}

Before mining patterns, we analyze the attributes of an \abbrAPT to (1) determine the attributes that are unlikely to contribute to top-k patterns because they are not helpful in distinguishing between the two outputs of interested provided as part of the user's question, and (2) to identify clusters of attributes with strong mutual correlations, because such attributes can lead to redundancy in explanations as discussed below.

\mypar{Clustering Attributes based on Correlations}
Redundancy in patterns can be caused by attributes that are highly correlated. As an extreme example, consider an \abbrAPT containing both the birth date and age of a person. For any pattern containing a predicate on birth date there will be an (almost) equivalent pattern using age instead and a pattern restricting  both age and birth date. To reduce the prevalence of such redundant patterns, we cluster attributes based on their mutual correlation
and pick a single representative for each cluster
We use VARCLUS~\cite{sarle1990sas}, a clustering algorithm closely related to principal component analysis and other dimensionality reduction techniques~\cite{roweis2000nonlinear}. However, any technique that can cluster correlated attributes would be applicable.

\mypar{Filtering Attributes based on Relevance}
Random forests have been successfully used in machine learning applications  to determine the relevance of a feature (attribute) to the outcome of a classification task.
We train a random forest  classifier
that predicts whether a row belongs to the augmented provenance of one of the two outputs from the user's question~\cite{breiman2001random}. 
We then rank attributes based on the relevance and find the fraction $\paramAfilterrate$ of attributes with the highest relevance ($\paramAfilterrate$ is a threshold used by the system).
The rationale for this step is to avoid generating patterns involving attributes that are irrelevant for distinguishing the two output tuples $t_1, t_2$ in the user question. This reduces the search space for patterns and additionally has the advantage of excluding attributes that are mostly constant in the rows contributing to the two outputs. Such attributes can be added to any pattern with minimal effect on the recall and precision of patterns, since they essentially do not affect the matches for the patterns. This could mislead users into thinking that the value of this attribute is a distinguishing factor for the tuples in the user question when in fact the value of this attribute has very limited or no effect.

\subsection{
Patterns over Categorical Attributes}
\label{sec:gener-patt-over}

We then generate a sample of size $\paramPatsamplerate$ (an input parameter) from 
$\jtr(Q, \db, \jgraph)$
and
generate a set of candidate patterns over categorical attributes (ignoring all
numerical attributes at this stage) using the LCA method
from~\cite{DBLP:journals/pvldb/GebalyAGKS14}
The LCA method generates pattern
candidates from a sample by computing the cross product of the sample with
itself.
A candidate pattern is generated for each pair $(t,t')$ of tuples from
the sample by replacing values of attributes $A$ where $t.A \neq t'.A$ with a
placeholder $\placeh$ and by keeping constants that $t$ and $t'$ agree upon
($t.A = t'.A$). Note that in our case each element of a pattern is a
predicate. Thus, using a constant as done in the LCA method corresponds to using
an equality predicate, i.e., $A = c$ for $t.A = t'.A = c$. By keeping constants
that frequently co-occur, the LCA method that only
works for categorical attributes (equality comparisons) generates patterns that reflect common
combinations of constants in the data.  Note that we ignore numeric attributes
at this step, i.e., we use $\placeh$ for all numeric attributes.  The
rationale of focusing on categorical attributes first is that we can (i) use the
established heuristic of the LCA method to generate pattern candidates from categorical attributes, and (ii) we can
significantly reduce the search space by pruning all refinements of patterns
that fail to have a sufficiently high recall.
\subsection{Filtering Categorical Pattern Candidates}
\label{sec:filt-categ-patt}

Next we calculate the recall for each pattern by filtering the input
table to determine the matches of the pattern. As an optimization, we can calculate the
recall over a sample of the data (using a sample size parameter
$\paramQualitysamplerate$). 
This may require using a separate sample size ($\maptKcat$) since we found that a small sample is sufficient for generating a meaningful set of
patterns, but may not be sufficient for estimating recall with high enough
accuracy.
Irrespective of whether the recall is estimated or calculated precisely, we then filter
out patterns whose recall is below a threshold $\paramRecallThresh$. Out of these
patterns, we then pick the top patterns based on their recall.
In the following we use $\patLCA$ to denote the set of patterns
that are returned by this step.

\subsection{Refinement and Numeric Attributes}
\label{sec:pattern-refinement}
In the next step, we then generate refinements of patterns from $\patLCA$ by replacing placeholders on numerical attributes with predicates. Even though such refinements can at best have the same recall as the pattern they originate from, their precision may be higher resulting in greater \abbrFs. Recall that for numerical attributes we allow for both equality as well as inequality predicates. Domains of numerical attributes are typically large, resulting in large number of possible constants to use in inequality comparisons. To reduce the size of the search space, we split the domain of each numerical attribute into a fixed number $\paramNumRefineFrags$ of
fragments (e.g., quartiles) and only use boundaries of these fragments when generating refinements. For example, for $\paramNumRefineFrags = 3$ we would use the minimum, median, and maximum value of an attribute's domain.
We systematically enumerate all refinements of a pattern by extending it by one predicate at a time. For each such refinement we calculate its recall as described above. Patterns whose recall is below $\paramRecallThresh$ are not further refined. We use $\patset_{refined}$ to denote the union of  $\patset_{LCA}$ with the set of patterns generated in this step.

\subsection{Computing Top-k Patterns}
\label{sec:computing-top-k}

\newcommand{\wscore}{\textsc{wscore}}
\newcommand{\dscore}{\mathcal{D}}
\newcommand{\patres}{\mathcal{R}}
\newcommand{\dmatchs}{\textsc{matchscore}}

Finally, we calculate the \abbrFs for each pattern in $\patset_{refined}$ and return $k$ patterns.
\revm{To improve the diversity of the returned patterns we rank them based on a score that combines \abbrFs (\Cref{def:metrics}) with a diversity score $\dscore(\pat,\pat')$ that measure how similar two patterns are. The first pattern to be returned is always the one with the highest \abbrF. Then we determine the $i+1^{th}$ pattern based on this score. Note that the diversity score $\mathcal{D}(\pat)$ for a pattern $\pat$ depends on the set $\mathcal{R}$ of patterns we have selected to far, because it measure how close the pattern $\pat$ is to the most similar pattern in $\mathcal{R}$. We calculate this as a score $\dscore(\pat,\pat')$ that ranges between -2 and 1 (larger scores means that the two patterns are more dissimilar). The formula for calculating the score is shown below.  For each attribute $A$ of pattern $\pat$ we add 1 if the attribute does not appear in $\pat'$, we add a penalty (-0.3) if the attribute appears in both patterns (but with different constants), and a larger penalty (-2) if it appears in both patterns with the same constant. We repeatedly add the pattern with the highest score to $\mathcal{R}$ until we have $k$ patterns to return.} 

 {\small
\revm{
  \[ \wscore(\pat) = \fscore(\sexpl, \jgraph,  \tup_1, \tup_2) + \min_{\pat' \in \patres} \dscore(\pat,\pat') \]
  \[ \dscore(\pat,\pat') = \frac{\sum_{A: \pat.A \neq \placeh} \dmatchs(\pat,\pat',A) }{\card{\pat}}\]
  \[
    \dmatchs(\pat,\pat',A) =
    \begin{cases}
      1    & \pat'.A = \placeh        \\
      -0.3 & \pat.A[0] \neq \pat.A[0] \\
      -2   & \pat.A[0] = \pat.A[0] \\
    \end{cases}
  \]
}
}
We denote the result of this step as $\pattopk$.
%
\ifnottechreport{
To avoid repetitive attributes and values appear in the top-$k$ results, we implemented a pattern diversification approach, which would add penalties to the same attributes and same values that have appeared in the previous results within top-$k$ results.
}


\section{Join Graph Enumeration}\label{sec:join-graph-enumeration}

  \newcommand{\jgZero}{\jgraph_0}
  \newcommand{\algjgJSize}{JSize}
  \newcommand{\algjgIsValid}{isValid}
  \newcommand{\algjgAddEdge}{AddEdge}
  \newcommand{\algjgGenNewJG}{ExtendJG}
  \newcommand{\algjgExpls}{\mathcal{E}}
  \newcommand{\algjgPK}{getPK}
  \newcommand{\algjgnone}{n_1}
  \newcommand{\algjgntwo}{n_2}
  \newcommand{\algjgRels}{rels}
  \newcommand{\algjgStart}{start}
\iftechreport{
\begin{algorithm}
  \caption{Join Graph Enumeration}\label{alg:join-graph-enumeration}
{\small
  \begin{codebox}
    \Procname{\proc{\JGGen}$(\jgZero, \asgraph, \query,k, k_{cat})$}
    \li $\algjgExpls \gets \emptyset$ \Comment{maps join graphs to explanations}
	\li $\jgraphs{} \gets \{\jgZero\}$ \Comment{generated join graphs}
	\li $\jgraphs{prev} \gets \{\jgZero\}$ \Comment{join graphs from the previous iteration}
	\li \While $\algjgJSize \in \{ 1, \ldots, \paramJoinGraphSize \}$ \label{alg-line:jge-main-loop}
	\li \Do $\jgraphs{new} \gets \emptyset$ \Comment{join graphs generated in this iteration}
	\li \For $\jgraph \in \jgraphs{prev}$ \Do
    \li $\jgraphs{new} \gets \jgraphs{new} \cup \proc{\algjgGenNewJG}(\jgraph,\asgraph, \query$)  \label{alg-line:jge-extend-jg}
    \End
        \li \For $\jgraph \in \jgraphs{new}$ \Do
    \li \If $\Call{\algjgIsValid}{\jgraph}$ \Then \label{alg-line:jge-test-valid}
    \li $\algjgExpls[\jgraph] \gets \Call{\algMineAPT}{\db, \jgraph, \query, t_1, t_2, k, k_{cat}}$
    \End
    \End
    \li $\jgraphs{} \gets \jgraphs{} \cup \jgraphs{new}$
    \li $\jgraphs{prev} = \jgraphs{new}$
    \End
    \li \Return $\algjgExpls$
      \end{codebox}
      	\begin{codebox}
	\Procname{\proc{\algjgGenNewJG}($\asgraph, \jgraph, \query$)}
	\li $\jgraphs{new} \gets \emptyset$
	\li \For $v \in \jgraphnodes$ \Do
	\li \If $\tlabelv(v) = \ptlabel$ \Then
    \li $\algjgRels \gets \relsinQ$
    \li \Else
    \li $\algjgRels \gets \tlabelv(v)$
    \End
	\li  \For  $r \in \algjgRels, n \in \snodes{}$ \Do
       \li $e \gets (r,n)$
        	\li \If $e \in \sedges{}$ \Then \label{alg-line:extjg-iterate-edges}
        	\li \For  $c \in \sglabel(e)$ \Do
                \li $\jgraphs{new} \gets \jgraphs{new} \cup \Call{\algjgAddEdge}{\jgraph, v, n, c}$
                \End
                \End
                \End
    \End
    \li \Return $\jgraphs{new}$
  \end{codebox}
   	\begin{codebox}
   	\Procname{\proc{\algjgAddEdge}($\jgraph, v, end, cond$)}
   	\li $\jgraphs{added} \gets \emptyset$
    \li $\jgraph_{curr} \gets \jgraph$ \Comment{Add $end$ as new node}
    \li $v_{new} \gets \Call{newNode}{}, e \gets (v,v_{new})$
    \li $\jgraph_{curr}.\jgraphnodes \gets \jgraph_{curr}.\jgraphnodes \cup \{v_{new}\}$
    \li $\jgraph_{curr}.\tlabelv(v_{new}) \gets end$
    \li $\jgraph_{curr}.\jgraphedges \gets \jgraph_{curr}.\jgraphedges \cup \{e\}$
    \li $\jgraph_{curr}.\tlabele(e) \gets cond$
    \li $\jgraphs{added} \gets \jgraphs{added} \cup \{\jgraph_{curr}\}$
   	\li \For $v' \in \jgraph.\jgraphnodes: \tlabelv(v') = end$ \Comment{Add edge connecting existing nodes} \Do
           	\li \If $\neg \exists e \in \jgraph.\jgraphedges: e = (v,v') \land \jgraph.\tlabele(e) = cond$ \Then
           	\li $\jgraph_{curr} \gets \jgraph$
            \li $e \gets (v,v')$
            \li $\jgraph_{curr}.\jgraphedges \gets \jgraph_{curr}.\jgraphedges \cup \{e\}$
            \li $\jgraph_{curr}.\tlabele(e) \gets cond$
           	\li $\jgraphs{added} \gets \jgraphs{added} \cup \{\jgraph_{curr}\}$
        \End
             \End
    \li \Return $\jgraphs{added}$
   	\end{codebox}
      \begin{codebox}
        \Procname{\proc{\algjgIsValid}$(\jgraph)$}
        \li \For $v \in \jgraphnodes$ \Do
        \li \For $A \in \Call{\algjgPK}{\tlabelv}$ \Do
        \li \If $\neg \exists e \in \jgraphedges: A \in \attrs(\tlabele(e))$ \Then
        \li \Return $\bfalse$
        \End
        \End
        \End
        \li \If $\Call{estimateCost}{\jgraph} > \paramMaxQcost$ \Then
        \li \Return $\bfalse$
        \End
        \li \Return $\btrue$
      \end{codebox}
}
\end{algorithm}
}

In this section,
we describe an algorithm
\iftechreport{present in  \Cref{alg:join-graph-enumeration}}
that enumerates join graphs  of iteratively increasing size.  In iteration $i$, we enumerate all join
graphs with $i$ edges by adding a single edge conforming to the schema graph to
one of the join graphs of size $i-1$ produced in the previous iteration. The
maximum size of join graphs considered by the algorithm is determined by
parameter \paramJoinGraphSize. We employ several heuristic tests to determine
whether a join graph generated by the algorithm should be considered for pattern
mining. The rationale for not considering all join graphs for pattern mining is
that pattern mining can be significantly more expensive than just generating a
join graph, so  
we skip pattern mining for join graphs
that are either unlikely to yield patterns of high quality or for which
generating the \abbrAPT and patterns are likely to be expensive.  For join
graphs that pass these tests we materialize the corresponding \abbrAPT and
apply the pattern mining algorithm from \Cref{sec:algo_pattern} to compute the
top-k patterns for the \abbrAPT.

\label{sec:gener-join-graphs}

\mypar{Generating Join Graphs} We enumerate join graphs in iteration $i$ by extending every join graph produced in iteration $i-1$ with all possible edges. For each join graph $\jgraph$ produced in iteration $i-1$, we consider two types of extensions: (i) we add an additional edge between two existing nodes of the graph and (ii) we add a new node and connect it via a new edge to an existing node.
\iftechreport{
The main entry point is function \textsc{\JGGen} from \Cref{alg:join-graph-enumeration}. We first initialize the result to contain $\jgZero$, the join graph consisting of a single node labeled ($\ptlabel$).
Within each iteration of the main loop (line~\ref{alg-line:jge-main-loop}), each join graph $\jgraph$ from the set of join graphs generated in the previous iteration ($\jgraphs{prev}$) is passed to function \proc{\algjgGenNewJG} which computes the set of all possible extensions of $\jgraph$ (line~\ref{alg-line:jge-extend-jg}). For any such join graph produced in this step, we then check whether it should be considered for patterning mining using function described below (line~\ref{alg-line:jge-test-valid}). For any join graph passing this check we call \proc{\algMineAPT} to generate the top-k patterns and store them in $\algjgExpls$.


Function \proc{\algjgGenNewJG} enumerates all possible extension of a join graph with one additional edge. For that we consider all existing nodes $v$ in the join graph as possible extension points. We then enumerate all edges from the schema graph $\sgraph$ that are adjacent to nodes $r$ for relations that are represented by $v$. For that we have to distinguish two cases.  If $v$ is labeled $\ptlabel$, then $r$ can be any relation accessed by the query. Otherwise, there is a single node $r$ that is determined based on $v$'s label (the relation represented by $v$). For each edge $e$ adjacent to a node $r$ (line~\ref{alg-line:extjg-iterate-edges}), we then iterate over all conditions $c$ from the label of $e$ and then use function \proc{\algjgAddEdge} to enumerate all join graphs generated from $\jgraph$ by connecting $v$ with an edge labeled with $c$ to another node (either already in the join graph or new node added to the graph).

Function \proc{\algjgAddEdge} first generates a join graph by adding a new node $v_{new}$ labeled $end$ (the relation that is the end point of the edge in the schema graph) and connects it to $v$. Afterwards, for each node in the join graph $\jgraph$ labeled $end$ that is not connected to $v$ though an edge labeled $cond$ already, we generate a new join graph by adding such an edge.
}




\mypar{Checking Join Graph Connectivity and Skipping Expensive \abbrAPT Computations}
We filter out join graphs based on \emph{lack of connectivity} or on \emph{high estimated computation costs}.

\iftechreport{
We use the function \proc{\algjgIsValid} shown in~\Cref{alg:join-graph-enumeration} to filter out join graphs based on lack of connectivity (described in the following) or based on high estimated computation costs.
}

Note that schema graphs may contain tables with multiple primary key attributes
that are connected to edges which join on part of the key. This is typical in
``mapping'' tables that represent relationships. For instance, consider the
{\tt PlayerGameScoring} table from our running example that stores the number
of points a player scored in a particular game. Assume that there exists another
table player that is not part of the running example.  The primary key of this
table consists of a foreign key to the game table and the name of the player for
which we are recording stats. For this current example also assume that the
schema graph permits that {\tt PlayerGameScoring} to be joined with the {\tt Player} table.
Consider a query that joins the {\tt Game} table with {\tt LineupPerGameState},
{\tt LineupPlayer}, and {\tt Games} tables and selects games played by team GSW. A valid
join graph for this query would be to join the node $\ptlabel$ with
{\tt PlayerGameScoring} on the primary key of game. Note that while the result of the
query contains rows that pair a player of GSW with one lineup they played in a
particular game, the \abbrAPT would pair each row with any player that played in
that game irrespective of their team. Join graphs like this can lead to
redundancy and large \abbrAPT tables. One reason for this redundancy is that not
all primary key attributes of the {\tt PlayerGameScoring} table are joined with
another table. To prevent such join graphs that because of their redundancy
often lead to large \abbrAPT tables,
our algorithm checks that
for every node in the join graph, the primary key attributes of the relation
corresponding to that node are joined with at least one other node from the join
graph. For instance, in the example above the join graph could be modified to
pass this check by also joining the {\tt PlayerGameScoring} table with the Player
table.

Even though we filter join graphs that are not fully connected, some generated
join graphs will result in \revc{\abbrAPT}s of significant size which are expensive to
materialize and have significantly high cost for pattern mining. Recall that we
use queries to materialize join graphs.  We use the DBMS to estimate the cost of this query upfront. We
skip pattern mining for join graphs where the estimated cost of this query is
above a threshold $\paramMaxQcost$. While we may lose explanations by skipping
such join graphs, experimental results demonstrate that it is necessary for
reasonable performance to apply this check. \reva{Further,
keeping the join graph size relatively small is likely not to overload the user with too complex explanations.
}



\mypar{Ranking Results} After we have enumerated join graphs and have computed top-k patterns for each join graph, we rank the union of all pattern sets based on their \abbrF. We decided to rank patterns 
to reduce the load on the user by increasing the likelihood that good patterns are shown early, but without having the risk of completely filtering out patterns that have lower scores.

\section{Experimental Evaluation}\label{sec:experiments}
In this section, \revs{we evaluate the implementation of our algorithms and \revs{optimizations} in \oursystem{}.} We evaluate both performance in terms of runtime and quality of results with respect to different parameters and \revs{compare against systems from related work}. 

\mypar{Datasets} We used the same \textit{NBA} and \textit{MIMIC} dataset as described in \Cref{sec:case-study}. We created several scaled versions of these two datasets preserving the relative sizes of most tables and join results.  
\iftechreport{In detail, subsets of the dataset \reva{were created} by random sampling, preserving relative sizes of most tables and ensuring that the result sizes of joins are scaled appropriately. Similarly, for scaling up the dataset size we duplicate rows appending identifiers to primary key columns and other selected columns to ensure that the constraints of the schema are not violated and the join result sizes are scaled too. We use scale factors from 0.1 ($\sim$ 17MB) up to 10.0 ($\sim$ 1.7GB).}
\iftechreport{
\begin{figure*}
    \begin{minipage}{\linewidth}
      \centering
      \includegraphics[width=\linewidth]{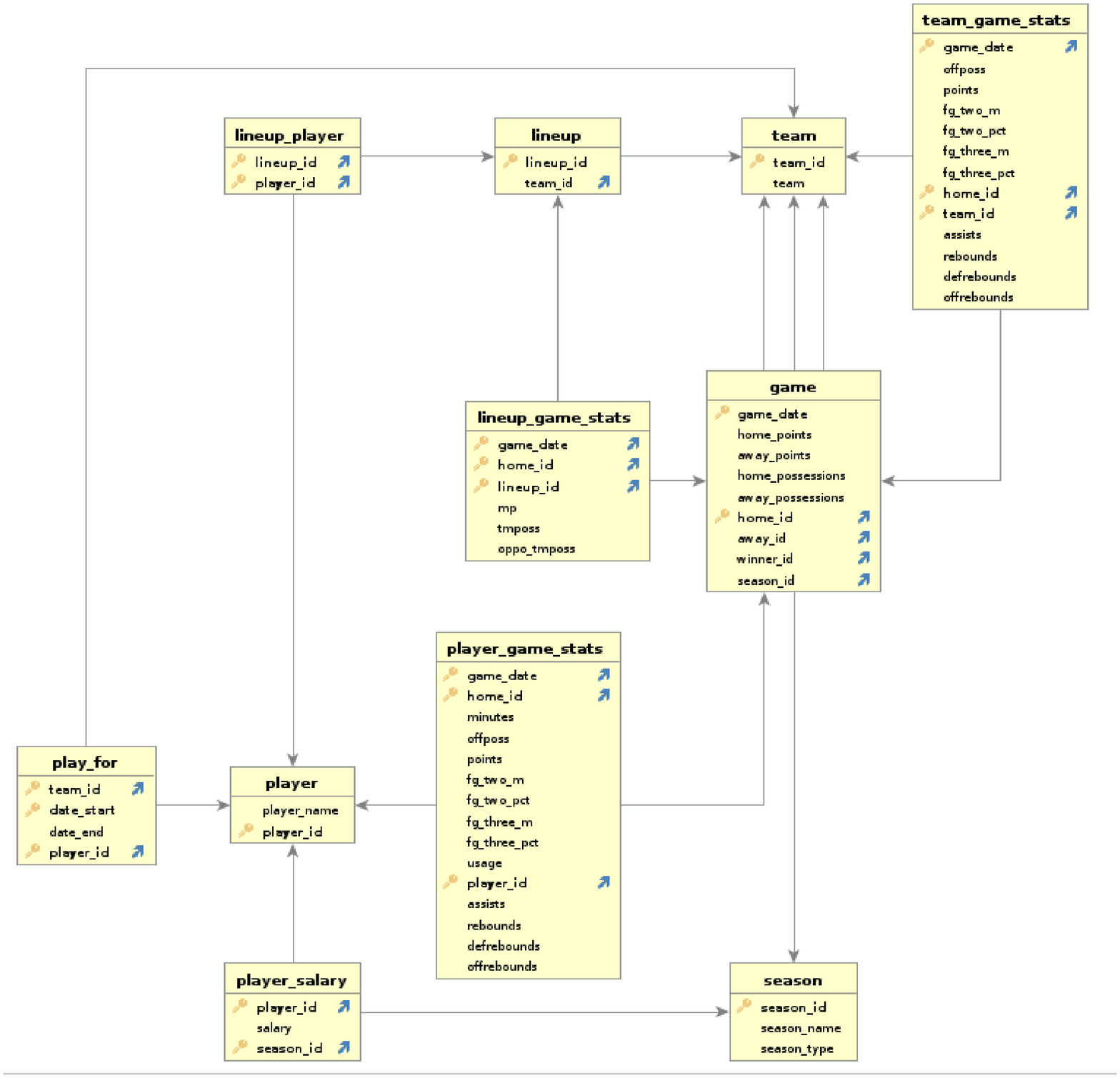}
      \caption{Schema Graph for NBA}
      \label{fig:schema_nba}
    \end{minipage}
\end{figure*}

\Cref{fig:schema_nba} shows the schema of the NBA dataset. Primary key attributes are marked with a key symbol and foreign key attributes with a blue arrow on the right of an attribute's name.

\mypar{Game}
Table \textbf{game} stores information about games including the date when the game took place (\texttt{game\_date}), the number points scored by the home and visiting team (\texttt{home\_points} and \texttt{away\_pointsw}), the number of ball possessions by each team (\texttt{home\_possessions} and \texttt{away\_possessions}), the ids of the home, visiting, and winning team (\texttt{home\_id}, \texttt{away\_id}, \texttt{winner\_id}), and the season when the team took place (\texttt{season\_id}).
Games are uniquely identified by their date and the home team.

\mypar{team}
Table \textbf{team} stores an artificial key (\texttt{team\_id} and the name (\texttt{team}) of NBA teams.

\mypar{player}
Table \textbf{player} records the name (\texttt{player\_name}) and artificial identifiers (\texttt{player\_id}) of NBA players.

\mypar{player\_salary}
Table \textbf{player\_salary} stores the \texttt{salary} a player is earning in a particular season.

\mypar{play\_for}
Table \textbf{play\_for} stores which player played for which team for which time period (\texttt{start\_date} to \texttt{end\_date}).

\mypar{line\_up and lineup\_player}
Table \textbf{lineup} records lineups. Each lineup is a set of $5$ players (table \textbf{lineup\_player}) from a team that are together on the field during a game.

\mypar{team\_game\_stats}
This table stores statistics related to the performance of a team in a particular game. The following statistics are reported. The number of points scored by the team (\texttt{points}), offensive possesions (\texttt{offposs}), number of field goals made (\texttt{fg\_two\_m}),
two point field goal percentage (\texttt{fg\_two\_pct}), three point field goal made(\texttt{fg\_three\_m}), three point percentage (\texttt{fg\_three\_pct}), team assists total(\texttt{assists}), team total rebounds (\texttt{rebounds}), team defensive rebounds(\texttt{defrebounds}), team offensive rebounds (\texttt{offrebounds}). We used this simplified version of team\_game\_stats table in user study. In experiments we used a richer number of columns for this table and because of the space constraint we only report the list of their names here:
\texttt{\seqsplit{fg\_two\_a, fg\_three\_a, nonheavefg\_three\_pct, ftpoints, ptsassisted\_two\_s, ptsunassisted\_two\_s, ptsassisted\_three\_s, ptsunassisted\_three\_s, assisted\_two\_spct, nonputbacksassisted\_two\_spct, assisted\_three\_spct, fg\_three\_apct, shotqualityavg, efgpct, tspct, ptsputbacks, fg\_two\_ablocked, fg\_two\_apctblocked, fg\_three\_ablocked, fg\_three\_apctblocked, assistpoints, two\_ptassists, three\_ptassists, atrimassists, shortmidrangeassists, longmidrangeassists, corner\_three\_assists, arc\_three\_assists, ftdefrebounds, defftreboundpct, def\_two\_ptrebounds, def\_two\_ptreboundpct, def\_three\_ptrebounds, def\_three\_ptreboundpct, deffgreboundpct, ftoffrebounds, offftreboundpct, off\_two\_ptrebounds, off\_two\_ptreboundpct, off\_three\_ptrebounds, off\_three\_ptreboundpct, offfgreboundpct, defatrimreboundpct, defshortmidrangereboundpct, deflongmidrangereboundpct, defarc\_three\_reboundpct, defcorner\_three\_reboundpct, offatrimreboundpct, offshortmidrangereboundpct, offlongmidrangereboundpct, offarc\_three\_reboundpct, offcorner\_three\_reboundpct}}

\mypar{lineup\_game\_stats}
This table stores statistics about the performance of a lineup in a game.
total number of minutes the lineup played in the game (\texttt{mp}), total number of possessions this lineup have in the game (\texttt{tmposs}), total number of opponent lineup possesions in the game (\texttt{oppo\_tmposs}).

\mypar{player\_game\_stats}
This table records statistics about a player's performance in a game. The most of the attributes contained in this table has same names and meanings as those in team\_game\_stats except it has a unique attribute \texttt{usage} which describes  what percentage of team plays a player was involved in while he was on the floor. Similarly, we used this simplified table in user study and richer version (please refer to team\_game\_stats for details) in the experiments.

\begin{figure*}
  \begin{minipage}{\linewidth}
    \centering
    \includegraphics[width=\linewidth]{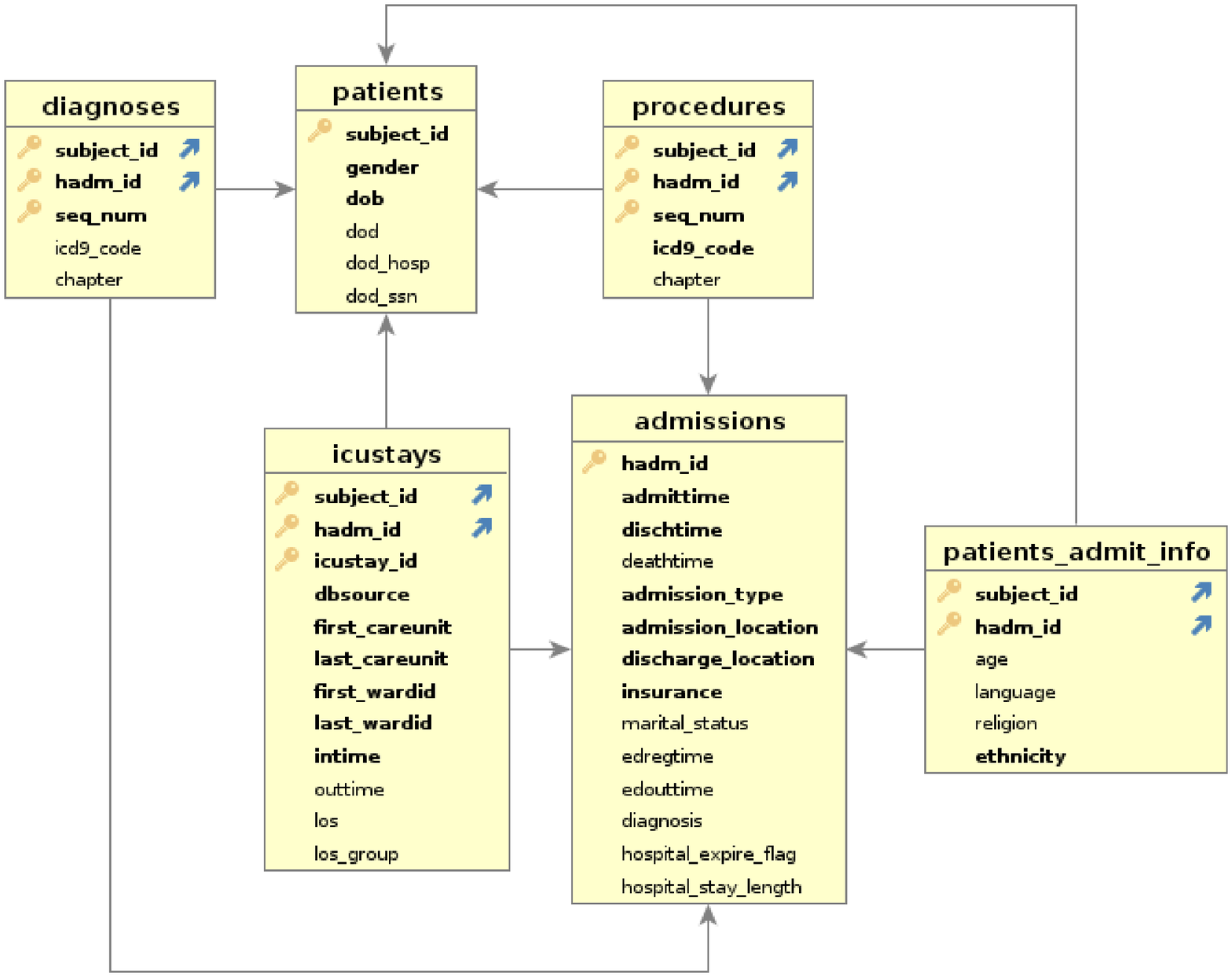}
    \caption{Schema Graph for MIMIC}
    \label{fig:schema_mimic}
  \end{minipage}
\end{figure*}

\Cref{fig:schema_mimic} shows the schema of the MIMIC dataset.

\mypar{admissions}
This table stores information about hospital admissions.
\texttt{hadm\_id} is a unique identifier for an admission.
\texttt{admittime} is the time when the patient got admitted.
\texttt{dischtime} is the time when the patient was discharged from the hospital.
\texttt{admission\_type} describes the type of the admission: \textit{ELECTIVE}, \textit{URGENT}, \textit{NEWBORN} or \textit{EMERGENCY}.
\texttt{admission\_location} provides information about the previous location of the patient prior to arriving at the hospital.
\texttt{discharge\_location} provides information about the location of the patient after visiting the hospital.
\texttt{insurance} is the type of insurance the patient has.
\texttt{martial\_status} is the patient's martial status.
\texttt{edregtime} is the time that the patient was registered in the emergency department
\texttt{edouttime} is the time that the patient was discharged from the emergency department.
\texttt{diagnosis} provides a preliminary, free text diagnosis for the patient on hospital admission
\texttt{hospital\_expire\_flag} indicates whether the patient died within the given hospitalization. 1 indicates death in the hospital, and 0 indicates survival to hospital discharge.
\texttt{hospital\_stay\_length} Is the length of the patient's stay in the hospital in days.

\mypar{procedures}
Table procedures records information of medical procedures for each patient.
\texttt{seq\_num} is part of the primary key in procedure table to differentiating one procedure from another during one admission for the same patient.
\texttt{icd9\_code}: ICD-9 Procedure Codes to represent the procedure
\texttt{chapter}: a broader category which contains certain \textit{icd9 codes} as a group.

\mypar{patients}
This table records information about patients.
\texttt{subject\_id} is a unique identifier for a patient.
For each patient we record their \texttt{gender} and date of birth (\texttt{dob}). \texttt{dod}is the date of death for the given patient. \texttt{dod\_hosp} is the date of death as recorded in the hospital database. \texttt{dod\_ssn} is the date of death from the social security database

\mypar{patients\_admit\_info}
This table records additional information about patients at the time of an hospital admission. For each hospital admission the patient's \texttt{age}, \texttt{language} of choice, \texttt{religion}, and \texttt{ethnicity} are recorded.

\mypar{icustays}
This table records information about intensive care unit (ICU) stays of a patient during an admission. A patient may stay multiple times in ICU during an admission.
\texttt{dbsource}: ‘carevue’ indicates the record was sourced from \textit{CareVue}, while \textit{metavision} indicated the record was sourced from Metavision.

\texttt{first\_careunit} and \texttt{last\_careunit} contain, respectively, the first and last ICU type in which the patient was cared for. \texttt{first\_wardid} and \texttt{last\_wardid} contain the first and last ICU unit in which the patient stayed. \texttt{intime} provides the date and time the patient was transferred into the ICU. \texttt{outtime} provides the date and time the patient was transferred out of the ICU. \texttt{los} is the length of the patient's stay in intensive care. \texttt{los\_group} is a categorized length of stay where it is divided into $5$ groups.

\mypar{diagnoses}
This table stores information about diagnosis for patients for a particular admission.
\texttt{seq\_num} is part of the primary key in procedure table to differentiating one diagnosis from another during one admission for the same patient.
\texttt{icd9\_code} is ICD-9 Diagnoses Codes to represent the procedure
\texttt{chapter}  a broader category which contains certain \textit{icd9 codes} as a group.
    }


\mypar{Experimental setup}
 \oursystem{} is implemented in Python (version 3.6) and runs on top of PostgreSQL (version 10.14). All experiments were run on a machine with 2 x AMD Opteron 4238 CPUs, 128GB RAM, and 4 x 1 TB 7.2K RPM HDDs in hardware RAID 5.

 \begin{table}[t]
   \centering
   \scriptsize
   \begin{tabular}{|l|l|c|}
     \hline
     {\bf Parameter} & {\bf Description} & \textbf{Default}  \\\hline
     \paramDBsize & the size of the database (scale factor) & \paramDefault{1.0} \\
     \paramJoinGraphSize & maximum number of edges per join graph (\Cref{sec:join-graph-enumeration}) & \paramDefault{3} \\
     \paramAfilterrate & \#attributes returned by feature selection (\Cref{sec:filt-clust-attr}) & \paramDefault{3} \\
     \paramAattrnum & max number of \revcj{numerical} attributes allowed in a pattern & \paramDefault{3} \\
     \paramPatsamplerate & sample rate for LCA pattern candidate generation
                           (\Cref{sec:gener-patt-over})  & \paramDefault{0.1}\\
     \revcj{\paramFscoresamplerate} & \revcj{sample rate for calculating \abbrFs of patterns} (\Cref{sec:filt-categ-patt}) & \paramDefault{0.3} \\
     \hline
   \end{tabular}
   \caption{Parameters of our approach and default values}
   \label{tab:exper_para}
    \vspace{-7mm}
 \end{table}

\mypar{Parameters and Optimizations}
\revm{
  \Cref{tab:exper_para} shows the parameters used in our experiments and their default values.
  We vary the following: (1) the size of the database; (2) the maximum number of join graph edges $\paramJoinGraphSize$; (3) sample rate for \abbrF  $\paramFscoresamplerate$;
  and (4)  the sample rate for pattern candidate generation ($\paramPatsamplerate$).
}
\ifnottechreport{
  \revs{
    In~\cite{chenjie2021} we also compare our approach with and without feature selection and evaluate how the maximum allowed number of edges in join graphs (\paramJoinGraphSize) affects performance. Based on these results we activated feature selection and set $\paramJoinGraphSize = 3$ for all experiments.
Unless stated otherwise we use queries $\query_1$ from \Cref{sec:introduction} (for NBA experiments) and $\query_{mimic4}$ from~\Cref{sec:qual-eval-mimic} (for MIMIC experiments) with their respective user questions and use the default values for all other parameters.
  }}

\iftechreport{
\begin{figure}[t]
  \centering
  \begin{minipage}{1\linewidth}
    \centering
    {\scriptsize
      \begin{tabular}{|c|r|r|r|r|r|}
        \hline
        \multirow{2}{*}{\textbf{Step}} & \multicolumn{4}{c|}{\textbf{feature sel.: \paramFscoresamplerate =}}  
                                       & \multirow{2}{*}{\textbf{w/o feature sel.}}                                                                                      \\ \cline{2-5}
                                       & \textbf{0.1}                                                            & \textbf{0.3} & \textbf{0.5} & \textbf{1.0} &          \\ \hline
        \textbf{\bdFeature}            & 84.96                                                                   & 87.39        & 86.86        & 84.80        & N/A      \\
        \textbf{\bdLCA}                & 9.43                                                                    & 9.21         & 9.25         & 9.31         & 9.39     \\
        \textbf{\bdF}                  & 33.19                                                                   & 91.50        & 149.21       & 226.53       & 16749.36 \\
        \textbf{\bdAPT}                & 21.96                                                                   & 21.29        & 20.87        & 20.47        & 20.51    \\
        \textbf{\bdrefine}             & 15.94                                                                   & 21.10        & 22.93        & 23.69        & 128      \\
        \textbf{\bdFsamp}              & 15.52                                                                   & 21.03        & 23.44        & N/A          & N/A      \\
        \textbf{\bdJGenum}             & 17.57                                                                   & 17.76        & 17.59        & 17.43        & 17.77    \\
        \hline
        \hline
        \textbf{total}                 & 214.46                                                                  & 285.19       & 346.61       & 399.07       & 17017.44 \\
        \hline
      \end{tabular}
    }
    \subcaption{Feature sel.: runtime in sec. (NBA, $\paramJoinGraphSize = 3$)}
    \label{table:naive_vs_opt_nba}
  \end{minipage}
  \begin{minipage}{1\linewidth}
    \centering
    {\scriptsize
      \begin{tabular}{|c|r|r|r|r|r|}
        \hline
        \multirow{2}{*}{\textbf{Step}} & \multicolumn{4}{c|}{\textbf{feature sel.: \paramFscoresamplerate =}}  
                                       & \multirow{2}{*}{\textbf{w/o feature sel.}}                                                                                    \\ \cline{2-5}
                                       & \textbf{0.1}                                                            & \textbf{0.3} & \textbf{0.5} & \textbf{1.0} &        \\ \hline
        \textbf{\bdFeature}            & 19.05                                                                   & 18.96        & 19.09        & 19.05        & N/A    \\
        \textbf{\bdLCA}                & 16.91                                                                   & 16.50        & 16.71        & 16.35        & 16.44  \\
        \textbf{\bdF}                  & 18.34                                                                   & 74.62        & 131.43       & 226.35       & 209.28 \\
        \textbf{\bdAPT}                & 6.62                                                                    & 7.08         & 6.74         & 6.65         & 6.98   \\
        \textbf{\bdrefine}             & 13.01                                                                   & 16.66        & 16.86        & 15.59        & 15.22  \\
        \textbf{\bdFsamp}              & 6.46                                                                    & 8.80         & 9.93         & N/A          & N/A    \\
        \textbf{\bdJGenum}             & 0.24                                                                    & 0.24         & 0.24         & 0.24         & 0.24   \\
        \hline
        \hline
        \textbf{total}                 & 82.87                                                                   & 145.15       & 203.22       & 286.50       & 250.41 \\
        \hline
      \end{tabular}
    }
    \end{minipage}
    \caption{Feature sel.: runtime in sec. (MIMIC, $\paramJoinGraphSize = 3$)}
    \label{table:naive_vs_opt_mimic}
  \end{figure}


\subsection{Feature Selection}\label{sec:feature-selection}
%
In this set of experiments, we evaluate how feature selection affects our approach.
We compare our approach without feature selection as discussed in \Cref{sec:algo_pattern} (\textbf{Naive}) against with feature selection (\textbf{opt}). For \textbf{opt} we also vary the sample rate for \abbrF calculation (\paramFscoresamplerate). We measured the runtime of the individual steps of our algorithm: (i) we apply feature selection to determine which attributes to use for explanations (\textbf{\bdFeature}); (ii) we use the LCA method to generate candidate patterns for an \abbrAPT (\textbf{\bdLCA}); (iii) we materialize \abbrAPTs (\textbf{\bdAPT}); (iv) we create a sample of \abbrAPTs (\textbf{\bdFsamp}) and then calculate \abbrFs of patterns using this sample (\textbf{\bdF}), (v) in the refinement step we create patterns generated over categorical attributes by adding numerical attributes (\textbf{\bdrefine}).
\Cref{table:naive_vs_opt_mimic} (\Cref{table:naive_vs_opt_nba}) shows the runtime breakdown for MIMIC (NBA).

}

\iftechreport{
\begin{figure}[t]
  \centering
  \includegraphics[trim=0mm 10mm 0mm 0mm,width=0.5\linewidth]{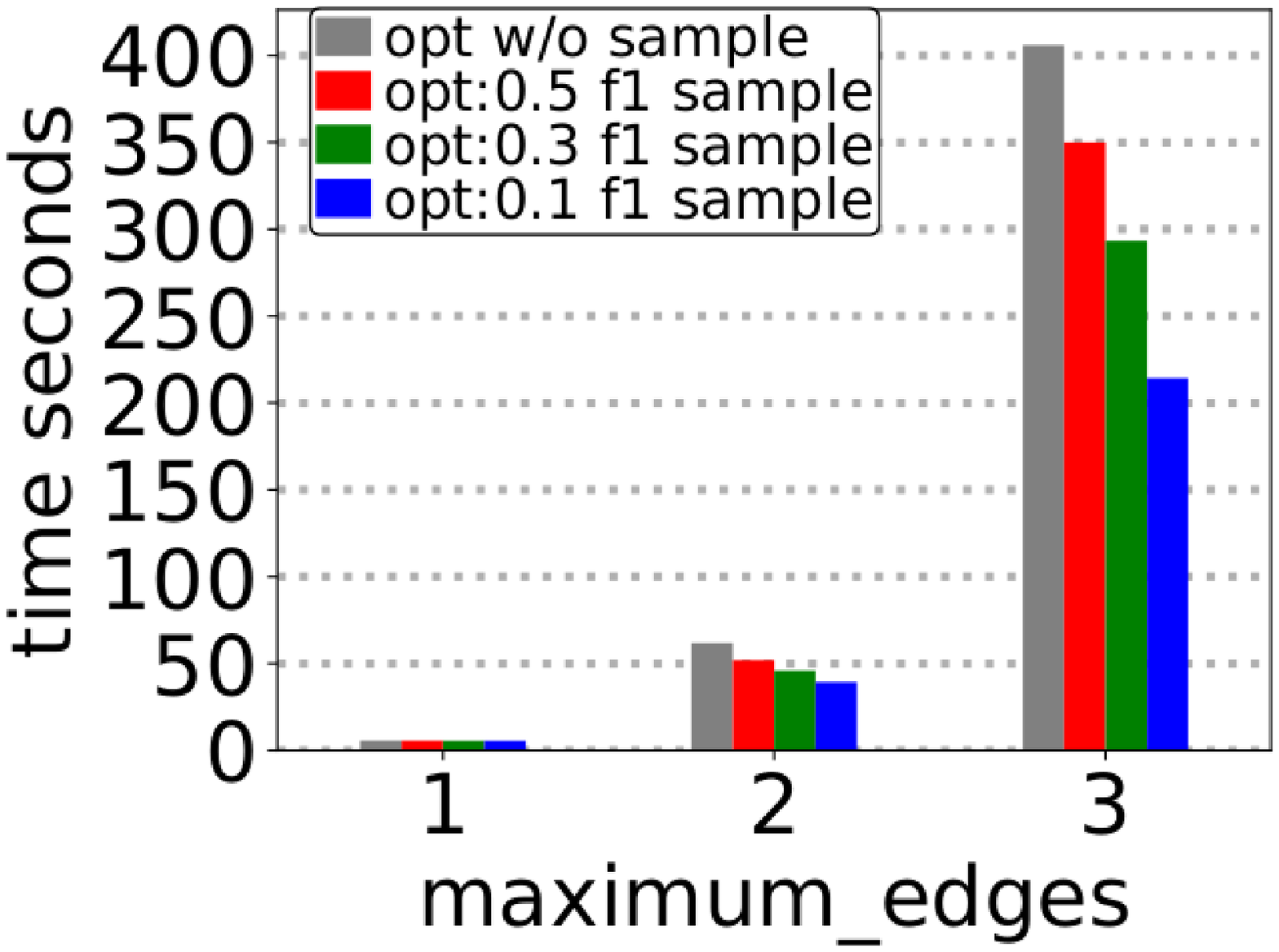}
  \caption{Varying \paramFscoresamplerate and $\paramJoinGraphSize$}
  \label{fig:exp-varying-join-graph-edges}
\end{figure}

\subsection{Join Graph Size}\label{sec:join-graph-size}
We first evaluate how the maximum number of join graphs affects performance by varying \paramJoinGraphSize from 1 up to 3. We compare the runtime of our algorithm with feature selection varying the sample rate for \abbrF calculation (\paramFscoresamplerate= $\{0.1, 0.3, 0.5, 1.0\}$). We use query $\query_1$ and user question (NBA dataset) from the running example in the introduction. The results of this experiment are shown in \Cref{fig:exp-varying-join-graph-edges}. As expected runtime is increases significantly in \paramJoinGraphSize, because the number of join graphs to be considered increases dramatically when we allow for more edges per join graph. Sampling for \abbrF calculation improves performance by up to $\sim 50\%$ for $\paramJoinGraphSize > 1$ when \paramFscoresamplerate is set to 10\%.

}

\begin{figure}[t]
    \begin{minipage}{0.493\linewidth}
    \centering
    \includegraphics[width=1.0\linewidth,trim=0 3mm 0 2mm,clip=true]{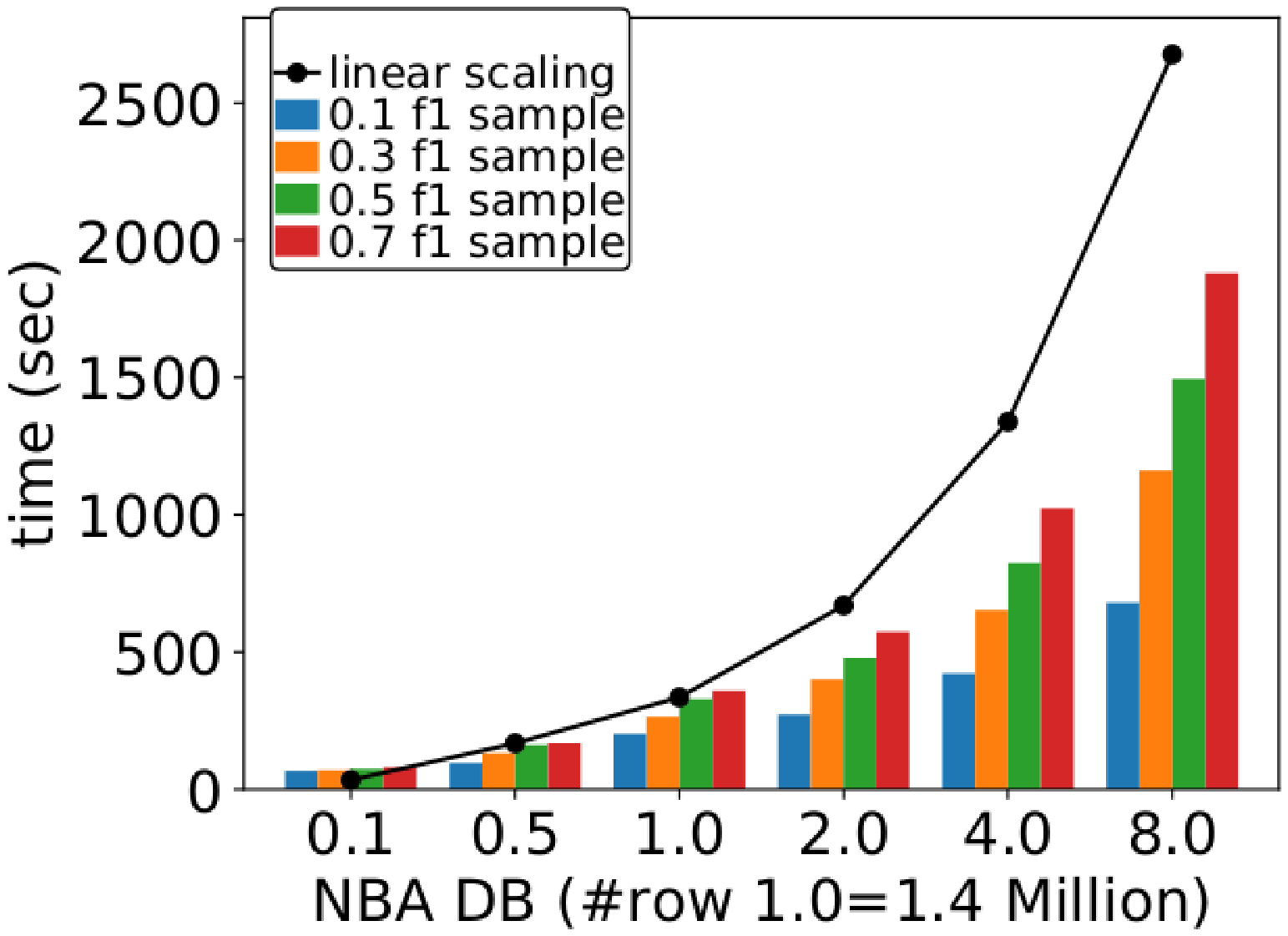}
    \vspace{-5mm}
    \subcaption{NBA, varying \paramFscoresamplerate}
    \label{fig:scale_nba_sample}
    \end{minipage}
    \begin{minipage}{0.493\linewidth}
        \centering
        \includegraphics[width=1.0\linewidth,trim=0 3mm 0 2mm,clip=true]{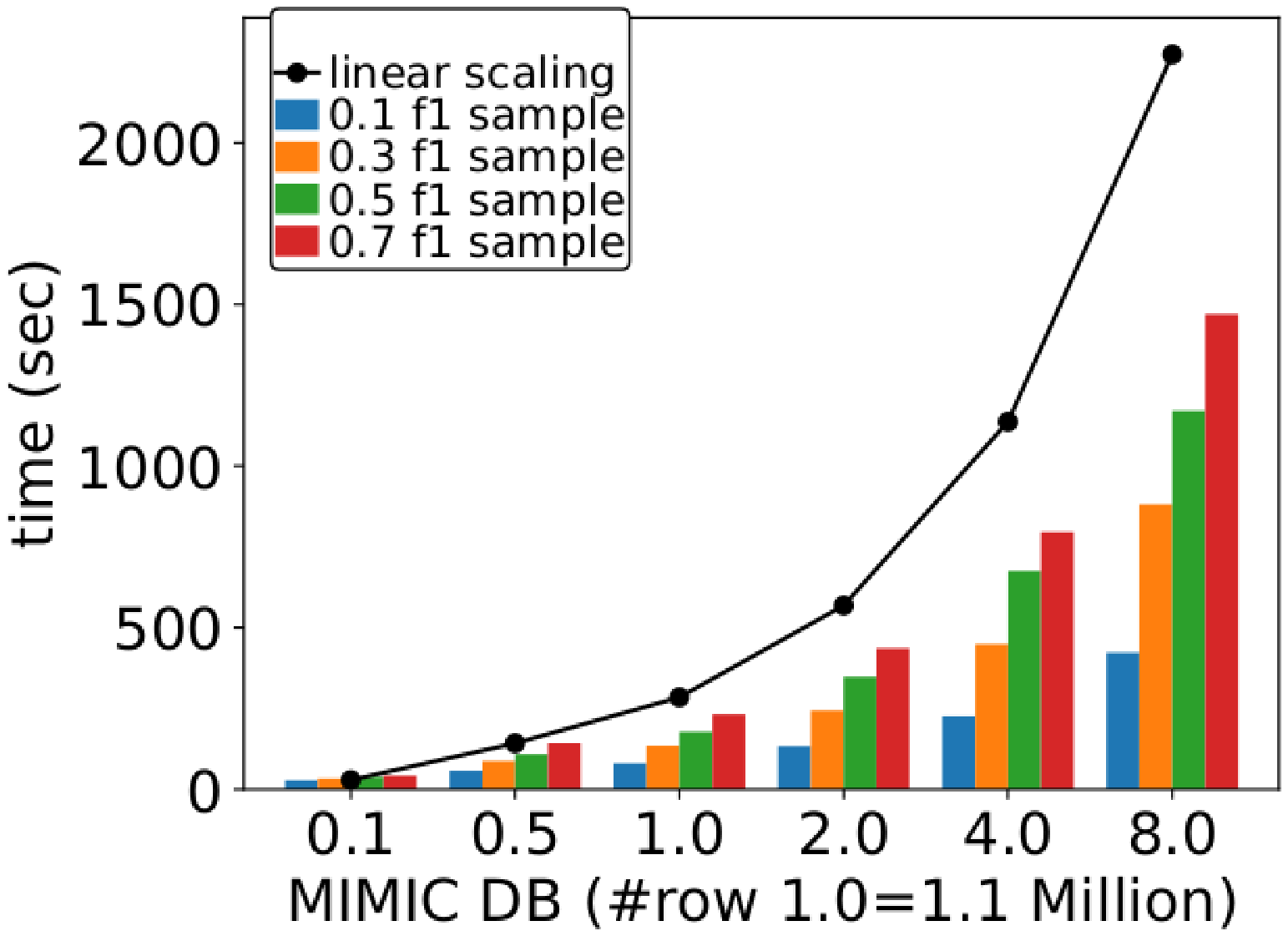}
        \vspace{-5mm}
        \subcaption{MIMIC, varying \paramFscoresamplerate}
        \label{fig:scale_mimic_sample}
    \end{minipage}
\iftechreport{
    \begin{minipage}{1.0\linewidth}
     {
      \centering
      \scriptsize
      \begin{tabular}{|c|r|r|r|r|r|r|}
        \hline
        \textbf{db size}                              & \textbf{0.1} & \textbf{0.5} & \textbf{1} & \textbf{2} & \textbf{4} & \textbf{8} \\
        \hline
        \textbf{\bdFeature} & 31.71        & 35.43        & 84.63      & 97.5       & 112.49     & 121.19     \\
        \textbf{\bdLCA}     & 1.25         & 2.76         & 9.18       & 16.84      & 24.92      & 42.17      \\
        \textbf{\bdF}       & 12.58        & 75.99        & 174.77     & 315.31     & 635.12     & 1248.13    \\
        \textbf{\bdAPT}     & 13.53        & 16.29        & 22         & 27.75      & 42.65      & 71.02      \\
        \textbf{\bdrefine}  & 1.98         & 10.42        & 23.25      & 46.52      & 82.08      & 176.23     \\
        \textbf{\bdFsamp}   & 1.11         & 8.97         & 26.98      & 51.71      & 108.59     & 203.91     \\
        \hline
        \hline
        \textbf{totals}                               & 62.16        & 149.86       & 340.81     & 555.63     & 1005.85    & 1862.65    \\
        \hline
      \end{tabular}
      \subcaption{Time breakdown (NBA, $\paramJoinGraphSize = 3$, $\paramFscoresamplerate = 0.7$)}
      \label{table:bkdown-scale-nba-sample}
    }
    \end{minipage}
     \begin{minipage}{1.0\linewidth}
     {
      \centering
      \scriptsize
      \begin{tabular}{|c|r|r|r|r|r|r|}
        \hline
        \textbf{db size}    & \textbf{0.1} & \textbf{0.5} & \textbf{1} & \textbf{2} & \textbf{4} & \textbf{8} \\
        \hline
        \textbf{\bdFeature} & 13.49        & 18.93        & 18.99      & 19.63      & 20.85      & 23.05      \\
        \textbf{\bdLCA}     & 6.4          & 15.13        & 16.46      & 19.17      & 24.6       & 34.74      \\
        \textbf{\bdF}       & 15.86        & 89.63        & 161.91     & 328.08     & 611.08     & 1133.01    \\
        \textbf{\bdAPT}     & 2.57         & 4.37         & 6.66       & 12.41      & 21.31      & 43.18      \\
        \textbf{\bdrefine}  & 1.67         & 8.22         & 15.49      & 32.63      & 66.84      & 133.31     \\
        \textbf{\bdFsamp}   & 1.6          & 5.58         & 11.68      & 23.91      & 51.25      & 102.16     \\
        \hline
        \hline
        \textbf{totals}     & 41.59        & 141.86       & 231.19     & 435.83     & 795.93     & 1469.45    \\
        \hline
      \end{tabular}
      \subcaption{Time breakdown (MIMIC, $\paramJoinGraphSize = 3$, $\paramFscoresamplerate = 0.7$)}
      \label{table:bkdown-scale-mimic-sample}
    }
  \end{minipage}
  }
  \caption{Scalability in database size}
  \label{fig:exp-scalability}
\vspace{-2mm}
\end{figure}


\subsection{Scalability}\label{sec:exp-scalability}
\revs{
  To evaluate the scalability of our approach, we use scaled versions of the NBA and MIMIC datasets ranging ($\sim 10\%$ to $\sim 800\%$). We varied the \abbrF sample rate ($\paramFscoresamplerate$) from $0.1$ to $0.7$. The results are shown in \Cref{fig:exp-scalability} comparing against linear scaling (black line).
  The effect of database size on runtime is similar for both datasets. Our approach shows sublinear scaling for both datasets (note the log-scale x-axis). The benefits of sampling are more produced for larger database sizes: $\paramFscoresamplerate = 0.1$ is more than 60\% (70\%) faster than $\paramFscoresamplerate = 0.7$ for scale factor $8$ on the NBA (MIMIC) dataset.
  \ifnottechreport{We present a detailed breakdown on where time is spent in \cite{chenjie2021}. \abbrF calculation  turned out to be the most significant factor.}
}
\iftechreport{
  A detailed breakdown of the runtime of the individual steps of our algorithm for $\paramFscoresamplerate = 0.7$  is shown in \Cref{table:bkdown-scale-nba-sample} and \Cref{table:bkdown-scale-mimic-sample}.
  Based on the results of our study of how sampling rate for pattern generating ($\paramPatsamplerate$) affects performance and quality presented in~\Cref{sec:exp-parameter-analysis} (see  \Cref{fig:lca_apt1} to \Cref{fig:lca_apt4}), we capped the number of rows sampled for LCA at $1000$. Recall that we measure the following steps of our approach: (i) we apply feature selection to determine which attributes to use for explanations (\textbf{\bdFeature}); (ii) we use the LCA method to generate candidate patterns for an \abbrAPT (\textbf{\bdLCA}); (iii) we materialize \abbrAPTs (\textbf{\bdAPT}); (iv) we create a sample of \abbrAPTs (\textbf{\bdFsamp}) and then calculate \abbrFs of patterns using this sample (\textbf{\bdF}), (v) in the refinement step we create patterns generated over categorical attributes by adding numerical attributes (\textbf{\bdrefine}). The major contributing factor for larger database sizes is \abbrF calculation which  makes up than 50\% of the runtime. The step with the largest growth rate is sampling for \abbrF calculations for the NBA dataset which is $\sim 183$ times slower for scale factor 8 compared to scale factor 0.1. Based on these results, using lower sample rates for \abbrF calculation is preferable for larger database sizes.}
\begin{figure}[t]
  \begin{minipage}{1\linewidth}
    \centering
    {\scriptsize
      \begin{tabular}{|c|c|c|c|}
        \hline
        \textbf{join graph} & \textbf{join graph structure} &\textbf{\abbrAPT (\#rows)} & \textbf{\# attributes} \\
        \hline
        $\jgraph_1$ &PT& $2621$ & $2$ \\
        \hline
        $\jgraph_2$ &PT\,-\,player\_salary\,-\,player& $66282$ & $2$ \\
        \hline
        $\jgraph_3$ &PT& $50797$ & $10$ \\
        \hline
        $\jgraph_4$ &PT\,-\,patient\_admt\_info\,-\,patients& $50797$ & $19$ \\
        \hline
      \end{tabular}
    }\\[-2mm]
    \subcaption{Join graph APTs size (LCA sampling)}
    \label{table:lca_apt_info}
  \end{minipage}
  \begin{minipage}{1\linewidth}
    \begin{minipage}{1\linewidth}
    \centering
    \begin{minipage}{0.48\linewidth}
      \centering
      \includegraphics[trim=0mm 10mm 0mm 0mm,width=\linewidth]{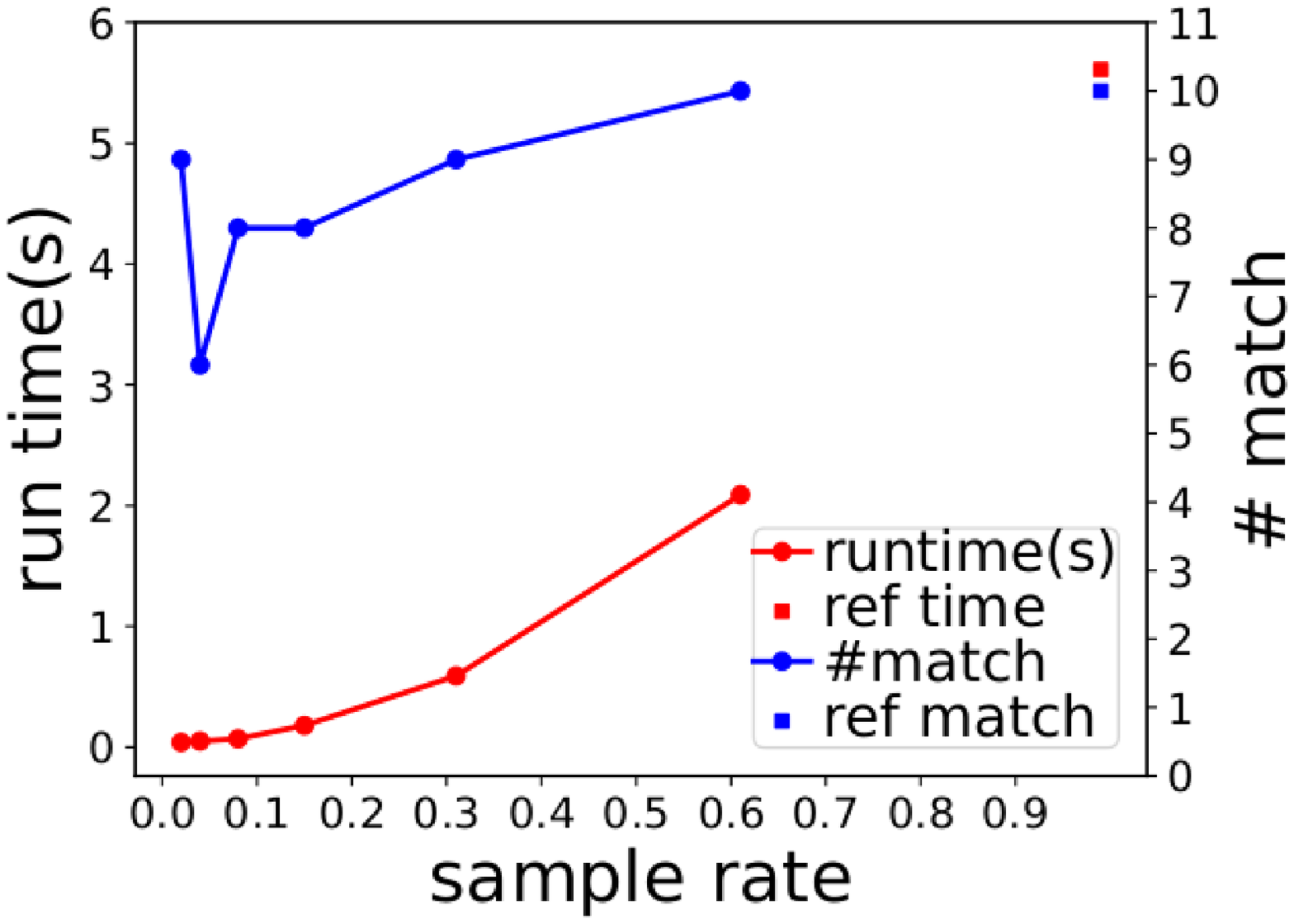}
      \subcaption{LCA sampling for $\jgraph_1$}
      \label{fig:lca_apt1}
    \end{minipage}
    \begin{minipage}{0.48\linewidth}
      \centering
      \includegraphics[trim=0mm 10mm 0mm 0mm,width=\linewidth]{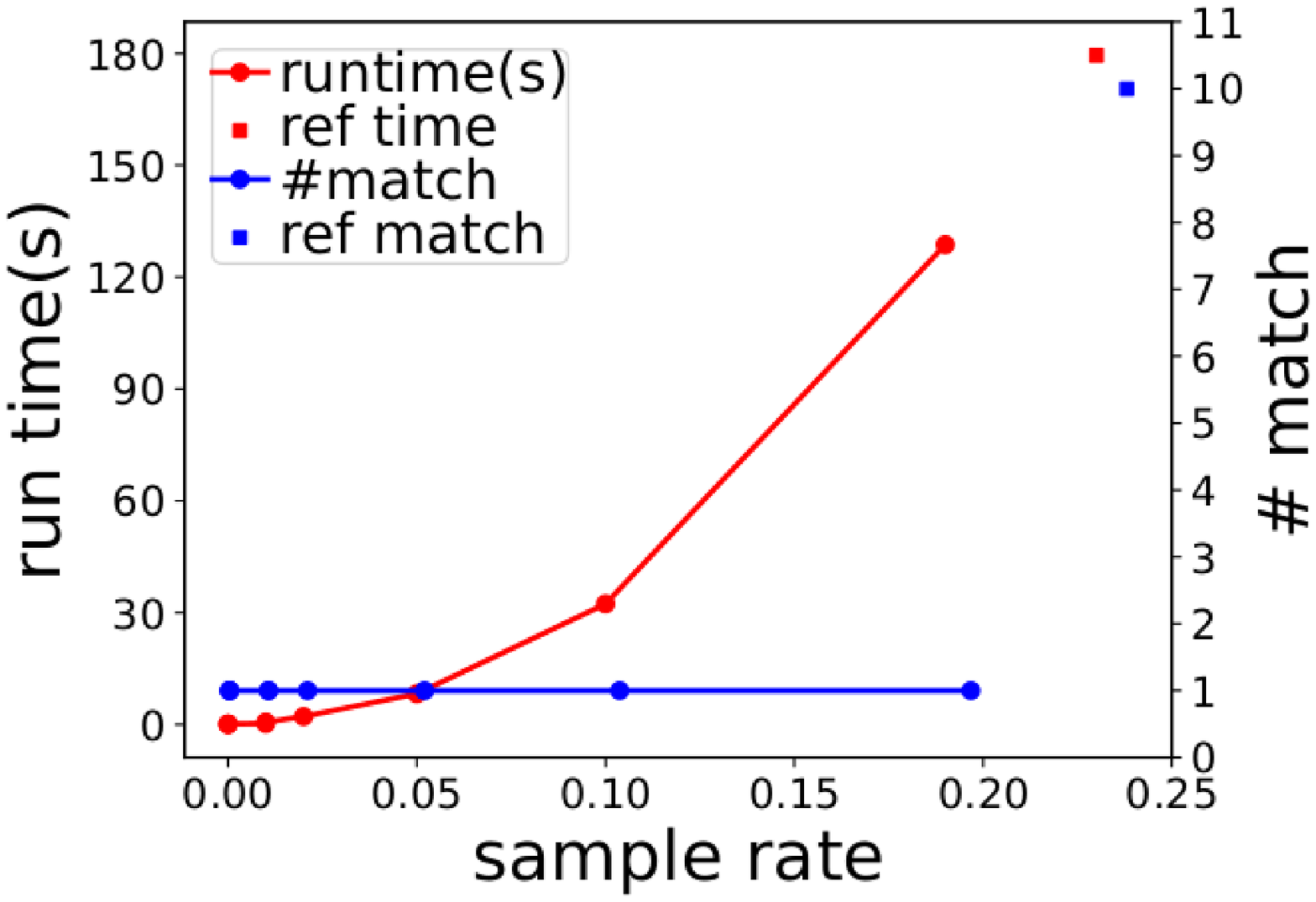}
      \subcaption{LCA sampling for $\jgraph_2$}
      \label{fig:lca_apt2}
    \end{minipage}
    \end{minipage}\\
    \begin{minipage}{1\linewidth}
    \centering
    \begin{minipage}{0.48\linewidth}
      \centering
      \includegraphics[trim=0mm 10mm 0mm 0mm,width=\linewidth]{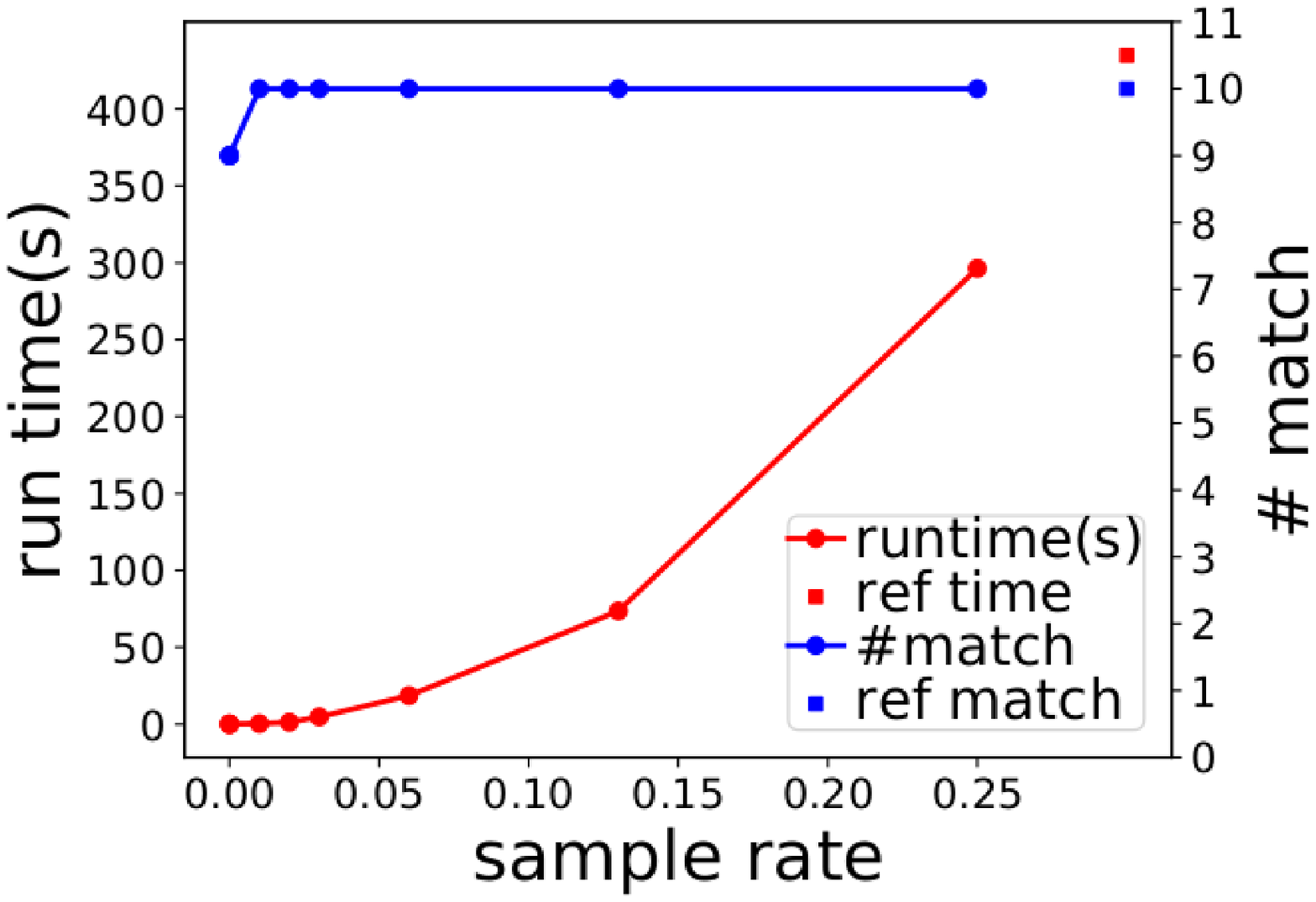}
      \subcaption{LCA sampling for $\jgraph_3$}
      \label{fig:lca_apt3}
    \end{minipage}
    \begin{minipage}{0.48\linewidth}
      \centering
      \includegraphics[trim=0mm 10mm 0mm 0mm,width=\linewidth]{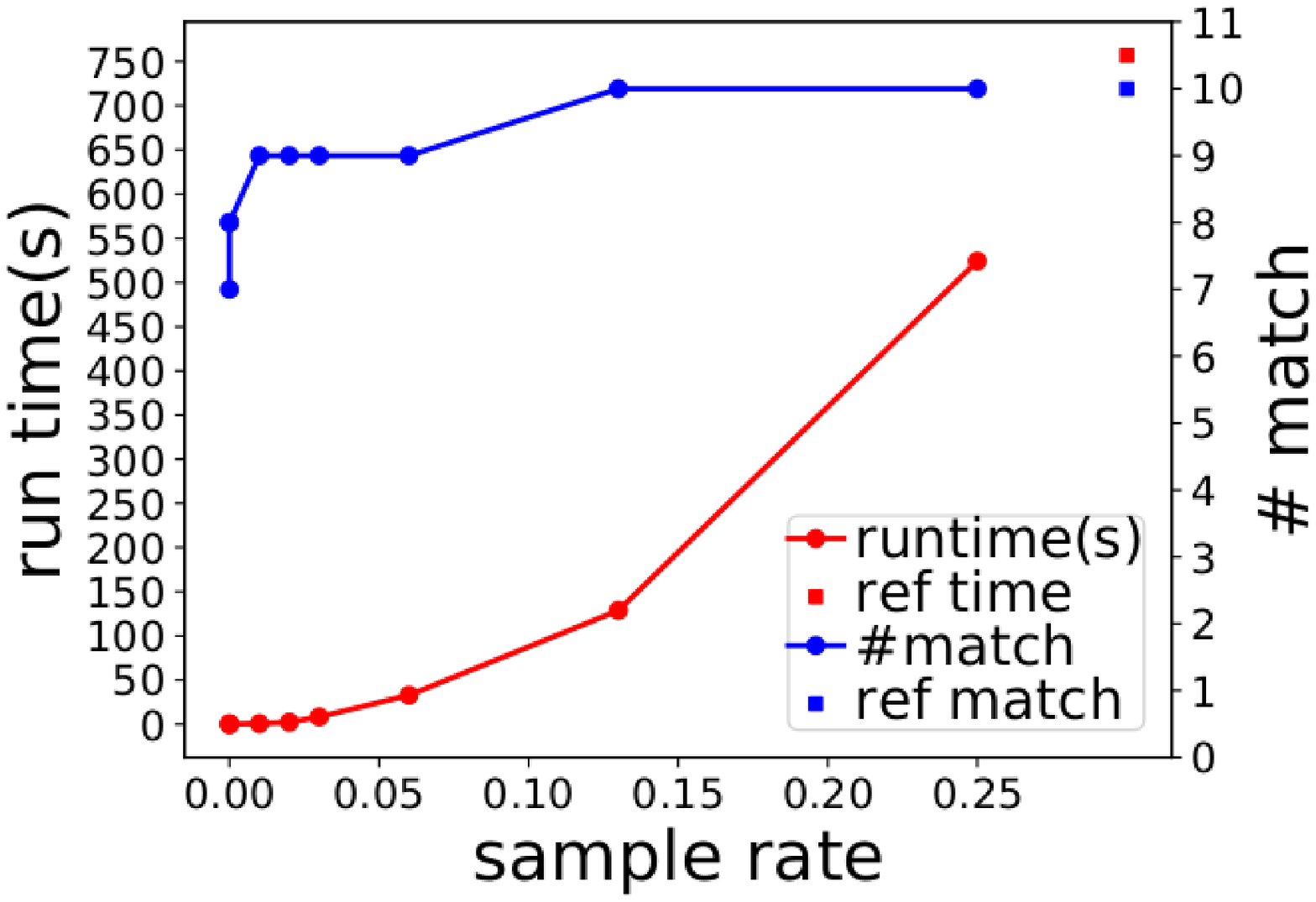}
      \subcaption{LCA sampling for $\jgraph_4$}
      \label{fig:lca_apt4}
    \end{minipage}
  \end{minipage}
  \end{minipage}
  \begin{minipage}{1.0\linewidth}
  \centering
    \iftechreport{
    \begin{minipage}{0.96\linewidth}
      \centering
      \includegraphics[trim=0mm 10mm 0mm 0mm,width=\linewidth]{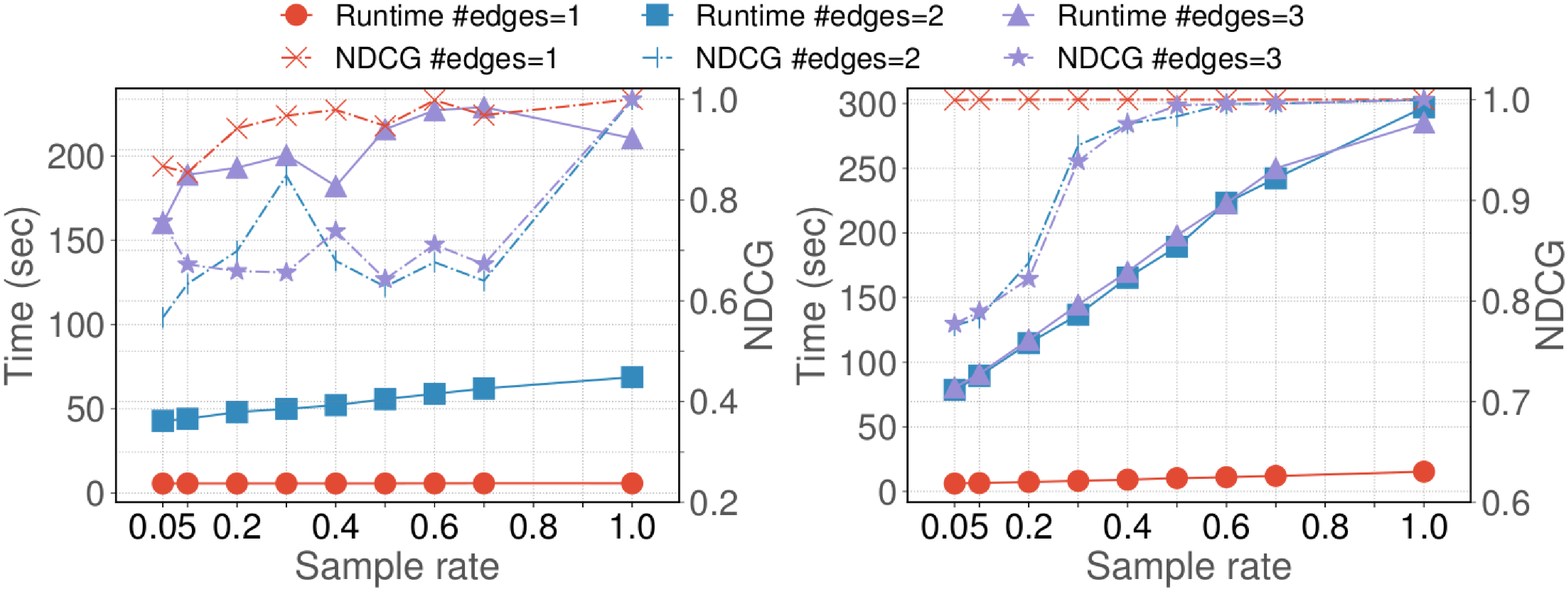}
      \subcaption{NDCG against $\paramFscoresamplerate$ (Left: NBA, right: MIMIC)}
      \label{fig:nba_mimic_ndcg}
    \end{minipage}
    \begin{minipage}{0.96\linewidth}
      \centering
      \includegraphics[trim=0mm 10mm 0mm 0mm,width=\linewidth]{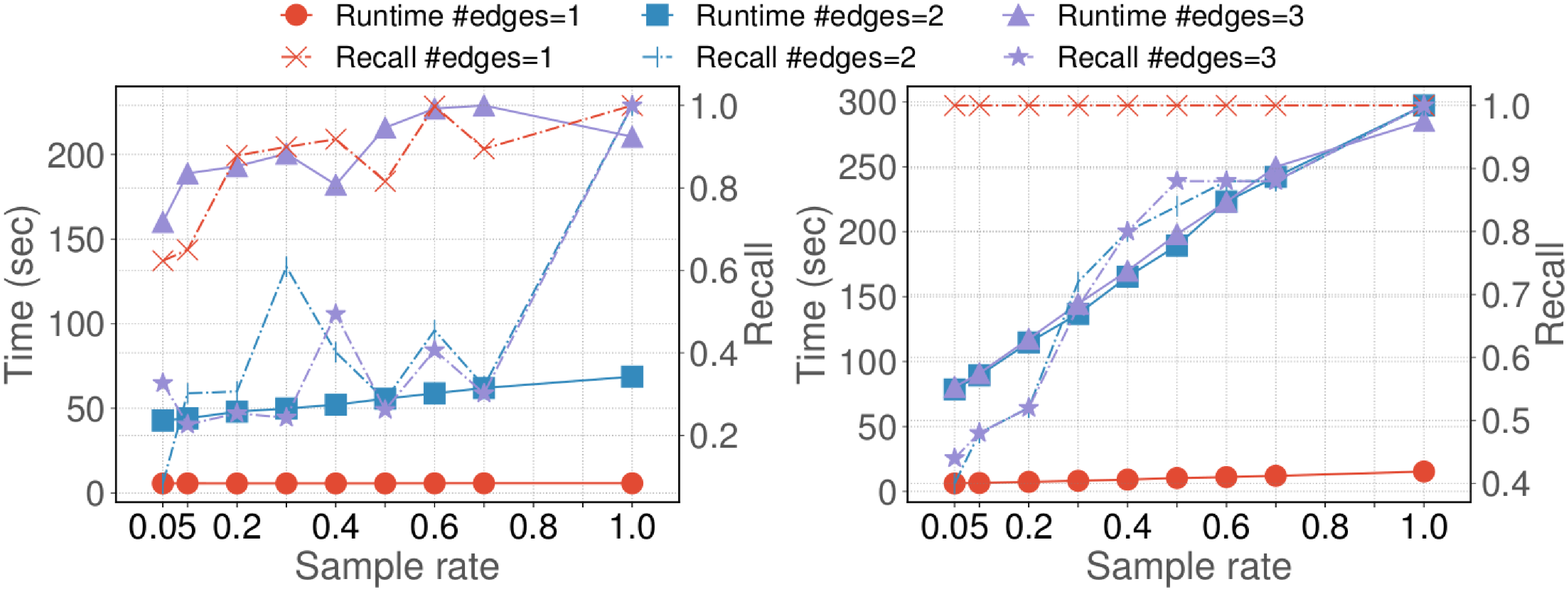}
      \subcaption{Recall against $\paramFscoresamplerate$ (left: NBA, right: MIMIC)}
      \label{fig:nba_mimic_recall}
    \end{minipage}
    }
  \end{minipage}
  \vspace{-4mm}
  \caption{Effect of sampling on runtime and pattern quality}
  \label{fig:sample_rate}
   \vspace{-4mm}
\end{figure}


\subsection{Sample Size}\label{sec:exp-parameter-analysis}

\revs{We now study the impact of sampling for \abbrF calculation (\paramFscoresamplerate) and for pattern candidate generation  (\paramPatsamplerate) on performance and pattern quality. We treat the result produced without sampling as ground truth and measure the difference between this result and the result produced by sampling.}

\mypar{Sampling for Pattern Generation}
\revm{
  Recall that we use the \textit{LCA} approach to generate candidate patterns over the categorical attributes of a database. This approach computes a cross product of a sample with itself. Our implementation of LCA ranks the pattern candidates generated by LCA by their recall, and then selects the top-k ranked patterns as input for the next step.
In this experiment, we want to determine a robust choice for the \textit{LCA} sample size parameter and, thus, compare the results produced by this step.
We selected $4$ join graphs and their \abbrAPTs: $\jgraph_1$ and $\jgraph_2$ for $\query_1$, and $\jgraph_3$ and $\jgraph_4$ for $\query_{mimic4}$.
\Cref{table:lca_apt_info} shows the number of rows and attributes for the \abbrAPTs, and join graph structure for each of these join graphs. The results are shown in \Cref{fig:lca_apt1} to \Cref{fig:lca_apt4}.
We measure pattern quality as the number of patterns from the top-10 computed over the full dataset that occur in the top-10 computed based on a sample (see the blue lines labeled \textit{match}). 
For each materialized join graph, we also measure the runtime of generating the top-10 patterns.
As expected because of the cross product computed over the sample, runtime increases  quadratically in the sample size. For \Cref{fig:lca_apt3} and \Cref{fig:lca_apt4}, all ground truth top-10 patterns are found even for just 3\% sample rate.}
\revs{Whereas as shown in \Cref{fig:lca_apt2}}, \revm{even for $20\%$ sample rate ($13000$ rows), we only find one matching pattern. The reason behind the different result observed in \Cref{fig:lca_apt3} is that one of the columns in $\jgraph_2$ has over $800$ distinct values that are roughly evenly distributed and, thus, the recall-based ranking is sensitive to small variations in frequency caused by sampling. For \cref{fig:lca_apt1}, even though this join graph also contains this attribute (over $500$ distinct values), for this \abbrAPT, the column's distribution in the \abbrAPT is skewed leading to a more stable set of high frequency values that are used in the top-10 patterns. 
Based on these observations we determine the sample size \paramPatsamplerate=0.1 for the rest of the experiments 
and set a cap number of rows in the sample as 1000. 
 }

\mypar{Sampling for \abbrF Calculation}
\reva{We also use sampling to reduce the cost of the quality measure calculation (parameter \paramFscoresamplerate). Instead of scanning all tuples in the augmented provenance table (\abbrAPT), we compute the number of matching tuples over a sample of the \abbrAPT{} for a given pattern. \Cref{fig:sample_rate} shows the running time and the quality of patterns when varying the sample rate and maximum number of edges in join graphs (\paramJoinGraphSize) for queries $\query_1$ and $\query_{mimic4}$. 
We use the normalized discounted cumulative gain (NDCG)~\cite{jarvelin2002cumulated} as the sample quality metric, which is often used in information retrieval for evaluating the ranking results of recommendation and search engine algorithms.
A high NDCG score (between 0 and 1) indicates that the ranking of the top patterns returned by the sampled result is close to the top patterns produced
without sampling.
\Cref{fig:nba_mimic_ndcg} shows that for both datasets for $\paramJoinGraphSize=1$, the similarity between the sampled result and result over the full dataset is high, even for aggressive sampling.
For larger join graphs ($\paramJoinGraphSize=2,3$), the NDCG score fluctuates around 0.7 for the NBA dataset.
For the MIMIC dataset, the NDCG converges to $\sim$ 1.0 at a sampling rate of 0.5. Overall, even for low sample rates, the NDCG score is at least $\sim 0.6$ ($\sim 0.8$) for the NBA (MIMIC) dataset. }
\iftechreport{
We also evaluate the number of patterns from sampling that are present in the results without sampling, for which we used recall as the metric as shown in \Cref{fig:nba_mimic_recall}.
}




\subsection{Comparison with Explanation Tables}\label{sec:comp-with-expl}

\reva{We also compared our approach against the approach from~\cite{DBLP:journals/pvldb/GebalyAGKS14} (referred to as ET from now on). We compared on one join graph with structure \texttt{PT - player\_game\_stats - player} for the NBA dataset using query $\query_1$ and the corresponding user question from the introduction. The corresponding \abbrAPT has $\sim$ 2600 rows and 84 columns. To be fair, we did apply our feature selection technique to filter columns for ET too, reducing the number of columns to 20. \revs{Without that step, ET took 30 seconds even for the smallest sample size (16 tuples). \Cref{fig:exp-vary-queries} lists the runtime of \oursystem{} and ET after applying feature selection}. \iftechreport{As a qualitative comparison, we list the first 20 patterns returned by ET in \Cref{append:et}.} While slower for a sample size of 16 our approach scales much better when increasing the sample size ($\sim 50x$ faster for sample size 512).} \revs{That being said, we would like to point out the major contribution of our work the efficient exploration of a schema graph for finding explanations. However, as this experiment demonstrates this would not be possible without our optimizations for mining patterns over a single \abbrAPT.}

\begin{figure}[t]
\centering
\begin{minipage}{1.0\linewidth}
\centering
\begin{minipage}[b]{0.33\linewidth}
      {\scriptsize
        \begin{tabular}{| >{\centering}p{3em}| >{\centering}p{3em}|c|}
          \hline
                     & \multicolumn{2}{c|}{\textbf{Runtime (sec)}} \\\cline{2-3}
          \textbf{Sample Size} & \textbf{\oursystem} & \textbf{ET}  \\
          \hline
          16         & 9.90        & 3.21           \\
          \hline
          64         & 14.46       & 11.65          \\
          \hline
          256        & 15.32       & 176.76         \\
          \hline
          512        & 14.81       & 855.13         \\
          \hline
        \end{tabular}
      }\\[-2mm]

  \caption{Comparison with Explanation Tables}\label{fig:comparison-with-explanation}
\end{minipage}
\hfill
\begin{minipage}[b]{0.63\linewidth}
\centering
  \includegraphics[trim=0mm 1mm 0mm 0mm,clip,width=0.95\linewidth]{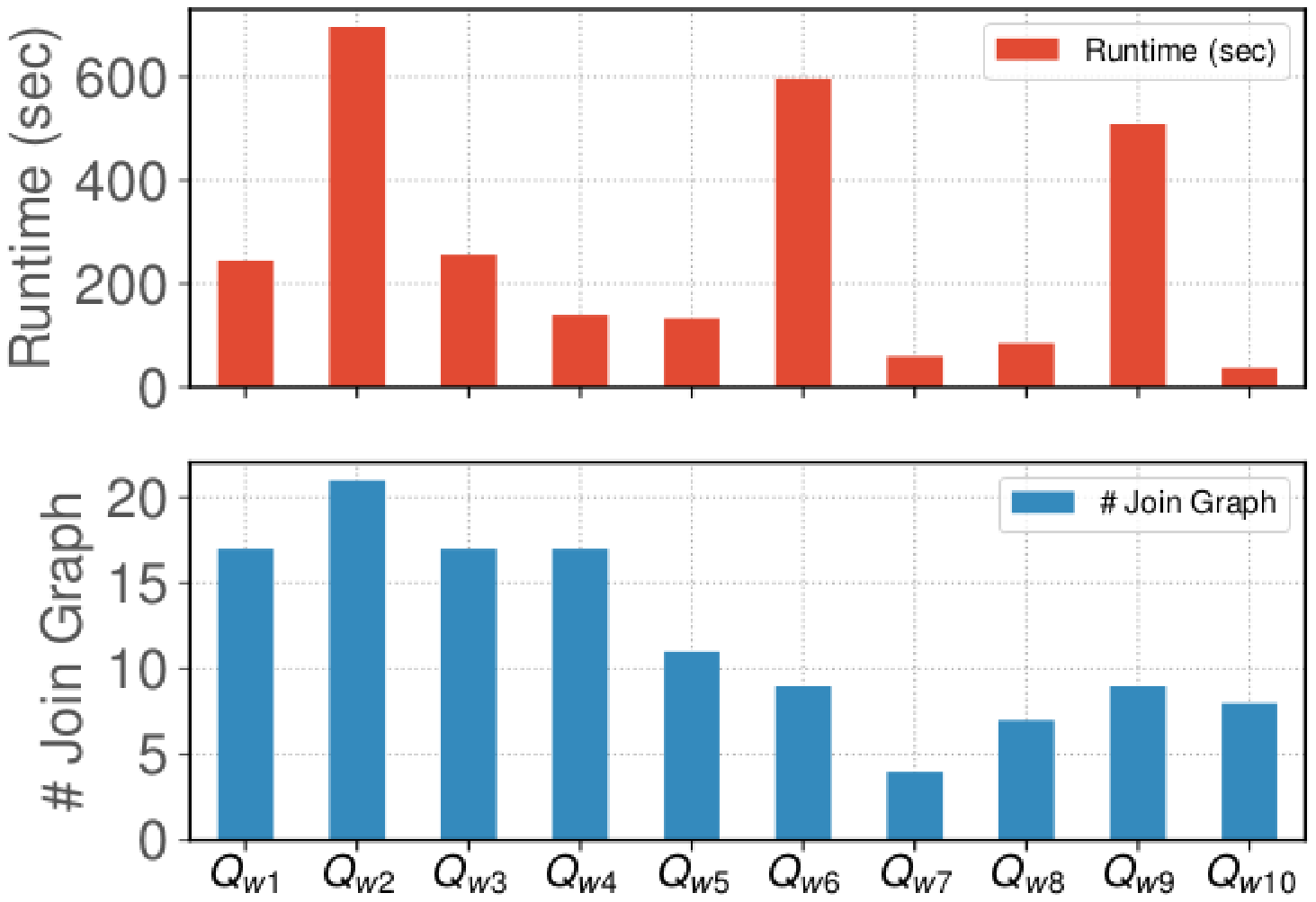}
   \\[-4mm]
  \caption{Varying Queries}
  \label{fig:exp-vary-queries}
\end{minipage}
\end{minipage}
\begin{minipage}[t]{0.9\linewidth}
  \centering
      {\scriptsize
        \begin{tabular}{| >{\centering}p{3em}|c|c|}
          \hline
        & \multicolumn{2}{c|}{\textbf{Query}} \\\cline{2-3}
          \textbf{Rank} & \textbf{$\uquestion_{cape1}$} & \textbf{$\uquestion_{cape2}$}  \\
          \hline
          $1$ & (\textit{LeBron James},$2009$-$10$,$29.7$) & (\textit{GSW},$2013$-$14$,$51$)\\
          \hline
          $2$ & (\textit{LeBron James},$2011$-$12$,$27.1$) & (\textit{GSW},$2014$-$2015$,$67$) \\
          \hline
          $3$ & (\textit{LeBron James},$2013$-$14$,$27.1$)  & (\textit{GSW},$2015$-$16$,$73$)\\
          \hline
        \end{tabular}
      }\\[-4mm]
  \caption{CAPE's explanations for the NBA questions }\label{fig:comparison-with-cape}
\end{minipage}
\ifnottechreport{
 \vspace{-3mm}
 }
\end{figure}


\begin{table}[t]
  \centering
  {\footnotesize
    \begin{tabular}{|c|p{17em}|p{9em}|}
      \hline
      \textbf{Query} & \textbf{Description}  & \textbf{Tables used}              \\\hline
      $\query_{w1}$  & The average points change over the years for player \textit{Draymond Green} &  player,  game, season, player\_game\_stats \\
      $\query_{w2}$  & Team \textit{GSW} average assists over the years & team\_game\_stats, game, team, season \\
      $\query_{w3}$  & Average points for player \textit{Lebron James} over the years & player, game, season , player\_game\_stats \\
      $\query_{w4}$  & Team \textit{GSW} wins over the years & team, game, season \\
      $\query_{w5}$ & Average points by player \textit{Jimmy Butler} & player, game, player\_game\_stats \\
      \hline
      $\query_{w1}$  & Return the number of diagnosis group by chapter(group of procedure type) & diagnoses   \\
      $\query_{w2}$  & Returns the death rate of patients grouped by their insurance.   & admissions   \\
      $\query_{w3}$  & Number of ICU stays grouped by the length of stays (\texttt{los\_group}). & icustays \\
      $\query_{w4}$  & Number of procedures for a particular chapter (group of diagnosis types). & procedures   \\
      $\query_{w5}$ & Number admissions of different ethnicities. & patients\_admit\_info \\
      \hline
    \end{tabular}
  }
  \caption{NBA and MIMIC Queries}
  \label{tab:workload-queries}
  \ifnottechreport{
   \vspace{-9mm}
   }
\end{table}

\subsection{Comparison with CAPE}
\revs{
We compared our approach against CAPE (\cite{DBLP:conf/sigmod/MiaoZGR19}).  
The question we used are from NBA running example $\query_1$ which asks question about number of wins for \textit{GSW} over the years and $\query_{nba3}$ from the case study which asks for player \textit{LeBron James}'s average points over the seasons.
CAPE expects as input one data point plus a direction \textit{high} or \textit{low}. We select the following question $\uquestion_{cape_1}$ for $\query_1$: \textit{Why was GSW number of wins high in 2015-16 season?} and $\uquestion_{cape_2}$ for $\query_{nba3}$: \textit{Why was LeBron James' average points low in 2010-11 season?}.
Since CAPE does not explore related tables, we constructed $2$ join graphs as input to CAPE, which are \texttt{PT} ($\uquestion_{cape_1}$) and \texttt{PT - team\_game\_stats} ($\uquestion_{cape_2}$).
\Cref{fig:comparison-with-cape} shows the top-3 explanations produced by CAPE. The system identifies a trend in the data (using regression) according to which the user question is an outlier in the user-provided direction and then returns a similar outlier in the other direction. For our experiment, this means that CAPE returns seasons with low wins for GSW and high averages points for LeBron James. This experiment demonstrates that CAPE is orthogonal to our technique. The system identifies counter-balances while we find features that are related to the difference between two query results. Nonetheless, our techniques for exploring schema graphs may be of use for finding counterbalances too.
}

\subsection{Varying Queries}\label{sec:exp-vary-queries}
\reva{
To evaluate how the runtime of our approach is affected by the choice of query, we measured the runtime for  10 different queries ($5$ for NBA and $5$ for MIMIC) shown in \Cref{tab:workload-queries}.
We designed these queries to access different relations and use different group-by attributes.} \ifnottechreport{\revs{The SQL code for these queries is shown~\cite{chenjie2021}.}}\iftechreport{\revs{The SQL code for these queries is shown below.}}
\reva{All queries were run with  $\paramFscoresamplerate = 0.3$ and $\paramJoinGraphSize = 3$. The results are shown in \Cref{fig:exp-vary-queries}. We observe that the runtime is relatively stable for different queries and is to some degree correlated to the number of join graphs for the query (shown on top of \Cref{fig:exp-vary-queries}).}
\iftechreport{
\mypar{SQL code for Query Workload}
Note that the queries are based on the original schema graph which are more complex than the ones presented in the running example. The real schema can be found in \Cref{fig:schema_mimic} and \Cref{fig:schema_nba}.

\mypar{NBA}
\mypar{$\query_{w1}$}
The average points change over the years for player Draymond Green.

\begin{lstlisting}

SELECT AVG(points) as avp_pts, s.season_name 
FROM player p, player_game_stats pgs, game g, season s 
WHERE p.player_id=pgs.player_id AND 
g.game_date = pgs.game_date AND 
g.home_id = pgs.home_id AND 
s.season_id = g.season_id 
AND p.player_name='Draymond Green' 
GROUP BY s.season_name

\end{lstlisting}

\mypar{$\query_{w2}$}
GSW average assists over the years

\begin{lstlisting}
SELECT AVG(tgs.assists) as avgast, s.season_name 
FROM team_game_stats tgs, game g, team t, season s 
WHERE s.season_id = g.season_id AND
tgs.game_date = g.game_date AND 
tgs.home_id=g.home_id AND 
tgs.team_id = t.team_id AND
t.team='GSW' 
GROUP BY s.season_name
\end{lstlisting}

\mypar{$\query_{w3}$}
Average points for Lebron James over the years

\begin{lstlisting}

SELECT AVG(points) AS avp_pts, s.season_name 
FROM player p, player_game_stats pgs, game g, season s WHERE
p.player_id=pgs.player_id AND 
g.game_date = pgs.game_date AND 
g.home_id = pgs.home_id AND 
s.season_id = g.season_id AND 
p.player_name='LeBron James' 
GROUP BY s.season_name

\end{lstlisting}

\mypar{$\query_{w4}$}

GSW wins over the years, but we used different 2 seasons 

\begin{lstlisting}
SELECT COUNT(*) AS win, s.season_name 
FROM team t, game g, season s 
WHERE t.team_id = g.winner_id AND
g.season_id = s.season_id AND 
t.team= 'GSW' 
GROUP BY s.season_name
\end{lstlisting}

\mypar{$\query_{w5}$}
Average points by Jimmy Butler over the years

\begin{lstlisting}
SELECT AVG(points) AS avp_pts, s.season_name 
FROM player p, player_game_stats pgs, game g, season s 
WHERE p.player_id=pgs.player_id AND 
g.game_date = pgs.game_date AND
g.home_id = pgs.home_id AND 
s.season_id = g.season_id AND 
p.player_name='Jimmy Butler' 
GROUP BY season_name
\end{lstlisting}

\mypar{MIMIC}
\mypar{$\query_{w6}$}
Return the number of diagnosis by chapter. A chapter is a type of diagnosis.

\begin{lstlisting}
SELECT count(*) AS cnt, chapter
FROM diagnoses
GROUP BY chapter
\end{lstlisting}

\mypar{$\query_{w7}$}
Returns the death rate of patients grouped by their insurance.

\begin{lstlisting}
SELECT insurance,
       1.0 * sum(hospital_expire_flag)
             / count(*) AS death_rate
FROM admissions
GROUP BY insurance
\end{lstlisting}

\mypar{$\query_{w8}$}
Number of ICU stays grouped by the length of stays (\texttt{los\_group}).

\begin{lstlisting}
SELECT count(*) AS cnt, los_group
FROM icustays
GROUP BY los_group
\end{lstlisting}

\mypar{$\query_{w9}$}
Number of procedures for a particular chapter (group of diagnosis types).

\begin{lstlisting}
SELECT count(*) AS cnt, chapter
FROM procedures
GROUP BY chapter
\end{lstlisting}

\mypar{$\query_{w10}$}
Number of admits per patient ethnicity.

\begin{lstlisting}
SELECT count(*) AS cnt, ethnicity
FROM patients_admit_info
GROUP BY ethnicity
\end{lstlisting}


}


\section{Qualitative Evaluation}
\label{sec:case-study}
\iftechreport{
We now evaluate the quality of explanations produced by \oursys using case studies on the same two real datasets (NBA and MIMIC) as in \Cref{sec:experiments}. We report the SQL code of queries, query results, user questions, and returned explanations for both datasets. We also report results of a user study with the NBA dataset.
For both case studies, we report the top-3 explanations for each query and user question in \Cref{tab:nba-queries-expl} (NBA) and \Cref{tab:expl-mimic-queries} (MIMIC). Due to the space limit, we simplified some of these descriptions. Note that the same pattern may be returned for several join graphs (same attributes, but different join path). In the interest of diversity, we removed duplicates and explanations that only differ slightly in terms of constants. We show the top-3 explanations after this step. we use ``[$t_1$]" or ``[$t_2$]" in explanations as identifier of the \textit{primary tuple} for the explanation. The full sets of top-20 explanations (including join nodes and edge details) are in \Cref{sec:top20}. 
}
\ifnottechreport{
\revs{
We now evaluate the quality of explanations produced by \oursys using case studies on two real datasets (NBA and MIMIC). 
We also report results of a user study with the NBA dataset.
Due to the space limit, we simplified some of these descriptions. Note that the same pattern may be returned for several join graphs (same attributes, but different join path). In the interest of diversity, we removed duplicates and explanations that only differ slightly in terms of constants. We show the top-3 explanations after this step. we use ``[$t_1$]" or ``[$t_2$]" in explanations as identifier of the \textit{primary tuple} for the explanation. For the sets of top-20 explanations (including join nodes and edge details), SQL code of the queries, and a detailed description of the datasets  
see~\cite{chenjie2021}.
}
}


\vspace{-2mm}
\revs{\subsection{Case Study: NBA}\label{sec:qual-eval-nba}
\ifnottechreport{
\mypar{Dataset} \textit{NBA} is a dataset we extracted from the NBA website (\url{https://www.nba.com/}) and PBP stats website (\url{https://www.pbpstats.com/}).
It includes 10 seasons' worth of data (seasons \texttt{2009-10} to \texttt{2018-19}). The data set contains  $11$ relations with the largest relation having around 1.4 million rows. Its total size is $\sim 170$MB. }
%
}
\mypar{Setup}\revs{
For the NBA dataset, we use five queries calculating player's and team's stats and generated user questions based on interesting results. 
\Cref{tab:nba-queries-expl} shows the user questions, queries, and top-3 explanations 
produced by our method for these user questions.
\iftechreport{
\Cref{tab:nba-queries-desc} shows a description for each these queries, which include group-by aggregations over path joins.
  We present the SQL code for these query below. Their results and the tuples used in the user question (highlighted rows) are shown in \Cref{fig:query_results_nba}.}
}

\revs{
  \iftechreport{
    \begin{table}[t]
      \centering
      {\scriptsize
        \begin{tabular}{|c|p{15em}|c|}
          \hline
          \textbf{Q}      & \textbf{Description}                              & \textbf{Tables used}                      \\\hline
          $\query_{nba1}$   & Average points per year for Draymond Green.       & player, player\_game\_stats, game, season \\
          $\query_{nba2}$   & GSW average assists over the years.               & team\_game\_stats, game, team, season     \\
          $\query_{nba3}$   & Average points over the years for Lebron James .  & player, player\_game\_stats, game, season \\
          $\query_{nba4}$   & GSW wins over the years.                          & game, team, season                        \\
          $\query_{nba5} $  & Average points over the years for Jimmy Butler.   & player, player\_game\_stats, game, season \\
          \hline
        \end{tabular}
        \caption{NBA queries}
        \label{tab:nba-queries-desc}
      }
    \end{table}
  }
\begin{table}[t]
  {    \centering
    \scriptsize
    \begin{tabular}{|l|l|p{16em}|l|}
      \hline
      \textbf{Query}& \textbf{User question} &  \textbf{Top explanations} & \textbf{\abbrF} \\\hline
      $\query_{nba1}$ &  \multirow{3}{11em}{\textbf{\textit{Draymond Green}'s average points per year}: 14 points in season $2015 \mbox{-} 16$ ($t_1$) VS $10$ points in season $2016 \mbox{-} 17$ ($t_2$)} & player\_salary$<15330000$ [$t_1$] & $1$ \\
      \cline{3-4}
                & &  prov.tspct$<0.69$ $\wedge$ prov.usage$<20.5$ $\wedge$ salary$>14260000$ [$t_2$] & $0.71$ \\
      \cline{3-4}
                & &  prov.minutes$>31$ $\wedge$ prov.tspct$>0.4$ $\wedge$ salary$<15330000$ [$t_1$] & $0.66$ \\
      \cline{1-4}
      $\query_{nba2}$ & \multirow{3}{11em}{\textbf{\textit{GSW}'s average assists per year}: $23$ in season $2013 \mbox{-}14$ ($t_1$) VS $27$ in season $2014 \mbox{-}15$ ($t_2$)} & \makecell[l]{prov.assistpoints$<68$ $\wedge$ \\ player=Draymond Green [$t_1$]}& $0.74$ \\
      \cline{3-4}
                & & \makecell[l]{prov.assistpoints$>57$ $\wedge$ \\ prov.nonputbackast\_2\_pct$>0.55$ \\ $\wedge$ player.player=Harrison Barnes [$t_2$]}& $0.73$ \\
      \cline{3-4}
                & & \makecell[l]{prov.assistpoints$<68$ $\wedge$ \\ offreboundpct$>0.25$ [$t_1$]}& $0.72$ \\
      \cline{1-4}
      $\query_{nba3}$ & \multirow{3}{11em}{\textbf{\textit{LeBron James}'s average points per year}: $29.7$ in season $2009 \mbox{-}10$ ($t_1$) VS $26.7$ in season $2010 \mbox{-}11$($t_2$)} & player\_salary$>14500000$ [$t_1$] & $1$\\
      \cline{3-4}
                & & team=MIA [$t_2$] & $0.98$ \\
      \cline{3-4}
                & & team=CLE [$t_1$] & $0.93$ \\
                & & & \\
      \cline{1-4}
      $\query_{nba4}$ & \multirow{3}{11em}{\textbf{\textit{GSW}'s number of wins per year}: $47$ in $2012 \mbox{-}13$ ($t_1$) season VS $67$ in $2016-17$ season ($t_2$)} & \makecell[l]{player\_name=Pau Gasol $\wedge$ \\ player\_salary$<19285850$ [$t_2$]} & $1$\\
      \cline{3-4}
                & & player\_name=Andre Iguodala [$t_2$] & $0.97$ \\
      \cline{3-4}
                & & \makecell[l]{fg\_3\_apct$<0.31\wedge$ \\ team\_points$<121$ [$t_1$]} & $0.92$ \\
      \cline{1-4}
      $\query_{nba5}$ & \multirow{3}{11em}{\textbf{\textit{Jimmy Butler}'s average points per year}: $13$ points in season $2013 \mbox{-} 14$ ($t_1$) VS $20$ points in season $2014 \mbox{-} 15$ ($t_2$)} & player\_salary$>1112880$ [$t_2$] & $1$\\
      \cline{3-4}
                & & \makecell[l]{prov.away\_points$>87$ $\wedge$ \\ prov.efgpct$>0.38$ [$t_2$]} & $0.84$ \\
      \cline{3-4}
                & & \makecell[l]{prov.usage$<23$ $\wedge$ team=CHI$\wedge$ \\ team\_assisted\_2\_spct$>0.5$ [$t_1$]} & $0.77$ \\
      \cline{1-4}
    \end{tabular}
    \caption{Queries, user questions and explanations (NBA)}
    \label{tab:nba-queries-expl}
    \ifnottechreport{
    \vspace{-10.5mm}
    }
    }
\end{table}
}
\iftechreport{

\mypar{$\query_{nba1}$}
The average points change over the years for player Draymond Green.
\begin{lstlisting}
SELECT AVG(points) as avp_pts, s.season_name
FROM player p, player_game_stats pgs, game g, season s
WHERE p.player_id=pgs.player_id AND
g.game_date = pgs.game_date AND
g.home_id = pgs.home_id AND
s.season_id = g.season_id
AND p.player_name='Draymond Green'
GROUP BY s.season_name
\end{lstlisting}

%
\mypar{$\query_{nba2}$}
GSW average assists over the years
\begin{lstlisting}
SELECT AVG(tgs.assists) as avgast, s.season_name
FROM team_game_stats tgs, game g, team t, season s
WHERE s.season_id = g.season_id AND
tgs.game_date = g.game_date AND
tgs.home_id=g.home_id AND
tgs.team_id = t.team_id AND
t.team='GSW'
GROUP BY s.season_name
\end{lstlisting}

%
\mypar{$\query_{nba3}$}
Average points for Lebron James over the years
\begin{lstlisting}
SELECT AVG(points) AS avp_pts, s.season_name
FROM player p, player_game_stats pgs, game g, season s WHERE
p.player_id=pgs.player_id AND
g.game_date = pgs.game_date AND
g.home_id = pgs.home_id AND
s.season_id = g.season_id AND
p.player_name='LeBron James'
GROUP BY s.season_name
\end{lstlisting}

%
\mypar{$\query_{nba4}$}
GSW wins over the years, but we used different 2 seasons
\begin{lstlisting}
SELECT COUNT(*) AS win, s.season_name
FROM team t, game g, season s
WHERE t.team_id = g.winner_id AND
g.season_id = s.season_id AND
t.team= 'GSW'
GROUP BY s.season_name
\end{lstlisting}
%
\mypar{$\query_{nba5}$}
Average points by Jimmy Butler over the years
\begin{lstlisting}
SELECT AVG(points) AS avp_pts, s.season_name
FROM player p, player_game_stats pgs, game g, season s
WHERE p.player_id=pgs.player_id AND
g.game_date = pgs.game_date AND
g.home_id = pgs.home_id AND
s.season_id = g.season_id AND
p.player_name='Jimmy Butler'
GROUP BY season_name
\end{lstlisting}


}
\iftechreport{
\begin{figure}
\begin{minipage}{\linewidth}
    \begin{minipage}[b]{0.45\linewidth}\centering
     {\scriptsize
      \begin{tabular}{l|cc|} \cline{2-3}
  & \cellcolor{grey}\textbf{avg\_pts} & \cellcolor{grey}\textbf{season\_name} \\
    & $2.87$ & 2012-13 \\
    & $6.23$ & 2013-14 \\
    & $11.66$ & 2014-15 \\
    & \cellcolor{green!20}$13.96$ & \cellcolor{green!20}2015-16 \\
    & \cellcolor{red!20}$10.21$ & \cellcolor{red!20}2016-17 \\
    & $11.04$ & 2017-18 \\
    & $7.36$ & 2018-19 \\
\cline{2-3}
\end{tabular}
}
\subcaption{Result of $\query_{nba1}$}
\label{table:case_nba_q1_result}
\end{minipage}

\begin{minipage}[b]{0.45\linewidth}\centering
     {\scriptsize
      \begin{tabular}{l|cc|} \cline{2-3}
  & \cellcolor{grey}\textbf{avg\_ast} & \cellcolor{grey}\textbf{season\_name} \\
    & $22.43$ & 2009-10 \\
    & $22.52$ & 2010-11 \\
    & $22.27$ & 2011-12 \\
    & $22.50$ & 2012-13 \\
    & \cellcolor{green!20} $23.32$ & \cellcolor{green!20} 2013-14 \\
    & \cellcolor{red!20} $27.41$ & \cellcolor{red!20} 2014-15 \\
    & $28.94$ & 2015-16 \\
    & $30.38$ & 2016-17 \\
    & $29.29$ & 2017-18 \\
    & $29.43$ & 2018-19 \\
\cline{2-3}
\end{tabular}
}
\subcaption{Result of $\query_{nba2}$}
\label{table:case_nba_q2_result}
\end{minipage}
\begin{minipage}[b]{0.45\linewidth}\centering
     {\scriptsize
      \begin{tabular}{l|cc|} \cline{2-3}
  & \cellcolor{grey}\textbf{avg\_pts} & \cellcolor{grey}\textbf{season\_name} \\
    & \cellcolor{green!20} $29.71$ & \cellcolor{green!20} 2009-10 \\
    & \cellcolor{red!20} $26.72$ & \cellcolor{red!20} 2010-11 \\
    & $27.15$ & 2011-12 \\
    & $26.79$ & 2012-13 \\
    & $27.13$ & 2013-14 \\
    & $25.26$ & 2014-15 \\
    & $25.26$ & 2015-16 \\
    & $26.41$ & 2016-17 \\
    & $27.45$ & 2017-18 \\
    & $27.36$ & 2018-19 \\
\cline{2-3}
\end{tabular}
}
\subcaption{Result of $\query_{nba3}$}
\label{table:case_nba_q3_result}
\end{minipage}

\begin{minipage}[b]{0.45\linewidth}\centering
     {\scriptsize
      \begin{tabular}{l|cc|} \cline{2-3}
  & \cellcolor{grey}\textbf{avg\_pts} & \cellcolor{grey}\textbf{season\_name} \\
        & $26$ & 2009-10 \\
        & $36$ & 2010-11 \\
        & $23$ & 2011-12 \\
        & \cellcolor{green!20} $47$ & \cellcolor{green!20} 2012-13 \\
        & $51$ & 2013-14 \\
        & $67$ & 2014-15 \\
        & $73$ & 2015-16 \\
        & \cellcolor{red!20} $67$ & \cellcolor{red!20} 2016-17 \\
        & $58$ & 2017-18 \\
        & $57$ & 2018-19 \\
\cline{2-3}
\end{tabular}
}
\subcaption{Result of $\query_{nba4}$}
\label{table:case_nba_q4_result}
\end{minipage}
\begin{minipage}[b]{0.45\linewidth}\centering
     {\scriptsize
      \begin{tabular}{l|cc|} \cline{2-3}
  & \cellcolor{grey}\textbf{avg\_pts} & \cellcolor{grey}\textbf{season\_name} \\

  & $2.60$ & 2011-12 \\
& $8.60$ & 2012-13 \\
& \cellcolor{green!20} $13.10$ & \cellcolor{green!20} 2013-14 \\
& \cellcolor{red!20} $20.02$ & \cellcolor{red!20} 2014-15 \\
& $20.88$ & 2015-16 \\
& $23.89$ & 2016-17 \\
& $22.15$ & 2017-18 \\
& $18.69$ & 2018-19 \\
\cline{2-3}
\end{tabular}
}
\subcaption{Result of $\query_{nba5}$}
\label{table:case_nba_q5_result}
\end{minipage}
\end{minipage}
\caption{Query results and user question tuples for NBA}
\label{fig:query_results_nba}
\end{figure}
}

\mypar{Explanations and Analysis}
$\query_{nba1}$
\revs{\textit{Draymond Green} had a big average points difference between $2$ consecutive seasons.
All $3$ explanation contains salary change information. In reality, from $2015$-$16$ season to $2016$-$17$ season, \textit{Green}'s salary increased, which could result in losing incentive to play as hard as when he earns lower salary. \scnd and \thrd explanation successfully find key game related factors deciding the player's points such as minutes played and shooting percentage (e.g., in \scnd, \textit{Green} had more games where he played more than $31$ minutes and shooting percentage higher than $0.4$ in $2015-16$ season).}
$\query_{nba2}$.
\revs{The \textit{GSW} team had a sudden increase in average assists. All explanations contain \texttt{assistpoints} which has a cause-and-effect relationship with assists (more assists result in more assistpoints).}
$\query_{nba3}$ and $\query_{nba5}$.
\revs{Both players had some significant average point changes. For $\query_{w3}$, \textit{Lebron James} had an average points decrease. This occurred when he switched to a new team (from \textit{CLE} to \textit{MIA} and had less pressure offensively in the following year. \oursys successfully identified this fact as a potential cause (\scnd and \thrd).}$\query_{nba5}$
\revs{\textit{Jimmy Butler} had a big improvement in average points
. Our top explanations to this improvement include an increase of usage and minutes played.}
$\query_{nba4}$.
\revs{This query is similar to our running example $\query_{1}$ but with a question asking for different $2$ seasons. The explanations contain player changes (\scnd, \textit{Andre Iguodala} only played for \textit{GSW} in $2015-17$ season) as well as the team's points difference and 3-point percentage (\thrd). Note that while the first explanation has a high \abbrF, if we look at the join graph details, the salary and player constants can have no relation with GSW at all. This highlights the importance of making join graphs part of explanations.} 
\revs{
\subsection{Case Study: MIMIC}\label{sec:qual-eval-mimic}}
\ifnottechreport{
\mypar{Dataset} \revm{\textit{MIMIC} (\url{https://mimic.physionet.org/}) is a deidentified dataset of intensive care unit hospital admissions. The dataset consists of 6 relations and is $\sim$ 120MB large. 
}}
\iftechreport{
We constructed $5$ queries over the MIMIC dataset accessing different tables. The simplified descriptions of the queries, user questions, and explanations \Cref{tab:expl-mimic-queries}.
  To help the reader understand the queries and explanations, we first briefly introduce the MIMIC dataset using the example below.}
\ifnottechreport{
\revs{We constructed $5$ queries over this dataset accessing different tables. The simplified descriptions of the queries, user questions, and explanations \Cref{tab:expl-mimic-queries}.
  We first briefly introduce the MIMIC dataset to help the reader understand the queries and explanations. The main table of the dataset is the \texttt{Admissions} table that records hospital admissions. The \texttt{Diagnosis} table records diagnoses for patients for each admission (a patient may be admitted more than once during their lifetime). \texttt{PatientsAdmissionInfo} records information like age and religion for individual admissions of patients (e.g., age may change over time). \texttt{ICUStays} record intensive care unit stays of patients. There may be multiple ICU stays per admission.
} 
}

\iftechreport{
\begin{example}\label{exp:MIMIC}
Consider the (simplified) \textbf{MIMIC-III Critical Care database} \cite{johnson2016mimic} with the following relations (the keys are underlined).
\begin{itemize}[leftmargin=*]
\item {\tt Admissions(\underline{adid}, dischargeloc, adtype, insurance, isdead, HospitalStayLength)} contains information 
about hospital admissions.
\item {\tt Diagnoses(\underline{pid, adid, did}, diagnosis, category)} records the diagnoses for each patient (identified by {\tt pid}) during each admission, one patient could have multiple diagnoses during one admission, which are identified by \emph{did}.
\item {\tt PatientsAdmissionInfo(\underline{pid,  admid}, age, religion,} \tt{ethnicity)} records the information from a patient upon admission to the hospital. Note that one patient could have multiple entries in this table because one patient could have multiple admissions during their lifetime.
\item {\tt ICUStays(\underline{pid, admid, iid}, staylength, icutype)} records the information of the ICU stays of patients. \textit{iid}  identifies different ICU stays within one admission.
\end{itemize}
Suppose a data analyst writes the following query to find out the relationship between the insurance type 
and the death rate:
\iftechreport{\begin{lstlisting}[mathescape=true]
$\query_2 = $ SELECT insurance, sum(isdead) / count(*) AS death_rate,
            count(*) AS admit_cnt
     FROM Admissions GROUP BY insurance
\end{lstlisting}

}

Given the result of this query in \Cref{Result of Q_2}, the analyst may ask $UQ_2$: \emph{given similar numbers of admissions, why is the death rate of patients with insurance type=$\abbrMed$ more than $2$ times larger ($t_1$) than that of patients with insurance=$\abbrPriv$}. \Cref{tab:expl-mimic-queries} lists the top-3 explanations returned by \oursys{} (referred to as $Q_{mimic4}$), which shows that among the deaths, $\abbrMed$ has a larger fraction of admissions in emergency and male older patients ($age  < 71$) compared to $\abbrPriv$. 
This is aligned with the fact that \textit{Medicare} is mostly for patients who are over $65$ years old. \scnd and \thrd are stating that patients using \textit{Medicare} has more admission because of \textit{emergency} and also facts about length of hospital stays. 
\end{example}
}

%
\iftechreport{
\begin{figure}[t]\scriptsize\setlength{\tabcolsep}{3pt}
 \vspace{-2mm}
\subfloat[\small Result of $\query_2$]{
\begin{minipage}{0.45\linewidth}
	{\scriptsize
	\begin{tabular}{l|ccc|} 
              & \cellcolor{grey}\textbf{insurance} & \cellcolor{grey}\textbf{death rate} & \cellcolor{grey}\textbf{admit\_cnt} \\ \cline{2-4}
	$t_1 \to$ & \cellcolor{green!20} $\abbrMed$    & \cellcolor{green!20} 0.14           & \cellcolor{green!20} 28215          \\
              & Self Pay                           & 0.16                                & 611                                 \\
              & Government                         & 0.05                                & 1783                                \\
	$t_2 \to$ & \cellcolor{red!20} $\abbrPriv$     & \cellcolor{red!20} 0.06             & \cellcolor{red!20} 22582            \\
              & Medicaid                           & 0.07                                & 5785                                \\\cline{2-4}
	\end{tabular}
	}
	\label{Result of Q_2}
\end{minipage}
}\hfill
   \subfloat[\small User question $\uquestion_2$]{
\begin{minipage}[b]{0.45\linewidth}
    \centering
\begin{tcolorbox}[colback=white,left=1pt,right=1pt,top=0pt,bottom=0pt]
$\uquestion_2$: \emph{Given close number of admissions why do patients using $\abbrMed$ insurance have higher death rates ($t_1$, 14\%) compared to patienst with
$\abbrPriv$ insurance ($t_2$, 6\%)?}
\end{tcolorbox}
    \label{fig:uq2}
\end{minipage}
}
 \vspace{-2mm}
 \caption{\label{fig:running-mimic}\small
  $\query_2$ results and user question for Example ~\ref{exp:MIMIC}.}
    \vspace{-4mm}
\end{figure}
}

\iftechreport{\Cref{tab:mimic-queries-desc} shows a description for each these queries, which include group-by aggregations over path joins.
We present the SQL code of these queries below. Their results and the tuples used in the user question (highlighted rows) are shown in \Cref{fig:query_results_mimic}.

\mypar{$\query_{mimic1}$}
Count of diagnoses over different chapters.\\
User question: chapter=$2$ (neoplasms) VS chapter=$13$ (diseases of the musculoskeletal system and connective tissue)
\begin{lstlisting}
SELECT 
1.0*SUM(a.hospital_expire_flag)/count(*) AS death_rate,  
d.chapter FOM admissions a, diagnoses d 
WHERE a.hadm_id=d.hadm_id GROUP BY d.chapter
\end{lstlisting}
\mypar{$\query_{mimic2}$}
Death rate of patients grouped by their insurance\\
User question: insurance=`Self Pay' VS insurance=`Private'
\begin{lstlisting}
SELECT insurance,
1.0*SUM(hospital_expire_flag)/COUNT(*) AS death_rate
FROM admissions  GROUP BY insurance;
\end{lstlisting}
\mypar{$\query_{mimic3}$}
Length of stays in ICU group by length of stays (los\_group)\\
User question: los\_group=`$>8$' VS los\_group=`$0\mbox{-}1$'
\begin{lstlisting}
SELECT COUNT(*) AS cnt, los_group
FROM icustays GROUP BY los_group;
\end{lstlisting}
\mypar{$\query_{mimic4}$}
Death rate of patients grouped by their insurance\\
User question: insurance=`Medicare' VS insurance=`Private'
\begin{lstlisting}
SELECT insurance,
1.0*SUM(hospital_expire_flag)/COUNT(*) AS death_rate
FROM admissions  GROUP BY insurance;
\end{lstlisting}
\mypar{$\query_{mimic5}$}
Number of procedures group by ethnicity \\
User question: ethnicity=`ASIAN' VS ethnicity=`HISPANIC'
\begin{lstlisting}
SELECT COUNT(*) AS cnt, pai.ethnicity 
FROM patients_admit_info pai, procedures p 
WHERE p.hadm_id=pai.hadm_id AND p.subject_id=pai.subject_id 
GROUP BY pai.ethnicity
\end{lstlisting}
}

\iftechreport{
\begin{figure}
\begin{minipage}{0.35\linewidth}
\subfloat[\small Result of $\query_{mimic1}$]{
\begin{minipage}{0.9\linewidth}\centering
     {\scriptsize
      \begin{tabular}{l|cc|} \cline{2-3}
  & \cellcolor{grey}\textbf{death\_rate} & \cellcolor{grey}\textbf{chapter} \\
    & $0.01$ & $11$ \\
    & $0.02$ & $15$ \\
    & $0.05$ & $14$ \\
    & $0.08$ & $5$ \\
    & $0.09$ & V \\
    & \cellcolor{green!20}$0.09$ & \cellcolor{green!20}$13$ \\
    & $0.10$ & E \\
    & $0.12$ & $7$ \\
    & $0.12$ & $3$ \\
    & $0.13$ & $17$ \\
    & $0.13$ & $6$ \\
    & $0.14$ & $12$ \\
    & $0.14$ & $4$ \\
    & $0.14$ & $9$ \\
    & $0.15$ & $10$ \\
    & $0.16$ & $16$ \\
    & $0.18$ & $8$ \\
    & \cellcolor{red!20}$0.19$ & \cellcolor{red!20}$2$ \\
    & $0.19$ & $1$ \\
\cline{2-3}
\end{tabular}
}
\label{table:workload_mimic_q1_result}
\end{minipage}
}
\end{minipage}
\begin{minipage}{0.55\linewidth}
\subfloat[\small Result of $\query_{mimic2}$]{
\begin{minipage}{0.9\linewidth}\centering
     {\scriptsize
      \begin{tabular}{l|cc|} \cline{2-3}
  	& \cellcolor{grey}\textbf{insurance} & \cellcolor{grey}\textbf{death rate}\\
	& \cellcolor{green!20} \cellcolor{green!20} Medicare & \cellcolor{green!20} 0.14 \\
	& Self Pay & 0.16 \\
	& Government & 0.05 \\
	& \cellcolor{red!20} Private & \cellcolor{red!20} 0.06 \\
	& Medicaid & 0.07 \\ \cline{2-3}
\cline{2-3}
\end{tabular}
}
\label{table:workload_mimic_q2_result}
\end{minipage}
}\\
\subfloat[\small Result of $\query_{mimic3}$]{
\begin{minipage}{0.9\linewidth}\centering
     {\scriptsize
      \begin{tabular}{l|cc|} \cline{2-3}
  & \cellcolor{grey}\textbf{cnt} & \cellcolor{grey}\textbf{los\_group} \\
& $16901$ & $1$-$2$ \\
& \cellcolor{green!20} $8605$ & \cellcolor{green!20}x$>8$ \\
& \cellcolor{red!20} $15034$ & \cellcolor{red!20}$2$-$4$ \\
& $12311$ & $0$-$1$ \\
& $8671$ & $4$-$8$ \\
\cline{2-3}
\end{tabular}
}
\label{table:workload_mimic_q3_result}
\end{minipage}
}\\
\subfloat[\small Result of $\query_{mimic4}$]{
\begin{minipage}{0.9\linewidth}\centering
     {\scriptsize
      \begin{tabular}{l|cc|} \cline{2-3}
  	& \cellcolor{grey}\textbf{insurance} & \cellcolor{grey}\textbf{death rate}\\
	& \cellcolor{green!20} \cellcolor{green!20} Medicare & \cellcolor{green!20} 0.14 \\
	& Self Pay & 0.16 \\
	& Government & 0.05 \\
	& \cellcolor{red!20} Private & \cellcolor{red!20} 0.06 \\
	& Medicaid & 0.07 \\ \cline{2-3}
\cline{2-3}
\end{tabular}
}
\label{table:workload_mimic_q4_result}
\end{minipage}
}
\end{minipage}

\subfloat[\small Result of $\query_{mimic5}$]{
\begin{minipage}{0.45\linewidth}\centering
     {\scriptsize
      \begin{tabular}{l|rc|} \cline{2-3}
  & \cellcolor{grey}\textbf{cnt} & \cellcolor{grey}\textbf{ethnicity} \\
&  $11$ & South American \\
&  $64$ & Pacific Islander \\
& $191$ & Middle Eastern \\
& $568$ & Multi-Race Ethnicity \\
& $2641$ & Declined To Answer \\
& $4244$ & Unable To Obtain \\
& $6056$ & Other \\
& \cellcolor{green!20} $6247$ & \cellcolor{green!20} Asian \\
& \cellcolor{red!20} $7821$ & \cellcolor{red!20} Hispanic \\
& $19579$ & Black \\
& $22710$ & Unkown \\
& $169478$ & White \\
\cline{2-3}
\end{tabular}
}
\label{table:workload_mimic_q5_result}
\end{minipage}
}
\caption{Query results and user question tuples for MIMIC}
\label{fig:query_results_mimic}
\end{figure}
}


\iftechreport{
\begin{table}[t]
    \centering
    {\scriptsize
    \begin{tabular}{|c|p{15em}|c|}
  \hline
  Num. & Description & Tables used \\\hline
  $\query_{mimic1}$ & Return the death rate of diagnosis by chapter & diagnoses, admissions  \\
  $\query_{mimic2}$ & Returns the death rate of patients grouped by their insurance.  & admissions \\
  $\query_{mimic3}$ & Number of ICU stays grouped by the length of stays (\texttt{los\_group}). & icustays \\
  $\query_{mimic4}$ & Number of procedures for a particular chapter (group of diagnosis types). & procedures  \\
  $\query_{mimic5}$ & Number of procedures among different ethnicities. & patients\_admit\_info, procedures \\
  \hline
  \end{tabular}
    \caption{MIMIC queries}
    \label{tab:mimic-queries-desc}
}
\end{table}
}

\begin{table}[t]
{    \centering
    \scriptsize
    \begin{tabular}{|l|p{11em}|m{15em}|l|}
   \hline
   \textbf{Query} & \textbf{User question} &  \textbf{Top-3 explanations} & \textbf{\abbrF} \\\hline
   $\query_{mimic1}$ &  \multirow{3}{11em}{\textbf{Patient death rate grouped by diagnoses}: $0.19$ for chapter$=2$ ($t_1$) VS $0.09$ for chapter$=13$ ($t_2$)} & expire\_flag=1 [$t_1$] & $0.68$ \\
       \cline{3-4}
       & & hospital\_stay\_length$<23\wedge$expire\_flag$=1$ [$t_1$]& $0.65$ \\
       \cline{3-4}
       & & hospital\_stay\_length$<16$,expire\_flag$=1$ [$t_1$]& $0.63$ \\
   \cline{1-4}
   $\query_{mimic2}$ & \multirow{3}{11em}{\textbf{Death rate by insurance}: Medicare=$0.138$ ($t_1$) VS Medicaid=$0.066$ ($t_2$)} & prov.admission\_type=emergency [$t_1$]& $0.85$\\
    \cline{3-4}
    & & expire\_flag$=1$ [$t_1$] & $0.7$ \\
    \cline{3-4}
    & & gender=Male [$t_1$] & $0.65$ \\
    \cline{1-4}
  $\query_{mimic3}$ & \multirow{3}{11em}{\textbf{Number of patients grouped by ICU stays length}: less than $1$ day ($t_1$) VS more than $8$ days ($t_2$)} & \makecell[l]{hospital\_stay\_length$>9$ $\wedge$ \\ procedure.chapter=16 [$t_2$]} & $0.94$\\
  \cline{3-4}
  & & \makecell[l]{hospital\_stay\_length$<6$ $\wedge$ \\ los\_group=$0\mbox{-}1$ [$t_1$]} & $0.77$ \\
  \cline{3-4}
  & & \makecell[l]{prov.dbsource=carevue $\wedge$ \\  hospital\_stay\_length$>8$ [$t_2$]}& $0.71$ \\
  \cline{1-4}
   $\query_{mimic4}$ & \multirow{3}{11em}{\textbf{Death rate by insurance}: Medicare=$0.14$ ($t_1$) VS Private=$0.06$ ($t_2$)} & expire\_flag=$0\wedge$age$<71$ [$t_2$]& $0.77$ \\
   \cline{3-4}
   & & prov.admission\_type=\textit{emergency} [$t_1$] & $0.73$ \\
   \cline{3-4}
   & & \makecell[l]{prov.hospital\_stay\_length$<22.0$ $\wedge$ \\ expire\_flag=$1$ [$t_2$]} & $0.61$\\
   \cline{1-4}
   $\query_{mimic5}$& \multirow{3}{11em}{\textbf{Number of patients that did a procedure grouped by ethnicity}: $7821$ Hispanic patients [$t_1$] VS $6247$ Asian patients [$t2$]} & hospital\_stay\_length$<19$ $\wedge$ ethnicity=Asian [$t_2$]& $0.89$\\
   \cline{3-4}
   & & \makecell[l]{admission\_type=emergency $\wedge$ \\ hospital\_stay\_length$>5$ $\wedge$ age$<66$ \\ $\wedge$ ethnicity=Hispanic [$t_1$]}& $0.80$ \\
   \cline{3-4}
   & & prov.religion=Catholic [$t_1$]& $0.63$ \\
   \cline{1-4}
   \end{tabular}
    \caption{Queries, user questions and explanations (MIMIC)}
    \label{tab:expl-mimic-queries}
    }
\end{table}
\mypar{Explanations and Analysis}\revs{
  \Cref{tab:expl-mimic-queries} shows the top-3 explanations for returned by \oursys{} for each user question.}
$\query_{mimic1}$.
\revs{This question asks for the difference in death rates between two diagnosis categories (chapter 2: neoplasms vs chapter 13: musculoskeletal system and connective tissue). The death rate is the fraction of patients that died during their hospital stay. The \frst explanation  uses \textit{expire\_flag}$=1$ from patient table suggesting that this patient has passed away. This flag only indicates that the patient died, but not whether insides or outside the hospital, subsuming all hospital deaths. The \scnd and \thrd explanation add additional information about the  lengths of hospital stays which could help user imply the severity differences between two categories thus infer the possible reason for different death rates.}
$\query_{mimic2}$.
\revs{This query asks about differences between the death rate of patients based on their insurance. The \frst explanation states that there are more \textit{emergency} admissions with \textit{Medicare} than with \textit{Medicaid} which may explain the higher death rate. The \scnd explanation related death rate to the \texttt{expire\_flag}. The \thrd explanation suggests that \textit{Medicare} has more \textit{Male} patients than \textit{Medicaid}.}
$\query_{mimic3}$.
\revs{
The \frst explanation shows that most of the patients staying over $8$ days in ICU will stay in hospital for more than $9$ days and also have procedure that is from chapter $16$ (\textit{Miscellaneous Diagnostic and Therapeutic Procedures}). The \scnd and \thrd explanations are both related to the hospital stay length. The \scnd explanation suggests that most patients will be released from hospital in less than $6$ days when  their ICU stay is less than $1$ day. The \thrd explanation states the same facts for patients that have more than $8$ days of hospital stay when they stay more than $8$ days in the ICU. These explanations regarding hospital stay length can help users identify that ICU stay length may be a good indicator for ICU stay lengths}
$\query_{mimic4}$.
\revs{
This question uses the same query as $\query_{mimic2}$, but compares \textit{Private} insurance with \textit{Medicare}.  The \frst explanation states that for patients who have \textit{Private} insurance, more patients are alive and less than $71$ years old. This is aligned with the fact that \textit{Medicare} is mostly for patients over $65$ years old (this is a fact from online resources). The \scnd and \thrd explanations are stating that patients using \textit{Medicare} are more likely to be admitted because of an \textit{emergency} and also facts about length of hospital stays.
}
$\query_{mimic5}$
\revs{
The \frst explanation states that \textit{Asian} patients that had a procedure are more likely to stay less than 19 days in the hospital.
The \scnd explanation says that compared to \textit{Asian} patients there were more \textit{Hispanic}  patients younger than $66$ years old and stayed more than $5$ days in the hospital. The  \thrd explanation points out that more \textit{Hispanic} patients are \textit{Catholic}. Note that the ethnicity information appeared in explanations are not from PT, but from a different \texttt{patient\_admit\_info} table. Because we do not currently consider functional dependencies, results like this cannot be avoided. We plan to address this in future work}.


\vspace{0.2cm}
\subsection{User Study}\label{sec:user-study}
\revs{We conducted a user study for the NBA dataset to evaluate:
(S1) whether
\oursys{} provides meaningful explanations in addition to explanations that only come from the provenance, and (S2) whether
 the \oursys{}'s quality metric is consistent with user preference.}

\smallskip
\noindent\revs{\textbf{Participants.} We recruited 20 participants --- all of them are graduate students studying \iftechreport{in different areas of} computer science,
13 of them have some prior experience with SQL, and 5 were NBA fans.}

\vspace{2mm}
\noindent\revs{\textbf{Tasks.} We first presented background knowledge on the NBA 
  to each participant, and explained the schema of the dataset. Each participant was shown the SQL query $\query'_1$ (shown below) and the results of this query, and then was asked to find and evaluate explanations for the user question $\uquestion_1$ from \Cref{exp:GSW}: ``\emph{Why did $\abbrGSW$ win 73 games in season $2015 \mbox{-} 16$ compared to 47 games in $2012 \mbox{-} 13$.}''. 
}
{\scriptsize
\begin{lstlisting}[escapeinside=!!]
!$\query'_1=$! SELECT s.season_name, count(*) AS win
FROM team t, game g, season s WHERE t.team_id = g.winner_id
   AND g.season_id = s.season_id AND t.team = 'GSW'
GROUP BY s.season_name
\end{lstlisting}
}
\revs{We gave each participant 20 minutes to explore the dataset and manually find explanations. For those unfamiliar with SQL, we skipped this step. 
  Participants then were asked to rate each of the top-5 explanations with the highest \abbrFs produced based on   \textit{provenance} and by \textit{\oursys{}} (\Cref{tab:user_study_expls})
using a scale from 1 to 5. We also asked them which set of explanations makes more sense and whether they provided new insights. 
Because the top explanations by \oursys{} have higher \abbrFs
, we added one with a low \abbrF ($Expl_{10}$) as a control. By covering a wider range of \abbrF values, we can test S2: (1) can participants distinguish between low and high score explanations, and (2) do participants agree with our ranking based on our quality measure.}

\begin{table}[t]
    \centering
    \scriptsize
    \begin{tabular}{c|p{32em}}
\multicolumn{2}{c}{\textbf{Provenance-based Explanations}}\\\midrule
$Expl_1$ & In season 2015-16, among the games GSW won, they were the visiting team and had points $> 104$ in 28 games (10 games in 2012-13, resp.)  \\\midrule
$Expl_2$ & In 2015-16 season, 73 games (47 games in 2012-13, resp.) GSW won are regular season games. \\\midrule
$Expl_3$ & In 2015-16 season, among the games GSW won, they were the visiting team, had points $> 98$ and possessions $> 101$ in 17 games (0 games in 2012-13, resp.)\\\midrule
$Expl_4$ & In 2015-16 season, GSW scored more than 104 points in each of 64 games (24 games in 2012-13, resp.) GSW won. \\\midrule
$Expl_5$ & In 2015-16 season, the home teams had points $< 106$ and possessions $< 101$ in each of 29 games (40 games in 2012-13, resp.) GSW won. \\\midrule
\multicolumn{2}{c}{\textbf{\oursys{}}}\\\midrule
$Expl_6$ & In 2015-16 season, the number of games with GSW player Stephen Curry's minutes $< 38$ and usage $> 25$ is 59 games (12 games in 2012-13, resp.) GSW won. \\\midrule
$Expl_7$ &In 2015-16 season, the number of games with GSW player Draymond Green's minutes>15 is 73 games (15 games in 2012-13, resp.) GSW won.\\\midrule
$Expl_8$ & In 2015-16 season, Jarrett Jack played in 0 games (47 games in 2012-13, resp.) GSW won.\\\midrule
$Expl_9$ &In 2015-16 season, GSW had three\_pct $> 35\%$ and points $> 112$ in each of 39 games (9 games in 2012-13, resp.) GSW won.\\\midrule
$Expl_{10}$ & In 2015-16 season, GSW had fg\_three\_pct $> 48\%$ and points $> 112$ and rebounds $> 51$ in 5 games (2 games in 2012-13, resp.) GSW won.
\\\bottomrule
    \end{tabular}
    \caption{Explanations for $\uquestion_1$ used in the user study }
    \label{tab:user_study_expls}
    \ifnottechreport{
    \vspace{-8mm}
    }
\end{table}

\smallskip
\noindent
\revs{\textbf{Results and Analysis.}\label{sec:study-results}
Overall, the responses were positive: 16 out of 20 participants agreed that explanations by \oursys{} make more sense to them and seeing explanations by \oursys{} in advance will help them find explanations that they did not think about before.
}
\begin{table}[!ht]
    \small
    \resizebox{0.48\textwidth}{!}{
    \begin{tabular}{c|ccccc|ccccc}\toprule
        & \multicolumn{5}{c|}{Provenance-based} & \multicolumn{5}{c}{\oursys{}}    \\
        & Expl1 & Expl2 &Expl3 &  Expl4 &Expl5 & Expl6 & Expl7 &Expl8 &  Expl9 &Expl10  \\ \midrule
        All	users & 3.150 & 1.450 & 3.950 & 3.600 & 2.750 & 3.600 & 3.800 & 2.350 & 3.950 & 2.300\\
        Stdev & 1.040 & 0.999 & 0.759 & 1.095 & 1.410 & 0.883 & 1.196 & 1.424 & 0.999 & 1.174 \\ 
        NBA: Yes & 3.400 & 1.800 & 3.800 & 3.600 & 2.800 & 3.800 & 3.800 & 2.800 & 4.200 & 2.600 \\
        NBA: No & 3.067 & 1.333 & 4.000 & 3.600 & 2.733 & 3.533 & 3.800 & 2.200 & 3.867 & 2.200 \\\midrule
        \abbrF & 	0.69& 0.56 & 0.38 & 0.8 & 0.4 & 0.82 &0.91 & 1 & 0.64 & 0.13\\
        recall & 0.38 & 1 & 0.23 & 0.87 & 0.4 &0.81 & 1 & 0.99 & 0.53 & 0.07 \\
        precision & 0.74 & 0.61 & 1 & 0.73 &0.4 & 0.83 & 0.83 & 0.99 &0.81 & 0.7 \\\bottomrule
    \end{tabular}}
    \caption{\revs{\small Average ratings for each explanation by users with different expertise and the measures for each explanation by \oursys{}}} \label{tab:avg_score}
    \ifnottechreport{
    \vspace{-7mm}
    }
\end{table}

\revs{\Cref{tab:avg_score} shows the average user ratings and quality measures for each explanation. Regarding (S1), the average ratings of the top-1 explanation are the same for both methods ($Expl_9: 3.95$ vs $Expl_3: 3.95$, both explanations summarize the team statistics of GSW while $Expl_9$ refers to the table $team\_game\_stats$ not in the provenance). For the next two explanations, $Expl_7$ and $Expl_6$ (\oursys{}) summarize the statistics of two GSW's key players and have higher average ratings ($Expl_7: 3.8$ vs $Expl_5: 3.6$, $Expl_6: 3.6$ vs $Expl_1: 3.15$). The margin is larger for participants who are familiar with the NBA (4.2 vs 3.8, 3.8 vs 3.6, 3.8 vs 3.4). }

\revs{To answer (S2), we find that explanations with high user ratings (Expl 3, 4 , and Expl 6, 7, 9) have a positive relation with high \abbrF and precision. The only exception, $Expl_8$ of \oursys{}, is also the most controversial one, indicated by the largest standard deviation. Evaluating these explanations is subjective and requires domain knowledge: the player Jack in $Expl_8$ left GSW in 2013, and people may or may not regard this as a signal that the team had begun relying more on other players 
who play a similar position to Jack.
Next, we evaluate the ranking results of our quality measures by regarding each participant's ratings as the ground truth.
We use Kendall-Tau rank distance~\cite{kendall1938new} for measuring pairwise ranking error and normalized discounted cumulative gain (NDCG)~\cite{jarvelin2002cumulated} for the entire ranked list. As shown in \Cref{tab:ranking}, ranking by precision gives the lowest pairwise ranking error for the provenance-based method, while for our method it is ranking by \abbrF. If we drop the most controversial explanation, the pairwise error is reduced by more than half. The $NDCG_n$ for \oursys{} reaches 0.9 for all cases and even 0.95 after dropping the most controversial explanation. 
}

\begin{table}[t]
\resizebox{0.40\textwidth}{!}{
\begin{tabular}{c|cc|cc}\toprule
 \multicolumn{3}{c}{}&  Explanation Tables (All / -1) & \oursys{} (All / -1) \\ \midrule
  \multirow{6}{3.5em}{Avg. Kendall tau rank distance} & \multirow{3}{5em}{All users} & \abbrF & 3.95 / 2.2 & 3.9 / 1.4\\
& &recall & 5.9 / 3.85 & 3.3 / 1.4 \\
& &precision &  2.2 / 0.95 & 3.9 / 1.4\\\cline{2-5}
& \multirow{3}{5em}{Users with domain knowledge} & \abbrF & 3.6 / 2.0 & 3.2 / 1.8\\
& &recall & 5.2 / 3.2 & 3.8 / 1.8 \\
& &precision &  2.2 / 1.2 & 4.2 / 1.8\\
\midrule
\multirow{6}{3.5em}{Avg. $NDCG_n$} & \multirow{3}{5em}{All users} & \abbrF & 0.875 / 0.882 & 0.901 / 0.955\\
& &recall & 0.844 / 0.852 & 0.901 / 0.955 \\
& &precision &  0.933 / 0.965 & 0.901 / 0.955\\\cline{2-5}
 & \multirow{3}{5em}{Users with domain knowledge} & \abbrF & 0.897 / 0.901 & 0.903 / 0.954\\
& &recall & 0.862 / 0.878 & 0.903 / 0.954 \\
& &precision &  0.953 / 0.977 & 0.903 / 0.954 \\
\bottomrule
\end{tabular}}
\caption{\revs{Ranking quality: all 5 explanations (\emph{All}),  dropping the explanation with the largest standard deviation (\emph{-1})}.} \label{tab:ranking}
\ifnottechreport{
\vspace{-8mm}
}
\end{table}
\noindent\revs{\textbf{Takeaways.} The main findings are: (1) the majority (16/20) of participants preferred our method, thanks to the new information provided by tables not used in the query, which complements the explanations only based on provenance; (2) our quality measures are consistent with participants' preference; (3) for both methods, there can be top explanations rated low by participants, which is as expected because we did not do causal analysis, and validating such explanations may be subjective and depend on domain expertise; and (4) participants with domain knowledge had a stronger preference for our method than participants without domain knowledge.}

\noindent\revs{\textbf{Other findings and discussion.}
Finally, it is also worth noting that the participants' feedback supports the motivation of \oursys{}. For example, participants found that
\textit{``The usage of Stephen Curry increases in 2015-16.''}, \textit{```Players play both season (12-13 and 15-16) have higher point per game and assist per game''} before they saw the explanations by \oursys{}. One suggested to use health information of the players in explanation. 
Another participant remarked that \textit{``the use of other tables in the database to explore how the contributions of individual players can have an outcome on the team's performance produced explanations that were more novel or interesting''}.
}

\section{Related Work}
\label{sec:related}
\mypar{Provenance and summarization}
Provenance \cite{DBLP:journals/ftdb/CheneyCT09} for relational queries refers to records of the inputs that contribute to the results of a query, which has been studied extensively.
For non-aggregate queries, why-provenance~\cite{buneman2001and} returns a set of input tuples responsible for a given output tuple;
how-provenance~\cite{green2007provenance} encodes how the query combined input tuples to generate the answers.
For aggregate queries, symbolic representation of semiring~\cite{amsterdamer2011provenance} is used to express how aggregate results are computed.
Given the significant cost of managing provenance information in practical DBMS,
different provenance-management frameworks to store and retrieve relevant provenance information have been proposed in the literature \cite{glavic2009perm, arab2014generic,chapman2008efficient,psallidas2018smoke,lee2019pug}. Some of them  support provenance for aggregate queries
using simplified models and query plan optimizations~\cite{karvounarakis2010querying, psallidas2018smoke,lee2019pug}. A number of recent papers have  proposed \emph{summarization} techniques to
represent provenance approximately \cite{ainy2015approximated,re2008approximate,lee2017integrating,LL20}, or using summarization rules for better usability \cite{alomeirsummarizing}.
Factorized and summarized provenance in natural language \cite{deutch2016nlprov,deutch2017provenance} has also been studied for ease of user comprehension.

\mypar{Data summarization}
There is a line of work on summarizing relational data
where informative summarization is provided with focuses on relevance, diversity, and coverage~\cite{DBLP:journals/pvldb/QinYC12, joglekar2017interactive, wen2018interactive,kim2020summarizing}.
For a relation augmented with a binary outcome attribute,
Gebaly et al. ~\cite{DBLP:journals/pvldb/GebalyAGKS14} developed solutions to find optimally-informative summarizations
of \reva{categorical} attributes affecting the outcome attribute \reva{only considering the provenance (and not other relevant relations like our work), while its extension \cite{vollmer-19-insnd} considers numeric data.}  We adopt one of the optimizations in \cite{DBLP:journals/pvldb/GebalyAGKS14}, the use of \emph{lowest common ancestor} (LCA) patterns in our algorithms to prune the search space,
to generate pattern candidates from a sample to avoid the costly computation of the \emph{data cube} of the sample. \reva{We discussed the potential problems in adapting the approach in \cite{DBLP:journals/pvldb/GebalyAGKS14} using a materialized augmented provenance table in our experiments.}
\cut{Note that \cite{DBLP:journals/pvldb/GebalyAGKS14} only supports categorical attributes and only considers the summarization of a single table, while our framework supports numerical attributes in summarization patterns, differentiates two output tuples by the summarized provenance, and provides additional information by considering the context in other relevant relations.
}

\mypar{Explanations for query answers}
This line of work aims at explaining unexpected outcomes in a query result,
including outlier values, missing tuples, or existing tuples that should not exist.
Provenance and provenance summaries provide a straightforward form of explanations \cite{WuM13,RS14, ROS15, abuzaid2018diff} by characterizing a set of tuples whose removal or modification would affect the
query answer of interest;
while query-based explanations, i.e., changes to queries, are investigated for ``why'' and ``why-not'' questions~\cite{chapman2009not, DBLP:conf/edbt/BidoitHT14}.
Explanations for  outliers have been studied in \cite{DBLP:conf/sigmod/MiaoZGR19,bessa2020effective}.
\reva{We share with \cite{DBLP:conf/sigmod/MiaoZGR19} the motivation of considering explanations that are not (solely) based on provenance. The difference is that in \cite{DBLP:conf/sigmod/MiaoZGR19}, the same table used in the query is considered for finding explanations that ``counterbalance'' an outlier by learning patterns and balancing a low (high) outlier with a high (low) outlier w.r.t. a pattern, whereas we find explanations in ``augmented provenance'' stemming from tables not used in the query. Therefore, \cite{DBLP:conf/sigmod/MiaoZGR19} is orthogonal to our work. 
}

\mypar{Join path discovery}
Join path discovery has been studied for finding data related to a table of interest based on inclusion dependencies or string similarity~\cite{SF12a, FA18, ZD19, DBLP:journals/pvldb/HeGC15, DBLP:journals/pvldb/ZhuHC17}. Recently, ~\cite{DBLP:conf/sigmod/KumarNPZ16, DBLP:journals/pvldb/ShahKZ17,DBLP:journals/pvldb/ChepurkoMZFKK20} studied the performance of  machine learning models trained on join results. \oursystem{} can utilize join graph discovery techniques to find more augmentation opportunities. 


\vspace{2mm}
\section{Discussions and Future Work}
\label{sec:conclusions}
\revm{Explanations for database query answers is a relatively new research topic with lots of interesting future directions. } \reva{For instance, currently our approach only considers correlations. In the future, we plan to integrate it with existing \emph{observational causal analysis} framework from AI and Statistics \cite{PearlBook2000, rubin2005causal} to find causal explanations.}
\cut{
We present a novel approach that explains the difference between two answer tuples using summarization of join-augmented provenance.
Our framework also applies to explain the high/low value of a single outlier tuple by comparing with all other answer tuples. Instead of just summarizing the standard provenance coming from the relations appearing the query, our approach automatically explores context when generating explanations by augmenting the provenance of a query result joining with related tables using a schema graph and join graphs. We present algorithms with a suite of optimizations that mine patterns with top scores from a given augmented table as well as to mine join graphs that lead to informative patterns efficiently.

Automated explanations to different types of user questions is a  relatively new area and there are several interesting directions of future work. First, we did not consider the actual value or type of the aggregate function in the explanations; by combining with some of previous works in the literature one can see if other types of explanations can be found. Second, we can study the variations of the objective functions.
Although our optimization problem
has polynomial data complexity, if we intend to find a set of patterns that together explain the provenance and define \abbrF on them, the problem resembles set cover and is likely to be NP-hard. We can see if this approach returns more informative explanations with diversity.

}
 \revb{Another interesting direction for future work is to integrate context-based explanations with join discovery techniques (e.g., \cite{FA18,ZD19}) to automatically find datasets to be used as context.} \revb{Finally, our approach is not suited well for textual and sparse} \revs{data} \revb{because such data cannot be summarized well using the type of patterns we support since values are rarely repeated. Different summarization techniques (e.g., using taxonomies) or preprocessing techniques (e.g., information extraction techniques) would have to be incorporated with our approach. }
 \revc{While we discussed simple SQL aggregate queries, extensions of our model can be studied for more general queries (with nested sub-queries, negation) if we have access to a provenance system that can compute the query provenance.
 }
 \revb{Beyond having an intuitive scoring function to rank explanations that may not always produce meaningful explanations,} \reva{a challenging future work is to evaluate the correctness of provided explanations without much human interventions,} \revb{to evaluate whether the returned explanations consider the intent of the user in the `why' question}, \reva{and to have a confidence score on the explanations by deciding whether the data has enough information to explain a user question.}
 Finally, one can explore other types of user questions than our type of comparison question (like explaining an increasing/decreasing trend or \revc{explain why two results are similar}) and applicability of context/provenance in ML applications.


\clearpage
\bibliographystyle{ACM-Reference-Format}

\begin{thebibliography}{54}


\ifx \showCODEN    \undefined \def \showCODEN     #1{\unskip}     \fi
\ifx \showDOI      \undefined \def \showDOI       #1{#1}\fi
\ifx \showISBNx    \undefined \def \showISBNx     #1{\unskip}     \fi
\ifx \showISBNxiii \undefined \def \showISBNxiii  #1{\unskip}     \fi
\ifx \showISSN     \undefined \def \showISSN      #1{\unskip}     \fi
\ifx \showLCCN     \undefined \def \showLCCN      #1{\unskip}     \fi
\ifx \shownote     \undefined \def \shownote      #1{#1}          \fi
\ifx \showarticletitle \undefined \def \showarticletitle #1{#1}   \fi
\ifx \showURL      \undefined \def \showURL       {\relax}        \fi
\providecommand\bibfield[2]{#2}
\providecommand\bibinfo[2]{#2}
\providecommand\natexlab[1]{#1}
\providecommand\showeprint[2][]{arXiv:#2}

\bibitem[\protect\citeauthoryear{Abuzaid, Kraft, Suri, Gan, Xu, Shenoy,
  Ananthanarayan, Sheu, Meijer, Wu, et~al\mbox{.}}{Abuzaid
  et~al\mbox{.}}{2018}]%
        {abuzaid2018diff}
\bibfield{author}{\bibinfo{person}{Firas Abuzaid}, \bibinfo{person}{Peter
  Kraft}, \bibinfo{person}{Sahaana Suri}, \bibinfo{person}{Edward Gan},
  \bibinfo{person}{Eric Xu}, \bibinfo{person}{Atul Shenoy},
  \bibinfo{person}{Asvin Ananthanarayan}, \bibinfo{person}{John Sheu},
  \bibinfo{person}{Erik Meijer}, \bibinfo{person}{Xi Wu}, {et~al\mbox{.}}}
  \bibinfo{year}{2018}\natexlab{}.
\newblock \showarticletitle{DIFF: a relational interface for large-scale data
  explanation}.
\newblock \bibinfo{journal}{\emph{Proceedings of the VLDB Endowment}}
  \bibinfo{volume}{12}, \bibinfo{number}{4} (\bibinfo{year}{2018}),
  \bibinfo{pages}{419--432}.
\newblock


\bibitem[\protect\citeauthoryear{Ainy, Bourhis, Davidson, Deutch, and
  Milo}{Ainy et~al\mbox{.}}{2015}]%
        {ainy2015approximated}
\bibfield{author}{\bibinfo{person}{Eleanor Ainy}, \bibinfo{person}{Pierre
  Bourhis}, \bibinfo{person}{Susan~B Davidson}, \bibinfo{person}{Daniel
  Deutch}, {and} \bibinfo{person}{Tova Milo}.} \bibinfo{year}{2015}\natexlab{}.
\newblock \showarticletitle{Approximated summarization of data provenance}. In
  \bibinfo{booktitle}{\emph{Proceedings of the 24th ACM International on
  Conference on Information and Knowledge Management}}.
  \bibinfo{pages}{483--492}.
\newblock


\bibitem[\protect\citeauthoryear{AlOmeir, Lai, Milani, and Pottinger}{AlOmeir
  et~al\mbox{.}}{[n. d.]}]%
        {alomeirsummarizing}
\bibfield{author}{\bibinfo{person}{Omar AlOmeir},
  \bibinfo{person}{Eugenie~Yujing Lai}, \bibinfo{person}{Mostafa Milani}, {and}
  \bibinfo{person}{Rachel Pottinger}.} \bibinfo{year}{[n. d.]}\natexlab{}.
\newblock \showarticletitle{Summarizing Provenance of Aggregation Query Results
  in Relational Databases}.
\newblock  (\bibinfo{year}{[n. d.]}).
\newblock


\bibitem[\protect\citeauthoryear{Amsterdamer, Deutch, and Tannen}{Amsterdamer
  et~al\mbox{.}}{2011}]%
        {amsterdamer2011provenance}
\bibfield{author}{\bibinfo{person}{Yael Amsterdamer}, \bibinfo{person}{Daniel
  Deutch}, {and} \bibinfo{person}{Val Tannen}.}
  \bibinfo{year}{2011}\natexlab{}.
\newblock \showarticletitle{Provenance for aggregate queries}. In
  \bibinfo{booktitle}{\emph{PODS}}. \bibinfo{pages}{153--164}.
\newblock


\bibitem[\protect\citeauthoryear{Arab, Feng, Glavic, Lee, Niu, and Zeng}{Arab
  et~al\mbox{.}}{2018}]%
        {AF18}
\bibfield{author}{\bibinfo{person}{Bahareh Arab}, \bibinfo{person}{Su Feng},
  \bibinfo{person}{Boris Glavic}, \bibinfo{person}{Seokki Lee},
  \bibinfo{person}{Xing Niu}, {and} \bibinfo{person}{Qitian Zeng}.}
  \bibinfo{year}{2018}\natexlab{}.
\newblock \showarticletitle{{GProM} - {A} Swiss Army Knife for Your Provenance
  Needs}.
\newblock \bibinfo{journal}{\emph{{IEEE} Data Engineering Bulletin}}
  \bibinfo{volume}{41}, \bibinfo{number}{1} (\bibinfo{year}{2018}),
  \bibinfo{pages}{51--62}.
\newblock


\bibitem[\protect\citeauthoryear{Arab, Gawlick, Radhakrishnan, Guo, and
  Glavic}{Arab et~al\mbox{.}}{2014}]%
        {arab2014generic}
\bibfield{author}{\bibinfo{person}{Bahareh Arab}, \bibinfo{person}{Dieter
  Gawlick}, \bibinfo{person}{Venkatesh Radhakrishnan}, \bibinfo{person}{Hao
  Guo}, {and} \bibinfo{person}{Boris Glavic}.} \bibinfo{year}{2014}\natexlab{}.
\newblock \showarticletitle{A generic provenance middleware for database
  queries, updates, and transactions}. In \bibinfo{booktitle}{\emph{{TaPP}}}.
\newblock


\bibitem[\protect\citeauthoryear{Barman, Korn, Srivastava, Gunopulos, Young,
  and Agarwal}{Barman et~al\mbox{.}}{2007}]%
        {DBLP:conf/icde/BarmanKSGYA07}
\bibfield{author}{\bibinfo{person}{Dhiman Barman}, \bibinfo{person}{Flip Korn},
  \bibinfo{person}{Divesh Srivastava}, \bibinfo{person}{Dimitrios Gunopulos},
  \bibinfo{person}{Neal~E. Young}, {and} \bibinfo{person}{Deepak Agarwal}.}
  \bibinfo{year}{2007}\natexlab{}.
\newblock \showarticletitle{Parsimonious Explanations of Change in Hierarchical
  Data}. In \bibinfo{booktitle}{\emph{Proceedings of the 23rd International
  Conference on Data Engineering, {ICDE} 2007, The Marmara Hotel, Istanbul,
  Turkey, April 15-20, 2007}}, \bibfield{editor}{\bibinfo{person}{Rada
  Chirkova}, \bibinfo{person}{Asuman Dogac}, \bibinfo{person}{M.~Tamer
  {\"{O}}zsu}, {and} \bibinfo{person}{Timos~K. Sellis}} (Eds.).
  \bibinfo{publisher}{{IEEE} Computer Society}, \bibinfo{pages}{1273--1275}.
\newblock
\urldef\tempurl%
\url{https://doi.org/10.1109/ICDE.2007.368991}
\showDOI{\tempurl}


\bibitem[\protect\citeauthoryear{Bessa, Freire, Dasu, and Srivastava}{Bessa
  et~al\mbox{.}}{2020}]%
        {bessa2020effective}
\bibfield{author}{\bibinfo{person}{Aline Bessa}, \bibinfo{person}{Juliana
  Freire}, \bibinfo{person}{Tamraparni Dasu}, {and} \bibinfo{person}{Divesh
  Srivastava}.} \bibinfo{year}{2020}\natexlab{}.
\newblock \showarticletitle{Effective Discovery of Meaningful Outlier
  Relationships}.
\newblock \bibinfo{journal}{\emph{ACM Transactions on Data Science}}
  \bibinfo{volume}{1}, \bibinfo{number}{2} (\bibinfo{year}{2020}),
  \bibinfo{pages}{1--33}.
\newblock


\bibitem[\protect\citeauthoryear{Bidoit, Herschel, and Tzompanaki}{Bidoit
  et~al\mbox{.}}{2014}]%
        {DBLP:conf/edbt/BidoitHT14}
\bibfield{author}{\bibinfo{person}{Nicole Bidoit}, \bibinfo{person}{Melanie
  Herschel}, {and} \bibinfo{person}{Katerina Tzompanaki}.}
  \bibinfo{year}{2014}\natexlab{}.
\newblock \showarticletitle{Query-Based Why-Not Provenance with NedExplain}. In
  \bibinfo{booktitle}{\emph{Proceedings of the 17th International Conference on
  Extending Database Technology, {EDBT} 2014, Athens, Greece, March 24-28,
  2014}}, \bibfield{editor}{\bibinfo{person}{Sihem Amer{-}Yahia},
  \bibinfo{person}{Vassilis Christophides}, \bibinfo{person}{Anastasios
  Kementsietsidis}, \bibinfo{person}{Minos~N. Garofalakis},
  \bibinfo{person}{Stratos Idreos}, {and} \bibinfo{person}{Vincent Leroy}}
  (Eds.). \bibinfo{publisher}{OpenProceedings.org}, \bibinfo{pages}{145--156}.
\newblock
\urldef\tempurl%
\url{https://doi.org/10.5441/002/edbt.2014.14}
\showDOI{\tempurl}


\bibitem[\protect\citeauthoryear{Breiman}{Breiman}{2001}]%
        {breiman2001random}
\bibfield{author}{\bibinfo{person}{Leo Breiman}.}
  \bibinfo{year}{2001}\natexlab{}.
\newblock \showarticletitle{Random forests}.
\newblock \bibinfo{journal}{\emph{Machine learning}} \bibinfo{volume}{45},
  \bibinfo{number}{1} (\bibinfo{year}{2001}), \bibinfo{pages}{5--32}.
\newblock


\bibitem[\protect\citeauthoryear{Buneman, Khanna, and Wang-Chiew}{Buneman
  et~al\mbox{.}}{2001}]%
        {buneman2001and}
\bibfield{author}{\bibinfo{person}{Peter Buneman}, \bibinfo{person}{Sanjeev
  Khanna}, {and} \bibinfo{person}{Tan Wang-Chiew}.}
  \bibinfo{year}{2001}\natexlab{}.
\newblock \showarticletitle{Why and where: A characterization of data
  provenance}. In \bibinfo{booktitle}{\emph{ICDT}}. \bibinfo{pages}{316--330}.
\newblock


\bibitem[\protect\citeauthoryear{Chapman and Jagadish}{Chapman and
  Jagadish}{2009}]%
        {chapman2009not}
\bibfield{author}{\bibinfo{person}{Adriane Chapman} {and} \bibinfo{person}{HV
  Jagadish}.} \bibinfo{year}{2009}\natexlab{}.
\newblock \showarticletitle{Why not?}. In \bibinfo{booktitle}{\emph{Proceedings
  of the 2009 ACM SIGMOD International Conference on Management of data}}.
  \bibinfo{pages}{523--534}.
\newblock


\bibitem[\protect\citeauthoryear{Chapman, Jagadish, and Ramanan}{Chapman
  et~al\mbox{.}}{2008}]%
        {chapman2008efficient}
\bibfield{author}{\bibinfo{person}{Adriane~P Chapman},
  \bibinfo{person}{Hosagrahar~V Jagadish}, {and} \bibinfo{person}{Prakash
  Ramanan}.} \bibinfo{year}{2008}\natexlab{}.
\newblock \showarticletitle{Efficient provenance storage}. In
  \bibinfo{booktitle}{\emph{Proceedings of the 2008 ACM SIGMOD international
  conference on Management of data}}. \bibinfo{pages}{993--1006}.
\newblock


\bibitem[\protect\citeauthoryear{Cheney, Chiticariu, and Tan}{Cheney
  et~al\mbox{.}}{2009}]%
        {DBLP:journals/ftdb/CheneyCT09}
\bibfield{author}{\bibinfo{person}{James Cheney}, \bibinfo{person}{Laura
  Chiticariu}, {and} \bibinfo{person}{Wang~Chiew Tan}.}
  \bibinfo{year}{2009}\natexlab{}.
\newblock \showarticletitle{Provenance in Databases: Why, How, and Where}.
\newblock \bibinfo{journal}{\emph{Found. Trends Databases}}
  \bibinfo{volume}{1}, \bibinfo{number}{4} (\bibinfo{year}{2009}),
  \bibinfo{pages}{379--474}.
\newblock


\bibitem[\protect\citeauthoryear{Chepurko, Marcus, Zgraggen, Fernandez, Kraska,
  and Karger}{Chepurko et~al\mbox{.}}{2020}]%
        {DBLP:journals/pvldb/ChepurkoMZFKK20}
\bibfield{author}{\bibinfo{person}{Nadiia Chepurko}, \bibinfo{person}{Ryan
  Marcus}, \bibinfo{person}{Emanuel Zgraggen}, \bibinfo{person}{Raul~Castro
  Fernandez}, \bibinfo{person}{Tim Kraska}, {and} \bibinfo{person}{David
  Karger}.} \bibinfo{year}{2020}\natexlab{}.
\newblock \showarticletitle{{ARDA:} Automatic Relational Data Augmentation for
  Machine Learning}.
\newblock \bibinfo{journal}{\emph{PVLDB}} \bibinfo{volume}{13},
  \bibinfo{number}{9} (\bibinfo{year}{2020}), \bibinfo{pages}{1373--1387}.
\newblock


\bibitem[\protect\citeauthoryear{Deutch, Frost, and Gilad}{Deutch
  et~al\mbox{.}}{2016}]%
        {deutch2016nlprov}
\bibfield{author}{\bibinfo{person}{Daniel Deutch}, \bibinfo{person}{Nave
  Frost}, {and} \bibinfo{person}{Amir Gilad}.} \bibinfo{year}{2016}\natexlab{}.
\newblock \showarticletitle{Nlprov: Natural language provenance}.
\newblock \bibinfo{journal}{\emph{Proceedings of the VLDB Endowment}}
  \bibinfo{volume}{9}, \bibinfo{number}{13} (\bibinfo{year}{2016}),
  \bibinfo{pages}{1537--1540}.
\newblock


\bibitem[\protect\citeauthoryear{Deutch, Frost, and Gilad}{Deutch
  et~al\mbox{.}}{2017}]%
        {deutch2017provenance}
\bibfield{author}{\bibinfo{person}{Daniel Deutch}, \bibinfo{person}{Nave
  Frost}, {and} \bibinfo{person}{Amir Gilad}.} \bibinfo{year}{2017}\natexlab{}.
\newblock \showarticletitle{Provenance for natural language queries}.
\newblock \bibinfo{journal}{\emph{Proceedings of the VLDB Endowment}}
  \bibinfo{volume}{10}, \bibinfo{number}{5} (\bibinfo{year}{2017}),
  \bibinfo{pages}{577--588}.
\newblock


\bibitem[\protect\citeauthoryear{Fernandez, Abedjan, Koko, Yuan, Madden, and
  Stonebraker}{Fernandez et~al\mbox{.}}{2018}]%
        {FA18}
\bibfield{author}{\bibinfo{person}{Raul~Castro Fernandez},
  \bibinfo{person}{Ziawasch Abedjan}, \bibinfo{person}{Famien Koko},
  \bibinfo{person}{Gina Yuan}, \bibinfo{person}{Samuel Madden}, {and}
  \bibinfo{person}{Michael Stonebraker}.} \bibinfo{year}{2018}\natexlab{}.
\newblock \showarticletitle{Aurum: A Data Discovery System}. In
  \bibinfo{booktitle}{\emph{ICDE}}. \bibinfo{pages}{1001--1012}.
\newblock


\bibitem[\protect\citeauthoryear{Gebaly, Agrawal, Golab, Korn, and
  Srivastava}{Gebaly et~al\mbox{.}}{2014}]%
        {DBLP:journals/pvldb/GebalyAGKS14}
\bibfield{author}{\bibinfo{person}{Kareem~El Gebaly}, \bibinfo{person}{Parag
  Agrawal}, \bibinfo{person}{Lukasz Golab}, \bibinfo{person}{Flip Korn}, {and}
  \bibinfo{person}{Divesh Srivastava}.} \bibinfo{year}{2014}\natexlab{}.
\newblock \showarticletitle{Interpretable and Informative Explanations of
  Outcomes}.
\newblock \bibinfo{journal}{\emph{Proc. {VLDB} Endow.}} \bibinfo{volume}{8},
  \bibinfo{number}{1} (\bibinfo{year}{2014}), \bibinfo{pages}{61--72}.
\newblock
\urldef\tempurl%
\url{https://doi.org/10.14778/2735461.2735467}
\showDOI{\tempurl}


\bibitem[\protect\citeauthoryear{Glavic and Alonso}{Glavic and Alonso}{2009}]%
        {glavic2009perm}
\bibfield{author}{\bibinfo{person}{Boris Glavic} {and} \bibinfo{person}{Gustavo
  Alonso}.} \bibinfo{year}{2009}\natexlab{}.
\newblock \showarticletitle{Perm: Processing provenance and data on the same
  data model through query rewriting}. In \bibinfo{booktitle}{\emph{ICDE}}.
  \bibinfo{pages}{174--185}.
\newblock


\bibitem[\protect\citeauthoryear{Green, Karvounarakis, and Tannen}{Green
  et~al\mbox{.}}{2007a}]%
        {GKT07-semirings}
\bibfield{author}{\bibinfo{person}{Todd~J. Green}, \bibinfo{person}{Grigoris
  Karvounarakis}, {and} \bibinfo{person}{Val Tannen}.}
  \bibinfo{year}{2007}\natexlab{a}.
\newblock \showarticletitle{Provenance semirings}. In
  \bibinfo{booktitle}{\emph{PODS}}. \bibinfo{pages}{31--40}.
\newblock


\bibitem[\protect\citeauthoryear{Green, Karvounarakis, and Tannen}{Green
  et~al\mbox{.}}{2007b}]%
        {green2007provenance}
\bibfield{author}{\bibinfo{person}{Todd~J Green}, \bibinfo{person}{Grigoris
  Karvounarakis}, {and} \bibinfo{person}{Val Tannen}.}
  \bibinfo{year}{2007}\natexlab{b}.
\newblock \showarticletitle{Provenance semirings}. In
  \bibinfo{booktitle}{\emph{PODS}}. \bibinfo{pages}{31--40}.
\newblock


\bibitem[\protect\citeauthoryear{He, Ganjam, and Chu}{He et~al\mbox{.}}{2015}]%
        {DBLP:journals/pvldb/HeGC15}
\bibfield{author}{\bibinfo{person}{Yeye He}, \bibinfo{person}{Kris Ganjam},
  {and} \bibinfo{person}{Xu Chu}.} \bibinfo{year}{2015}\natexlab{}.
\newblock \showarticletitle{{SEMA-JOIN:} Joining Semantically-Related Tables
  Using Big Table Corpora}.
\newblock \bibinfo{journal}{\emph{Proc. {VLDB} Endow.}} \bibinfo{volume}{8},
  \bibinfo{number}{12} (\bibinfo{year}{2015}), \bibinfo{pages}{1358--1369}.
\newblock
\urldef\tempurl%
\url{https://doi.org/10.14778/2824032.2824036}
\showDOI{\tempurl}


\bibitem[\protect\citeauthoryear{J{\"a}rvelin and
  Kek{\"a}l{\"a}inen}{J{\"a}rvelin and Kek{\"a}l{\"a}inen}{2002}]%
        {jarvelin2002cumulated}
\bibfield{author}{\bibinfo{person}{Kalervo J{\"a}rvelin} {and}
  \bibinfo{person}{Jaana Kek{\"a}l{\"a}inen}.} \bibinfo{year}{2002}\natexlab{}.
\newblock \showarticletitle{Cumulated gain-based evaluation of IR techniques}.
\newblock \bibinfo{journal}{\emph{ACM Transactions on Information Systems
  (TOIS)}} \bibinfo{volume}{20}, \bibinfo{number}{4} (\bibinfo{year}{2002}),
  \bibinfo{pages}{422--446}.
\newblock


\bibitem[\protect\citeauthoryear{Joglekar, Garcia-Molina, and
  Parameswaran}{Joglekar et~al\mbox{.}}{2017}]%
        {joglekar2017interactive}
\bibfield{author}{\bibinfo{person}{Manas Joglekar}, \bibinfo{person}{Hector
  Garcia-Molina}, {and} \bibinfo{person}{Aditya Parameswaran}.}
  \bibinfo{year}{2017}\natexlab{}.
\newblock \showarticletitle{Interactive data exploration with smart
  drill-down}.
\newblock \bibinfo{journal}{\emph{IEEE Transactions on Knowledge and Data
  Engineering}} \bibinfo{volume}{31}, \bibinfo{number}{1}
  (\bibinfo{year}{2017}), \bibinfo{pages}{46--60}.
\newblock


\bibitem[\protect\citeauthoryear{Johnson, Pollard, Shen, Li-Wei, Feng,
  Ghassemi, Moody, Szolovits, Celi, and Mark}{Johnson et~al\mbox{.}}{2016}]%
        {johnson2016mimic}
\bibfield{author}{\bibinfo{person}{Alistair~EW Johnson}, \bibinfo{person}{Tom~J
  Pollard}, \bibinfo{person}{Lu Shen}, \bibinfo{person}{H~Lehman Li-Wei},
  \bibinfo{person}{Mengling Feng}, \bibinfo{person}{Mohammad Ghassemi},
  \bibinfo{person}{Benjamin Moody}, \bibinfo{person}{Peter Szolovits},
  \bibinfo{person}{Leo~Anthony Celi}, {and} \bibinfo{person}{Roger~G Mark}.}
  \bibinfo{year}{2016}\natexlab{}.
\newblock \showarticletitle{MIMIC-III, a freely accessible critical care
  database}.
\newblock \bibinfo{journal}{\emph{Scientific data}} \bibinfo{volume}{3},
  \bibinfo{number}{1} (\bibinfo{year}{2016}), \bibinfo{pages}{1--9}.
\newblock


\bibitem[\protect\citeauthoryear{Karvounarakis, Ives, and Tannen}{Karvounarakis
  et~al\mbox{.}}{2010}]%
        {karvounarakis2010querying}
\bibfield{author}{\bibinfo{person}{Grigoris Karvounarakis},
  \bibinfo{person}{Zachary~G Ives}, {and} \bibinfo{person}{Val Tannen}.}
  \bibinfo{year}{2010}\natexlab{}.
\newblock \showarticletitle{Querying data provenance}. In
  \bibinfo{booktitle}{\emph{Proceedings of the 2010 ACM SIGMOD International
  Conference on Management of data}}. \bibinfo{pages}{951--962}.
\newblock


\bibitem[\protect\citeauthoryear{Kendall}{Kendall}{1938}]%
        {kendall1938new}
\bibfield{author}{\bibinfo{person}{Maurice~G Kendall}.}
  \bibinfo{year}{1938}\natexlab{}.
\newblock \showarticletitle{A new measure of rank correlation}.
\newblock \bibinfo{journal}{\emph{Biometrika}} \bibinfo{volume}{30},
  \bibinfo{number}{1/2} (\bibinfo{year}{1938}), \bibinfo{pages}{81--93}.
\newblock


\bibitem[\protect\citeauthoryear{Kim, Lakshmanan, and Srivastava}{Kim
  et~al\mbox{.}}{2020}]%
        {kim2020summarizing}
\bibfield{author}{\bibinfo{person}{Alexandra Kim}, \bibinfo{person}{Laks~VS
  Lakshmanan}, {and} \bibinfo{person}{Divesh Srivastava}.}
  \bibinfo{year}{2020}\natexlab{}.
\newblock \showarticletitle{Summarizing Hierarchical Multidimensional Data}. In
  \bibinfo{booktitle}{\emph{2020 IEEE 36th International Conference on Data
  Engineering (ICDE)}}. IEEE, \bibinfo{pages}{877--888}.
\newblock


\bibitem[\protect\citeauthoryear{Kumar, Naughton, Patel, and Zhu}{Kumar
  et~al\mbox{.}}{2016}]%
        {DBLP:conf/sigmod/KumarNPZ16}
\bibfield{author}{\bibinfo{person}{Arun Kumar}, \bibinfo{person}{Jeffrey~F.
  Naughton}, \bibinfo{person}{Jignesh~M. Patel}, {and} \bibinfo{person}{Xiaojin
  Zhu}.} \bibinfo{year}{2016}\natexlab{}.
\newblock \showarticletitle{To Join or Not to Join?: Thinking Twice about Joins
  before Feature Selection}. In \bibinfo{booktitle}{\emph{SIGMOD}},
  \bibfield{editor}{\bibinfo{person}{Fatma {\"{O}}zcan},
  \bibinfo{person}{Georgia Koutrika}, {and} \bibinfo{person}{Sam Madden}}
  (Eds.). \bibinfo{publisher}{{ACM}}, \bibinfo{pages}{19--34}.
\newblock


\bibitem[\protect\citeauthoryear{Lee, Lud{\"a}scher, and Glavic}{Lee
  et~al\mbox{.}}{2019}]%
        {lee2019pug}
\bibfield{author}{\bibinfo{person}{Seokki Lee}, \bibinfo{person}{Bertram
  Lud{\"a}scher}, {and} \bibinfo{person}{Boris Glavic}.}
  \bibinfo{year}{2019}\natexlab{}.
\newblock \showarticletitle{PUG: a framework and practical implementation for
  why and why-not provenance}.
\newblock \bibinfo{journal}{\emph{The VLDB Journal}} \bibinfo{volume}{28},
  \bibinfo{number}{1} (\bibinfo{year}{2019}), \bibinfo{pages}{47--71}.
\newblock


\bibitem[\protect\citeauthoryear{Lee, Ludäscher, and Glavic}{Lee
  et~al\mbox{.}}{2020}]%
        {LL20}
\bibfield{author}{\bibinfo{person}{Seokki Lee}, \bibinfo{person}{Bertram
  Ludäscher}, {and} \bibinfo{person}{Boris Glavic}.}
  \bibinfo{year}{2020}\natexlab{}.
\newblock \showarticletitle{Approximate Summaries for Why and Why-not
  Provenance}.
\newblock \bibinfo{journal}{\emph{PVLDB}} \bibinfo{volume}{13},
  \bibinfo{number}{6} (\bibinfo{year}{2020}), \bibinfo{pages}{912--924}.
\newblock


\bibitem[\protect\citeauthoryear{Lee, Niu, Lud{\"a}scher, and Glavic}{Lee
  et~al\mbox{.}}{2017}]%
        {lee2017integrating}
\bibfield{author}{\bibinfo{person}{Seokki Lee}, \bibinfo{person}{Xing Niu},
  \bibinfo{person}{Bertram Lud{\"a}scher}, {and} \bibinfo{person}{Boris
  Glavic}.} \bibinfo{year}{2017}\natexlab{}.
\newblock \showarticletitle{Integrating approximate summarization with
  provenance capture}. In \bibinfo{booktitle}{\emph{9th $\{$USENIX$\}$ Workshop
  on the Theory and Practice of Provenance (TaPP 2017)}}.
\newblock


\bibitem[\protect\citeauthoryear{Miao, Zeng, Glavic, and Roy}{Miao
  et~al\mbox{.}}{2019}]%
        {DBLP:conf/sigmod/MiaoZGR19}
\bibfield{author}{\bibinfo{person}{Zhengjie Miao}, \bibinfo{person}{Qitian
  Zeng}, \bibinfo{person}{Boris Glavic}, {and} \bibinfo{person}{Sudeepa Roy}.}
  \bibinfo{year}{2019}\natexlab{}.
\newblock \showarticletitle{Going Beyond Provenance: Explaining Query Answers
  with Pattern-based Counterbalances}. In \bibinfo{booktitle}{\emph{Proceedings
  of the 2019 International Conference on Management of Data, {SIGMOD}
  Conference 2019, Amsterdam, The Netherlands, June 30 - July 5, 2019}},
  \bibfield{editor}{\bibinfo{person}{Peter~A. Boncz}, \bibinfo{person}{Stefan
  Manegold}, \bibinfo{person}{Anastasia Ailamaki}, \bibinfo{person}{Amol
  Deshpande}, {and} \bibinfo{person}{Tim Kraska}} (Eds.).
  \bibinfo{publisher}{{ACM}}, \bibinfo{pages}{485--502}.
\newblock
\urldef\tempurl%
\url{https://doi.org/10.1145/3299869.3300066}
\showDOI{\tempurl}


\bibitem[\protect\citeauthoryear{NBA.com}{NBA.com}{2020}]%
        {nba2019}
\bibfield{author}{\bibinfo{person}{NBA.com}.} \bibinfo{year}{2020}\natexlab{}.
\newblock \bibinfo{title}{The official site of the NBA}.
\newblock
\newblock
\urldef\tempurl%
\url{https://www.nba.com/}
\showURL{%
\tempurl}


\bibitem[\protect\citeauthoryear{Pearl}{Pearl}{2000}]%
        {PearlBook2000}
\bibfield{author}{\bibinfo{person}{Judea Pearl}.}
  \bibinfo{year}{2000}\natexlab{}.
\newblock \bibinfo{booktitle}{\emph{Causality: models, reasoning, and
  inference}}.
\newblock \bibinfo{publisher}{Cambridge University Press}.
\newblock
\showISBNx{0-521-77362-8}


\bibitem[\protect\citeauthoryear{Psallidas and Wu}{Psallidas and Wu}{2018}]%
        {psallidas2018smoke}
\bibfield{author}{\bibinfo{person}{Fotis Psallidas} {and}
  \bibinfo{person}{Eugene Wu}.} \bibinfo{year}{2018}\natexlab{}.
\newblock \showarticletitle{Smoke: Fine-grained lineage at interactive speed}.
\newblock \bibinfo{journal}{\emph{PVLDB}} \bibinfo{volume}{11},
  \bibinfo{number}{6} (\bibinfo{year}{2018}), \bibinfo{pages}{719--732}.
\newblock


\bibitem[\protect\citeauthoryear{Qin, Yu, and Chang}{Qin et~al\mbox{.}}{2012}]%
        {DBLP:journals/pvldb/QinYC12}
\bibfield{author}{\bibinfo{person}{Lu Qin}, \bibinfo{person}{Jeffrey~Xu Yu},
  {and} \bibinfo{person}{Lijun Chang}.} \bibinfo{year}{2012}\natexlab{}.
\newblock \showarticletitle{Diversifying Top-K Results}.
\newblock \bibinfo{journal}{\emph{Proc. {VLDB} Endow.}} \bibinfo{volume}{5},
  \bibinfo{number}{11} (\bibinfo{year}{2012}), \bibinfo{pages}{1124--1135}.
\newblock
\urldef\tempurl%
\url{https://doi.org/10.14778/2350229.2350233}
\showDOI{\tempurl}


\bibitem[\protect\citeauthoryear{R{\'e} and Suciu}{R{\'e} and Suciu}{2008}]%
        {re2008approximate}
\bibfield{author}{\bibinfo{person}{Christopher R{\'e}} {and}
  \bibinfo{person}{Dan Suciu}.} \bibinfo{year}{2008}\natexlab{}.
\newblock \showarticletitle{Approximate lineage for probabilistic databases}.
\newblock \bibinfo{journal}{\emph{Proceedings of the VLDB Endowment}}
  \bibinfo{volume}{1}, \bibinfo{number}{1} (\bibinfo{year}{2008}),
  \bibinfo{pages}{797--808}.
\newblock


\bibitem[\protect\citeauthoryear{Roweis and Saul}{Roweis and Saul}{2000}]%
        {roweis2000nonlinear}
\bibfield{author}{\bibinfo{person}{Sam~T Roweis} {and}
  \bibinfo{person}{Lawrence~K Saul}.} \bibinfo{year}{2000}\natexlab{}.
\newblock \showarticletitle{Nonlinear dimensionality reduction by locally
  linear embedding}.
\newblock \bibinfo{journal}{\emph{Science}} \bibinfo{volume}{290},
  \bibinfo{number}{5500} (\bibinfo{year}{2000}), \bibinfo{pages}{2323--2326}.
\newblock


\bibitem[\protect\citeauthoryear{Roy, Orr, and Suciu}{Roy
  et~al\mbox{.}}{2015}]%
        {ROS15}
\bibfield{author}{\bibinfo{person}{Sudeepa Roy}, \bibinfo{person}{Laurel Orr},
  {and} \bibinfo{person}{Dan Suciu}.} \bibinfo{year}{2015}\natexlab{}.
\newblock \showarticletitle{Explaining query answers with explanation-ready
  databases}.
\newblock \bibinfo{journal}{\emph{PVLDB}} \bibinfo{volume}{9},
  \bibinfo{number}{4} (\bibinfo{year}{2015}), \bibinfo{pages}{348--359}.
\newblock


\bibitem[\protect\citeauthoryear{Roy and Suciu}{Roy and Suciu}{2014}]%
        {RS14}
\bibfield{author}{\bibinfo{person}{Sudeepa Roy} {and} \bibinfo{person}{Dan
  Suciu}.} \bibinfo{year}{2014}\natexlab{}.
\newblock \showarticletitle{A formal approach to finding explanations for
  database queries}. SIGMOD.
\newblock


\bibitem[\protect\citeauthoryear{Rubin}{Rubin}{2005}]%
        {rubin2005causal}
\bibfield{author}{\bibinfo{person}{Donald~B Rubin}.}
  \bibinfo{year}{2005}\natexlab{}.
\newblock \showarticletitle{Causal inference using potential outcomes: Design,
  modeling, decisions}.
\newblock \bibinfo{journal}{\emph{J. Amer. Statist. Assoc.}}
  \bibinfo{volume}{100}, \bibinfo{number}{469} (\bibinfo{year}{2005}),
  \bibinfo{pages}{322--331}.
\newblock


\bibitem[\protect\citeauthoryear{Sarle}{Sarle}{1990}]%
        {sarle1990sas}
\bibfield{author}{\bibinfo{person}{WS Sarle}.} \bibinfo{year}{1990}\natexlab{}.
\newblock \showarticletitle{SAS/STAT User’s Guide: The VARCLUS Procedure}.
\newblock \bibinfo{journal}{\emph{SAS Institute, Inc., Cary, NC, USA,}}
  (\bibinfo{year}{1990}), \bibinfo{pages}{134}.
\newblock


\bibitem[\protect\citeauthoryear{Sarma, Fang, Gupta, Halevy, Lee, Wu, Xin, and
  Yu}{Sarma et~al\mbox{.}}{2012}]%
        {SF12a}
\bibfield{author}{\bibinfo{person}{Anish~Das Sarma}, \bibinfo{person}{Lujun
  Fang}, \bibinfo{person}{Nitin Gupta}, \bibinfo{person}{Alon~Y. Halevy},
  \bibinfo{person}{Hongrae Lee}, \bibinfo{person}{Fei Wu},
  \bibinfo{person}{Reynold Xin}, {and} \bibinfo{person}{Cong Yu}.}
  \bibinfo{year}{2012}\natexlab{}.
\newblock \showarticletitle{Finding related tables}. In
  \bibinfo{booktitle}{\emph{SIGMOD}}. \bibinfo{pages}{817--828}.
\newblock


\bibitem[\protect\citeauthoryear{Shah, Kumar, and Zhu}{Shah
  et~al\mbox{.}}{2017}]%
        {DBLP:journals/pvldb/ShahKZ17}
\bibfield{author}{\bibinfo{person}{Vraj Shah}, \bibinfo{person}{Arun Kumar},
  {and} \bibinfo{person}{Xiaojin Zhu}.} \bibinfo{year}{2017}\natexlab{}.
\newblock \showarticletitle{Are Key-Foreign Key Joins Safe to Avoid when
  Learning High-Capacity Classifiers?}
\newblock \bibinfo{journal}{\emph{PVLDB}} \bibinfo{volume}{11},
  \bibinfo{number}{3} (\bibinfo{year}{2017}), \bibinfo{pages}{366--379}.
\newblock


\bibitem[\protect\citeauthoryear{ten Cate, Civili, Sherkhonov, and Tan}{ten
  Cate et~al\mbox{.}}{2015}]%
        {DBLP:conf/pods/CateCST15}
\bibfield{author}{\bibinfo{person}{Balder ten Cate}, \bibinfo{person}{Cristina
  Civili}, \bibinfo{person}{Evgeny Sherkhonov}, {and}
  \bibinfo{person}{Wang{-}Chiew Tan}.} \bibinfo{year}{2015}\natexlab{}.
\newblock \showarticletitle{High-Level Why-Not Explanations using Ontologies}.
  In \bibinfo{booktitle}{\emph{Proceedings of the 34th {ACM} Symposium on
  Principles of Database Systems, {PODS} 2015, Melbourne, Victoria, Australia,
  May 31 - June 4, 2015}}, \bibfield{editor}{\bibinfo{person}{Tova Milo} {and}
  \bibinfo{person}{Diego Calvanese}} (Eds.). \bibinfo{publisher}{{ACM}},
  \bibinfo{pages}{31--43}.
\newblock
\urldef\tempurl%
\url{https://doi.org/10.1145/2745754.2745765}
\showDOI{\tempurl}


\bibitem[\protect\citeauthoryear{Vardi}{Vardi}{1982}]%
        {Vardi82}
\bibfield{author}{\bibinfo{person}{Moshe~Y. Vardi}.}
  \bibinfo{year}{1982}\natexlab{}.
\newblock \showarticletitle{The Complexity of Relational Query Languages
  (Extended Abstract)}. In \bibinfo{booktitle}{\emph{Proceedings of the
  Fourteenth Annual ACM Symposium on Theory of Computing}}
  \emph{(\bibinfo{series}{STOC '82})}. \bibinfo{publisher}{ACM},
  \bibinfo{address}{New York, NY, USA}, \bibinfo{pages}{137--146}.
\newblock
\showISBNx{0-89791-070-2}
\urldef\tempurl%
\url{https://doi.org/10.1145/800070.802186}
\showDOI{\tempurl}


\bibitem[\protect\citeauthoryear{Vollmer, Golab, B{\"{o}}hm, and
  Srivastava}{Vollmer et~al\mbox{.}}{2019}]%
        {vollmer-19-insnd}
\bibfield{author}{\bibinfo{person}{Michael Vollmer}, \bibinfo{person}{Lukasz
  Golab}, \bibinfo{person}{Klemens B{\"{o}}hm}, {and} \bibinfo{person}{Divesh
  Srivastava}.} \bibinfo{year}{2019}\natexlab{}.
\newblock \showarticletitle{Informative Summarization of Numeric Data}. In
  \bibinfo{booktitle}{\emph{SSDBM}}. \bibinfo{pages}{97--108}.
\newblock


\bibitem[\protect\citeauthoryear{Wang and Meliou}{Wang and Meliou}{2019}]%
        {DBLP:journals/pvldb/WangM19}
\bibfield{author}{\bibinfo{person}{Xiaolan Wang} {and}
  \bibinfo{person}{Alexandra Meliou}.} \bibinfo{year}{2019}\natexlab{}.
\newblock \showarticletitle{Explain3D: Explaining Disagreements in Disjoint
  Datasets}.
\newblock \bibinfo{journal}{\emph{Proc. {VLDB} Endow.}} \bibinfo{volume}{12},
  \bibinfo{number}{7} (\bibinfo{year}{2019}), \bibinfo{pages}{779--792}.
\newblock
\urldef\tempurl%
\url{https://doi.org/10.14778/3317315.3317320}
\showDOI{\tempurl}


\bibitem[\protect\citeauthoryear{Wen, Zhu, Roy, and Yang}{Wen
  et~al\mbox{.}}{2018}]%
        {wen2018interactive}
\bibfield{author}{\bibinfo{person}{Yuhao Wen}, \bibinfo{person}{Xiaodan Zhu},
  \bibinfo{person}{Sudeepa Roy}, {and} \bibinfo{person}{Jun Yang}.}
  \bibinfo{year}{2018}\natexlab{}.
\newblock \showarticletitle{Interactive summarization and exploration of top
  aggregate query answers}. In \bibinfo{booktitle}{\emph{Proceedings of the
  VLDB Endowment. International Conference on Very Large Data Bases}},
  Vol.~\bibinfo{volume}{11}. NIH Public Access, \bibinfo{pages}{2196}.
\newblock


\bibitem[\protect\citeauthoryear{Wu and Madden}{Wu and Madden}{2013}]%
        {WuM13}
\bibfield{author}{\bibinfo{person}{Eugene Wu} {and} \bibinfo{person}{Samuel
  Madden}.} \bibinfo{year}{2013}\natexlab{}.
\newblock \showarticletitle{Scorpion: Explaining Away Outliers in Aggregate
  Queries}.
\newblock \bibinfo{journal}{\emph{PVLDB}} \bibinfo{volume}{6},
  \bibinfo{number}{8} (\bibinfo{year}{2013}), \bibinfo{pages}{553--564}.
\newblock


\bibitem[\protect\citeauthoryear{Zhu, Deng, Nargesian, and Miller}{Zhu
  et~al\mbox{.}}{2019}]%
        {ZD19}
\bibfield{author}{\bibinfo{person}{Erkang Zhu}, \bibinfo{person}{Dong Deng},
  \bibinfo{person}{Fatemeh Nargesian}, {and} \bibinfo{person}{Renée~J.
  Miller}.} \bibinfo{year}{2019}\natexlab{}.
\newblock \showarticletitle{JOSIE: Overlap Set Similarity Search for Finding
  Joinable Tables in Data Lakes}. In \bibinfo{booktitle}{\emph{SIGMOD}}.
  \bibinfo{pages}{847--864}.
\newblock


\bibitem[\protect\citeauthoryear{Zhu, He, and Chaudhuri}{Zhu
  et~al\mbox{.}}{2017}]%
        {DBLP:journals/pvldb/ZhuHC17}
\bibfield{author}{\bibinfo{person}{Erkang Zhu}, \bibinfo{person}{Yeye He},
  {and} \bibinfo{person}{Surajit Chaudhuri}.} \bibinfo{year}{2017}\natexlab{}.
\newblock \showarticletitle{Auto-Join: Joining Tables by Leveraging
  Transformations}.
\newblock \bibinfo{journal}{\emph{Proc. {VLDB} Endow.}} \bibinfo{volume}{10},
  \bibinfo{number}{10} (\bibinfo{year}{2017}), \bibinfo{pages}{1034--1045}.
\newblock
\urldef\tempurl%
\url{https://doi.org/10.14778/3115404.3115409}
\showDOI{\tempurl}


\end{thebibliography}

\begin{appendix}
\section{Appendix}
\newcolumntype{P}[1]{>{\centering\arraybackslash}p{#1}}

\subsection{Top 20 patterns returned by Explanation Table}\label{append:et}
 Note since ET doesn't accept numeric attributes, we did a preprocessing step by converting numeric values into categorical value. As shown in \Cref{tab:patterns_et},  the large number of predicates make the results hard comprehend. Plus, since ET is mainly focused on maximizing information gain, the result should be interpreted as a set, which make the result even harder to understand.

\begin{table}[!h]
\centering
{\scriptsize
\centering
\begin{tabular}{|c|P{30em}|} 
\hline
\cellcolor{grey}\textbf{Num} & \cellcolor{grey}\textbf{Pattern Description} \\ 
Pattern 1 & \makecell{fg\_three\_pct$\in$[$0,0$] $\wedge$ off\_three\_ptreboundpct$\in$[$0,0$]} \\  
\hline

Pattern 2 & \makecell{off\_three\_ptreboundpct$\in$[$0,0$]} \\  
\hline

Pattern 3 & \makecell{rebounds$\in$[$1,3$] $\wedge$ off\_three\_ptreboundpct$\in$[$0,0$]} \\  
\hline

Pattern 4 & \makecell{fg\_three\_pct$\in$[$0,0$] $\wedge$ three\_ptassists$\in$[$0,0$] \\ 
fg\_three\_apct$\in$[$0,0$] fg\_two\_ablocked$\in$[$0,0$]} \\  
\hline

Pattern 5 & \makecell{offfgreboundpct$\in$[$0.07,0.75$] $\wedge$ prov\_game\_away\_\_points$\in$[$96,104$] $\wedge$
\\ three\_ptassists$\in$[$0,0$]} \\  
\hline

Pattern 6 & \makecell{prov\_game\_away\_\_points$\in$[$112,135$] $\wedge$ fg\_two\_ablocked$\in$[$0,0$]} \\  
\hline

Pattern 7 & \makecell{assistpoints$\in$[$8,36$] $\wedge$ rebounds$\in$[$6,22$] $\wedge$  \\ 
player\_name$\in$Draymond Green} \\  
\hline

Pattern 8 & \makecell{minutes$\in$[$31.78,49.63$] $\wedge$ fg\_three\_apct$\in$[$0,0$]} \\  
\hline

Pattern 9 & \makecell{offfgreboundpct$\in$[$0.07,0.75$] $\wedge$ minutes$\in$[$22.20,31.78$] \\
$\wedge$ assisted\_two\_spct$\in$[$0.83,1.00$] $\wedge$ def\_three\_ptreboundpct$\in$[$0.0, 0.08$] $\wedge$\\
offftreboundpct$\in$[$0,0$]} \\  
\hline

Pattern 10 & \makecell{off\_three\_ptreboundpct$\in$[$0,0$] $\wedge$ shotqualityavg$\in$[$0.45,0.49$] 
\\ fg\_two\_ablocked$\in$[$1,5$]} \\  
\hline

Pattern 11 & \makecell{fg\_three\_pct$\in$[$0.4,1.0$] $\wedge$ tspct$\in$[$0.67,1$] $\wedge$ \\
defftreboundpct$\in$[$0,0$]} \\  
\hline

Pattern 12 & \makecell{assisted\_two\_spct$\in$[$0,0$] $\wedge$ def\_three\_ptreboundpct$\in$[$0,0$] $\wedge$ \\
offftreboundpct$\in$[$0,0$] fg\_three\_apct$\in$[$0,0.19$] usage$\in$[$0,0$]} \\  
\hline

Pattern 13 & \makecell{off\_three\_ptreboundpct$\in$[$0,0$] $\wedge$ tspct$\in$[$0,0$] $\wedge$ \\
defftreboundpct$\in$[$0,0$] $\wedge$ three\_ptassists$\in$[$0,0$] $\wedge$ fg\_three\_apct$\in$[$0,0.19$] \\
 shotqualityavg$\in$[$0,0.45$]} \\  
\hline

Pattern 14 & \makecell{assistpoints$\in$[$8,36$] fg\_three\_pct$\in$[$0.4,1.0$] $\wedge$ minutes$\in$[$31.78,49.63$] $\wedge$ \\ prov\_game\_away\_\_points$\in$[$96,104$] $\wedge$ tspct$\in$[$0.67,1$] $\wedge$ \\ defftreboundpct$\in$[$0,0$]} \\  
\hline

Pattern 15 & \makecell{rebounds$\in$[$3,6$]$\wedge$  fg\_three\_pct$\in$[$0.4,1.0$] $\wedge$ \\
 shotqualityavg$\in$[$0.45,0.49$]} \\  
\hline

Pattern 16 & \makecell{assistpoints$\in$[$0,3$] $\wedge$ assisted\_two\_spct$\in$[$0.83,1.00$] $\wedge$ \\ 
fg\_two\_ablocked$\in$[$0,0$] $\wedge$ usage$\in$[$0,0$]} \\  
\hline

Pattern 17 & \makecell{offfgreboundpct$\in$[$0.07,0.75$] $\wedge$ prov\_game\_away\_\_points$\in$[$104,112$] 
\\ $\wedge$ usage$\in$[$0,0$]} \\  
\hline

Pattern 18 & \makecell{assistpoints$\in$[$3,8$] $\wedge$ off\_three\_ptreboundpct$\in$[$0,0$] $\wedge$ \\
assisted\_two\_spct$\in$[$0.5,0.83$] tspct$\in$[$0,0$]} \\  
\hline

Pattern 19 & \makecell{assistpoints$\in$[$8,36$] $\wedge$ player\_name$\in$Jarrett Jack} \\  
\hline

Pattern 20 & \makecell{prov\_game\_away\_\_points$\in$[$112,135$] $\wedge$ fg\_two\_ablocked$\in$[$1,5$]} \\ 
\hline
\end{tabular}
}
\label{tab:patterns_et}
\caption{First 20 Patterns Returned from Explanation Table}
\end{table}

\begin{figure*}[!h]
\centering
{\scriptsize
\centering
\begin{tabular}{|c|P{25em}|P{20em}|c|c|c|P{5em}|}
\hline
\cellcolor{grey}\textbf{Nodes} & \cellcolor{grey}\textbf{Edges} & \cellcolor{grey}\textbf{Pattern Desc} & \cellcolor{grey}\textbf{Precision} & \cellcolor{grey}\textbf{Recall} & \cellcolor{grey}\textbf{\abbrF} & \cellcolor{grey}\textbf{Primary Tuple} \\
\makecell{$A_1$: PT \\ $A_2$: team \\ $A_3$: player\_salary} & \makecell{($A_2$.team\_id)=($A_1$.prov\_game\_away\_\_id)\\($A_1$.prov\_player\_player\_\_id)=($A_3$.player\_id)\\($A_1$.prov\_season\_season\_\_id)=($A_3$.season\_id)} & \makecell{$A_1$.prov\_season\_season\_\_type=regular season \\ $A_3$.salary<15330435.0} & 1.0 & 1.0 & 1.0 & 2015-16\\
\hline
\makecell{$A_1$: PT \\ $A_2$: team \\ $A_3$: player\_salary} & \makecell{($A_2$.team\_id)=($A_1$.prov\_game\_winner\_\_id)\\($A_1$.prov\_player\_player\_\_id)=($A_3$.player\_id)\\($A_1$.prov\_season\_season\_\_id)=($A_3$.season\_id)} & \makecell{$A_1$.prov\_season\_season\_\_type=regular season \\ $A_3$.salary<15330435.0} & 1.0 & 1.0 & 1.0 & 2015-16\\
\hline
\makecell{$A_1$: PT \\ $A_2$: team \\ $A_3$: player\_salary} & \makecell{($A_2$.team\_id)=($A_1$.prov\_game\_home\_\_id)\\($A_1$.prov\_player\_player\_\_id)=($A_3$.player\_id)\\($A_1$.prov\_season\_season\_\_id)=($A_3$.season\_id)} & \makecell{$A_1$.prov\_season\_season\_\_type=regular season \\ $A_3$.salary<15330435.0} & 1.0 & 1.0 & 1.0 & 2015-16\\
\hline
\makecell{$A_1$: PT \\ $A_2$: team \\ $A_3$: player\_salary} & \makecell{($A_2$.team\_id)=($A_1$.prov\_game\_away\_\_id)\\($A_1$.prov\_player\_player\_\_id)=($A_3$.player\_id)\\($A_1$.prov\_season\_season\_\_id)=($A_3$.season\_id)} & \makecell{$A_1$.prov\_player\_\_game\_\_stats\_tspct<0.69 \\ $A_1$.prov\_player\_\_game\_\_stats\_usage<20.51 \\ $A_1$.prov\_season\_season\_\_type=regular season \\ $A_3$.salary>14260870.0} & 1.0 & 0.71 & 0.83 & 2016-17\\
\hline
\makecell{$A_1$: PT \\ $A_2$: team \\ $A_3$: player\_salary} & \makecell{($A_2$.team\_id)=($A_1$.prov\_game\_home\_\_id)\\($A_1$.prov\_player\_player\_\_id)=($A_3$.player\_id)\\($A_1$.prov\_season\_season\_\_id)=($A_3$.season\_id)} & \makecell{$A_1$.prov\_player\_\_game\_\_stats\_tspct<0.69 \\ $A_1$.prov\_player\_\_game\_\_stats\_usage<20.51 \\ $A_1$.prov\_season\_season\_\_type=regular season \\ $A_3$.salary>14260870.0} & 1.0 & 0.71 & 0.83 & 2016-17\\
\hline
\makecell{$A_1$: PT \\ $A_2$: team \\ $A_3$: player\_salary} & \makecell{($A_2$.team\_id)=($A_1$.prov\_game\_winner\_\_id)\\($A_1$.prov\_player\_player\_\_id)=($A_3$.player\_id)\\($A_1$.prov\_season\_season\_\_id)=($A_3$.season\_id)} & \makecell{$A_1$.prov\_player\_\_game\_\_stats\_tspct<0.69 \\ $A_1$.prov\_player\_\_game\_\_stats\_usage<20.51 \\ $A_1$.prov\_season\_season\_\_type=regular season \\ $A_3$.salary>14260870.0} & 1.0 & 0.71 & 0.83 & 2016-17\\
\hline
\makecell{$A_1$: PT \\ $A_2$: team \\ $A_3$: player\_salary} & \makecell{($A_2$.team\_id)=($A_1$.prov\_game\_winner\_\_id)\\($A_1$.prov\_player\_player\_\_id)=($A_3$.player\_id)\\($A_1$.prov\_season\_season\_\_id)=($A_3$.season\_id)} & \makecell{$A_1$.prov\_player\_\_game\_\_stats\_minutes>30.7 \\ $A_1$.prov\_player\_\_game\_\_stats\_tspct>0.42 \\ $A_1$.prov\_season\_season\_\_type=regular season \\ $A_3$.salary<15330435.0} & 1.0 & 0.69 & 0.82 & 2015-16\\
\hline
\makecell{$A_1$: PT \\ $A_2$: team \\ $A_3$: player\_salary} & \makecell{($A_2$.team\_id)=($A_1$.prov\_game\_away\_\_id)\\($A_1$.prov\_player\_player\_\_id)=($A_3$.player\_id)\\($A_1$.prov\_season\_season\_\_id)=($A_3$.season\_id)} & \makecell{$A_1$.prov\_player\_\_game\_\_stats\_minutes>30.7 \\ $A_1$.prov\_player\_\_game\_\_stats\_tspct>0.42 \\ $A_1$.prov\_season\_season\_\_type=regular season \\ $A_3$.salary<15330435.0} & 1.0 & 0.69 & 0.82 & 2015-16\\
\hline
\makecell{$A_1$: PT \\ $A_2$: team \\ $A_3$: player\_salary} & \makecell{($A_2$.team\_id)=($A_1$.prov\_game\_home\_\_id)\\($A_1$.prov\_player\_player\_\_id)=($A_3$.player\_id)\\($A_1$.prov\_season\_season\_\_id)=($A_3$.season\_id)} & \makecell{$A_1$.prov\_player\_\_game\_\_stats\_tspct>0.42 \\ $A_1$.prov\_player\_\_game\_\_stats\_usage>14.06 \\ $A_1$.prov\_season\_season\_\_type=regular season \\ $A_3$.salary<15330435.0} & 1.0 & 0.67 & 0.8 & 2015-16\\
\hline
\makecell{$A_1$: PT \\ $A_2$: team \\ $A_3$: player\_salary} & \makecell{($A_2$.team\_id)=($A_1$.prov\_game\_winner\_\_id)\\($A_1$.prov\_player\_player\_\_id)=($A_3$.player\_id)\\($A_1$.prov\_season\_season\_\_id)=($A_3$.season\_id)} & \makecell{$A_1$.prov\_player\_\_game\_\_stats\_tspct>0.42 \\ $A_1$.prov\_player\_\_game\_\_stats\_usage>14.02 \\ $A_1$.prov\_season\_season\_\_type=regular season \\ $A_3$.salary<15330435.0 \\ $A_2$.team=GSW} & 1.0 & 0.59 & 0.74 & 2015-16\\
\hline
\makecell{$A_1$: PT \\ $A_2$: team \\ $A_3$: player\_salary} & \makecell{($A_2$.team\_id)=($A_1$.prov\_game\_winner\_\_id)\\($A_1$.prov\_player\_player\_\_id)=($A_3$.player\_id)\\($A_1$.prov\_season\_season\_\_id)=($A_3$.season\_id)} & \makecell{$A_1$.prov\_player\_\_game\_\_stats\_minutes<36.105 \\ $A_1$.prov\_player\_\_game\_\_stats\_usage<20.21 \\ $A_1$.prov\_season\_season\_\_type=regular season \\ $A_3$.salary>14260870.0 \\ $A_2$.team=GSW} & 1.0 & 0.59 & 0.74 & 2016-17\\
\hline
\makecell{$A_1$: PT \\ $A_2$: team \\ $A_3$: player\_salary} & \makecell{($A_2$.team\_id)=($A_1$.prov\_game\_home\_\_id)\\($A_1$.prov\_player\_player\_\_id)=($A_3$.player\_id)\\($A_1$.prov\_season\_season\_\_id)=($A_3$.season\_id)} & \makecell{$A_1$.prov\_player\_\_game\_\_stats\_minutes<36.57 \\ $A_1$.prov\_player\_\_game\_\_stats\_usage>14.06 \\ $A_1$.prov\_season\_season\_\_type=regular season \\ $A_3$.salary>14260870.0} & 1.0 & 0.59 & 0.74 & 2016-17\\
\hline
\makecell{$A_1$: PT \\ $A_2$: team} & \makecell{($A_2$.team\_id)=($A_1$.prov\_game\_away\_\_id)\\($A_2$.team\_id)=($A_1$.prov\_game\_winner\_\_id)} & \makecell{$A_1$.prov\_player\_\_game\_\_stats\_fg\_\_three\_\_apct<0.4875 \\ $A_1$.prov\_season\_season\_\_type=regular season \\ $A_2$.team=GSW} & 0.65 & 0.83 & 0.73 & 2015-16\\
\hline
\makecell{$A_1$: PT \\ $A_2$: team \\ $A_3$: team\_game\_stats \\ $A_4$: team} & \makecell{($A_2$.team\_id)=($A_1$.prov\_game\_home\_\_id)\\($A_1$.prov\_game\_game\_\_date)=($A_3$.game\_date)\\($A_1$.prov\_game\_home\_\_id)=($A_3$.home\_id)\\($A_4$.team\_id)=($A_3$.team\_id)} & \makecell{$A_1$.prov\_player\_\_game\_\_stats\_deflongmidrangereboundpct>0.1 \\ $A_1$.prov\_season\_season\_\_type=regular season \\ $A_4$.team=GSW} & 0.59 & 0.84 & 0.69 & 2015-16\\
\hline
\makecell{$A_1$: PT \\ $A_2$: team \\ $A_3$: team\_game\_stats \\ $A_4$: team} & \makecell{($A_2$.team\_id)=($A_1$.prov\_game\_away\_\_id)\\($A_1$.prov\_game\_game\_\_date)=($A_3$.game\_date)\\($A_1$.prov\_game\_home\_\_id)=($A_3$.home\_id)\\($A_4$.team\_id)=($A_3$.team\_id)} & \makecell{$A_1$.prov\_player\_\_game\_\_stats\_deflongmidrangereboundpct>0.1 \\ $A_1$.prov\_season\_season\_\_type=regular season \\ $A_4$.team=GSW} & 0.59 & 0.84 & 0.69 & 2015-16\\
\hline
\makecell{$A_1$: PT \\ $A_2$: team \\ $A_3$: team\_game\_stats \\ $A_4$: team} & \makecell{($A_2$.team\_id)=($A_1$.prov\_game\_winner\_\_id)\\($A_1$.prov\_game\_game\_\_date)=($A_3$.game\_date)\\($A_1$.prov\_game\_home\_\_id)=($A_3$.home\_id)\\($A_4$.team\_id)=($A_3$.team\_id)} & \makecell{$A_1$.prov\_player\_\_game\_\_stats\_deflongmidrangereboundpct>0.1 \\ $A_1$.prov\_season\_season\_\_type=regular season \\ $A_4$.team=GSW} & 0.59 & 0.84 & 0.69 & 2015-16\\
\hline
\makecell{$A_1$: PT \\ $A_2$: team \\ $A_3$: play\_for \\ $A_4$: team} & \makecell{($A_2$.team\_id)=($A_1$.prov\_game\_away\_\_id)\\($A_1$.prov\_player\_player\_\_id)=($A_3$.player\_id)\\($A_4$.team\_id)=($A_3$.team\_id)} & \makecell{$A_1$.prov\_player\_\_game\_\_stats\_minutes>30.7 \\ $A_1$.prov\_season\_season\_\_type=regular season \\ $A_3$.date\_end=2019-04-09 \\ $A_4$.team=GSW} & 0.58 & 0.84 & 0.69 & 2015-16\\
\hline
\makecell{$A_1$: PT \\ $A_2$: team \\ $A_3$: play\_for \\ $A_4$: team} & \makecell{($A_2$.team\_id)=($A_1$.prov\_game\_home\_\_id)\\($A_1$.prov\_player\_player\_\_id)=($A_3$.player\_id)\\($A_4$.team\_id)=($A_3$.team\_id)} & \makecell{$A_1$.prov\_player\_\_game\_\_stats\_minutes>30.7 \\ $A_1$.prov\_season\_season\_\_type=regular season \\ $A_3$.date\_end=2019-04-09 \\ $A_4$.team=GSW} & 0.58 & 0.84 & 0.69 & 2015-16\\
\hline
\makecell{$A_1$: PT \\ $A_2$: team \\ $A_3$: play\_for \\ $A_4$: team} & \makecell{($A_2$.team\_id)=($A_1$.prov\_game\_winner\_\_id)\\($A_1$.prov\_player\_player\_\_id)=($A_3$.player\_id)\\($A_4$.team\_id)=($A_3$.team\_id)} & \makecell{$A_1$.prov\_player\_\_game\_\_stats\_minutes>30.7 \\ $A_1$.prov\_season\_season\_\_type=regular season \\ $A_3$.date\_end=2019-04-09 \\ $A_4$.team=GSW} & 0.58 & 0.84 & 0.69 & 2015-16\\
\hline
\makecell{} & \makecell{A\_1} & \makecell{$A_1$.prov\_player\_\_game\_\_stats\_minutes>30.7 \\ $A_1$.prov\_season\_season\_\_type=regular season} & 0.58 & 0.84 & 0.69 & 2015-16\\
\hline
\end{tabular}
}
\label{tab:tops-nba1}
\caption{$\query_{nba1}$ Top-20 patterns}
\end{figure*}

\begin{figure*}[!h]
\centering
{\scriptsize
\centering
\begin{tabular}{|c|P{20em}|P{30em}|c|c|c|P{4em}|}
\hline
\cellcolor{grey}\textbf{Nodes} & \cellcolor{grey}\textbf{Edges} & \cellcolor{grey}\textbf{Pattern Desc} & \cellcolor{grey}\textbf{Precision} & \cellcolor{grey}\textbf{Recall} & \cellcolor{grey}\textbf{\abbrF} & \cellcolor{grey}\textbf{Primary Tuple} \\

\makecell{$A_1$: PT \\ $A_2$: team \\ $A_3$: player\_game\_stats \\ $A_4$: player} & \makecell{($A_2$.team\_id)=($A_1$.prov\_game\_away\_\_id)\\($A_1$.prov\_game\_game\_\_date)=($A_3$.game\_date)\\($A_1$.prov\_game\_home\_\_id)=($A_3$.home\_id)\\($A_4$.player\_id)=($A_3$.player\_id)} & \makecell{$A_1$.prov\_season\_season\_\_type=regular season \\ $A_1$.prov\_team\_\_game\_\_stats\_assistpoints<68.0 \\ $A_4$.player\_name=Draymond Green} & 0.62 & 0.9 & 0.74 & 2013-14\\
\hline
\makecell{$A_1$: PT \\ $A_2$: team \\ $A_3$: player\_game\_stats \\ $A_4$: player} & \makecell{($A_2$.team\_id)=($A_1$.prov\_game\_winner\_\_id)\\($A_1$.prov\_game\_game\_\_date)=($A_3$.game\_date)\\($A_1$.prov\_game\_home\_\_id)=($A_3$.home\_id)\\($A_4$.player\_id)=($A_3$.player\_id)} & \makecell{$A_1$.prov\_season\_season\_\_type=regular season \\ $A_1$.prov\_team\_\_game\_\_stats\_assistpoints<68.0 \\ $A_4$.player\_name=Draymond Green} & 0.62 & 0.9 & 0.74 & 2013-14\\
\hline
\makecell{$A_1$: PT \\ $A_2$: team \\ $A_3$: player\_game\_stats \\ $A_4$: player} & \makecell{($A_2$.team\_id)=($A_1$.prov\_game\_home\_\_id)\\($A_1$.prov\_game\_game\_\_date)=($A_3$.game\_date)\\($A_1$.prov\_game\_home\_\_id)=($A_3$.home\_id)\\($A_4$.player\_id)=($A_3$.player\_id)} & \makecell{$A_1$.prov\_season\_season\_\_type=regular season \\ $A_1$.prov\_team\_\_game\_\_stats\_assistpoints<68.0 \\ $A_4$.player\_name=Draymond Green} & 0.62 & 0.9 & 0.74 & 2013-14\\
\hline
\makecell{$A_1$: PT \\ $A_2$: team \\ $A_3$: player\_game\_stats \\ $A_4$: player} & \makecell{($A_2$.team\_id)=($A_1$.prov\_game\_winner\_\_id)\\($A_1$.prov\_game\_game\_\_date)=($A_3$.game\_date)\\($A_1$.prov\_game\_home\_\_id)=($A_3$.home\_id)\\($A_4$.player\_id)=($A_3$.player\_id)} & \makecell{$A_1$.prov\_season\_season\_\_type=regular season \\ $A_1$.prov\_team\_\_game\_\_stats\_assistpoints>57.0 \\ $A_1$.prov\_team\_\_game\_\_stats\_nonputback- \\ sassisted\_\_two\_\_spct>0.55 \\ $A_4$.player\_name=Harrison Barnes} & 0.78 & 0.7 & 0.74 & 2014-15\\
\hline
\makecell{$A_1$: PT \\ $A_2$: team \\ $A_3$: player\_game\_stats \\ $A_4$: player} & \makecell{($A_2$.team\_id)=($A_1$.prov\_game\_home\_\_id)\\($A_1$.prov\_game\_game\_\_date)=($A_3$.game\_date)\\($A_1$.prov\_game\_home\_\_id)=($A_3$.home\_id)\\($A_4$.player\_id)=($A_3$.player\_id)} & \makecell{$A_1$.prov\_season\_season\_\_type=regular season \\ $A_1$.prov\_team\_\_game\_\_stats\_assistpoints>57.0 \\ $A_1$.prov\_team\_\_game\_\_stats\_nonputbacksassisted\_\_two\_\_spct>0.5475 \\ $A_4$.player\_name=Harrison Barnes} & 0.78 & 0.7 & 0.74 & 2014-15\\
\hline
\makecell{$A_1$: PT \\ $A_2$: team \\ $A_3$: player\_salary \\ $A_4$: player} & \makecell{($A_2$.team\_id)=($A_1$.prov\_game\_winner\_\_id)\\($A_1$.prov\_season\_season\_\_id)=($A_3$.season\_id)\\($A_4$.player\_id)=($A_3$.player\_id)} & \makecell{$A_1$.prov\_season\_season\_\_type=regular season \\ $A_1$.prov\_team\_\_game\_\_stats\_assistpoints<68.0 \\ $A_4$.player\_name=Gal Mekel} & 0.61 & 0.9 & 0.73 & 2013-14\\
\hline
\makecell{} & \makecell{A\_1} & \makecell{$A_1$.prov\_season\_season\_\_type=regular season \\ $A_1$.prov\_team\_\_game\_\_stats\_assistpoints<68.0} & 0.61 & 0.9 & 0.73 & 2013-14\\
\hline
\makecell{} & \makecell{A\_1} & \makecell{$A_1$.prov\_season\_season\_\_type=regular season \\ $A_1$.prov\_team\_\_game\_\_stats\_assistpoints<68.0 \\ $A_1$.prov\_team\_\_game\_\_stats\_offatrimreboundpct>0.25} & 0.71 & 0.72 & 0.72 & 2013-14\\
\hline
\makecell{$A_1$: PT \\ $A_2$: team \\ $A_3$: player\_game\_stats \\ $A_4$: player} & \makecell{($A_2$.team\_id)=($A_1$.prov\_game\_away\_\_id)\\($A_1$.prov\_game\_game\_\_date)=($A_3$.game\_date)\\($A_1$.prov\_game\_home\_\_id)=($A_3$.home\_id)\\($A_4$.player\_id)=($A_3$.player\_id)} & \makecell{$A_1$.prov\_season\_season\_\_type=regular season \\ $A_1$.prov\_team\_\_game\_\_stats\_assistpoints<67.0 \\ $A_1$.prov\_team\_\_game\_\_stats\_nonputbacksassisted\_\_two\_\_spct<0.69 \\ $A_4$.player\_name=David Lee} & 0.75 & 0.67 & 0.71 & 2013-14\\
\hline
\makecell{$A_1$: PT \\ $A_2$: team \\ $A_3$: player\_salary \\ $A_4$: player} & \makecell{($A_2$.team\_id)=($A_1$.prov\_game\_winner\_\_id)\\($A_1$.prov\_season\_season\_\_id)=($A_3$.season\_id)\\($A_4$.player\_id)=($A_3$.player\_id)} & \makecell{$A_1$.prov\_season\_season\_\_type=regular season \\ $A_1$.prov\_team\_\_game\_\_stats\_assistpoints>58.5 \\ $A_1$.prov\_team\_\_game\_\_stats\_nonputbacksassisted\_\_two\_\_spct>0.55 \\ $A_4$.player\_name=Mike Muscala} & 0.74 & 0.68 & 0.71 & 2014-15\\
\hline
\makecell{$A_1$: PT \\ $A_2$: team \\ $A_3$: player\_game\_stats \\ $A_4$: player} & \makecell{($A_2$.team\_id)=($A_1$.prov\_game\_home\_\_id)\\($A_1$.prov\_game\_game\_\_date)=($A_3$.game\_date)\\($A_1$.prov\_game\_home\_\_id)=($A_3$.home\_id)\\($A_4$.player\_id)=($A_3$.player\_id)} & \makecell{$A_1$.prov\_season\_season\_\_type=regular season \\ $A_1$.prov\_team\_\_game\_\_stats\_assistpoints>50.0 \\ $A_1$.prov\_team\_\_game\_\_stats\_nonputbacksassisted\_\_two\_\_spct>0.5525 \\ $A_4$.player\_name=Stephen Curry} & 0.65 & 0.78 & 0.71 & 2014-15\\
\hline
\makecell{$A_1$: PT \\ $A_2$: team} & \makecell{($A_2$.team\_id)=($A_1$.prov\_game\_home\_\_id)\\($A_2$.team\_id)=($A_1$.prov\_game\_winner\_\_id)} & \makecell{$A_1$.prov\_season\_season\_\_type=regular season \\ $A_1$.prov\_team\_\_game\_\_stats\_assistpoints>57.0 \\ $A_2$.team=GSW} & 0.72 & 0.65 & 0.69 & 2014-15\\
\hline
\makecell{$A_1$: PT \\ $A_2$: team} & \makecell{($A_2$.team\_id)=($A_1$.prov\_game\_away\_\_id)\\($A_2$.team\_id)=($A_1$.prov\_game\_winner\_\_id)} & \makecell{$A_1$.prov\_season\_season\_\_type=regular season \\ $A_2$.team=GSW} & 0.54 & 0.93 & 0.68 & 2014-15\\
\hline
\makecell{$A_1$: PT \\ $A_2$: team \\ $A_3$: player\_game\_stats \\ $A_4$: player} & \makecell{($A_2$.team\_id)=($A_1$.prov\_game\_winner\_\_id)\\($A_1$.prov\_game\_game\_\_date)=($A_3$.game\_date)\\($A_1$.prov\_game\_home\_\_id)=($A_3$.home\_id)\\($A_4$.player\_id)=($A_3$.player\_id)} & \makecell{$A_1$.prov\_season\_season\_\_type=regular season \\ $A_1$.prov\_team\_\_game\_\_stats\_assistpoints>50.0 \\ $A_1$.prov\_team\_\_game\_\_stats\_nonputbacksassisted\_\_two\_\_spct>0.55 \\ $A_4$.player\_name=Marreese Speights} & 0.63 & 0.73 & 0.68 & 2014-15\\
\hline
\makecell{$A_1$: PT \\ $A_2$: team \\ $A_3$: player\_game\_stats \\ $A_4$: player} & \makecell{($A_2$.team\_id)=($A_1$.prov\_game\_home\_\_id)\\($A_1$.prov\_game\_game\_\_date)=($A_3$.game\_date)\\($A_1$.prov\_game\_home\_\_id)=($A_3$.home\_id)\\($A_4$.player\_id)=($A_3$.player\_id)} & \makecell{$A_1$.prov\_season\_season\_\_type=regular season \\ $A_1$.prov\_team\_\_game\_\_stats\_nonputbacksassisted\_\_two\_\_spct<0.68 \\ $A_1$.prov\_team\_\_game\_\_stats\_offatrimreboundpct>0.25 \\ $A_4$.player\_name=Klay Thompson} & 0.71 & 0.65 & 0.68 & 2013-14\\
\hline
\makecell{$A_1$: PT \\ $A_2$: team \\ $A_3$: player\_game\_stats \\ $A_4$: player} & \makecell{($A_2$.team\_id)=($A_1$.prov\_game\_away\_\_id)\\($A_1$.prov\_game\_game\_\_date)=($A_3$.game\_date)\\($A_1$.prov\_game\_home\_\_id)=($A_3$.home\_id)\\($A_4$.player\_id)=($A_3$.player\_id)} & \makecell{$A_1$.prov\_season\_season\_\_type=regular season \\ $A_1$.prov\_team\_\_game\_\_stats\_nonputbacksassisted\_\_two\_\_spct<0.68 \\ $A_1$.prov\_team\_\_game\_\_stats\_offatrimreboundpct>0.25 \\ $A_4$.player\_name=Klay Thompson} & 0.71 & 0.65 & 0.68 & 2013-14\\
\hline
\makecell{$A_1$: PT \\ $A_2$: team} & \makecell{($A_2$.team\_id)=($A_1$.prov\_game\_winner\_\_id)} & \makecell{$A_1$.prov\_season\_season\_\_type=regular season \\ $A_2$.team=GSW} & 0.57 & 0.82 & 0.67 & 2014-15\\
\hline
\makecell{$A_1$: PT \\ $A_2$: team \\ $A_3$: player\_salary \\ $A_4$: player} & \makecell{($A_2$.team\_id)=($A_1$.prov\_game\_winner\_\_id)\\($A_1$.prov\_season\_season\_\_id)=($A_3$.season\_id)\\($A_4$.player\_id)=($A_3$.player\_id)} & \makecell{$A_1$.prov\_season\_season\_\_type=regular season \\ $A_1$.prov\_team\_\_game\_\_stats\_nonputbacksassisted\_\_two\_\_spct<0.69 \\ $A_1$.prov\_team\_\_game\_\_stats\_offatrimreboundpct>0.25 \\ $A_4$.player\_name=Gal Mekel} & 0.65 & 0.67 & 0.66 & 2013-14\\
\hline
\makecell{} & \makecell{A\_1} & \makecell{$A_1$.prov\_season\_season\_\_type=regular season \\ $A_1$.prov\_team\_\_game\_\_stats\_assisted\_\_three\_\_spct>0.6975 \\ $A_1$.prov\_team\_\_game\_\_stats\_assistpoints>58.5} & 0.71 & 0.62 & 0.66 & 2014-15\\
\hline
\makecell{$A_1$: PT \\ $A_2$: team} & \makecell{($A_2$.team\_id)=($A_1$.prov\_game\_home\_\_id)\\($A_2$.team\_id)=($A_1$.prov\_game\_winner\_\_id)} & \makecell{$A_1$.prov\_season\_season\_\_type=regular season \\ $A_1$.prov\_team\_\_game\_\_stats\_assistpoints>57.0 \\ $A_1$.prov\_team\_\_game\_\_stats\_nonputbacksassisted\_\_two\_\_spct>0.59 \\ $A_2$.team=GSW} & 0.77 & 0.58 & 0.66 & 2014-15\\
\hline

\end{tabular}
}
\label{tab:tops-nba2}
\caption{$\query_{nba2}$ Top-20 patterns}
\end{figure*}

\begin{figure*}[!h]
\centering
{\scriptsize
\centering
\begin{tabular}{|c|P{20em}|P{30em}|c|c|c|P{5em}|}
\hline
\cellcolor{grey}\textbf{Nodes} & \cellcolor{grey}\textbf{Edges} & \cellcolor{grey}\textbf{Pattern Desc} & \cellcolor{grey}\textbf{Precision} & \cellcolor{grey}\textbf{Recall} & \cellcolor{grey}\textbf{\abbrF} & \cellcolor{grey}\textbf{Primary Tuple} \\

\makecell{$A_1$: PT \\ $A_2$: team \\ $A_3$: player\_salary} & \makecell{($A_2$.team\_id)=($A_1$.prov\_game\_away\_\_id)\\($A_1$.prov\_player\_player\_\_id)=($A_3$.player\_id)\\($A_1$.prov\_season\_season\_\_id)=($A_3$.season\_id)} & \makecell{$A_1$.prov\_season\_season\_\_type=regular season \\ $A_3$.salary>14500000.0} & 1.0 & 1.0 & 1.0 & 2009-10\\
\hline
\makecell{$A_1$: PT \\ $A_2$: team \\ $A_3$: player\_salary} & \makecell{($A_2$.team\_id)=($A_1$.prov\_game\_winner\_\_id)\\($A_1$.prov\_player\_player\_\_id)=($A_3$.player\_id)\\($A_1$.prov\_season\_season\_\_id)=($A_3$.season\_id)} & \makecell{$A_1$.prov\_season\_season\_\_type=regular season \\ $A_3$.salary>14500000.0} & 1.0 & 1.0 & 1.0 & 2009-10\\
\hline
\makecell{$A_1$: PT \\ $A_2$: team \\ $A_3$: player\_salary} & \makecell{($A_2$.team\_id)=($A_1$.prov\_game\_home\_\_id)\\($A_1$.prov\_player\_player\_\_id)=($A_3$.player\_id)\\($A_1$.prov\_season\_season\_\_id)=($A_3$.season\_id)} & \makecell{$A_1$.prov\_season\_season\_\_type=regular season \\ $A_3$.salary>14500000.0} & 1.0 & 1.0 & 1.0 & 2009-10\\
\hline
\makecell{$A_1$: PT \\ $A_2$: team \\ $A_3$: team\_game\_stats \\ $A_4$: team} & \makecell{($A_2$.team\_id)=($A_1$.prov\_game\_away\_\_id)\\($A_1$.prov\_game\_game\_\_date)=($A_3$.game\_date)\\($A_1$.prov\_game\_home\_\_id)=($A_3$.home\_id)\\($A_4$.team\_id)=($A_3$.team\_id)} & \makecell{$A_1$.prov\_season\_season\_\_type=regular season \\ $A_4$.team=MIA} & 0.96 & 1.0 & 0.98 & 2010-11\\
\hline
\makecell{$A_1$: PT \\ $A_2$: team \\ $A_3$: team\_game\_stats \\ $A_4$: team} & \makecell{($A_2$.team\_id)=($A_1$.prov\_game\_home\_\_id)\\($A_1$.prov\_game\_game\_\_date)=($A_3$.game\_date)\\($A_1$.prov\_game\_home\_\_id)=($A_3$.home\_id)\\($A_4$.team\_id)=($A_3$.team\_id)} & \makecell{$A_1$.prov\_season\_season\_\_type=regular season \\ $A_4$.team=MIA} & 0.96 & 1.0 & 0.98 & 2010-11\\
\hline
\makecell{$A_1$: PT \\ $A_2$: team \\ $A_3$: team\_game\_stats \\ $A_4$: team} & \makecell{($A_2$.team\_id)=($A_1$.prov\_game\_winner\_\_id)\\($A_1$.prov\_game\_game\_\_date)=($A_3$.game\_date)\\($A_1$.prov\_game\_home\_\_id)=($A_3$.home\_id)\\($A_4$.team\_id)=($A_3$.team\_id)} & \makecell{$A_1$.prov\_season\_season\_\_type=regular season \\ $A_4$.team=MIA} & 0.96 & 1.0 & 0.98 & 2010-11\\
\hline
\makecell{$A_1$: PT \\ $A_2$: team} & \makecell{($A_2$.team\_id)=($A_1$.prov\_game\_away\_\_id)\\($A_2$.team\_id)=($A_1$.prov\_game\_winner\_\_id)} & \makecell{$A_1$.prov\_season\_season\_\_type=regular season \\ $A_2$.team=CLE} & 1.0 & 0.87 & 0.93 & 2009-10\\
\hline
\makecell{$A_1$: PT \\ $A_2$: team} & \makecell{($A_2$.team\_id)=($A_1$.prov\_game\_winner\_\_id)} & \makecell{$A_1$.prov\_season\_season\_\_type=regular season \\ $A_2$.team=CLE} & 0.98 & 0.79 & 0.88 & 2009-10\\
\hline
\makecell{$A_1$: PT \\ $A_2$: team \\ $A_3$: play\_for \\ $A_4$: team} & \makecell{($A_2$.team\_id)=($A_1$.prov\_game\_winner\_\_id)\\($A_1$.prov\_player\_player\_\_id)=($A_3$.player\_id)\\($A_4$.team\_id)=($A_3$.team\_id)} & \makecell{$A_1$.prov\_season\_season\_\_type=regular season \\ $A_3$.date\_end=2018-04-11 \\ $A_2$.team=CLE \\ $A_4$.team=CLE} & 0.98 & 0.79 & 0.88 & 2009-10\\
\hline
\makecell{$A_1$: PT \\ $A_2$: team} & \makecell{($A_2$.team\_id)=($A_1$.prov\_game\_home\_\_id)\\($A_2$.team\_id)=($A_1$.prov\_game\_winner\_\_id)} & \makecell{$A_1$.prov\_season\_season\_\_type=regular season \\ $A_2$.team=MIA} & 1.0 & 0.73 & 0.85 & 2010-11\\
\hline
\makecell{$A_1$: PT \\ $A_2$: team \\ $A_3$: player\_salary} & \makecell{($A_2$.team\_id)=($A_1$.prov\_game\_winner\_\_id)\\($A_1$.prov\_player\_player\_\_id)=($A_3$.player\_id)\\($A_1$.prov\_season\_season\_\_id)=($A_3$.season\_id)} & \makecell{$A_1$.prov\_player\_\_game\_\_stats\_assisted\_\_two\_\_spct>0.19 \\ $A_1$.prov\_player\_\_game\_\_stats\_usage>28.025 \\ $A_1$.prov\_season\_season\_\_type=regular season \\ $A_3$.salary>14500000.0} & 1.0 & 0.71 & 0.83 & 2009-10\\
\hline
\makecell{$A_1$: PT \\ $A_2$: team \\ $A_3$: player\_salary} & \makecell{($A_2$.team\_id)=($A_1$.prov\_game\_away\_\_id)\\($A_1$.prov\_player\_player\_\_id)=($A_3$.player\_id)\\($A_1$.prov\_season\_season\_\_id)=($A_3$.season\_id)} & \makecell{$A_1$.prov\_player\_\_game\_\_stats\_assisted\_\_two\_\_spct>0.19 \\ $A_1$.prov\_player\_\_game\_\_stats\_usage>28.025 \\ $A_1$.prov\_season\_season\_\_type=regular season \\ $A_3$.salary>14500000.0} & 1.0 & 0.71 & 0.83 & 2009-10\\
\hline
\makecell{$A_1$: PT \\ $A_2$: team \\ $A_3$: player\_salary} & \makecell{($A_2$.team\_id)=($A_1$.prov\_game\_home\_\_id)\\($A_1$.prov\_player\_player\_\_id)=($A_3$.player\_id)\\($A_1$.prov\_season\_season\_\_id)=($A_3$.season\_id)} & \makecell{$A_1$.prov\_player\_\_game\_\_stats\_deflongmidrangereboundpct>0.1 \\ $A_1$.prov\_player\_\_game\_\_stats\_usage>28.025 \\ $A_1$.prov\_season\_season\_\_type=regular season \\ $A_3$.salary>14500000.0} & 1.0 & 0.7 & 0.82 & 2009-10\\
\hline
\makecell{$A_1$: PT \\ $A_2$: team \\ $A_3$: player\_salary} & \makecell{($A_2$.team\_id)=($A_1$.prov\_game\_away\_\_id)\\($A_1$.prov\_player\_player\_\_id)=($A_3$.player\_id)\\($A_1$.prov\_season\_season\_\_id)=($A_3$.season\_id)} & \makecell{$A_1$.prov\_player\_\_game\_\_stats\_assisted\_\_two\_\_spct<0.5 \\ $A_1$.prov\_player\_\_game\_\_stats\_def\_\_three\_\_ptreboundpct<0.25 \\ $A_1$.prov\_season\_season\_\_type=regular season \\ $A_3$.salary<15779912.0} & 1.0 & 0.65 & 0.78 & 2010-11\\
\hline
\makecell{$A_1$: PT \\ $A_2$: team \\ $A_3$: player\_salary} & \makecell{($A_2$.team\_id)=($A_1$.prov\_game\_winner\_\_id)\\($A_1$.prov\_player\_player\_\_id)=($A_3$.player\_id)\\($A_1$.prov\_season\_season\_\_id)=($A_3$.season\_id)} & \makecell{$A_1$.prov\_player\_\_game\_\_stats\_assisted\_\_two\_\_spct<0.5 \\ $A_1$.prov\_player\_\_game\_\_stats\_def\_\_three\_\_ptreboundpct<0.25 \\ $A_1$.prov\_season\_season\_\_type=regular season \\ $A_3$.salary<15779912.0} & 1.0 & 0.65 & 0.78 & 2010-11\\
\hline
\makecell{$A_1$: PT \\ $A_2$: team \\ $A_3$: player\_salary} & \makecell{($A_2$.team\_id)=($A_1$.prov\_game\_home\_\_id)\\($A_1$.prov\_player\_player\_\_id)=($A_3$.player\_id)\\($A_1$.prov\_season\_season\_\_id)=($A_3$.season\_id)} & \makecell{$A_1$.prov\_player\_\_game\_\_stats\_assisted\_\_two\_\_spct<0.5 \\ $A_1$.prov\_player\_\_game\_\_stats\_usage<36.36 \\ $A_1$.prov\_season\_season\_\_type=regular season \\ $A_3$.salary<15779912.0} & 1.0 & 0.63 & 0.78 & 2010-11\\
\hline
\makecell{$A_1$: PT \\ $A_2$: team \\ $A_3$: player\_salary} & \makecell{($A_2$.team\_id)=($A_1$.prov\_game\_home\_\_id)\\($A_1$.prov\_player\_player\_\_id)=($A_3$.player\_id)\\($A_1$.prov\_season\_season\_\_id)=($A_3$.season\_id)} & \makecell{$A_1$.prov\_player\_\_game\_\_stats\_assisted\_\_two\_\_spct>0.19 \\ $A_1$.prov\_player\_\_game\_\_stats\_deflongmidrangereboundpct<0.29 \\ $A_1$.prov\_season\_season\_\_type=regular season \\ $A_3$.salary>14500000.0} & 1.0 & 0.63 & 0.77 & 2009-10\\
\hline
\makecell{$A_1$: PT \\ $A_2$: team \\ $A_3$: play\_for \\ $A_4$: team} & \makecell{($A_2$.team\_id)=($A_1$.prov\_game\_winner\_\_id)\\($A_1$.prov\_player\_player\_\_id)=($A_3$.player\_id)\\($A_4$.team\_id)=($A_3$.team\_id)} & \makecell{$A_1$.prov\_player\_\_game\_\_stats\_deflongmidrangereboundpct<0.27 \\ $A_1$.prov\_season\_season\_\_type=regular season \\ $A_3$.date\_end=2014-04-12 \\ $A_2$.team=CLE \\ $A_4$.team=MIA} & 1.0 & 0.57 & 0.72 & 2009-10\\
\hline
\makecell{$A_1$: PT \\ $A_2$: team \\ $A_3$: team\_game\_stats \\ $A_4$: team} & \makecell{($A_2$.team\_id)=($A_1$.prov\_game\_home\_\_id)\\($A_1$.prov\_game\_game\_\_date)=($A_3$.game\_date)\\($A_1$.prov\_game\_home\_\_id)=($A_3$.home\_id)\\($A_4$.team\_id)=($A_3$.team\_id)} & \makecell{$A_1$.prov\_player\_\_game\_\_stats\_assisted\_\_two\_\_spct<0.38 \\ $A_1$.prov\_player\_\_game\_\_stats\_fg\_\_three\_\_apct<0.2475 \\ $A_1$.prov\_season\_season\_\_type=regular season \\ $A_4$.team=MIA} & 0.98 & 0.57 & 0.72 & 2010-11\\
\hline
\makecell{$A_1$: PT \\ $A_2$: team \\ $A_3$: team\_game\_stats \\ $A_4$: team} & \makecell{($A_2$.team\_id)=($A_1$.prov\_game\_winner\_\_id)\\($A_1$.prov\_game\_game\_\_date)=($A_3$.game\_date)\\($A_1$.prov\_game\_home\_\_id)=($A_3$.home\_id)\\($A_4$.team\_id)=($A_3$.team\_id)} & \makecell{$A_1$.prov\_player\_\_game\_\_stats\_assisted\_\_two\_\_spct<0.38 \\ $A_1$.prov\_player\_\_game\_\_stats\_fg\_\_three\_\_apct<0.2475 \\ $A_1$.prov\_season\_season\_\_type=regular season \\ $A_4$.team=MIA} & 0.98 & 0.57 & 0.72 & 2010-11\\
\hline

\end{tabular}
}
\label{tab:tops-nba3}
\caption{$\query_{nba3}$ Top-20 patterns}
\end{figure*}

\begin{figure*}[!h]
\centering
{\scriptsize
\centering
\begin{tabular}{|c|P{25em}|P{20em}|c|c|c|P{5em}|}
\hline
\cellcolor{grey}\textbf{Nodes} & \cellcolor{grey}\textbf{Edges} & \cellcolor{grey}\textbf{Pattern Desc} & \cellcolor{grey}\textbf{Precision} & \cellcolor{grey}\textbf{Recall} & \cellcolor{grey}\textbf{\abbrF} & \cellcolor{grey}\textbf{Primary Tuple} \\

\makecell{$A_1$: PT \\ $A_2$: player\_salary \\ $A_3$: player} & \makecell{($A_1$.prov\_season\_season\_\_id)=($A_2$.season\_id)\\($A_3$.player\_id)=($A_2$.player\_id)} & \makecell{$A_1$.prov\_season\_season\_\_type=regular season \\ $A_3$.player\_name=Monta Ellis \\ $A_2$.salary<11000000.0} & 1.0 & 1.0 & 1.0 & 2016-17\\
\hline
\makecell{$A_1$: PT \\ $A_2$: player\_salary \\ $A_3$: player} & \makecell{($A_1$.prov\_season\_season\_\_id)=($A_2$.season\_id)\\($A_3$.player\_id)=($A_2$.player\_id)} & \makecell{$A_1$.prov\_season\_season\_\_type=regular season \\ $A_3$.player\_name=Terrence Jones \\ $A_2$.salary<1485000.0} & 1.0 & 1.0 & 1.0 & 2016-17\\
\hline
\makecell{$A_1$: PT \\ $A_2$: player\_salary \\ $A_3$: player} & \makecell{($A_1$.prov\_season\_season\_\_id)=($A_2$.season\_id)\\($A_3$.player\_id)=($A_2$.player\_id)} & \makecell{$A_1$.prov\_season\_season\_\_type=regular season \\ $A_3$.player\_name=Pau Gasol \\ $A_2$.salary<19285850.0} & 1.0 & 1.0 & 1.0 & 2016-17\\
\hline
\makecell{$A_1$: PT \\ $A_2$: team \\ $A_3$: player\_salary \\ $A_4$: player} & \makecell{($A_2$.team\_id)=($A_1$.prov\_game\_home\_\_id)\\($A_1$.prov\_season\_season\_\_id)=($A_3$.season\_id)\\($A_4$.player\_id)=($A_3$.player\_id)} & \makecell{$A_1$.prov\_season\_season\_\_type=regular season \\ $A_4$.player\_name=Robert Sacre \\ $A_3$.salary>788872.0} & 1.0 & 1.0 & 1.0 & 2016-17\\
\hline
\makecell{$A_1$: PT \\ $A_2$: player\_salary \\ $A_3$: player} & \makecell{($A_1$.prov\_season\_season\_\_id)=($A_2$.season\_id)\\($A_3$.player\_id)=($A_2$.player\_id)} & \makecell{$A_1$.prov\_season\_season\_\_type=regular season \\ $A_3$.player\_name=Raymond Felton \\ $A_2$.salary<3480453.0} & 1.0 & 1.0 & 1.0 & 2016-17\\
\hline
\makecell{$A_1$: PT \\ $A_2$: player\_salary \\ $A_3$: player} & \makecell{($A_1$.prov\_season\_season\_\_id)=($A_2$.season\_id)\\($A_3$.player\_id)=($A_2$.player\_id)} & \makecell{$A_1$.prov\_season\_season\_\_type=regular season \\ $A_3$.player\_name=Evan Turner \\ $A_2$.salary>5293080.0} & 1.0 & 1.0 & 1.0 & 2016-17\\
\hline
\makecell{$A_1$: PT \\ $A_2$: team \\ $A_3$: player\_game\_stats \\ $A_4$: player} & \makecell{($A_2$.team\_id)=($A_1$.prov\_game\_away\_\_id)\\($A_1$.prov\_game\_game\_\_date)=($A_3$.game\_date)\\($A_1$.prov\_game\_home\_\_id)=($A_3$.home\_id)\\($A_4$.player\_id)=($A_3$.player\_id)} & \makecell{$A_1$.prov\_season\_season\_\_type=regular season \\ $A_4$.player\_name=Andre Iguodala} & 0.98 & 0.96 & 0.97 & 2016-17\\
\hline
\makecell{$A_1$: PT \\ $A_2$: team \\ $A_3$: player\_game\_stats \\ $A_4$: player} & \makecell{($A_2$.team\_id)=($A_1$.prov\_game\_home\_\_id)\\($A_1$.prov\_game\_game\_\_date)=($A_3$.game\_date)\\($A_1$.prov\_game\_home\_\_id)=($A_3$.home\_id)\\($A_4$.player\_id)=($A_3$.player\_id)} & \makecell{$A_1$.prov\_season\_season\_\_type=regular season \\ $A_4$.player\_name=Shaun Livingston} & 0.97 & 0.94 & 0.95 & 2016-17\\
\hline
\makecell{$A_1$: PT \\ $A_2$: team\_game\_stats} & \makecell{($A_1$.prov\_game\_game\_\_date)=($A_2$.game\_date)\\($A_1$.prov\_game\_home\_\_id)=($A_2$.home\_id)\\($A_1$.prov\_team\_team\_\_id)=($A_2$.team\_id)} & \makecell{$A_1$.prov\_season\_season\_\_type=regular season \\ $A_2$.fg\_three\_apct<0.31 \\ $A_2$.points<121.0} & 0.95 & 0.89 & 0.92 & 2012-13\\
\hline
\makecell{$A_1$: PT \\ $A_2$: team \\ $A_3$: team\_game\_stats} & \makecell{($A_2$.team\_id)=($A_1$.prov\_game\_away\_\_id)\\($A_1$.prov\_game\_game\_\_date)=($A_3$.game\_date)\\($A_1$.prov\_game\_home\_\_id)=($A_3$.home\_id)\\($A_1$.prov\_team\_team\_\_id)=($A_3$.team\_id)} & \makecell{$A_1$.prov\_season\_season\_\_type=regular season \\ $A_3$.fg\_three\_apct<0.31 \\ $A_3$.points<121.0} & 0.95 & 0.89 & 0.92 & 2012-13\\
\hline
\makecell{$A_1$: PT \\ $A_2$: team \\ $A_3$: team\_game\_stats} & \makecell{($A_2$.team\_id)=($A_1$.prov\_game\_home\_\_id)\\($A_1$.prov\_game\_game\_\_date)=($A_3$.game\_date)\\($A_1$.prov\_game\_home\_\_id)=($A_3$.home\_id)\\($A_1$.prov\_team\_team\_\_id)=($A_3$.team\_id)} & \makecell{$A_1$.prov\_season\_season\_\_type=regular season \\ $A_3$.fg\_three\_apct<0.31 \\ $A_3$.points<121.0} & 0.95 & 0.89 & 0.92 & 2012-13\\
\hline
\makecell{$A_1$: PT \\ $A_2$: team \\ $A_3$: team\_game\_stats} & \makecell{($A_2$.team\_id)=($A_1$.prov\_game\_home\_\_id)\\($A_1$.prov\_game\_game\_\_date)=($A_3$.game\_date)\\($A_1$.prov\_game\_home\_\_id)=($A_3$.home\_id)\\($A_1$.prov\_team\_team\_\_id)=($A_3$.team\_id)} & \makecell{$A_1$.prov\_season\_season\_\_type=regular season \\ $A_3$.assistpoints>55.0 \\ $A_3$.fg\_three\_apct>0.25 \\ $A_3$.points>105.0} & 0.92 & 0.9 & 0.91 & 2016-17\\
\hline
\makecell{$A_1$: PT \\ $A_2$: team \\ $A_3$: player\_game\_stats \\ $A_4$: player} & \makecell{($A_2$.team\_id)=($A_1$.prov\_game\_away\_\_id)\\($A_1$.prov\_game\_game\_\_date)=($A_3$.game\_date)\\($A_1$.prov\_game\_home\_\_id)=($A_3$.home\_id)\\($A_4$.player\_id)=($A_3$.player\_id)} & \makecell{$A_1$.prov\_season\_season\_\_type=regular season \\ $A_4$.player\_name=Draymond Green \\ $A_3$.minutes>13.735 \\ $A_3$.shotqualityavg>0.475} & 0.87 & 0.9 & 0.88 & 2016-17\\
\hline
\makecell{$A_1$: PT \\ $A_2$: team \\ $A_3$: player\_game\_stats \\ $A_4$: player} & \makecell{($A_2$.team\_id)=($A_1$.prov\_game\_home\_\_id)\\($A_1$.prov\_game\_game\_\_date)=($A_3$.game\_date)\\($A_1$.prov\_game\_home\_\_id)=($A_3$.home\_id)\\($A_4$.player\_id)=($A_3$.player\_id)} & \makecell{$A_1$.prov\_season\_season\_\_type=regular season \\ $A_4$.player\_name=Draymond Green \\ $A_3$.minutes>13.735 \\ $A_3$.shotqualityavg>0.475} & 0.87 & 0.9 & 0.88 & 2016-17\\
\hline
\makecell{$A_1$: PT \\ $A_2$: team \\ $A_3$: player\_game\_stats \\ $A_4$: player} & \makecell{($A_2$.team\_id)=($A_1$.prov\_game\_away\_\_id)\\($A_1$.prov\_game\_game\_\_date)=($A_3$.game\_date)\\($A_1$.prov\_game\_home\_\_id)=($A_3$.home\_id)\\($A_4$.player\_id)=($A_3$.player\_id)} & \makecell{$A_1$.prov\_game\_home\_\_points<115.5 \\ $A_1$.prov\_season\_season\_\_type=regular season \\ $A_4$.player\_name=Draymond Green \\ $A_3$.minutes<27.72} & 0.89 & 0.87 & 0.88 & 2012-13\\
\hline
\makecell{$A_1$: PT \\ $A_2$: team \\ $A_3$: team\_game\_stats} & \makecell{($A_2$.team\_id)=($A_1$.prov\_game\_away\_\_id)\\($A_1$.prov\_game\_game\_\_date)=($A_3$.game\_date)\\($A_1$.prov\_game\_home\_\_id)=($A_3$.home\_id)\\($A_1$.prov\_team\_team\_\_id)=($A_3$.team\_id)} & \makecell{$A_1$.prov\_season\_season\_\_type=regular season \\ $A_3$.assistpoints>55.0 \\ $A_3$.fg\_three\_m>9.0 \\ $A_3$.points>105.0} & 0.88 & 0.84 & 0.85 & 2016-17\\
\hline
\makecell{$A_1$: PT \\ $A_2$: team\_game\_stats} & \makecell{($A_1$.prov\_game\_game\_\_date)=($A_2$.game\_date)\\($A_1$.prov\_game\_home\_\_id)=($A_2$.home\_id)\\($A_1$.prov\_team\_team\_\_id)=($A_2$.team\_id)} & \makecell{$A_1$.prov\_season\_season\_\_type=regular season \\ $A_2$.assistpoints>55.0 \\ $A_2$.fg\_three\_m>9.0 \\ $A_2$.points>105.0} & 0.88 & 0.84 & 0.85 & 2016-17\\
\hline
\makecell{$A_1$: PT \\ $A_2$: team \\ $A_3$: team\_game\_stats} & \makecell{($A_2$.team\_id)=($A_1$.prov\_game\_away\_\_id)\\($A_1$.prov\_game\_game\_\_date)=($A_3$.game\_date)\\($A_1$.prov\_game\_home\_\_id)=($A_3$.home\_id)\\($A_1$.prov\_team\_team\_\_id)=($A_3$.team\_id)} & \makecell{$A_1$.prov\_season\_season\_\_type=regular season \\ $A_3$.assistpoints<67.0 \\ $A_3$.fg\_three\_m<11.0 \\ $A_3$.points<121.0} & 0.92 & 0.77 & 0.84 & 2012-13\\
\hline
\makecell{$A_1$: PT \\ $A_2$: team \\ $A_3$: team\_game\_stats} & \makecell{($A_2$.team\_id)=($A_1$.prov\_game\_home\_\_id)\\($A_1$.prov\_game\_game\_\_date)=($A_3$.game\_date)\\($A_1$.prov\_game\_home\_\_id)=($A_3$.home\_id)\\($A_1$.prov\_team\_team\_\_id)=($A_3$.team\_id)} & \makecell{$A_1$.prov\_season\_season\_\_type=regular season \\ $A_3$.assistpoints<67.0 \\ $A_3$.fg\_three\_m<11.0 \\ $A_3$.points<121.0} & 0.92 & 0.77 & 0.84 & 2012-13\\
\hline
\makecell{$A_1$: PT \\ $A_2$: team \\ $A_3$: player\_game\_stats \\ $A_4$: player} & \makecell{($A_2$.team\_id)=($A_1$.prov\_game\_home\_\_id)\\($A_1$.prov\_game\_game\_\_date)=($A_3$.game\_date)\\($A_1$.prov\_game\_home\_\_id)=($A_3$.home\_id)\\($A_4$.player\_id)=($A_3$.player\_id)} & \makecell{$A_1$.prov\_game\_home\_\_points>100.25 \\ $A_1$.prov\_season\_season\_\_type=regular season \\ $A_4$.player\_name=Stephen Curry \\ $A_3$.minutes<37.4275} & 0.88 & 0.79 & 0.83 & 2016-17\\
\hline

\end{tabular}
}
\label{tab:tops-nba4}
\caption{$\query_{nba4}$ Top-20 patterns}
\end{figure*}

\begin{figure*}[!h]
\centering
{\scriptsize
\centering
\begin{tabular}{|c|P{25em}|P{20em}|c|c|c|P{5em}|}
\hline
\cellcolor{grey}\textbf{Nodes} & \cellcolor{grey}\textbf{Edges} & \cellcolor{grey}\textbf{Pattern Desc} & \cellcolor{grey}\textbf{Precision} & \cellcolor{grey}\textbf{Recall} & \cellcolor{grey}\textbf{\abbrF} & \cellcolor{grey}\textbf{Primary Tuple} \\

\makecell{$A_1$: PT \\ $A_2$: team \\ $A_3$: player\_salary} & \makecell{($A_2$.team\_id)=($A_1$.prov\_game\_winner\_\_id)\\($A_1$.prov\_player\_player\_\_id)=($A_3$.player\_id)\\($A_1$.prov\_season\_season\_\_id)=($A_3$.season\_id)} & \makecell{$A_1$.prov\_season\_season\_\_type=regular season \\ $A_3$.salary>1112880.0} & 1.0 & 1.0 & 1.0 & 2014-15\\
\hline
\makecell{$A_1$: PT \\ $A_2$: team \\ $A_3$: player\_salary} & \makecell{($A_2$.team\_id)=($A_1$.prov\_game\_away\_\_id)\\($A_1$.prov\_player\_player\_\_id)=($A_3$.player\_id)\\($A_1$.prov\_season\_season\_\_id)=($A_3$.season\_id)} & \makecell{$A_1$.prov\_season\_season\_\_type=regular season \\ $A_3$.salary>1112880.0} & 1.0 & 1.0 & 1.0 & 2014-15\\
\hline
\makecell{$A_1$: PT \\ $A_2$: team \\ $A_3$: player\_salary} & \makecell{($A_2$.team\_id)=($A_1$.prov\_game\_home\_\_id)\\($A_1$.prov\_player\_player\_\_id)=($A_3$.player\_id)\\($A_1$.prov\_season\_season\_\_id)=($A_3$.season\_id)} & \makecell{$A_1$.prov\_season\_season\_\_type=regular season \\ $A_3$.salary>1112880.0} & 1.0 & 1.0 & 1.0 & 2014-15\\
\hline
\makecell{$A_1$: PT \\ $A_2$: team \\ $A_3$: player\_salary} & \makecell{($A_2$.team\_id)=($A_1$.prov\_game\_winner\_\_id)\\($A_1$.prov\_player\_player\_\_id)=($A_3$.player\_id)\\($A_1$.prov\_season\_season\_\_id)=($A_3$.season\_id)} & \makecell{$A_1$.prov\_player\_\_game\_\_stats\_minutes<42.6675 \\ $A_1$.prov\_season\_season\_\_type=regular season \\ $A_3$.salary>1112880.0} & 1.0 & 0.83 & 0.91 & 2014-15\\
\hline
\makecell{$A_1$: PT \\ $A_2$: team \\ $A_3$: player\_salary} & \makecell{($A_2$.team\_id)=($A_1$.prov\_game\_winner\_\_id)\\($A_1$.prov\_player\_player\_\_id)=($A_3$.player\_id)\\($A_1$.prov\_season\_season\_\_id)=($A_3$.season\_id)} & \makecell{$A_1$.prov\_game\_away\_\_points>86.75 \\ $A_1$.prov\_player\_\_game\_\_stats\_efgpct>0.38 \\ $A_1$.prov\_season\_season\_\_type=regular season \\ $A_3$.salary>1112880.0} & 1.0 & 0.72 & 0.84 & 2014-15\\
\hline
\makecell{$A_1$: PT \\ $A_2$: team \\ $A_3$: player\_salary} & \makecell{($A_2$.team\_id)=($A_1$.prov\_game\_home\_\_id)\\($A_1$.prov\_player\_player\_\_id)=($A_3$.player\_id)\\($A_1$.prov\_season\_season\_\_id)=($A_3$.season\_id)} & \makecell{$A_1$.prov\_game\_away\_\_points>86.75 \\ $A_1$.prov\_player\_\_game\_\_stats\_efgpct>0.38 \\ $A_1$.prov\_season\_season\_\_type=regular season \\ $A_3$.salary>1112880.0} & 1.0 & 0.72 & 0.84 & 2014-15\\
\hline
\makecell{$A_1$: PT \\ $A_2$: team \\ $A_3$: player\_salary} & \makecell{($A_2$.team\_id)=($A_1$.prov\_game\_away\_\_id)\\($A_1$.prov\_player\_player\_\_id)=($A_3$.player\_id)\\($A_1$.prov\_season\_season\_\_id)=($A_3$.season\_id)} & \makecell{$A_1$.prov\_game\_away\_\_points>86.75 \\ $A_1$.prov\_player\_\_game\_\_stats\_minutes>36.3425 \\ $A_1$.prov\_season\_season\_\_type=regular season \\ $A_3$.salary>1112880.0} & 1.0 & 0.71 & 0.83 & 2014-15\\
\hline
\makecell{$A_1$: PT \\ $A_2$: team \\ $A_3$: player\_salary} & \makecell{($A_2$.team\_id)=($A_1$.prov\_game\_away\_\_id)\\($A_1$.prov\_player\_player\_\_id)=($A_3$.player\_id)\\($A_1$.prov\_season\_season\_\_id)=($A_3$.season\_id)} & \makecell{$A_1$.prov\_game\_away\_\_points<102.0 \\ $A_1$.prov\_player\_\_game\_\_stats\_efgpct<0.58 \\ $A_1$.prov\_season\_season\_\_type=regular season \\ $A_3$.salary<2008748.0} & 1.0 & 0.66 & 0.79 & 2013-14\\
\hline
\makecell{$A_1$: PT \\ $A_2$: team \\ $A_3$: player\_salary} & \makecell{($A_2$.team\_id)=($A_1$.prov\_game\_home\_\_id)\\($A_1$.prov\_player\_player\_\_id)=($A_3$.player\_id)\\($A_1$.prov\_season\_season\_\_id)=($A_3$.season\_id)} & \makecell{$A_1$.prov\_game\_away\_\_points<102.0 \\ $A_1$.prov\_player\_\_game\_\_stats\_efgpct<0.58 \\ $A_1$.prov\_season\_season\_\_type=regular season \\ $A_3$.salary<2008748.0} & 1.0 & 0.66 & 0.79 & 2013-14\\
\hline
\makecell{$A_1$: PT \\ $A_2$: team \\ $A_3$: team\_game\_stats \\ $A_4$: team} & \makecell{($A_2$.team\_id)=($A_1$.prov\_game\_winner\_\_id)\\($A_1$.prov\_game\_game\_\_date)=($A_3$.game\_date)\\($A_1$.prov\_game\_home\_\_id)=($A_3$.home\_id)\\($A_4$.team\_id)=($A_3$.team\_id)} & \makecell{$A_1$.prov\_player\_\_game\_\_stats\_usage<22.92 \\ $A_1$.prov\_season\_season\_\_type=regular season \\ $A_4$.team=CHI \\ $A_3$.assisted\_two\_spct>0.5} & 0.72 & 0.84 & 0.77 & 2013-14\\
\hline
\makecell{$A_1$: PT \\ $A_2$: team \\ $A_3$: team\_game\_stats \\ $A_4$: team} & \makecell{($A_2$.team\_id)=($A_1$.prov\_game\_home\_\_id)\\($A_1$.prov\_game\_game\_\_date)=($A_3$.game\_date)\\($A_1$.prov\_game\_home\_\_id)=($A_3$.home\_id)\\($A_4$.team\_id)=($A_3$.team\_id)} & \makecell{$A_1$.prov\_player\_\_game\_\_stats\_usage<22.92 \\ $A_1$.prov\_season\_season\_\_type=regular season \\ $A_4$.team=CHI \\ $A_3$.assisted\_two\_spct>0.5} & 0.72 & 0.84 & 0.77 & 2013-14\\
\hline
\makecell{$A_1$: PT \\ $A_2$: team \\ $A_3$: play\_for \\ $A_4$: team} & \makecell{($A_2$.team\_id)=($A_1$.prov\_game\_home\_\_id)\\($A_1$.prov\_player\_player\_\_id)=($A_3$.player\_id)\\($A_4$.team\_id)=($A_3$.team\_id)} & \makecell{$A_1$.prov\_player\_\_game\_\_stats\_usage<22.92 \\ $A_1$.prov\_season\_season\_\_type=regular season \\ $A_3$.date\_end=2019-04-09 \\ $A_4$.team=PHI} & 0.63 & 0.93 & 0.75 & 2013-14\\
\hline
\makecell{$A_1$: PT \\ $A_2$: team \\ $A_3$: play\_for \\ $A_4$: team} & \makecell{($A_2$.team\_id)=($A_1$.prov\_game\_winner\_\_id)\\($A_1$.prov\_player\_player\_\_id)=($A_3$.player\_id)\\($A_4$.team\_id)=($A_3$.team\_id)} & \makecell{$A_1$.prov\_player\_\_game\_\_stats\_usage<22.92 \\ $A_1$.prov\_season\_season\_\_type=regular season \\ $A_3$.date\_end=2019-04-09 \\ $A_4$.team=PHI} & 0.63 & 0.93 & 0.75 & 2013-14\\
\hline
\makecell{$A_1$: PT \\ $A_2$: team \\ $A_3$: team\_game\_stats \\ $A_4$: team} & \makecell{($A_2$.team\_id)=($A_1$.prov\_game\_away\_\_id)\\($A_1$.prov\_game\_game\_\_date)=($A_3$.game\_date)\\($A_1$.prov\_game\_home\_\_id)=($A_3$.home\_id)\\($A_4$.team\_id)=($A_3$.team\_id)} & \makecell{$A_1$.prov\_player\_\_game\_\_stats\_usage<22.92 \\ $A_1$.prov\_season\_season\_\_type=regular season \\ $A_4$.team=CHI} & 0.63 & 0.93 & 0.75 & 2013-14\\
\hline
\makecell{} & \makecell{A\_1} & \makecell{$A_1$.prov\_player\_\_game\_\_stats\_usage<22.92 \\ $A_1$.prov\_season\_season\_\_type=regular season} & 0.63 & 0.93 & 0.75 & 2013-14\\
\hline
\makecell{$A_1$: PT \\ $A_2$: team \\ $A_3$: play\_for \\ $A_4$: team} & \makecell{($A_2$.team\_id)=($A_1$.prov\_game\_away\_\_id)\\($A_1$.prov\_player\_player\_\_id)=($A_3$.player\_id)\\($A_4$.team\_id)=($A_3$.team\_id)} & \makecell{$A_1$.prov\_player\_\_game\_\_stats\_usage<22.92 \\ $A_1$.prov\_season\_season\_\_type=regular season \\ $A_3$.date\_end=2017-04-12 \\ $A_4$.team=CHI} & 0.63 & 0.93 & 0.75 & 2013-14\\
\hline
\makecell{$A_1$: PT \\ $A_2$: team \\ $A_3$: player\_salary} & \makecell{($A_2$.team\_id)=($A_1$.prov\_game\_home\_\_id)\\($A_1$.prov\_player\_player\_\_id)=($A_3$.player\_id)\\($A_1$.prov\_season\_season\_\_id)=($A_3$.season\_id)} & \makecell{$A_1$.prov\_player\_\_game\_\_stats\_efgpct<0.58 \\ $A_1$.prov\_player\_\_game\_\_stats\_minutes<42.6675 \\ $A_1$.prov\_season\_season\_\_type=regular season \\ $A_3$.salary>1112880.0} & 1.0 & 0.57 & 0.73 & 2014-15\\
\hline
\makecell{$A_1$: PT \\ $A_2$: team \\ $A_3$: player\_salary} & \makecell{($A_2$.team\_id)=($A_1$.prov\_game\_winner\_\_id)\\($A_1$.prov\_player\_player\_\_id)=($A_3$.player\_id)\\($A_1$.prov\_season\_season\_\_id)=($A_3$.season\_id)} & \makecell{$A_1$.prov\_player\_\_game\_\_stats\_efgpct<0.58 \\ $A_1$.prov\_player\_\_game\_\_stats\_minutes>36.3425 \\ $A_1$.prov\_season\_season\_\_type=regular season \\ $A_3$.salary<2008748.0} & 1.0 & 0.57 & 0.72 & 2013-14\\
\hline
\makecell{$A_1$: PT \\ $A_2$: team \\ $A_3$: team\_game\_stats \\ $A_4$: team} & \makecell{($A_2$.team\_id)=($A_1$.prov\_game\_away\_\_id)\\($A_1$.prov\_game\_game\_\_date)=($A_3$.game\_date)\\($A_1$.prov\_game\_home\_\_id)=($A_3$.home\_id)\\($A_4$.team\_id)=($A_3$.team\_id)} & \makecell{$A_1$.prov\_player\_\_game\_\_stats\_usage>18.85 \\ $A_1$.prov\_season\_season\_\_type=regular season \\ $A_4$.team=CHI \\ $A_3$.offposs>88.0} & 0.75 & 0.66 & 0.7 & 2014-15\\
\hline
\makecell{$A_1$: PT \\ $A_2$: team \\ $A_3$: team\_game\_stats \\ $A_4$: team} & \makecell{($A_2$.team\_id)=($A_1$.prov\_game\_winner\_\_id)\\($A_1$.prov\_game\_game\_\_date)=($A_3$.game\_date)\\($A_1$.prov\_game\_home\_\_id)=($A_3$.home\_id)\\($A_4$.team\_id)=($A_3$.team\_id)} & \makecell{$A_1$.prov\_player\_\_game\_\_stats\_usage>18.85 \\ $A_1$.prov\_season\_season\_\_type=regular season \\ $A_4$.team=CHI \\ $A_3$.assisted\_two\_spct<0.58} & 0.88 & 0.58 & 0.7 & 2014-15\\
\hline

\end{tabular}
}
\label{tab:tops-nba5}
\caption{$\query_{nba5}$ Top-20 patterns}
\end{figure*}

\subsection{Top-20 explanations for Case Study queries}
We present the detailed top-20 explanations for each case study queries mentioned in \Cref{sec:case-study}.

\textbf{Nodes} are identifiers for relations presented in the join graph. \textbf{Edges} are the edge details with their join conditions. \textbf{Pattern Desc} are pattern descriptions. Each line represents a predicate of the pattern. The text descriptions that contain ``\texttt{prov\_}" prefix suggest that these predicates are generated from PT node followed by the name of the relation that was part of the user query, then followed by the attribute from which the constant value was generated. For example \texttt{$A_1$.prov\_player\_\_game\_\_stats\_minutes<36} should be interpreted as \textit{"minutes coming from player\_game\_stats table in PT smaller than 36}. \textbf{Precision, Recall and \abbrF} are calculated based on \textbf{Primary Tuple} which was identified using one of two user question constants.

\label{sec:top20}
\begin{figure*}[!h]
\centering
{\scriptsize
\centering
\begin{tabular}{|c|P{22em}|P{25em}|c|c|c|P{5em}|}
\hline
\cellcolor{grey}\textbf{Nodes} & \cellcolor{grey}\textbf{Edges} & \cellcolor{grey}\textbf{Pattern Desc} & \cellcolor{grey}\textbf{Precision} & \cellcolor{grey}\textbf{Recall} & \cellcolor{grey}\textbf{\abbrF} & \cellcolor{grey}\textbf{Primary Tuple} \\
\hline
\makecell{$A_1$: PT \\ $A_2$: patients} & \makecell{($A_2$.subject\_id)=($A_1$.prov\_diagnoses\_subject\_\_id)} & \makecell{$A_2$.expire\_flag=1} & 0.65 & 0.72 & 0.68 & 2\\
\hline
\makecell{$A_1$: PT \\ $A_2$: patients \\ $A_3$: patients\_admit\_info \\ $A_4$: patients} & \makecell{($A_2$.subject\_id)=($A_1$.prov\_diagnoses\_subject\_\_id)\\($A_1$.prov\_admissions\_hadm\_\_id)=($A_3$.hadm\_id)\\($A_4$.subject\_id)=($A_3$.subject\_id)} & \makecell{$A_2$.expire\_flag=1 \\ $A_4$.expire\_flag=1} & 0.65 & 0.72 & 0.68 & 2\\
\hline
\makecell{$A_1$: PT \\ $A_2$: patients\_admit\_info \\ $A_3$: patients} & \makecell{($A_1$.prov\_admissions\_hadm\_\_id)=($A_2$.hadm\_id)\\($A_3$.subject\_id)=($A_2$.subject\_id)} & \makecell{$A_3$.expire\_flag=1} & 0.65 & 0.72 & 0.68 & 2\\
\hline
\makecell{$A_1$: PT \\ $A_2$: procedures \\ $A_3$: patients} & \makecell{($A_1$.prov\_admissions\_hadm\_\_id)=($A_2$.hadm\_id)\\($A_3$.subject\_id)=($A_2$.subject\_id)} & \makecell{$A_3$.expire\_flag=1} & 0.66 & 0.71 & 0.68 & 2\\
\hline
\makecell{$A_1$: PT \\ $A_2$: patients \\ $A_3$: icustays \\ $A_4$: patients} & \makecell{($A_2$.subject\_id)=($A_1$.prov\_diagnoses\_subject\_\_id)\\($A_1$.prov\_admissions\_hadm\_\_id)=($A_3$.hadm\_id)\\($A_4$.subject\_id)=($A_3$.subject\_id)} & \makecell{$A_2$.expire\_flag=1 \\ $A_4$.expire\_flag=1} & 0.65 & 0.72 & 0.68 & 2\\
\hline
\makecell{$A_1$: PT \\ $A_2$: icustays \\ $A_3$: patients} & \makecell{($A_1$.prov\_admissions\_hadm\_\_id)=($A_2$.hadm\_id)\\($A_3$.subject\_id)=($A_2$.subject\_id)} & \makecell{$A_3$.expire\_flag=1} & 0.65 & 0.72 & 0.68 & 2\\
\hline
\makecell{$A_1$: PT \\ $A_2$: patients \\ $A_3$: procedures \\ $A_4$: admissions} & \makecell{($A_2$.subject\_id)=($A_1$.prov\_diagnoses\_subject\_\_id)\\($A_2$.subject\_id)=($A_3$.subject\_id)\\($A_4$.hadm\_id)=($A_3$.hadm\_id)} & \makecell{$A_4$.hospital\_stay\_length<23.0 \\ $A_2$.expire\_flag=1} & 0.65 & 0.66 & 0.65 & 2\\
\hline
\makecell{$A_1$: PT \\ $A_2$: procedures \\ $A_3$: patients} & \makecell{($A_1$.prov\_admissions\_hadm\_\_id)=($A_2$.hadm\_id)\\($A_3$.subject\_id)=($A_2$.subject\_id)} & \makecell{$A_1$.prov\_admissions\_hospital\_\_stay\_\_length<24.0 \\ $A_3$.expire\_flag=1} & 0.66 & 0.63 & 0.65 & 2\\
\hline
\makecell{$A_1$: PT \\ $A_2$: patients \\ $A_3$: icustays \\ $A_4$: admissions} & \makecell{($A_2$.subject\_id)=($A_1$.prov\_diagnoses\_subject\_\_id)\\($A_2$.subject\_id)=($A_3$.subject\_id)\\($A_4$.hadm\_id)=($A_3$.hadm\_id)} & \makecell{$A_4$.hospital\_stay\_length<16.0 \\ $A_2$.expire\_flag=1} & 0.65 & 0.61 & 0.63 & 2\\
\hline
\makecell{$A_1$: PT \\ $A_2$: patients \\ $A_3$: icustays \\ $A_4$: admissions} & \makecell{($A_2$.subject\_id)=($A_1$.prov\_diagnoses\_subject\_\_id)\\($A_2$.subject\_id)=($A_3$.subject\_id)\\($A_4$.hadm\_id)=($A_3$.hadm\_id)} & \makecell{$A_1$.prov\_admissions\_admission\_\_type=EMERGENCY \\ $A_4$.admission\_type=EMERGENCY} & 0.5 & 0.81 & 0.62 & 13\\
\hline
\makecell{} & \makecell{A\_1} & \makecell{$A_1$.prov\_admissions\_admission\_\_type=EMERGENCY} & 0.5 & 0.81 & 0.62 & 13\\
\hline
\makecell{$A_1$: PT \\ $A_2$: patients \\ $A_3$: patients\_admit\_info \\ $A_4$: admissions} & \makecell{($A_2$.subject\_id)=($A_1$.prov\_diagnoses\_subject\_\_id)\\($A_2$.subject\_id)=($A_3$.subject\_id)\\($A_4$.hadm\_id)=($A_3$.hadm\_id)} & \makecell{$A_1$.prov\_admissions\_admission\_\_type=EMERGENCY \\ $A_4$.admission\_type=EMERGENCY} & 0.5 & 0.81 & 0.62 & 13\\
\hline
\makecell{$A_1$: PT \\ $A_2$: patients \\ $A_3$: procedures \\ $A_4$: admissions} & \makecell{($A_2$.subject\_id)=($A_1$.prov\_diagnoses\_subject\_\_id)\\($A_2$.subject\_id)=($A_3$.subject\_id)\\($A_4$.hadm\_id)=($A_3$.hadm\_id)} & \makecell{$A_1$.prov\_admissions\_admission\_\_type=EMERGENCY \\ $A_4$.admission\_type=EMERGENCY} & 0.5 & 0.8 & 0.61 & 13\\
\hline
\makecell{$A_1$: PT \\ $A_2$: patients} & \makecell{($A_2$.subject\_id)=($A_1$.prov\_diagnoses\_subject\_\_id)} & \makecell{$A_1$.prov\_admissions\_admission\_\_type=EMERGENCY \\ $A_2$.expire\_flag=1} & 0.63 & 0.59 & 0.61 & 2\\
\hline
\makecell{$A_1$: PT \\ $A_2$: patients \\ $A_3$: icustays \\ $A_4$: admissions} & \makecell{($A_2$.subject\_id)=($A_1$.prov\_diagnoses\_subject\_\_id)\\($A_2$.subject\_id)=($A_3$.subject\_id)\\($A_4$.hadm\_id)=($A_3$.hadm\_id)} & \makecell{$A_4$.insurance=Medicare} & 0.54 & 0.67 & 0.6 & 13\\
\hline
\makecell{$A_1$: PT \\ $A_2$: patients \\ $A_3$: procedures \\ $A_4$: admissions} & \makecell{($A_2$.subject\_id)=($A_1$.prov\_diagnoses\_subject\_\_id)\\($A_2$.subject\_id)=($A_3$.subject\_id)\\($A_4$.hadm\_id)=($A_3$.hadm\_id)} & \makecell{$A_4$.insurance=Medicare} & 0.54 & 0.66 & 0.59 & 13\\
\hline
\makecell{$A_1$: PT \\ $A_2$: patients \\ $A_3$: patients\_admit\_info \\ $A_4$: admissions} & \makecell{($A_2$.subject\_id)=($A_1$.prov\_diagnoses\_subject\_\_id)\\($A_2$.subject\_id)=($A_3$.subject\_id)\\($A_4$.hadm\_id)=($A_3$.hadm\_id)} & \makecell{$A_4$.hospital\_stay\_length>4.0 \\ $A_3$.language=ENGL} & 0.54 & 0.63 & 0.58 & 13\\
\hline
\makecell{$A_1$: PT \\ $A_2$: patients \\ $A_3$: procedures \\ $A_4$: admissions} & \makecell{($A_2$.subject\_id)=($A_1$.prov\_diagnoses\_subject\_\_id)\\($A_2$.subject\_id)=($A_3$.subject\_id)\\($A_4$.hadm\_id)=($A_3$.hadm\_id)} & \makecell{$A_1$.prov\_admissions\_marital\_\_status=MARRIED \\ $A_4$.marital\_status=MARRIED} & 0.59 & 0.57 & 0.58 & 2\\
\hline
\makecell{} & \makecell{A\_1} & \makecell{$A_1$.prov\_admissions\_marital\_\_status=MARRIED} & 0.59 & 0.57 & 0.58 & 2\\
\hline
\makecell{$A_1$: PT \\ $A_2$: procedures \\ $A_3$: patients} & \makecell{($A_1$.prov\_admissions\_hadm\_\_id)=($A_2$.hadm\_id)\\($A_3$.subject\_id)=($A_2$.subject\_id)} & \makecell{$A_1$.prov\_admissions\_hospital\_\_stay\_\_length<21.0 \\ $A_3$.expire\_flag=0} & 0.65 & 0.52 & 0.58 & 13\\
\hline
\end{tabular}
}
\label{tab:tops-mimic1}
\caption{$\query_{mimic1}$ Top-20 patterns}
\end{figure*}

\begin{figure*}[!h]
\centering
{\scriptsize
\centering
\begin{tabular}{|c|P{20em}|P{29.5em}|c|c|c|P{5em}|}
\hline
\cellcolor{grey}\textbf{Nodes} & \cellcolor{grey}\textbf{Edges} & \cellcolor{grey}\textbf{Pattern Desc} & \cellcolor{grey}\textbf{Precision} & \cellcolor{grey}\textbf{Recall} & \cellcolor{grey}\textbf{\abbrF} & \cellcolor{grey}\textbf{Primary Tuple} \\
\makecell{} & \makecell{A\_1} & \makecell{$A_1$.prov\_admissions\_admission\_\_type=EMERGENCY} & 0.85 & 0.84 & 0.85 & Medicare\\
\hline
\makecell{$A_1$: PT \\ $A_2$: procedures \\ $A_3$: patients} & \makecell{($A_1$.prov\_admissions\_hadm\_\_id)=($A_2$.hadm\_id)\\($A_3$.subject\_id)=($A_2$.subject\_id)} & \makecell{$A_3$.expire\_flag=1} & 0.91 & 0.57 & 0.7 & Medicare\\
\hline
\makecell{$A_1$: PT \\ $A_2$: patients\_admit\_info \\ $A_3$: patients} & \makecell{($A_1$.prov\_admissions\_hadm\_\_id)=($A_2$.hadm\_id)\\($A_3$.subject\_id)=($A_2$.subject\_id)} & \makecell{$A_3$.expire\_flag=1} & 0.91 & 0.57 & 0.7 & Medicare\\
\hline
\makecell{$A_1$: PT \\ $A_2$: icustays \\ $A_3$: patients} & \makecell{($A_1$.prov\_admissions\_hadm\_\_id)=($A_2$.hadm\_id)\\($A_3$.subject\_id)=($A_2$.subject\_id)} & \makecell{$A_1$.prov\_admissions\_admission\_\_type=EMERGENCY \\ $A_3$.expire\_flag=1} & 0.91 & 0.5 & 0.65 & Medicare\\
\hline
\makecell{$A_1$: PT \\ $A_2$: procedures \\ $A_3$: patients} & \makecell{($A_1$.prov\_admissions\_hadm\_\_id)=($A_2$.hadm\_id)\\($A_3$.subject\_id)=($A_2$.subject\_id)} & \makecell{$A_3$.gender=M} & 0.82 & 0.53 & 0.65 & Medicare\\
\hline
\makecell{$A_1$: PT \\ $A_2$: patients\_admit\_info \\ $A_3$: patients} & \makecell{($A_1$.prov\_admissions\_hadm\_\_id)=($A_2$.hadm\_id)\\($A_3$.subject\_id)=($A_2$.subject\_id)} & \makecell{$A_3$.gender=M} & 0.82 & 0.52 & 0.64 & Medicare\\
\hline
\makecell{$A_1$: PT \\ $A_2$: icustays \\ $A_3$: patients} & \makecell{($A_1$.prov\_admissions\_hadm\_\_id)=($A_2$.hadm\_id)\\($A_3$.subject\_id)=($A_2$.subject\_id)} & \makecell{$A_3$.gender=M} & 0.82 & 0.52 & 0.64 & Medicare\\
\hline
\makecell{} & \makecell{A\_1} & \makecell{$A_1$.prov\_admissions\_marital\_\_status=MARRIED} & 0.93 & 0.45 & 0.61 & Medicare\\
\hline
\makecell{$A_1$: PT \\ $A_2$: procedures \\ $A_3$: patients} & \makecell{($A_1$.prov\_admissions\_hadm\_\_id)=($A_2$.hadm\_id)\\($A_3$.subject\_id)=($A_2$.subject\_id)} & \makecell{$A_1$.prov\_admissions\_admission\_\_type=EMERGENCY \\ $A_1$.prov\_admissions\_hospital\_\_stay\_\_length<22.0 \\ $A_3$.expire\_flag=1} & 0.92 & 0.43 & 0.59 & Medicare\\
\hline
\makecell{$A_1$: PT \\ $A_2$: patients\_admit\_info \\ $A_3$: patients} & \makecell{($A_1$.prov\_admissions\_hadm\_\_id)=($A_2$.hadm\_id)\\($A_3$.subject\_id)=($A_2$.subject\_id)} & \makecell{$A_3$.gender=F \\ $A_2$.age>62.38} & 0.97 & 0.41 & 0.58 & Medicare\\
\hline
\makecell{$A_1$: PT \\ $A_2$: patients\_admit\_info \\ $A_3$: patients} & \makecell{($A_1$.prov\_admissions\_hadm\_\_id)=($A_2$.hadm\_id)\\($A_3$.subject\_id)=($A_2$.subject\_id)} & \makecell{$A_1$.prov\_admissions\_admission\_\_type=EMERGENCY \\ $A_3$.expire\_flag=1 \\ $A_2$.age>66.96} & 0.99 & 0.41 & 0.58 & Medicare\\
\hline
\makecell{$A_1$: PT \\ $A_2$: procedures \\ $A_3$: patients} & \makecell{($A_1$.prov\_admissions\_hadm\_\_id)=($A_2$.hadm\_id)\\($A_3$.subject\_id)=($A_2$.subject\_id)} & \makecell{$A_1$.prov\_admissions\_hospital\_\_stay\_\_length<20.0 \\ $A_3$.gender=F} & 0.85 & 0.41 & 0.56 & Medicare\\
\hline
\makecell{} & \makecell{A\_1} & \makecell{$A_1$.prov\_admissions\_admission\_\_type=EMERGENCY \\ $A_1$.prov\_admissions\_marital\_\_status=MARRIED} & 0.93 & 0.36 & 0.52 & Medicare\\
\hline
\makecell{$A_1$: PT \\ $A_2$: procedures \\ $A_3$: patients} & \makecell{($A_1$.prov\_admissions\_hadm\_\_id)=($A_2$.hadm\_id)\\($A_3$.subject\_id)=($A_2$.subject\_id)} & \makecell{$A_1$.prov\_admissions\_hospital\_\_stay\_\_length<17.0 \\ $A_3$.expire\_flag=0} & 0.76 & 0.38 & 0.51 & Medicare\\
\hline
\makecell{$A_1$: PT \\ $A_2$: patients\_admit\_info \\ $A_3$: patients} & \makecell{($A_1$.prov\_admissions\_hadm\_\_id)=($A_2$.hadm\_id)\\($A_3$.subject\_id)=($A_2$.subject\_id)} & \makecell{$A_1$.prov\_admissions\_hospital\_\_stay\_\_length<11.0 \\ $A_3$.expire\_flag=0 \\ $A_2$.age<68.44} & 0.5 & 0.51 & 0.51 & Medicaid\\
\hline
\makecell{} & \makecell{A\_1} & \makecell{$A_1$.prov\_admissions\_hospital\_\_stay\_\_length>4.0 \\ $A_1$.prov\_admissions\_marital\_\_status=MARRIED} & 0.93 & 0.33 & 0.49 & Medicare\\
\hline
\makecell{$A_1$: PT \\ $A_2$: icustays \\ $A_3$: patients} & \makecell{($A_1$.prov\_admissions\_hadm\_\_id)=($A_2$.hadm\_id)\\($A_3$.subject\_id)=($A_2$.subject\_id)} & \makecell{$A_1$.prov\_admissions\_admission\_\_type=EMERGENCY \\ $A_1$.prov\_admissions\_hospital\_\_stay\_\_length<14.0 \\ $A_3$.gender=M} & 0.85 & 0.34 & 0.48 & Medicare\\
\hline
\makecell{} & \makecell{A\_1} & \makecell{$A_1$.prov\_admissions\_admission\_\_location=EMERGENCY ROOM ADMIT \\ $A_1$.prov\_admissions\_admission\_\_type=EMERGENCY \\ $A_1$.prov\_admissions\_hospital\_\_stay\_\_length<12.0} & 0.85 & 0.33 & 0.48 & Medicare\\
\hline
\makecell{$A_1$: PT \\ $A_2$: icustays \\ $A_3$: patients} & \makecell{($A_1$.prov\_admissions\_hadm\_\_id)=($A_2$.hadm\_id)\\($A_3$.subject\_id)=($A_2$.subject\_id)} & \makecell{$A_2$.los>1.1336 \\ $A_3$.expire\_flag=0} & 0.77 & 0.34 & 0.47 & Medicare\\
\hline
\makecell{$A_1$: PT \\ $A_2$: icustays \\ $A_3$: patients} & \makecell{($A_1$.prov\_admissions\_hadm\_\_id)=($A_2$.hadm\_id)\\($A_3$.subject\_id)=($A_2$.subject\_id)} & \makecell{$A_1$.prov\_admissions\_hospital\_\_stay\_\_length<13.0 \\ $A_2$.los<4.41075 \\ $A_3$.gender=F} & 0.84 & 0.32 & 0.46 & Medicare\\
\hline
\end{tabular}
}
\label{tab:tops-mimic2}
\caption{$\query_{mimic2}$ Top-20 patterns}
\end{figure*}

\begin{figure*}[!h]
\centering
{\scriptsize
\centering
\begin{tabular}{|c|P{25em}|P{20em}|c|c|c|P{5em}|}
\hline
\cellcolor{grey}\textbf{Nodes} & \cellcolor{grey}\textbf{Edges} & \cellcolor{grey}\textbf{Pattern Desc} & \cellcolor{grey}\textbf{Precision} & \cellcolor{grey}\textbf{Recall} & \cellcolor{grey}\textbf{\abbrF} & \cellcolor{grey}\textbf{Primary Tuple} \\
\makecell{$A_1$: PT \\ $A_2$: patients \\ $A_3$: procedures \\ $A_4$: admissions} & \makecell{($A_2$.subject\_id)=($A_1$.prov\_icustays\_subject\_\_id)\\($A_2$.subject\_id)=($A_3$.subject\_id)\\($A_4$.hadm\_id)=($A_3$.hadm\_id)} & \makecell{$A_4$.hospital\_stay\_length>9.0 \\ $A_3$.chapter=16} & 0.8 & 0.94 & 0.87 & x>8\\
\hline
\makecell{$A_1$: PT \\ $A_2$: patients \\ $A_3$: icustays \\ $A_4$: admissions} & \makecell{($A_2$.subject\_id)=($A_1$.prov\_icustays\_subject\_\_id)\\($A_2$.subject\_id)=($A_3$.subject\_id)\\($A_4$.hadm\_id)=($A_3$.hadm\_id)} & \makecell{$A_4$.hospital\_stay\_length<6.0 \\ $A_3$.los\_group=0-1} & 0.97 & 0.77 & 0.86 & 0-1\\
\hline
\makecell{$A_1$: PT \\ $A_2$: admissions \\ $A_3$: icustays \\ $A_4$: patients} & \makecell{($A_2$.hadm\_id)=($A_1$.prov\_icustays\_hadm\_\_id)\\($A_2$.hadm\_id)=($A_3$.hadm\_id)\\($A_4$.subject\_id)=($A_3$.subject\_id)} & \makecell{$A_3$.los\_group=0-1 \\ $A_4$.expire\_flag=0} & 0.99 & 0.72 & 0.83 & 0-1\\
\hline
\makecell{$A_1$: PT \\ $A_2$: patients \\ $A_3$: admissions} & \makecell{($A_2$.subject\_id)=($A_1$.prov\_icustays\_subject\_\_id)\\($A_3$.hadm\_id)=($A_1$.prov\_icustays\_hadm\_\_id)} & \makecell{$A_1$.prov\_icustays\_dbsource=carevue \\ $A_3$.hospital\_stay\_length>8.0} & 0.87 & 0.71 & 0.78 & x>8\\
\hline
\makecell{$A_1$: PT \\ $A_2$: patients \\ $A_3$: admissions} & \makecell{($A_2$.subject\_id)=($A_1$.prov\_icustays\_subject\_\_id)\\($A_3$.hadm\_id)=($A_1$.prov\_icustays\_hadm\_\_id)} & \makecell{$A_3$.hospital\_stay\_length<16.0 \\ $A_2$.expire\_flag=0} & 0.86 & 0.7 & 0.77 & 0-1\\
\hline
\makecell{$A_1$: PT \\ $A_2$: patients \\ $A_3$: icustays \\ $A_4$: admissions} & \makecell{($A_2$.subject\_id)=($A_1$.prov\_icustays\_subject\_\_id)\\($A_2$.subject\_id)=($A_3$.subject\_id)\\($A_4$.hadm\_id)=($A_3$.hadm\_id)} & \makecell{$A_4$.hospital\_stay\_length<16.0 \\ $A_2$.expire\_flag=0} & 0.81 & 0.7 & 0.76 & 0-1\\
\hline
\makecell{$A_1$: PT \\ $A_2$: patients \\ $A_3$: icustays \\ $A_4$: admissions} & \makecell{($A_2$.subject\_id)=($A_1$.prov\_icustays\_subject\_\_id)\\($A_2$.subject\_id)=($A_3$.subject\_id)\\($A_4$.hadm\_id)=($A_3$.hadm\_id)} & \makecell{$A_1$.prov\_icustays\_dbsource=carevue \\ $A_4$.hospital\_stay\_length>9.0 \\ $A_3$.dbsource=carevue} & 0.83 & 0.69 & 0.75 & x>8\\
\hline
\makecell{$A_1$: PT \\ $A_2$: patients} & \makecell{($A_2$.subject\_id)=($A_1$.prov\_icustays\_subject\_\_id)} & \makecell{$A_2$.expire\_flag=0} & 0.65 & 0.72 & 0.68 & 0-1\\
\hline
\makecell{$A_1$: PT \\ $A_2$: admissions \\ $A_3$: patients\_admit\_info \\ $A_4$: patients} & \makecell{($A_2$.hadm\_id)=($A_1$.prov\_icustays\_hadm\_\_id)\\($A_2$.hadm\_id)=($A_3$.hadm\_id)\\($A_4$.subject\_id)=($A_3$.subject\_id)} & \makecell{$A_4$.expire\_flag=0} & 0.65 & 0.72 & 0.68 & 0-1\\
\hline
\makecell{$A_1$: PT \\ $A_2$: patients \\ $A_3$: patients\_admit\_info \\ $A_4$: admissions} & \makecell{($A_2$.subject\_id)=($A_1$.prov\_icustays\_subject\_\_id)\\($A_2$.subject\_id)=($A_3$.subject\_id)\\($A_4$.hadm\_id)=($A_3$.hadm\_id)} & \makecell{$A_4$.admission\_type=EMERGENCY \\ $A_4$.hospital\_stay\_length>8.0} & 0.7 & 0.67 & 0.68 & x>8\\
\hline
\makecell{$A_1$: PT \\ $A_2$: admissions \\ $A_3$: icustays \\ $A_4$: patients} & \makecell{($A_2$.hadm\_id)=($A_1$.prov\_icustays\_hadm\_\_id)\\($A_2$.hadm\_id)=($A_3$.hadm\_id)\\($A_4$.subject\_id)=($A_3$.subject\_id)} & \makecell{$A_3$.los<0.9754 \\ $A_4$.gender=M} & 0.99 & 0.51 & 0.67 & 0-1\\
\hline
\makecell{$A_1$: PT \\ $A_2$: admissions \\ $A_3$: icustays \\ $A_4$: patients} & \makecell{($A_2$.hadm\_id)=($A_1$.prov\_icustays\_hadm\_\_id)\\($A_2$.hadm\_id)=($A_3$.hadm\_id)\\($A_4$.subject\_id)=($A_3$.subject\_id)} & \makecell{$A_1$.prov\_icustays\_dbsource=carevue \\ $A_3$.dbsource=carevue \\ $A_3$.los<11.887 \\ $A_4$.expire\_flag=0} & 0.85 & 0.51 & 0.64 & 0-1\\
\hline
\makecell{$A_1$: PT \\ $A_2$: patients \\ $A_3$: procedures \\ $A_4$: admissions} & \makecell{($A_2$.subject\_id)=($A_1$.prov\_icustays\_subject\_\_id)\\($A_2$.subject\_id)=($A_3$.subject\_id)\\($A_4$.hadm\_id)=($A_3$.hadm\_id)} & \makecell{$A_1$.prov\_icustays\_dbsource=carevue \\ $A_4$.hospital\_stay\_length<7.0 \\ $A_2$.expire\_flag=0} & 0.96 & 0.47 & 0.63 & 0-1\\
\hline
\makecell{} & \makecell{A\_1} & \makecell{$A_1$.prov\_icustays\_dbsource=carevue} & 0.57 & 0.68 & 0.62 & 0-1\\
\hline
\makecell{$A_1$: PT \\ $A_2$: patients} & \makecell{($A_2$.subject\_id)=($A_1$.prov\_icustays\_subject\_\_id)} & \makecell{$A_1$.prov\_icustays\_dbsource=carevue} & 0.57 & 0.68 & 0.62 & 0-1\\
\hline
\makecell{$A_1$: PT \\ $A_2$: patients \\ $A_3$: admissions} & \makecell{($A_2$.subject\_id)=($A_1$.prov\_icustays\_subject\_\_id)\\($A_3$.hadm\_id)=($A_1$.prov\_icustays\_hadm\_\_id)} & \makecell{$A_3$.hospital\_stay\_length<20.0 \\ $A_2$.gender=M} & 0.75 & 0.52 & 0.62 & 0-1\\
\hline
\makecell{$A_1$: PT \\ $A_2$: patients \\ $A_3$: patients\_admit\_info \\ $A_4$: admissions} & \makecell{($A_2$.subject\_id)=($A_1$.prov\_icustays\_subject\_\_id)\\($A_2$.subject\_id)=($A_3$.subject\_id)\\($A_4$.hadm\_id)=($A_3$.hadm\_id)} & \makecell{$A_4$.admission\_type=EMERGENCY \\ $A_4$.hospital\_stay\_length<17.0} & 0.68 & 0.56 & 0.61 & 0-1\\
\hline
\makecell{$A_1$: PT \\ $A_2$: patients \\ $A_3$: procedures \\ $A_4$: admissions} & \makecell{($A_2$.subject\_id)=($A_1$.prov\_icustays\_subject\_\_id)\\($A_2$.subject\_id)=($A_3$.subject\_id)\\($A_4$.hadm\_id)=($A_3$.hadm\_id)} & \makecell{$A_4$.hospital\_stay\_length<18.0 \\ $A_2$.gender=M} & 0.69 & 0.54 & 0.6 & 0-1\\
\hline
\makecell{$A_1$: PT \\ $A_2$: patients \\ $A_3$: icustays \\ $A_4$: admissions} & \makecell{($A_2$.subject\_id)=($A_1$.prov\_icustays\_subject\_\_id)\\($A_2$.subject\_id)=($A_3$.subject\_id)\\($A_4$.hadm\_id)=($A_3$.hadm\_id)} & \makecell{$A_4$.hospital\_stay\_length<20.0 \\ $A_2$.gender=M} & 0.7 & 0.53 & 0.6 & 0-1\\
\hline
\makecell{$A_1$: PT \\ $A_2$: admissions \\ $A_3$: icustays \\ $A_4$: patients} & \makecell{($A_2$.hadm\_id)=($A_1$.prov\_icustays\_hadm\_\_id)\\($A_2$.hadm\_id)=($A_3$.hadm\_id)\\($A_4$.subject\_id)=($A_3$.subject\_id)} & \makecell{$A_3$.los<0.9574 \\ $A_4$.gender=F} & 0.99 & 0.42 & 0.58 & 0-1\\
\hline
\end{tabular}
}
\label{tab:tops-mimic3}
\caption{$\query_{mimic3}$ Top-20 patterns}
\end{figure*}

\begin{figure*}[!h]
\centering
{\scriptsize
\centering
\begin{tabular}{|c|P{20em}|P{29.5em}|c|c|c|P{5em}|}
\hline
\cellcolor{grey}\textbf{Nodes} & \cellcolor{grey}\textbf{Edges} & \cellcolor{grey}\textbf{Pattern Desc} & \cellcolor{grey}\textbf{Precision} & \cellcolor{grey}\textbf{Recall} & \cellcolor{grey}\textbf{\abbrF} & \cellcolor{grey}\textbf{Primary Tuple} \\
\makecell{$A_1$: PT \\ $A_2$: patients\_admit\_info \\ $A_3$: patients} & \makecell{($A_1$.prov\_admissions\_hadm\_\_id)=($A_2$.hadm\_id)\\($A_3$.subject\_id)=($A_2$.subject\_id)} & \makecell{$A_3$.expire\_flag=0 \\ $A_2$.age<70.86} & 0.77 & 0.77 & 0.77 & Private\\
\hline
\makecell{} & \makecell{A\_1} & \makecell{$A_1$.prov\_admissions\_admission\_\_type=EMERGENCY} & 0.65 & 0.84 & 0.73 & Medicare\\
\hline
\makecell{$A_1$: PT \\ $A_2$: procedures \\ $A_3$: patients} & \makecell{($A_1$.prov\_admissions\_hadm\_\_id)=($A_2$.hadm\_id)\\($A_3$.subject\_id)=($A_2$.subject\_id)} & \makecell{$A_3$.expire\_flag=0} & 0.6 & 0.79 & 0.68 & Private\\
\hline
\makecell{$A_1$: PT \\ $A_2$: icustays \\ $A_3$: patients} & \makecell{($A_1$.prov\_admissions\_hadm\_\_id)=($A_2$.hadm\_id)\\($A_3$.subject\_id)=($A_2$.subject\_id)} & \makecell{$A_3$.expire\_flag=0} & 0.6 & 0.8 & 0.68 & Private\\
\hline
\makecell{$A_1$: PT \\ $A_2$: procedures \\ $A_3$: patients} & \makecell{($A_1$.prov\_admissions\_hadm\_\_id)=($A_2$.hadm\_id)\\($A_3$.subject\_id)=($A_2$.subject\_id)} & \makecell{$A_1$.prov\_admissions\_hospital\_\_stay\_\_length<22.0 \\ $A_3$.expire\_flag=1} & 0.78 & 0.5 & 0.61 & Medicare\\
\hline
\makecell{$A_1$: PT \\ $A_2$: procedures \\ $A_3$: patients} & \makecell{($A_1$.prov\_admissions\_hadm\_\_id)=($A_2$.hadm\_id)\\($A_3$.subject\_id)=($A_2$.subject\_id)} & \makecell{$A_3$.gender=M} & 0.48 & 0.61 & 0.54 & Private\\
\hline
\makecell{$A_1$: PT \\ $A_2$: icustays \\ $A_3$: patients} & \makecell{($A_1$.prov\_admissions\_hadm\_\_id)=($A_2$.hadm\_id)\\($A_3$.subject\_id)=($A_2$.subject\_id)} & \makecell{$A_3$.gender=F} & 0.6 & 0.48 & 0.53 & Medicare\\
\hline
\makecell{$A_1$: PT \\ $A_2$: patients\_admit\_info \\ $A_3$: patients} & \makecell{($A_1$.prov\_admissions\_hadm\_\_id)=($A_2$.hadm\_id)\\($A_3$.subject\_id)=($A_2$.subject\_id)} & \makecell{$A_3$.gender=M} & 0.48 & 0.6 & 0.53 & Private\\
\hline
\makecell{$A_1$: PT \\ $A_2$: patients\_admit\_info \\ $A_3$: patients} & \makecell{($A_1$.prov\_admissions\_hadm\_\_id)=($A_2$.hadm\_id)\\($A_3$.subject\_id)=($A_2$.subject\_id)} & \makecell{$A_3$.gender=M \\ $A_2$.age<64.72 \\ $A_2$.ethnicity=WHITE} & 0.83 & 0.39 & 0.53 & Private\\
\hline
\makecell{} & \makecell{A\_1} & \makecell{$A_1$.prov\_admissions\_marital\_\_status=MARRIED} & 0.56 & 0.45 & 0.5 & Medicare\\
\hline
\makecell{$A_1$: PT \\ $A_2$: patients\_admit\_info \\ $A_3$: patients} & \makecell{($A_1$.prov\_admissions\_hadm\_\_id)=($A_2$.hadm\_id)\\($A_3$.subject\_id)=($A_2$.subject\_id)} & \makecell{$A_1$.prov\_admissions\_hospital\_\_stay\_\_length<10.0 \\ $A_3$.expire\_flag=0 \\ $A_2$.ethnicity=WHITE} & 0.59 & 0.42 & 0.49 & Private\\
\hline
\makecell{$A_1$: PT \\ $A_2$: procedures \\ $A_3$: patients} & \makecell{($A_1$.prov\_admissions\_hadm\_\_id)=($A_2$.hadm\_id)\\($A_3$.subject\_id)=($A_2$.subject\_id)} & \makecell{$A_1$.prov\_admissions\_hospital\_\_stay\_\_length<20.0 \\ $A_3$.gender=F} & 0.6 & 0.41 & 0.49 & Medicare\\
\hline
\makecell{$A_1$: PT \\ $A_2$: icustays \\ $A_3$: patients} & \makecell{($A_1$.prov\_admissions\_hadm\_\_id)=($A_2$.hadm\_id)\\($A_3$.subject\_id)=($A_2$.subject\_id)} & \makecell{$A_2$.los>1.1306 \\ $A_3$.gender=M} & 0.56 & 0.42 & 0.48 & Medicare\\
\hline
\makecell{$A_1$: PT \\ $A_2$: icustays \\ $A_3$: patients} & \makecell{($A_1$.prov\_admissions\_hadm\_\_id)=($A_2$.hadm\_id)\\($A_3$.subject\_id)=($A_2$.subject\_id)} & \makecell{$A_1$.prov\_admissions\_admission\_\_type=EMERGENCY \\ $A_1$.prov\_admissions\_hospital\_\_stay\_\_length<15.0 \\ $A_2$.los<5.2349 \\ $A_3$.expire\_flag=1} & 0.8 & 0.34 & 0.47 & Medicare\\
\hline
\makecell{$A_1$: PT \\ $A_2$: patients\_admit\_info \\ $A_3$: patients} & \makecell{($A_1$.prov\_admissions\_hadm\_\_id)=($A_2$.hadm\_id)\\($A_3$.subject\_id)=($A_2$.subject\_id)} & \makecell{$A_1$.prov\_admissions\_hospital\_\_stay\_\_length<12.0 \\ $A_3$.gender=F \\ $A_2$.age>48.08} & 0.76 & 0.34 & 0.47 & Medicare\\
\hline
\makecell{$A_1$: PT \\ $A_2$: procedures \\ $A_3$: patients} & \makecell{($A_1$.prov\_admissions\_hadm\_\_id)=($A_2$.hadm\_id)\\($A_3$.subject\_id)=($A_2$.subject\_id)} & \makecell{$A_1$.prov\_admissions\_admission\_\_type=EMERGENCY \\ $A_1$.prov\_admissions\_hospital\_\_stay\_\_length<21.0 \\ $A_3$.gender=M} & 0.61 & 0.38 & 0.46 & Medicare\\
\hline
\makecell{} & \makecell{A\_1} & \makecell{$A_1$.prov\_admissions\_admission\_\_type=EMERGENCY \\ $A_1$.prov\_admissions\_marital\_\_status=MARRIED} & 0.58 & 0.36 & 0.45 & Medicare\\
\hline
\makecell{} & \makecell{A\_1} & \makecell{$A_1$.prov\_admissions\_admission\_\_location=EMERGENCY ROOM ADMIT \\ $A_1$.prov\_admissions\_admission\_\_type=EMERGENCY \\ $A_1$.prov\_admissions\_hospital\_\_stay\_\_length<12.0} & 0.66 & 0.33 & 0.44 & Medicare\\
\hline
\makecell{} & \makecell{A\_1} & \makecell{$A_1$.prov\_admissions\_hospital\_\_stay\_\_length>4.0 \\ $A_1$.prov\_admissions\_marital\_\_status=MARRIED} & 0.58 & 0.33 & 0.42 & Medicare\\
\hline
\makecell{$A_1$: PT \\ $A_2$: icustays \\ $A_3$: patients} & \makecell{($A_1$.prov\_admissions\_hadm\_\_id)=($A_2$.hadm\_id)\\($A_3$.subject\_id)=($A_2$.subject\_id)} & \makecell{$A_1$.prov\_admissions\_hospital\_\_stay\_\_length>4.0 \\ $A_2$.los>1.111825 \\ $A_3$.gender=F} & 0.65 & 0.31 & 0.42 & Medicare\\
\hline
\end{tabular}
}
\label{tab:tops-mimic4}
\caption{$\query_{mimic4}$ Top-20 patterns}
\end{figure*}

\begin{figure*}[!h]
\centering
{\scriptsize
\centering
\begin{tabular}{|c|P{25em}|P{20em}|c|c|c|P{5em}|}
\hline
\cellcolor{grey}\textbf{Nodes} & \cellcolor{grey}\textbf{Edges} & \cellcolor{grey}\textbf{Pattern Desc} & \cellcolor{grey}\textbf{Precision} & \cellcolor{grey}\textbf{Recall} & \cellcolor{grey}\textbf{\abbrF} & \cellcolor{grey}\textbf{Primary Tuple} \\
\hline
\makecell{$A_1$: PT \\ $A_2$: patients \\ $A_3$: patients\_admit\_info \\ $A_4$: admissions} & \makecell{$A_2$.subject\_id=$A_1$.prov\_patients\_\_admit\_\_info\_subject\_\_id\\$A_2$.subject\_id=$A_3$.subject\_id\\$A_4$.hadm\_id=$A_3$.hadm\_id} & \makecell{$A_4$.hospital\_stay\_length<19.0 \\ $A_3$.ethnicity=ASIAN} & 1.0 & 0.82 & 0.9 & ASIAN\\
\hline
\makecell{$A_1$: PT \\ $A_2$: patients \\ $A_3$: patients\_admit\_info \\ $A_4$: admissions} & \makecell{$A_2$.subject\_id=$A_1$.prov\_patients\_\_admit\_\_info\_subject\_\_id\\$A_2$.subject\_id=$A_3$.subject\_id\\$A_4$.hadm\_id=$A_3$.hadm\_id} & \makecell{$A_4$.admission\_type=EMERGENCY \\ $A_4$.hospital\_stay\_length>5.0 \\ $A_3$.ethnicity=HISPANIC} & 1.0 & 0.67 & 0.8 & HISPANIC\\
\hline
\makecell{$A_1$: PT \\ $A_2$: patients \\ $A_3$: icustays \\ $A_4$: admissions} & \makecell{$A_2$.subject\_id=$A_1$.prov\_patients\_\_admit\_\_info\_subject\_\_id\\$A_2$.subject\_id=$A_3$.subject\_id\\$A_4$.hadm\_id=$A_3$.hadm\_id} & \makecell{$A_4$.admission\_type=EMERGENCY} & 0.6 & 0.8 & 0.68 & HISPANIC\\
\hline
\makecell{$A_1$: PT \\ $A_2$: patients \\ $A_3$: patients\_admit\_info \\ $A_4$: admissions} & \makecell{$A_2$.subject\_id=$A_1$.prov\_patients\_\_admit\_\_info\_subject\_\_id\\$A_2$.subject\_id=$A_3$.subject\_id\\$A_4$.hadm\_id=$A_3$.hadm\_id} & \makecell{$A_4$.admission\_type=EMERGENCY} & 0.6 & 0.8 & 0.68 & HISPANIC\\
\hline
\makecell{$A_1$: PT \\ $A_2$: patients \\ $A_3$: procedures \\ $A_4$: admissions} & \makecell{$A_2$.subject\_id=$A_1$.prov\_patients\_\_admit\_\_info\_subject\_\_id\\$A_2$.subject\_id=$A_3$.subject\_id\\$A_4$.hadm\_id=$A_3$.hadm\_id} & \makecell{$A_4$.admission\_type=EMERGENCY} & 0.59 & 0.8 & 0.68 & HISPANIC\\
\hline
\makecell{$A_1$: PT \\ $A_2$: patients \\ $A_3$: admissions} & \makecell{$A_2$.subject\_id=$A_1$.prov\_patients\_\_admit\_\_info\_subject\_\_id\\$A_3$.hadm\_id=$A_1$.prov\_patients\_\_admit\_\_info\_hadm\_\_id} & \makecell{$A_3$.admission\_type=EMERGENCY} & 0.6 & 0.75 & 0.67 & HISPANIC\\
\hline
\makecell{$A_1$: PT \\ $A_2$: admissions \\ $A_3$: icustays \\ $A_4$: patients} & \makecell{$A_2$.hadm\_id=$A_1$.prov\_patients\_\_admit\_\_info\_hadm\_\_id\\$A_2$.hadm\_id=$A_3$.hadm\_id\\$A_4$.subject\_id=$A_3$.subject\_id} & \makecell{$A_4$.expire\_flag=0} & 0.57 & 0.71 & 0.64 & HISPANIC\\
\hline
\makecell{$A_1$: PT \\ $A_2$: patients} & \makecell{$A_2$.subject\_id=$A_1$.prov\_patients\_\_admit\_\_info\_subject\_\_id} & \makecell{$A_2$.expire\_flag=0} & 0.57 & 0.71 & 0.63 & HISPANIC\\
\hline
\makecell{$A_1$: PT \\ $A_2$: admissions \\ $A_3$: patients\_admit\_info \\ $A_4$: patients} & \makecell{$A_2$.hadm\_id=$A_1$.prov\_patients\_\_admit\_\_info\_hadm\_\_id\\$A_2$.hadm\_id=$A_3$.hadm\_id\\$A_4$.subject\_id=$A_3$.subject\_id} & \makecell{$A_4$.expire\_flag=0} & 0.57 & 0.71 & 0.63 & HISPANIC\\
\hline
\makecell{$A_1$} & \makecell{$A_1$} & \makecell{$A_1$.prov\_patients\_\_admit\_\_info\_religion \\ =CATHOLIC} & 0.9 & 0.49 & 0.63 & HISPANIC\\
\hline
\makecell{$A_1$: PT \\ $A_2$: patients \\ $A_3$: procedures \\ $A_4$: admissions} & \makecell{$A_2$.subject\_id=$A_1$.prov\_patients\_\_admit\_\_info\_subject\_\_id\\$A_2$.subject\_id=$A_3$.subject\_id\\$A_4$.hadm\_id=$A_3$.hadm\_id} & \makecell{$A_4$.hospital\_stay\_length<27.0 \\ $A_3$.chapter=16} & 0.54 & 0.74 & 0.63 & HISPANIC\\
\hline
\makecell{$A_1$: PT \\ $A_2$: patients \\ $A_3$: patients\_admit\_info \\ $A_4$: admissions} & \makecell{$A_2$.subject\_id=$A_1$.prov\_patients\_\_admit\_\_info\_subject\_\_id\\$A_2$.subject\_id=$A_3$.subject\_id\\$A_4$.hadm\_id=$A_3$.hadm\_id} & \makecell{$A_4$.hospital\_stay\_length>4.0 \\ $A_2$.expire\_flag=0} & 0.61 & 0.6 & 0.6 & HISPANIC\\
\hline
\makecell{$A_1$: PT \\ $A_2$: admissions \\ $A_3$: icustays \\ $A_4$: patients} & \makecell{$A_2$.hadm\_id=$A_1$.prov\_patients\_\_admit\_\_info\_hadm\_\_id\\$A_2$.hadm\_id=$A_3$.hadm\_id\\$A_4$.subject\_id=$A_3$.subject\_id} & \makecell{$A_4$.gender=M} & 0.57 & 0.64 & 0.6 & HISPANIC\\
\hline
\makecell{$A_1$: PT \\ $A_2$: patients} & \makecell{$A_2$.subject\_id=$A_1$.prov\_patients\_\_admit\_\_info\_subject\_\_id} & \makecell{$A_2$.gender=M} & 0.57 & 0.64 & 0.6 & HISPANIC\\
\hline
\makecell{$A_1$: PT \\ $A_2$: admissions \\ $A_3$: patients\_admit\_info \\ $A_4$: patients} & \makecell{$A_2$.hadm\_id=$A_1$.prov\_patients\_\_admit\_\_info\_hadm\_\_id\\$A_2$.hadm\_id=$A_3$.hadm\_id\\$A_4$.subject\_id=$A_3$.subject\_id} & \makecell{$A_4$.gender=M} & 0.57 & 0.64 & 0.6 & HISPANIC\\
\hline
\makecell{$A_1$: PT \\ $A_2$: admissions \\ $A_3$: patients\_admit\_info \\ $A_4$: patients} & \makecell{$A_2$.hadm\_id=$A_1$.prov\_patients\_\_admit\_\_info\_hadm\_\_id\\$A_2$.hadm\_id=$A_3$.hadm\_id\\$A_4$.subject\_id=$A_3$.subject\_id} & \makecell{$A_1$.prov\_patients\_\_admit\_\_info\_age $>19.68$ \\ $A_4$.expire\_flag=0} & 0.62 & 0.58 & 0.6 & HISPANIC\\
\hline
\makecell{$A_1$: PT \\ $A_2$: patients \\ $A_3$: icustays \\ $A_4$: admissions} & \makecell{$A_2$.subject\_id=$A_1$.prov\_patients\_\_admit\_\_info\_subject\_\_id\\$A_2$.subject\_id=$A_3$.subject\_id\\$A_4$.hadm\_id=$A_3$.hadm\_id} & \makecell{$A_4$.hospital\_stay\_length>5.0 \\ $A_2$.expire\_flag=0} & 0.61 & 0.56 & 0.59 & HISPANIC\\
\hline
\makecell{$A_1$: PT \\ $A_2$: patients \\ $A_3$: procedures \\ $A_4$: admissions} & \makecell{$A_2$.subject\_id=$A_1$.prov\_patients\_\_admit\_\_info\_subject\_\_id\\$A_2$.subject\_id=$A_3$.subject\_id\\$A_4$.hadm\_id=$A_3$.hadm\_id} & \makecell{$A_4$.marital\_status=MARRIED} & 0.56 & 0.56 & 0.56 & ASIAN\\
\hline
\makecell{$A_1$: PT \\ $A_2$: patients \\ $A_3$: admissions} & \makecell{$A_2$.subject\_id=$A_1$.prov\_patients\_\_admit\_\_info\_subject\_\_id\\$A_3$.hadm\_id=$A_1$.prov\_patients\_\_admit\_\_info\_hadm\_\_id} & \makecell{$A_3$.marital\_status=MARRIED} & 0.56 & 0.56 & 0.56 & ASIAN\\
\hline
\makecell{$A_1$: PT \\ $A_2$: patients \\ $A_3$: patients\_admit\_info \\ $A_4$: admissions} & \makecell{$A_2$.subject\_id=$A_1$.prov\_patients\_\_admit\_\_info\_subject\_\_id\\$A_2$.subject\_id=$A_3$.subject\_id\\$A_4$.hadm\_id=$A_3$.hadm\_id} & \makecell{$A_4$.hospital\_stay\_length>5.0 \\ $A_2$.gender=M} & 0.59 & 0.52 & 0.55 & HISPANIC\\
\hline
\end{tabular}
}
\label{tab:tops-mimic5}
\caption{$\query_{mimic5}$ Top-20 patterns}
\end{figure*}


\end{appendix}

\end{document}
\endinput